\documentclass[11pt]{article}

\usepackage{algorithm}
\usepackage{algorithmic}
\usepackage[]{graphicx}
\usepackage{amsmath}
\usepackage{latexsym}
\usepackage{amsfonts}
\usepackage{fullpage}
\usepackage{amstext}
\usepackage{amsthm}
\usepackage{url}

\usepackage{color}
\usepackage{enumerate}
\usepackage{setspace}
\usepackage{epsfig}

\usepackage{fullpage}

\newtheorem{theorem}{Theorem}
\newtheorem{definition}{Definition}

\newtheorem{property}{Property}

\newtheorem{corollary}{Corollary}
\newtheorem{lemma}{Lemma}

\newtheorem{assumption}{Assumption}
\newtheorem{constraint}{Constraint}
\newtheorem{fact}{Fact}


\newcommand{\xiu}{x^u_i}

\newcommand{\xbij}{\overline{x}_{i_j}}
\newcommand{\wtij}{\widetilde{w}_{i_j}}

\newcommand{\tis}{{\tau_i^s}}

\newcommand{\ybi}{\overline{y}_i}
\newcommand{\xbi}{\overline{x}_i}
\newcommand{\zbi}{\overline{z}_i}

\newcommand{\loa}{\lambda \alpha_1}
\newcommand{\sgn}{\mbox{~\!sign}}

\newcommand{\wti}{\widetilde{w}_i}

\newcommand{\sis}{s_i^*}

\newcommand{\lE}{\lmbd E}

\renewcommand{\span}{\operatorname{span}}

\newcommand{\eps}{\lambda}

\newcommand{\kmin}{\kappa}
\newcommand{\kmax}{\kappa}

\newcommand{\xbar}{\overline x}

\newcommand{\Del}{\Delta}
\newcommand{\del}{\delta}
\newcommand{\lmbd}{\lambda}
\newcommand{\gam}{\gamma}

\newcommand{\boldp}{\mathbf{p}}

\newcommand{\kpi}{\kappa}

\newcommand{\alo}{\alpha_1}
\newcommand{\alt}{\alpha_2}
\newcommand{\alr}{\alpha_3}
\newcommand{\alf}{\alpha_4}

\newcommand{\argmax}{\mathrm{argmax}}

\newcommand{\elas}{E}  

\newcommand{\Xomit}[1]{}
\Xomit{
     \newcommand{\qed}{\nobreak \ifvmode \relax \else
     \ifdim\lastskip<1.5em \hskip-\lastskip
     \hskip1.5em plus0em minus0.5em \fi \nobreak
     \vrule height0.75em width0.5em depth0.25em\fi}
}

\begin{document}

\title{Discrete Price Updates Yield Fast Convergence in Ongoing Markets with Finite Warehouses%
\footnote{The work of Richard Cole was
supported in part by NSF grants CCF-0515127 and CCF-0830516;
the work of Lisa Fleischer was supported in part
by NSF grants CCF-0515127, CCF-0728869 and CCF-1016778;
the work of Ashish Rastogi while a Ph.D. student at NYU was supported
in part by NSF grant CCF-0515127.}}

\author{Richard Cole \\
 Courant Institute \\ New York University
\and Lisa Fleischer \\
Dartmouth College
\and Ashish Rastogi \\
Goldman Sachs \& Co. \\
New York}

\maketitle

\begin{abstract}
This paper shows that in suitable markets,
even with out-of-equilibrium trade allowed,
a simple price update rule leads to rapid convergence toward the equilibrium.
In particular, this paper considers a Fisher market
repeated over an unbounded number of time steps,
with the addition of finite sized warehouses to enable non-equilibrium trade.

The main result is that suitable tatonnement style price updates
lead to convergence in a significant subset of markets
satisfying the Weak Gross Substitutes property.
Throughout this process the warehouse are always
able to store or meet demand imbalances
(the needed capacity depends on the initial imbalances).

Our price update rule is robust in a variety of regards:

\begin{itemize}
\item
The updates for each good depend only on information about that good
(its current price, its excess demand since its last update)
and occur asynchronously from updates to other prices.
\item
The process is resilient to error in the excess demand data.
\item
Likewise, the process is resilient to discreteness,
i.e.~a limit to divisibility, both of goods and money.
\end{itemize}
\end{abstract}

\thispagestyle{empty}

\newpage
\section{Introduction}
\label{sec:intro}

This paper investigates when a tatonnement-style price update in a dynamic
market setting could lead to fast convergent behavior.

The impetus for this work comes from the following question: why might
well-functioning markets be able to stay at or near equilibrium
prices?
\Xomit{
\footnote{We are not concerned with the question of whether
this assertion is indeed correct.}
}%
This raises two issues: what are
plausible price adjustment mechanisms and in what types of markets
are they effective?

This question was considered by Walras in 1874, when he
suggested that prices adjust by tatonnement: upward if there is too
much demand and downward if too little~\cite{Walras1874}. Since
then, the study of market equilibria, their existence, stability,
and their computation has received much attention in Economics,
Operations Research, and most recently in Computer Science.
McKenzie~\cite{mckenzie} gives a fairly recent account of the
classic perspective in economics.
The recent activity in Computer Science
has led to a considerable number of polynomial time algorithms
for finding approximate and exact equilibria in a variety of markets
with divisible
goods; we cite a selection of these
works~\cite{CodenottiMV05, leontief, DevanurV04, DevanurPSV02, DevKan08, GargKapoor04,
JainVaz07, Orlin10, vazirani10, VazYan10}.
 However, these algorithms do not seek to, and
do not appear to provide methods that might plausibly explain these
markets' behavior.

We argue here for the relevance of this question from a computer
science perspective. Much justification for looking at the market
problem in computer science stems from the following argument: If
economic models and statements about equilibrium and convergence are
to make sense as being realizable in economies, then they should be
concepts that are computationally tractable. Our viewpoint is that
it is not enough to show that the problems are computationally
tractable; it is also necessary to show that they are tractable in a
model that might capture how a market works.
It seems implausible that markets with many interacting players
(buyers, sellers, traders) would perform overt global
computations, using global information.

Analogous goals have arisen when
considering convergence to equilibria in game theory. Perhaps the
most successful approach has been to use regret minimizing
procedures, which has yielded convergence in several
settings~\cite{roughgarden, kleinberg,even-dar,blum}.
Other works have considered best response~\cite{fanelli08,nisan11} and proportional
response dynamics~\cite{wu07}.

A central concern with the tatonnement model of price updates,
and other models, is whether they cause prices to converge to equilibria.
Such results were shown for some types of
markets; early examples include~\cite{ArrowBH59,ArrowH58,NikaidoUzawa60,Uzawa60}.
However, there is no demonstration that these proposed update models converge
reasonably quickly.  Indeed, without care in the specific details,
they will not.\footnote{Of the referenced papers, only one
formalization \cite{Uzawa60} is a discrete algorithm, and, as we
make more specific later, it may not converge quickly.}

It has long been recognized that the tatonnement price adjustment
model is far from a realistic model: for example, in 1972, Fisher \cite{fisher72}
wrote
``such a model of price adjustment $\cdots$ describes nobody's actual behavior.''
This has led to to work on other price dynamics ranging from the
Hahn process~\cite{fisher72}
to non-equilibrium dynamics in the trading post model~\cite{shapleyShub77, Balasko},
and others~\cite{CrockettSS04}.
At the same time, there has been a continued interest in the
plausibility of tatonnement, and indeed its predictive accuracy
in a non-equilibrium trade setting has been shown in some experiments~\cite{hirota05}.

Plausibly, in many consumer markets buyers are myopic:
based on the current prices, goods are assessed on a take
it or leave it basis.
It seems natural that this would lead to out of equilibrium trade.
This is the type of setting in which we wish to consider our main question:
under what conditions can tatonnement style price updates  lead
to convergence?

\Xomit{
At first sight, it is not clear why these models are realistic. The
most studied, due to Walras, is the auctioneer model: an auctioneer
announces prices, receives the market demand at these prices in the
form of buy and sell requests, \emph{but with no trade actually
occurring}, adjusts prices according to the tatonnement procedure,
and iterates. Only when prices reach equilibrium is trade allowed.
In reality, it is trade that reveals demand and hence needed price
adjustments.
Perhaps as a result, there is a widespread belief that tatonnemment
is merely an algorithmic tool to
compute equilibria and that it does not describe actual market behavior
(see \cite{Balasko} for example).
Clearly, any realistic model has to enable trade in
disequilibrium.
}

Accordingly, in this paper we propose a simple market model in which the market
extends over time and trading occurs out of equilibrium (as well as
at equilibrium). We call this the \emph{Ongoing Market}. Here, the
market repeats from one time unit to the next; we call the basic
unit a \emph{day}.  The link from one day to the next is that goods
unsold one day are available the next day, in addition to the new
supply, which for simplicity, we take as being the same from day to
day. This appears to provide a simple and natural way of allowing
out-of-equilibrium trade. The algorithmic task is to converge to
equilibrium prices while clearing unsold stocks. We use the version of
tatonnment in which price updates are proportional both to the current
price and to the excess demand (modulo some minor details
to prevent very large price changes).
We show that it results in
rapid convergence toward equilibrium prices in this market.

Our analysis of the algorithm for the Ongoing Market relies in part on
understanding which we develop by
analyzing convergence in the traditional market problem
(used in the auctioneer model for example).
In this paper, we call this the \emph{One-Time Market}.
The algorithmic technique of iteratively computing prices for the
One-Time Market can be seen as a plausible approximation to the
Ongoing Market (but with no carry over of unsold goods).  Our work
can be seen as a formal justification for this approach.
Further, in
our opinion, the intuitive understanding that markets are usually
similar from one time period to the next has been a factor in the
previous appeal of iterative price update algorithms, including, in
the Computer Science literature, the recent tatonnement algorithm of
Codenotti et al.~\cite{CodenottiMV05} and the auction algorithms of
Garg et al.~\cite{GargKapoor04}.

Our proposed price update protocols capture some of the
characteristics of trading as proposed in the trading post
model \cite{Balasko},
features that are lacking from previous algorithms subject to
asymptotic analysis. Namely, our algorithm consists of price updates
satisfying the following criteria: the price update for a good
depends only on the price, demand, and supply for that good, and on
no other information about the market; in addition, the price update for each
good occurs distributively and asynchronously
(i.e.\ at independent times).
Further desirable features are that: the algorithm can
start with an arbitrary set of prices; the algorithm tolerates inaccuracy in
demand data; finally, it can tolerate discreteness: the fact that neither
goods nor money are infinitely divisible.
We show that our update
protocols converge quickly in many
markets that satisfy the weak gross substitutes property. In the
process, we identify several natural parameters characterizing these markets,
parameters which govern the rate of convergence.

\subsection{The Market Problems}
\label{sec:problem}

{\bf The One-Time Fisher Market}\footnote{The market we describe here is
often referred to as the
Fisher market.  We use a different term because we
consider this problem in a new Ongoing Market model as described below.}
A market comprises a set of goods
$G$, with $|G| = n$, and two sets of agents, buyers $B$, with $|B|=m$, and sellers $S$.
The sellers bring the goods $G$ to market and the buyers bring money with which to buy them.
The trade is
driven by a collection of prices $p_i$ for good $i$, $1\le i\le n$.
For simplicity, we assume that there is a distinct seller for each good;
further it suffices to have one seller per good.
The seller of good $i$ brings a supply $w_i$ of this good to market.
Each seller seeks to sell its goods for money at the prices $p_i$.

Each buyer $b_j\in B$ comes to market with money $v_j$;
buyer $b_j$ has a utility function $u_j(x_{1j},\cdots,x_{nj})$
expressing its preferences: if $b_j$ prefers a basket with $x_{ij}$
units (possibly a real number) of good $i$, to the basket with
$y_{ij}$ units, for $1\le i\le n$, then $u_l(x_{1j},\cdots,x_{nj})>
u_j(y_{1j},\cdots,y_{nj})$. Each buyer $b_j$ intends to buy goods so
as to achieve a personal optimal combination (basket) of goods given
their available money.

Prices ${\bf
p}=(p_1,p_2,\cdots,p_n)$ are said to provide an \emph{equilibrium}
if, in addition, the demand for each good is bounded by the supply:
$\sum_{j=1}^m x_{ij} \le  w_i$, and the total cost
of the goods is bounded by the available money:
$\sum_i p_i w_i \le \sum_j v_j$. The market problem is
to find equilibrium prices.\footnote{Equilibria exist under quite
mild conditions (see \cite{MWG95} \S 17.C, for example).}

\medskip

The Fisher market is a special case of the more general
 \emph{Exchange} market or \emph{Arrow-Debreu}  market.

While we define the market in terms of a set of buyers $B$, all that matters for
our algorithms is the aggregate demand these buyers generate, so we will
tend to focus on properties of the aggregate demand rather than properties
of individual buyers' demands.
This will prove significant when we consider the case of
(somewhat) indivisible goods.

\medskip \noindent {\bf Standard notation}~
$x_i=\sum_l x_{il}$
is the demand for good $i$, and $z_i=x_i-w_i$ is the excess demand
for good $i$ (which can be positive or negative).
Note
that while $w$ is part of the specification of the market, $x$
and $z$ are functions of the vector of prices as determined by individual buyers maximizing their
utility functions given their available money.
We will assume that $x_i$ is a function of the prices $p$, that is a set of prices
induce unique demands for each good.

\Xomit{
We follow standard practice\footnote{See Varian~\cite{Varian} \S
21.5.} and view the actions of individual buyers and sellers as
being encapsulated in the price adjustments for each good.  More
specifically, we imagine that there is a separate, ``virtual'' price
setter for each good in the market.  Henceforth, for ease of
exposition, we describe price setters as if they were actual
entities, although in reality they are virtual entities induced by
agents' trades.
}

\smallskip\noindent
{\bf The Ongoing Fisher Market}~ In order to have non-equilibrium trade, we
need a way to allocate excess supply and demand. To this end, we
suppose that for each good there is a \emph{warehouse} which can
store excess demand and meet excess supply.   The seller has a warehouse of
finite capacity to enable it to cope with fluctuations in demand. It
will change prices as needed to ensure its warehouse neither
overfills nor runs out of goods.

The market consists of a set $G$ of $n$ goods and a set $B$ of $m$
buyers.
The market repeats over an unbounded number of time intervals called
days. Each day, the seller of good $i$ (called seller $i$) receives
$w_i$ new units of good $i$, and buyer $\ell$ is given $v_\ell$
money, $1\le \ell \le m$. As before, each buyer $\ell$ has a utility
function $u_\ell(x_{1\ell},\cdots,x_{n\ell})$ expressing its
preferences. Each day, buyer $\ell$ selects a maximum utility basket
of goods $(x_{1\ell},\cdots,x_{n\ell})$ of cost at most $v_\ell$.
Each seller $i$ provides the demanded goods $\sum_{\ell=1}^m
x_{i\ell}$. The resulting excess demand or surplus, $\sum_{\ell=1}^m
x_{i\ell}-w_i$, is taken from or added to the warehouse stock.
Seller $i$ has a warehouse of capacity $c_i$.

Given initial prices $p^\circ_i$, warehouse stocks $s^\circ_i$,
where $0<s^\circ_i<c_i$, $1\le i\le n$, and ideal warehouse stocks
$s^*_i$, $0<s^*_i<c_i$, the task is to repeatedly adjust prices so
as to converge to equilibrium prices with the warehouse stocks
converging to their ideal values. We let $s_i$ denote the current
contents of warehouse $i$, and $s^e_i=s_i-s^*_i$ denote the
\emph{excess warehouse reserves}.

\Xomit{
The difficulty with the problem as stated is that the initial prices
could be arbitrarily low and hence demand arbitrarily high, thereby
causing the seller(s) to run out of stock.
}

We suppose that it is the sellers that are adjusting the prices of their goods.
In order to have progress, we require them to change prices at least once a day.
However, for the most part, we impose no upper bound on the  frequency of price changes.
Indeed, we will see that having more frequent price changes is helpful in
the event that very low prices are present.
This entails measuring
demand on a finer scale than day units. We take a very simple
approach: we assume that each buyer spends their money at a uniform
rate throughout the day. (Equivalently, this is saying that buyers
with collectively identical profiles occur throughout the day,
though really similar profiles suffice for our analysis.)~
Likewise, if one supposes there is a limit to the granularity, this
imposes a limit on the frequency of price changes.
\Xomit{
and as we will see, also imposes a limit on how extreme the initial prices can be for
convergence to be assured.
}

\smallskip\noindent {\bf Market Properties}~ In an effort to capture the
distributed nature of markets and the possibly limited knowledge
of individual  price setters, we impose several constraints on procedures
we wish to consider:

\begin{enumerate}
\item
Limited information: the price setter for good $i$ knows
only the price, supply, and excess demand of good $i$, both current
and past history.  Thus the price updates can depend on this
information only. Notably, this precludes the use not only of other
prices or demands, but also of any information about the specific
form of utility functions.

\item
Simple actions: The price setters' procedures should be simple.

\item
Asynchrony: Price updates for different goods are allowed to be
asynchronous.

\item
Fast Convergence: The price update procedure should converge quickly
toward equilibrium prices from any initial price vector.

\item
Robustness to Inaccuracy: Even if the demand data is somewhat inaccurate,
the procedure should still converge to approximately the
equilibrium prices.

\end{enumerate}
We call procedures that satisfy the first three constraints
\emph{local}, by contrast with centralized procedures that use more
complete (global) information about the market.

Next, we discuss the motivations for these constraints.

Constraint (1) stems from the plausible assertion that not
everything about the market will be known to a single price setter.
While no doubt some information about several goods is known to a
price setter, it is a conservative assumption to assume less is
known, for any convergence result carries over to the broader
setting.  Further, it is far from clear how to model the broader
setting.

Constraint (2), simplicity, is in the eye of the beholder. Its
presence reflects our view that without further information, this is
both generally applicable and plausible.

Constraint (3), asynchrony, is an inherent property of independent
price adjustments.  Since the price setter of good $i$ reacts only
to trade in good $i$, the price adjustment of good $i$ occurs
independently of other price adjustments.

Constraint (4) arises in an effort to recognize the dynamic nature
of real markets, which are subject to changing supplies and demands
over time. However, surely much of the time, markets are changing
gradually, for otherwise there would be no predictability. A natural
approximation is to imagine fixed conditions and seek to come close
to an equilibrium in the time they prevail --- hence the desire for
rapid convergence.

Constraint (5) recognizes the reality that in practice data is always is a little dated
and inaccurate, and it seems more realistic if procedures tolerate this.

A final issue is that it is an approximation to allow goods and money to be infinitely
divisible. It is useful to understand to what extent this limits
convergent behavior. This entails a variety of changes to the
standard definitions used for continuous (divisible)
markets: it turns out it is no longer
sensible to constrain demands (or utilities) at the individual level,
but only in the aggregate, and the notion of weak gross substitutes needs
to be modified (relaxed in fact), but not in the way used in the literature
for (discrete) matching markets.
We call this modified setting the indivisible goods market, but we are intending
a setting in which there are still many copies of each good, rather than a setting
in which each good is unique or close to unique.

\subsection{Previous Work}
\label{sec:previous} To the best of our knowledge, asynchronous
price update algorithms have not been considered previously.
Further, there has been no complexity analysis of even synchronous
tatonnement algorithms with this type of limited information. While
Uzawa \cite{Uzawa60} gave a synchronous algorithm of this type, he
only showed convergence, and did not address speed of convergence.

The existence of market equilibria has been a central topic of
economics since the problem was formulated by Walras in 1874
\cite{Walras1874}.
Tatonnement was described more precisely as a differential equation
by Samuelson~\cite{Samuelson47}:
\begin{equation}
\label{eq:tatondif} dp_i/dt =\mu_i z_i.
\end{equation}
The $\mu_i$ are arbitrary positive constants that represent rates of
adjustment for the different prices; they need not all be the same.
Arrow, Block, and Hurwitz, and Nikaido and
Uzawa~\cite{ArrowBH59,ArrowH58,NikaidoUzawa60} showed that for
markets of gross substitutes the above differential equation will
converge to an equilibrium price.

Unfortunately, for general utility functions (i.e. that do not lead
to gross substitutability), the equilibrium need not be stable and
the differential equation (and thus also discretized versions) need
not converge~\cite{Scarf60}.  Partly in response, Smale described a
convergent procedure that uses the derivative matrix of excess
demands with respect to prices~\cite{Smale76}.  Following this,
Saari and Simon~\cite{SaariSimon78} showed that any price update
algorithm which uses an update that is a fixed function of excesses
and their derivatives with respect to prices needs to use
essentially all the derivatives in order to converge in all markets.
However, this is viewed as being an excessive amount of information,
in general.

There are really two questions here.  The first is how to find an
equilibrium, and the second is how does the market find an
equilibrium. The first question is partially addressed by the work
of Arrow et al.\ and Smale, and addressed further in papers in
operations research (notably Scarf~\cite{Scarf69} gives a
(non-polynomial) algorithm for computing equilibrium prices), and
theoretical computer science, where there are a series of very nice
results demonstrating equilibria as the solutions to convex
programs, or describing combinatorial algorithms to compute such
equilibria exactly or approximately. (An early example of a
polynomial algorithm for computing market equilibria for restricted
settings is~\cite{DevanurPSV02}.  An extensive list of references is
given in the
surveys~\cite{Codenotti-AGT,Vazirani-AGT}.)

We are interested in the second question.  The differential
equations provide a start here, but they ignore the discrete nature
of markets: prices typically change in discrete increments, not
continuously. In 1960, Uzawa showed that there is a choice of $\eps$
for which an obvious discrete analog of (\ref{eq:tatondif}) does
converge~\cite{Uzawa60}. However, determining the right $\eps$
depends on knowing properties of the matrix of derivatives of demand
with respect to price, or in other words, this requires global
information.

In relatively recent work, three separate groups have proposed three distinct
discrete update algorithms for finding equilibrium prices and showed
that their algorithms converge in markets of gross
substitutes~\cite{Kitti2010,CrockettSS04,CodenottiMV05}.  However, all
of these algorithms use global information.  With the exception
of~\cite{CodenottiMV05}, none of this work gives (good) bounds on
the rate of convergence. The algorithm in Codenotti et
al.~\cite{CodenottiMV05} describes a tatonnement algorithm (albeit
not asynchronous); however, it begins by modifying the market by
introducing a fictitious player with some convenient properties that
capture global information about the market and have a profound
effect on market behavior. Even in this transformed setting, the
price update step uses a global parameter based on the desired
approximation guarantee, and starts with an initial price point that
is restricted to lie within a bounded region containing the
equilibrium point. Translating their algorithm back into the real
market, one can see that it does not meet our definition of
simplicity or locality.

There are some auction-style algorithms for finding approximate
equilibria which also have a distributed flavor but depend on buyer
utilities being separable over the set of goods~\cite{GargKapoor04,
GargKapoorV04}. However, these algorithms are not seeking to explain
market behavior and not surprisingly do not obey natural properties
of markets.\footnote{These algorithms start their computation at a
non-arbitrary set of artificially low prices; global information is
used for price initialization; and they work with a global
approximation measure
--- each price update uses the goal approximation guarantee in its
update.}

The one work in the Computer Science literature considering the
indivisible goods setting is by Deng et al.~\cite{papadeng02}; this paper
was also the one to start the study of market equilibria in the CS
theory community. They showed that it is APX-Hard to approximate
equilibrium prices and allocations for indivisible exchange markets.
They also gave an exhaustive algorithm for computing an approximate
equilibrium in polynomial time for markets with a constant number of
distinct goods.

Markets of indivisible goods have been studied by mathematical economists
also. Ausubel, Gul and Stacchetti \cite{ausubel05} introduce an ``individual
substitutes'' property so as to ensure that equilibria exist.
Milgrom and Strulovici \cite{milstru06} also consider this setting, replacing
the individual substitutes constraint with the standard WGS
constraint.
\Xomit{As an equilibrium may then not exist they propose a
notion of pseudo-equilibrium; they argue that a pseudo-equilibrium
price is also an approximate equilibrium price.}
Both works present
exhaustive algorithms to find the equilibrium prices.

The design and analysis of procedures and convergence to equilibria
has been a recent topic of study for game theoretic problems, using
the technique of regret minimization in particular \cite{roughgarden,
kleinberg,even-dar,blum}.
Other work has studied convergence in network routing
and network design games
\cite{chien_sinclair,fischer_racke,goemans_vahab,vahab_2}.
In partial contrast, it is known that finding equilibria via local
search (e.g., via best response dynamics) is PLS-complete in many
contexts~\cite{JPY88,FabrikantPT04}. Recently, Hart and
Mansour~\cite{HartMansour07} gave communication complexity lower
bounds to show that in general games, players with limited
information require an exponential (in the number of players) number
of steps to reach an equilibrium.

The design
and analysis of convergent asynchronous distributed protocols has also
arisen in network routing, for example \cite{low},
and in high latency parallel computing
\cite{bertsekas}.
These lines of work are perhaps the most similar to ours.

\subsection{Our Contribution}

As  devising a tatonnement-style price update in unsolvable for general markets,
our goal has been to devise plausible constraints that enable rapid convergence.
Our overall task has been to devise a reasonable model, a price update
algorithm, a measure of closeness to equilibrium, and then to analyze
the system to demonstrate fast convergence. This also entails
identifying appropriate parameterized constraints on the market.

Loosely speaking, our measure computes the cost of resource misallocation
(compared to the equilibrium allocation).
Roughly, we show that the price
update process runs in time\\
$O(\frac{1}{\kpi}\log(\mbox{initial cost}/\mbox{final cost}))$,
where $\kpi$ is a parameter of the price update protocol,
a parameter that depends on the properties of the market demand.
Further, in the
case of indivisible goods, we give nearly tight bounds on the achievable minimum for the final
cost.
We also bound the needed warehouse size in terms of the initial (imbalanced)
state of the market.

A precursor of this work \cite{cole-fleischer} introduced the Ongoing Market and analyzed
it in the context of divisible markets.  The analysis given here is quite distinct and
conceptually simpler; it also yields improved bounds in terms of the parameter constraints.
(It is a little difficult to compare the two works, as the earlier work required more constraints
and parameters, and used slightly different rules for the price updates --- the
rules in the present paper strike us as more realistic.)

An initial analysis of markets of indivisible goods appeared in Rastogi's thesis \cite{ashish-thesis}.

The main technical tool for the analysis is the use of a potential function, whose value
corresponds to the market resource misallocation. We show the potential decreases continuously
at or above a suitable minimum rate, and never increases.

This work identifies three parameters which govern the rate of convergence and the needed
warehouse sizes.
The first parameter, which we call the \emph{market demand elasticity}, or the
\emph{market elasticity} for short, governs the rate of convergence of the
one-time markets: it is a lower bound on the fractional
rate of change of the demand for any good with respect to its own price: $\min_i \frac{dx_i}{dp_i}/\frac{x_i}{p_i}$.

The second parameter is the standard elasticity of wealth\footnote{This is the the fractional rate
of change of demand with respect to changes in wealth; it is usually defined for individual agents;
we will use an aggregate definition, obtained when all buyers have the same fractional change in
wealth, or equivalently all prices change by the same fraction.} it will
be used in the analysis of convergence rates when rapid price rises are used
to mitigate the effects of large demands.

The third parameter is somewhat less intuitive; it governs the needed warehouse sizes.
We call it the \emph{equilibrium flex}, and it is defined as follows.
Let $p_i^{(c)}$ denote the equilibrium prices for supplies $cw_i$, for $1\le i \le n$.
Then we define the equilibrium flex, $e(c)$, by
\[
e(c) = \ln \max_i \left\{ \frac {p_i^*} {p_i^{(c)}}, \frac {p_i^{(1/c)}} {p_i^*} \right\}
\]
where $p^*_i = p_i^{(1)}$ are the equilibrium prices for the market at hand.
We remark that the carry-forward of demand imbalances in Ongoing Markets will create
temporary changes in effective supply levels and hence in the desired equilibrium prices.
It is plausible that the size of these changes affects the convergence rate and not surprising that it
affects the needed warehouse size also.

Bounds on $e(c)$ follow from bounds on the elasticity of wealth, but it
is not clear that this is a tight connection.

For a market of buyers all having CES utilities, the values of all these parameters
are modest, as we will see.

\section{Results}

\subsection{The 1-Good Case in a One-Time Market}

We begin by considering a tatonnement process in the context of
a one-time Fisher market with a single good,
in order to determine reasonable constraints on the demand function,
namely constraints that enable a fairly rapid convergence of the price
toward its equilibrium value.

Let $x$ denote the demand for the good and $p$ its price.
We assume that $x(p)$ is a strictly decreasing function.
Let $w$ denote the supply of the good.
Our goal is to update $p$ repeatedly so as to cause $x$
to converge toward $w$.
We will use the following update rule:
\[
\label{eqn:p-update}
p'=p \left(1+ \lmbd \min\left\{ 1, \frac{x-w}{w} \right\} \right)
\]
where $0 < \lambda \le \frac12$ is a suitable fixed parameter.

We are going to argue that this update function is appropriate
for demands $x$ satisfying:
\begin{eqnarray}
\label{eqn:deriv-bound}
\frac{x(p)}{p} \le -\frac{dx}{dp} \le E\frac{x(p)}{p},
       ~~\mbox{where } 1\le E\le \frac{1}{2\lambda}.
\end{eqnarray}
We will also explain why the update function
can be viewed as an approximation of Newton's method,
but an approximation quite unlike the standard secant method.

\begin{definition}
\label{def:same-side}
Let $x'$ denote $x(p')$.
Price $p'$ is said to be on the \emph{same side} as $p$ if either both $x,x' \ge w$
or both $x,x' \le w$. We also say that $p'$ is a \emph{same-side update}.
\end{definition}
Let $p^*$ denote the value of $p$ for which $x(p^*)=w$.
Note that $p'$ is on the same side as $p$ exactly if both $p,p'\ge p^*$ or if both $p,p'\le p^*$.

While same side updates are not important in the 1d setting, in the  multi-dimensional
setting they are needed to ensure progress.
And so in the next lemma we investigate how large a step size is possible
while having a same side update.
\begin{lemma}
\label{lem:same-side-update}
Suppose that $-p\frac{dx}{dp} \le Ex(p)$, for some $E \ge 1$.
Then
the update $p'=p (1+ \frac{1}{2E} \min\{ 1, \frac{x-w}{w} \} )$ is
on the \emph{same side} as $p$,
but the update $p'=p (1+ \frac{1}{E} \frac{x-w}{w} )$ need not
be.
\end{lemma}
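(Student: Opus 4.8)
\medskip\noindent\emph{Proof approach.}\ The plan is to reduce both assertions to a handful of one-variable inequalities by controlling how far $x$ can move between $p$ and $p^*$. The only tool available is the derivative bound, and the right way to exploit it is to note that $h(p) := x(p)\,p^{E}$ is non-decreasing in $p$, since $h'(p) = p^{E-1}\bigl(p\,x'(p) + E\,x(p)\bigr) \ge 0$ precisely because $-p\,x'(p) \le E\,x(p)$. Comparing $h(p)$ with $h(p^*) = w\,(p^*)^{E}$ yields the two-sided estimate $x(p) \ge w\,(p^*/p)^{E}$ when $p \ge p^*$ (where $x(p)\le w$) and $x(p) \le w\,(p^*/p)^{E}$ when $p \le p^*$ (where $x(p)\ge w$). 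I will also use a cruder estimate valid when $p\le p^*$ and $x(p)\le 2w$: since $x$ is decreasing, $x(q)\le 2w$ for all $q\in[p,p^*]$, hence $x(p)-w = \int_p^{p^*}(-x'(q))\,dq \le \int_p^{p^*}\tfrac{E x(q)}{q}\,dq \le 2Ew\ln(p^*/p)$.

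For the first (claimed) update, observe first that $p'>0$ since $1+\tfrac{1}{2E}\min\{1,\cdot\}\ge 1-\tfrac{1}{2E}\ge\tfrac12$, so ``same side'' is equivalent to ``$p'$ on the same side of $p^*$ as $p$''; then I split into three regimes. (i) If $p\ge p^*$, the min equals $\tfrac{x-w}{w}\in[-1,0]$ and $p'=p-\tfrac{p(w-x)}{2Ew}\le p$, so $p'\ge p^*$ amounts to $\tfrac{w-x}{w}\le 2E(1-t)$ with $t=p^*/p\in(0,1]$; since $\tfrac{w-x}{w}\le 1-t^{E}$, this follows from the elementary fact $1-t^{E}\le 2E(1-t)$ (indeed $g(t)=2E(1-t)-(1-t^{E})$ has $g(1)=0$ and $g'(t)=E(t^{E-1}-2)<0$ on $(0,1]$). (ii) If $p\le p^*$ and $x\le 2w$, the min equals $\tfrac{x-w}{w}\in[0,1]$ and $p'=p+\tfrac{p(x-w)}{2Ew}\ge p$, so $p'\le p^*$, after applying the crude bound, reduces to $p\ln(p^*/p)\le p^*-p$, i.e.\ $\ln u\le u-1$ with $u=p^*/p$. (iii) If $p\le p^*$ and $x>2w$, the min equals $1$ and $p'=p(1+\tfrac{1}{2E})$; from $2w<x(p)\le w(p^*/p)^{E}$ we get $p^*/p>2^{1/E}\ge 1+\tfrac{1}{2E}$ (the last step because $2^{1/E}=e^{(\ln 2)/E}\ge 1+\tfrac{\ln 2}{E}$ and $\ln 2>\tfrac12$), so again $p'\le p^*$. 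In every case $p'$ lands on the same side of $p^*$ as $p$.

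For the second claim I exhibit one demand function on which the $\tfrac1E$-step overshoots: take $w=1$ and $x(p)=p^{-2}$, so $E=2$ and $-p\,x'(p)=2p^{-2}=2x(p)$ satisfies the hypothesis with equality, and $p^*=1$. Starting from $p=\tfrac12<p^*$, where $x=4$ and $\tfrac{x-w}{w}=3$, the update $p'=p\bigl(1+\tfrac1E\cdot\tfrac{x-w}{w}\bigr)=\tfrac12\bigl(1+\tfrac12\cdot 3\bigr)=\tfrac54>1=p^*$, and $x(\tfrac54)=\tfrac{16}{25}<1=w$, so $p'$ is on the \emph{opposite} side (whereas the claimed update gives $\tfrac12(1+\tfrac14)=\tfrac58<1$, consistent with the first part). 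The step I expect to be the real obstacle is choosing the correct bound on $|x(p)-w|$ in each regime: the sharp power bound is what one needs on the demand-deficit side but is useless on the excess side (it can blow up), where instead the logarithmic bound serves for moderate excess and the $\min$-truncation combined with the power bound serves for large excess — it is exactly this asymmetry that forces both the factor $\tfrac{1}{2E}$ and the truncation in the update rule.
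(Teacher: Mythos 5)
Your proof is correct, and it takes a recognizably different route from the paper's. Both arguments ultimately rest on integrating the elasticity bound into the power law $x(p)\,p^{E}$ non-decreasing, but the paper argues by contraposition anchored at $p$: it assumes the update crosses $w$, deduces $w > x(p)(1+\Delta)^{-E}$ from its Lemma~\ref{lem:x-bdd}, and then invokes the Taylor-type estimates of Fact~\ref{fact:taylor} (e.g.\ $(1+\Delta)^{E}\le 1+2E\Delta$ for $E\Delta\le\tfrac12$) to force $\Delta$ to exceed the prescribed step. You instead argue directly toward $p^{*}$, bounding $|x(p)-w|$ in terms of $p^{*}/p$ and reducing each regime to a one-variable inequality ($1-t^{E}\le 2E(1-t)$, $\ln u\le u-1$, $2^{1/E}\ge 1+\tfrac{1}{2E}$). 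The genuinely new ingredient on your side is the logarithmic integral estimate $x(p)-w\le 2Ew\ln(p^{*}/p)$ for the regime $w<x\le 2w$: you correctly identify that the raw power bound $x\le w(p^{*}/p)^{E}$ is the wrong tool there (it blows up), and this substitution is what makes the direct argument go through; the paper sidesteps the issue because its Taylor bound is anchored at $p$ rather than $p^{*}$. For the second claim the paper shows something slightly stronger than you do: working with the extremal family $\frac{dx}{dp}=-E\frac{x}{p}$ and the strict concavity of $t\mapsto t^{1/E}$, it shows the step $\frac{1}{E}\frac{x-w}{w}$ overshoots for \emph{every} starting point with $x>w$ and $E>1$ --- in particular even when $\frac{x-w}{w}\le 1$, so the failure is not an artifact of the missing truncation. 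Your concrete instance ($x(p)=p^{-2}$, $p=\tfrac12$, excess ratio $3$) certainly refutes the untruncated rule as literally stated, and has the virtue of being checkable by hand, but it leaves open whether a truncated $\frac1E$-step would also fail; if you want your write-up to carry the same moral as the paper's, pick a starting point in $(p^{*}2^{-1/E},p^{*})$ or reproduce the concavity argument.
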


Next we investigate the rate of convergence.
It might seem natural to determine how quickly $|x^* - x(p)|$
decreases, but in higher dimensions, this proves a less effective measure.
Instead, we will measure the ``distance'' from equilibrium by the measure, or potential function,
$\phi(x,p)=|x-w|p$.
This can be thought of as the cost in money of the current amount of disequilibrium.

\begin{theorem}
\label{lem:1d-rate-of conv}
If $x\le 2w$ initially, then in one iteration $\phi$ reduces by at least $w|p'-p| = \lambda \phi$,
where $p$ is the price before the update and $p'$ the price after the update.
Hence in $O(\frac{1}{\lmbd}\log\frac{\phi_I}{\phi_F})$
iterations, $\phi$ reduces from $\phi_I$ to at most $\phi_F$.
If $x\ge 2w$ initially,  then in one interation $\phi$ reduces by at least $w|p'-p| = \lambda \frac{w}{x-w}\phi$.
Hence in $O(\frac{1}{\lmbd} \frac{x-w}{w}\log\frac{\phi_I}{\phi_F})$
iterations, $\phi$ reduces from $\phi_I$ to at most $\phi_F$.
\end{theorem}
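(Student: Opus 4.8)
The plan is to push everything through the one-variable potential $\Phi(p) := \phi(x(p),p) = |x(p)-w|\,p$; the only real content will be a differential estimate on $\Phi$ coming from the \emph{lower}-bound half of~(\ref{eqn:deriv-bound}). First I would dispose of the two closed-form assertions by substituting into the update rule: when $x \le 2w$ we have $\min\{1,(x-w)/w\} = (x-w)/w$, so $p'-p = \lambda p (x-w)/w$ and $w|p'-p| = \lambda p\,|x-w| = \lambda\Phi(p)$; when $x \ge 2w$ the min equals $1$, so $p'-p = \lambda p$ and $w|p'-p| = \lambda w p = \tfrac{\lambda w}{x-w}\Phi(p)$. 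Hence in both regimes the per-iteration claim reduces to the single inequality $\Phi(p)-\Phi(p') \ge w|p'-p|$.

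To prove that inequality, the key observation is that since $\lambda \le \frac{1}{2E}$, the update $p \to p'$ is a \emph{same-side} update. This is Lemma~\ref{lem:same-side-update} for step size $\frac{1}{2E}$; for the smaller step $\lambda$ it is immediate, because $p(1+\mu\min\{1,(x-w)/w\})$ is monotone in $\mu$ and, at $\mu=\frac{1}{2E}$, already lies between $p$ and $p^*$ (and is positive, as $1-\mu \ge \frac12$), so the same holds for every $\mu \le \frac{1}{2E}$. Thus $p'$ lies between $p$ and $p^*$, so on the closed interval joining them $x(q)-w$ keeps a fixed sign and $\Phi$ is (absolutely) continuous there with $\Phi(q) = (x(q)-w)q$ for $q \le p^*$ and $\Phi(q) = (w-x(q))q$ for $q \ge p^*$. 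Differentiating and using $-q\,\frac{dx}{dq} \ge x(q)$ from~(\ref{eqn:deriv-bound}), the $x(q)$ terms cancel and we get $\Phi'(q) \le -w$ for $q < p^*$ and $\Phi'(q) \ge w$ for $q > p^*$; in words, moving the price toward $p^*$ always decreases $\Phi$ at rate at least $w$. Integrating $\Phi'$ over the interval between $p$ and $p'$ then gives exactly $\Phi(p)-\Phi(p') \ge w|p'-p|$.

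For the iteration counts I would argue as follows. In the $x \le 2w$ regime the inequality reads $\Phi(p') \le (1-\lambda)\Phi(p)$, and a same-side update preserves $x \le 2w$ (if $x > w$ then $x' = x(p') \le x$; if $x \le w$ then same-side forces $x' \le w$), so $\Phi$ shrinks by the fixed factor $1-\lambda$ each iteration, and $\ln\frac{1}{1-\lambda} \ge \lambda$ yields the stated $O(\frac{1}{\lambda}\log\frac{\phi_I}{\phi_F})$. In the $x \ge 2w$ regime the inequality reads $\Phi(p') \le (1-\frac{\lambda w}{x-w})\Phi(p)$; prices only rise and demand is decreasing, so $x$ — hence $x-w$ — only decreases, so the contraction factor only improves, and while we remain in this regime $\Phi$ shrinks by at least the factor $1-\frac{\lambda w}{x_0-w}$ per step, where $x_0$ is the initial demand. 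After $O(\frac{x_0-w}{\lambda w}\log\frac{\phi_I}{\phi_F})$ such steps either $\Phi \le \phi_F$ already, or the process has entered the $x \le 2w$ regime (note $x$ cannot drop below $w$ in one step, by the upper-bound half of~(\ref{eqn:deriv-bound}), so the first case's hypothesis is met), from which the first case finishes in $O(\frac{1}{\lambda}\log\frac{\phi_I}{\phi_F})$ more steps; either way the total is $O(\frac{1}{\lambda}\cdot\frac{x_0-w}{w}\log\frac{\phi_I}{\phi_F})$.

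The step carrying all the weight is the differential estimate $\Phi'(q) \le -w$ (resp.\ $\ge w$): this is precisely where the a priori redundant-looking lower bound $\frac{x}{p} \le -\frac{dx}{dp}$ in~(\ref{eqn:deriv-bound}) is used, and it is what makes $\Phi = |x-w|\,p$ — rather than the more obvious $|x-w|$ — the right progress measure. The remaining ingredients (the extension of Lemma~\ref{lem:same-side-update} to smaller steps, the mild regularity needed to integrate $\Phi'$, and the phase-transition bookkeeping) are routine.
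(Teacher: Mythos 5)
Your proof is correct and follows essentially the same route as the paper: the per-iteration inequality $\phi-\phi'\ge w|p'-p|$ is obtained from the same-side property together with the lower elasticity bound $-p\,dx/dp\ge x$, your differential estimate $\Phi'(q)\le -w$ (resp.\ $\ge w$) being just the derivative form of the paper's Corollary~\ref{cor:f-grth} that $p\,x(p)$ is non-increasing. You additionally make explicit the extension of Lemma~\ref{lem:same-side-update} to step sizes $\lambda\le\frac{1}{2E}$ and the iteration-count bookkeeping (including the transition from the $x\ge 2w$ regime to the $x\le 2w$ regime), which the paper leaves implicit; these additions are sound.
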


\subsection{The Multidimensional Case, Synchronous Updates, One-Time Market}

We start by investigating the multidimensional case in the event that all prices are updated simultaneously
as the analysis is relatively simple.
We will see that the rate of convergence is very similar to that which was obtained in the 1d case.

Recall that there are $n$ goods $g_1, g_2,  \cdots , g_n $.
Let $p= (p_1 ,p_2 , \cdots,p_n)$ be the $n$-vector of prices
for these goods,
and let  $x_i(p)$, or  $x_i$ for short,
denote the demand for the $i$th good at prices $p$
(that is, it is a function of all the prices,
not just the price of the $i$th good).

Again, we use the update rule
\[
\label{eqn:p-update-i}
p_i'=p_i \left(1+ \lmbd \min\left\{ 1, \frac{x_i-w_i}{w_i} \right\} \right).
\]
Again, we assume that the demands satisfy the bounded rate of change property
which we term \emph{bounded elasticity} in the current context.
\begin{definition}
\label{def:bdd-elas}
Good $i$ has \emph{bounded elasticity} $E$, if for any collection $p$ of prices:
\[
\frac{x_i(p)}{p_i} \le -\frac{dx_i}{dp_i} \le E\frac{x_i(p)}{p_i},
       ~~\mbox{where } 1\le E\le \frac{1}{2\lambda}.
\]
The market has \emph{(demand) elasticity} $E$ if every good has bounded elasticity $E$.
\end{definition}

We are thinking of $E$'s value as a constant.
To motivate this, we give the value of $E$ for some standard utility functions:
if all buyers have Cobb-Douglas utilities \cite{MWG95} (p.~612), then $E=1$; if all buyers have CES
utilities \cite{MWG95} (p.~97) with parameter $\rho$, $0 \le \rho \le 1$, then $E=1/(1 - \rho)$.

We will also need the standard assumption of Weak Gross Substitutes (WGS) to somewhat constrain
how one price change affects demands for other goods.

\begin{definition}
\label{def:WGS}
A market satisfies the \emph{gross substitutes property} \emph{(}GS\emph{)} if for
any good $i$, increasing $p_i$ leads to increased demand for all
other goods. The market satisfies \emph{weak gross substitutes} if
the demand for every other good increases or stays the same.
\end{definition}

We use the potential function $\phi = \sum_i \phi_i $,
where $\phi_i = p _i |x _i - w _i| $.
So the effect of an update to $p_i$ on $\phi_i$ is just as in the 1-d case.
However the update may also affect the demands $x_j$,
and hence the potentials $\phi _j$, for goods $j \ne i$.

\begin{theorem}
\label{cor:sim-progress}
Suppose that the market obeys WGS and has elasticity $E$.
When all the prices are updated simultaneously, if $\lmbd(2E-1) \le \frac12$,
$\phi=\sum_i \phi_i$ reduces by at least
$\sum_i \lmbd \phi_i^I\min\left\{1, \frac{w_i}{|x_i - w_i|}\right\}$.
If the demands satisfy $x_i \le dw_i$ throughout, then
$\phi$ reduces from $\phi_I$ to at most $\phi_F$
in $O(\frac{1}{\lmbd} (d-1)\log\frac{\phi_I}{\phi_F})$
iterations.
\end{theorem}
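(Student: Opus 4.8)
The plan is to track the potential $\phi$ continuously along the straight‑line path $p(t)=p+t(p'-p)$, $t\in[0,1]$, from the pre‑update prices $p$ to the post‑update prices $p'$, and to show that at (almost) every $t$,
\[
\frac{d}{dt}\phi(p(t)) \;\le\; -\lambda \sum_i p_i w_i |\delta_i|, \qquad \delta_i:=\min\Bigl\{1,\tfrac{x_i-w_i}{w_i}\Bigr\},
\]
where the $\delta_i$ are the fixed signed step fractions determined by the surpluses at the \emph{start} of the iteration (so along this path $\dot p_i(t)=p_i'-p_i=\lambda p_i\delta_i$ is constant). Integrating over $t\in[0,1]$ then gives that $\phi$ drops by at least $\sum_i w_i|p_i'-p_i| = \sum_i \lambda p_i\min\{w_i,|x_i-w_i|\} = \sum_i \lambda\phi_i^I\min\{1,w_i/|x_i-w_i|\}$, which is the first assertion. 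The own‑price ($p_i$‑on‑$\phi_i$) contribution to $\frac{d}{dt}\phi$ is exactly the one‑dimensional computation behind Theorem~\ref{lem:1d-rate-of conv}; the new content is bounding the cross‑price contributions using WGS together with the elementary structure of Fisher demand.

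The key step is a rewriting of $\frac{d}{dt}\phi(p(t))$. Writing $z_k=z_k(p(t))=x_k(p(t))-w_k$, $\zeta_k=|z_k|$, $s_k=\operatorname{sign}(z_k)$, and using $\dot\phi_k=\dot p_k\zeta_k + p_k(t)\,s_k\sum_i \tfrac{\partial x_k}{\partial p_i}(p(t))\,\dot p_i$, one swaps the order of summation to get
\[
\frac{d}{dt}\phi(p(t)) \;=\; \sum_k \dot p_k\Bigl[\zeta_k + I_k\Bigr],\qquad I_k := \sum_i p_i(t)\, s_i\, \tfrac{\partial x_i}{\partial p_k}(p(t)).
\]
Now $s_kI_k$ is controlled by three facts, all holding at $p(t)$: (i) WGS gives $\tfrac{\partial x_i}{\partial p_k}\ge 0$ for every $i\neq k$; (ii) because buyers spend all their money, $\sum_i p_i x_i \equiv M$ identically in the prices, so differentiating in $p_k$ yields $\sum_i p_i\tfrac{\partial x_i}{\partial p_k}=-x_k$; (iii) bounded elasticity gives $\tfrac{\partial x_k}{\partial p_k}<0$ with $p_k|\tfrac{\partial x_k}{\partial p_k}|\ge x_k$. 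Splitting $I_k$ into its diagonal ($i=k$) and off‑diagonal terms, bounding $s_k\cdot(s_i$\,$\times$ nonnegative off‑diagonal term$)$ by the off‑diagonal term itself, and applying (ii) to the resulting off‑diagonal sum, the elasticity coefficients cancel and one obtains $s_kI_k\le -x_k(p(t))$. Assuming for the moment that no surplus changes sign along the path (so $s_k=\operatorname{sign}(\delta_k)$, hence $\dot p_k=\lambda p_k s_k|\delta_k|$ and $s_k\zeta_k=z_k=x_k(p(t))-w_k$), each summand of $\frac{d}{dt}\phi$ is $\lambda p_k|\delta_k|(s_k\zeta_k+s_kI_k)\le \lambda p_k|\delta_k|\bigl((x_k(p(t))-w_k)-x_k(p(t))\bigr)=-\lambda p_kw_k|\delta_k|$, which is the claimed pointwise bound.

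The main obstacle is exactly the assumption just made: a surplus $z_k$ might cross zero somewhere on the path, which breaks $s_k=\operatorname{sign}(\delta_k)$ and can make a summand positive. This is where the hypothesis $\lambda(2E-1)\le\frac12$ is used: since $E\ge1$ it forces $\lambda\le\frac12$ and hence $2\lambda E=\lambda(2E-1)+\lambda\le1$, so every price update is small enough that — by the reasoning behind Lemma~\ref{lem:same-side-update}, now carried out in the presence of the WGS cross‑effects — the update is same‑sided, and in fact no $z_k$ crosses zero anywhere along $p(t)$ (the demand may move transiently toward $w_k$ but not past it). Making this last point fully rigorous, i.e.\ ruling out an interior sign change for some good while many other prices are moving simultaneously, is the delicate part of the argument and the place where the upper bound on $E$ genuinely enters.

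Finally, the iteration count follows as in the one‑dimensional case. Under the hypothesis $x_i\le d w_i$ throughout, every good has $\min\{1,w_i/|x_i-w_i|\}\ge\min\{1,1/(d-1)\}=1/(d-1)$ (for $d\ge 2$), so the per‑iteration decrease in $\phi$ is at least $\tfrac{\lambda}{d-1}\sum_i\phi_i=\tfrac{\lambda}{d-1}\phi$; thus $\phi$ contracts by a factor $1-\tfrac{\lambda}{d-1}$ each iteration and falls from $\phi_I$ to $\phi_F$ within $O\bigl(\tfrac{1}{\lambda}(d-1)\log\tfrac{\phi_I}{\phi_F}\bigr)$ iterations. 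This step is routine.
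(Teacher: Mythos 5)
Your continuous-path computation is clean and correct as far as it goes: the identity $\frac{d}{dt}\phi=\sum_k\dot p_k[\zeta_k+I_k]$, the bound $s_kI_k\le -x_k$ from WGS plus Walras' law plus the lower elasticity bound, and the resulting pointwise estimate $-\lambda\sum_k p_kw_k|\delta_k|$ all check out \emph{provided} $\operatorname{sign}(x_k(p(t))-w_k)$ agrees with its initial value for every $k$ and every $t\in[0,1]$. But that proviso is the entire difficulty, and your treatment of it is a genuine gap: you assert that $\lambda(2E-1)\le\frac12$ forces every surplus to stay on its original side along the whole path, deferring to ``the reasoning behind Lemma \ref{lem:same-side-update} \ldots in the presence of the WGS cross-effects.'' That reasoning does not extend. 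The elasticity bound (Definition \ref{def:bdd-elas}) constrains only the \emph{own-price} derivative $\partial x_i/\partial p_i$; the cross-derivatives $\partial x_k/\partial p_j$, $j\ne k$, are constrained only in sign by WGS and in aggregate by Walras' law, not in magnitude. So a good $k$ with $x_k$ just below $w_k$ (hence a negligible own price change) can be pushed well past $w_k$ by the simultaneous price increases of other goods, no matter how small $\lambda$ is. Once $z_k$ crosses zero, your summand for good $k$ flips sign and can contribute up to $+\lambda p_kw_k|\delta_k|$, so the pointwise bound fails and the integral argument does not close.

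This is exactly the obstacle the paper's proof is built to circumvent: it never claims that no surplus changes sign. Instead it applies the updates sequentially (with the original $\delta_i$'s) and proves in Lemma \ref{lem:ordering-updates} that there is an \emph{ordering} of the goods such that each $p_i$ is updated while $x_i-w_i$ still has its original sign; the budget argument there (reductions to under-demanded goods' $\psi$'s must be matched by reductions on the over-demanded side, totalling at most $\frac12\sum\phi_h^I$ by Lemma \ref{lem:seq-updates}, which is where $\lambda(2E-1)\le\frac12$ is actually used) guarantees at least one good survives to be scheduled last, and induction finishes. Combined with Lemma \ref{lem:progress-1} this yields the per-good reduction $I_i$ without ever controlling individual sign crossings along a continuous path. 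To repair your argument you would need either a proof that crossings cannot occur (false, as above) or a charging scheme that offsets the positive post-crossing contributions --- which essentially reconstructs the paper's ordering argument. Your derivation of the iteration count from the per-step decrease is fine.
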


Bounding $d$ is non-trivial, for an update to $x_j$, $j  \ne i$, may cause $x_i$ to
increase.  We return to this issue later.

\subsection{The Multidimensional Case, Asynchronous Updates, One-Time Market}

But we don't want to assume that all the price updates are simultaneous.
For this to make sense we have to reinterpret the meaning of the parameters for supply and demand.
We now view these as rates.

To make this concrete, let us call the basic time unit a day.
$w_i$  is the daily supply of good $i$, which is assumed to arrive continuously throughout the day.
$x_i$ will be the instantaneous rate of demand given the current prices.
The demand over some time period  $[t_1 , t_2] $ is given by $\int _{t_1 } ^{t_2 } x_i (t) dt $.
 We are interested in scaling this demand by the time period;
we call this the average demand, denoted  ${\overline x}_i [ t_1 , t_2 ] $;
i.e., ${\overline x}_i [ t_1 , t_2 ] = \frac {1} {t_2 -t_1 }\int _{t_1 }^ {t_2 } x_i (t)dt$.
We are particularly interested in the average demand for good $i$ since the time of last update to $p_i$,
at time $\tau_i$ say:
we define the average demand at the current time by
${\overline x}_i = {\overline x}_i [\tau_i ,  t] =\frac {1} {t_2 -t_1 }\int _{t_1 } ^{t_2 } x_i (t) dt $.
As we will see later when we introduce warehouses,
the average demand can be measured in our setting and
so this is the value for the demand that will be used in the price update function:
\[
p_i'=p_i \left(1+ \lmbd \min\left\{ 1, \frac{{\overline x}_i - w_i}{w_i} \right\} \right).
\]
To ensure progress we require that each price updates at least once a day.

We use the following potential function: $\phi=\sum_i \phi_i$, where
\begin{eqnarray}
\label{eqn:phi-asynch-def}
\phi_i(x_i,{\overline x}_i,w_i) =
p_i \left[\span(x_i,{\overline x}_i,w_i) -\alpha_1 \lmbd|w_i - \xbi|(t - \tau_i)\right],
\end{eqnarray}
with $\alpha_1  > 0 $ being a suitable constant and
$\span(x,y,z)$
denoting the length of the interval spanned by its arguments, i.e.\
$\max\{x,y,z\} - \min\{x,y,z\}$.

We now show a bound on the convergence rate which is broadly the same as the one that applies when the updates are synchronous.
\begin{theorem}
\label{lem:progr-bdd-dem}
Suppose that the market obeys WGS and has elasticity $E$.
If $x_i \le d w_i$, for all $i$, where $d \ge 2$,
$\alpha_1 (d-1) \le 1$,
$\loa + \lambda \left(1 + \frac{2Ed}{1-\lmbd E } \right)\le 1$,
and each price is updated at least once every day,
then $\phi$ decreases by at least a $1 - \frac{\loa}{2}$ factor daily.
Hence $\phi$ reduces from $\phi_I$ to at most $\phi_F$
in $O(\frac{1}{\loa} (d-1)\log\frac{\phi_I}{\phi_F})$
days.
\end{theorem}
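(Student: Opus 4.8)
The plan is to follow the value of $\phi=\sum_i\phi_i$ through one generic day and show it shrinks by the advertised factor, separating the continuous drift between price‑change events from the discrete jumps at those events. Since prices are piecewise constant in time, between consecutive updates every instantaneous demand $x_i$ is fixed and only the running averages $\overline x_i$ and the elapsed times $t-\tau_i$ move, with $\frac{d}{dt}\overline x_i=(x_i-\overline x_i)/(t-\tau_i)$. A short case analysis on the order of $x_i,\overline x_i,w_i$ should show each $\phi_i$ is non‑increasing between events: the explicit term $-\alpha_1\lmbd|w_i-\overline x_i|(t-\tau_i)$ acts as a decay credit whose rate of growth is at least the rate at which $p_i\span(x_i,\overline x_i,w_i)$ can grow as $\overline x_i$ drifts, while the hypothesis $\alpha_1(d-1)\le 1$, together with $t-\tau_i\le 1$ and $|w_i-\overline x_i|\le(d-1)w_i$, simultaneously keeps that credit from ever exceeding $p_i\span(\cdot)$, so that $\phi_i\ge 0$ throughout.

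Next I would analyze an update of $p_i$ at a time $t_1$. Writing $\delta_i=\lmbd\min\{1,(\overline x_i-w_i)/w_i\}$ (signed), we have $p_i\mapsto p_i(1+\delta_i)$, the averages reset ($\overline x_i\to x_i'$, the new instantaneous demand, and $t-\tau_i\to 0$), and $\phi_i$ collapses to $p_i(1+\delta_i)|x_i'-w_i|$. Because $E\le\frac1{2\lmbd}$, the update step has relative size $\le\frac1{2E}$, the regime in which Lemma~\ref{lem:same-side-update} rules out overshooting past $w_i$. Running the one‑dimensional estimate behind Theorem~\ref{lem:1d-rate-of conv} with $\overline x_i$ playing the role of the driving demand, and cashing in the decay credit accumulated since $\tau_i$, the net drop in $\phi_i$ at this update should be at least $\lmbd\min\{1,\frac{w_i}{|\overline x_i-w_i|}\}$ times the value $\phi_i$ held at the start of the day, up to a correction for how much $p_i\span(x_i,\overline x_i,w_i)$ could have drifted between $\tau_i$ and $t_1$ because of other goods' updates.

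The crux is bounding those cross‑effects. When $p_i$ moves, WGS forces every $x_j$, $j\ne i$, to move monotonically in a single direction, which can enlarge some $\phi_j$; but since $w_j,\overline x_j$ and $p_j$ are unchanged at that instant, $\sum_{j\ne i}(\text{increase in }\phi_j)\le\sum_{j\ne i}p_j|x_j'-x_j|$, and by the Fisher budget identity $\sum_k p_kx_k=\sum_\ell v_\ell$ (constant) this is exactly the spending displaced from good $i$ by its update. Integrating the elasticity bound $\frac{d\ln x_i}{d\ln p_i}\in[-E,-1]$ over a price change of relative size $|\delta_i|\le\lmbd$ (which contributes the factor $(1-\lmbd E)^{-1}$) and using $x_i\le dw_i$ should bound that displaced spending by $O\!\big(\frac{Ed}{1-\lmbd E}\big)\cdot\lmbd\, p_i|\overline x_i-w_i|\cdot\min\{1,\frac{w_i}{|\overline x_i-w_i|}\}$, i.e.\ by a constant multiple of the self‑drop's natural scale — this is where the term $\frac{2Ed}{1-\lmbd E}$ in the hypothesis comes from. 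Combining the self‑drop, the cross‑increase, and the decay‑credit bookkeeping, the hypothesis $\loa+\lmbd\big(1+\frac{2Ed}{1-\lmbd E}\big)\le 1$ is precisely what forces the decay credit to dominate the sum of the self‑coefficient $\lmbd$ and the cross‑coefficient, so the aggregate change at the update is a genuine decrease. I expect this step — making the constants in the displaced‑spending bound line up with the stated hypothesis, uniformly in the price‑up and price‑down cases — to be the main obstacle.

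Finally I would assemble the pieces: over one day every $p_i$ updates at least once and $\phi$ never rises in between, so the total decrease over the day is at least the sum over $i$ of the per‑update drops, each a fixed multiple of $\phi_i(\text{start of day})$ once $\min\{1,\frac{w_i}{|\overline x_i-w_i|}\}\ge\frac1{d-1}$ is used (this is where $|x_i-w_i|\le(d-1)w_i$ enters and where the $d-1$ in the iteration count originates; for $d=2$ the per‑day factor is $1-\frac{\loa}{2}$). Hence $\phi$ falls by the claimed multiplicative factor each day, and iterating over $\Theta\!\big(\frac{1}{\loa}(d-1)\log\frac{\phi_I}{\phi_F}\big)$ days brings it from $\phi_I$ down to $\phi_F$.
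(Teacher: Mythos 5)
There is a genuine gap, and it lies in where you locate the source of the daily multiplicative decrease. You make the continuous evolution between updates merely \emph{non-increasing} (a ``decay credit'') and then try to extract the drop $\lmbd\min\{1,\frac{w_i}{|\xbi-w_i|}\}\,\phi_i(\text{start of day})$ from each price update. But the update magnitude is $|\Delta_i p_i|=\lmbd p_i\min\{1,|\xbi-w_i|/w_i\}$, which is controlled by $|\xbi-w_i|$ and can be negligible compared to $\phi_i$: if $p_i$ was last updated at $\tau_i$ with $x_i=w_i$ there, and late in the interval other goods' price changes push $x_i$ far from $w_i$, then $\xbi\approx w_i$ while $\phi_i\approx p_i\,\span(x_i,\xbi,w_i)\approx p_i|x_i-w_i|$ is large; the update then produces essentially no drop, and your ``non-increasing between events'' bound contributes nothing either, so no multiplicative decrease is established. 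The paper's argument is inverted relative to yours: Lemma \ref{lem:cont-progress} shows $\frac{d\phi_i}{dt}\le-\loa\,\phi_i$ \emph{continuously} between updates --- this is exactly why the subtracted term $\alpha_1\lmbd|w_i-\xbi|(t-\tau_i)$ is built into $\phi_i$, and its time-derivative works out to $\loa p_i|x_i-w_i|$, matching the span term even in the scenario above --- and then Corollary \ref{cor:updates-only-help} shows only that each update leaves $\phi$ non-increasing. The daily factor $1-\frac{\loa}{2}$ comes entirely from integrating the continuous decay over one day; the updates are not required to produce any drop at all.

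A secondary problem: you invoke Lemma \ref{lem:same-side-update} to rule out ``overshooting past $w_i$,'' but in the asynchronous setting the update is computed from $\xbi$ while the relevant demand is the instantaneous $x_i$, and these can sit on opposite sides of $w_i$. The dangerous case is an \emph{away-from-$w_i$} update (e.g.\ $\xbi>w_i>x_i$, so the price rises and drives $x_i$ further below $w_i$), which genuinely increases the span term; this is Case 2 of Lemma \ref{lem:drop-span}, and it is precisely there that the hypotheses $x_i\le dw_i$ and $\loa+\lmbd(1+\frac{2Ed}{1-\lmbd E})\le1$ are consumed. Your budget-identity bound on the spending displaced to goods $j\ne i$ is the right idea (it is the paper's Lemma \ref{lem:progress-1}), but without the toward/away case split and without the continuous-decay lemma carrying the actual progress, the pieces do not assemble into the stated daily factor.
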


In the current setting, the natural measure of misspending on good $i$ is
$|\xbi -w_i|p_i$, for as we will see when we consider warehouses, the difference
$|\xbi -w_i|$ is what we can measure, and this is used as the indicator of how far from equilibrium $p_i$ currently is,
and hence how to update it.
However, we also have to take account of the current rate of excess demand,
$|x_i -w_i|$. Thus we define the misspending $S$ to be
\[
S =  \sum_i [|x_i(t') - w_i| +|\xbi -w_i|]p_i.
\]
Clearly, $ \phi =\theta( S)$.

\Xomit{
\begin{lemma}
\label{lem:missp-pot}
$ \phi =\theta( S)$.
\end{lemma}
}

We can conclude that, while it is not necessarily decreasing at all times,
over time the misspending decreases at the same rate as $\phi$.

\subsection{The Ongoing Market, or Incorporating Warehouses}
\label{sec:intro-war}

For the market model to be self-contained, we need to explain how excess
demand is met and what is done with excess supply. The solution is simple:
we provide finite capacity warehouses (buffers in computer science terminology)
that can store excess supply and meet excess demand.
There is one warehouse per good.
The price-setter for a good changes
prices as needed to ensure the corresponding warehouse
neither overfills nor runs out of goods.

Just as the demand is a rate, we imagine the supply to be a rate, which
for the purposes of analysis we treat as being a fixed rate.
Each $\Del t$ long instant, the resulting excess demand or surplus, $(x_i-w_i)\Del  t$,
is taken from or added to the warehouse stock.

Let $c_i$ be the capacity of the warehouse for good $i$.
Each warehouse is assumed to have a target ideal
content of $s^*_i$ units (perhaps the most natural value is
$s^*_i=c_i/2$).

The goal is to repeatedly adjust prices so as to converge to
near-equilibrium prices with the warehouse stocks converging to
near-ideal values.
A further issue is to determine what size warehouse suffices,
which we defer to a later section.

The price update rule needs to take account of the current state of the
warehouse, namely whether it is relatively full or empty.
To this end, let $\tau_i$ be the time of
the previous update to $p_i$. let $t$ be the current time,
and let $s_i$ denote the current contents of
warehouse $i$.
Then the target excess demand, $\overline z_i$,
is given by $\xbi [\tau_i, t]
- w_i  + \kappa_i (s_i - s_i^*)$ where $\kappa_i > 0$ is
a suitable (small) parameter.
So $\overline z_i = \frac{s_i (\tau_i) - s_i
(t)}{t - \tau_i} + \kappa_i (s_i - s_i^*)$ and is readily calculated by
monitoring warehouse stocks.
We let $\wti$ denote $w_i - \kappa_i (s_i - s_i^*)$,
which we call the \emph{target demand}.

We will need the following constraint on $\kappa_i$.
\begin{constraint}
\label{cst:wti-bdd}
$\kappa_i (s_i -s^*_i) = |\wti - w_i| \le \frac13 w_i$.
\end{constraint}

For simplicity, henceforth we assume that $\kappa=\kappa_i$ for all $i$.

The price of good $i$ is updated according to the following rule:
\begin{equation}
\label{eqn:pr_change}
p'_i \leftarrow  p_i \left(1 +
     \lambda~ \mbox{median}\left\{-1, \frac{ \overline z_i
         (p)}{w_i}, 1 \right\}
   \right)
\end{equation}
This rule ensures that the change to $p_i$ is bounded by
$\pm \lmbd p_i$.

We redefine the potential $\phi_i$ to also take account of the imbalance in the warehouse
stock as follows:
\[
\phi_i = p_i[\span(x_i, \xbi, \wti) -
\loa (t - \tau_i) | \xbi- \wti |
+ \alpha_2 | \wti - w_i |]
\]
where $1< \alpha_2 < 2$ is a suitable constant;
this is simply Equation \ref{eqn:phi-asynch-def}, with
$\wti$ replacing $w_i$ and with the additional term
$ \alpha_2 p_i | \wti - w_i |$.

We redefine the misspending $S$ to take account of the warehouse imbalance:
\[
S =  \sum_i [|x_i(t') - w_i| +|\xbi -w_i|]p_i. + | \wti(t') - w_i |] p_i .
\]
Again, $ \phi =\theta( S)$.

\begin{theorem}
\label{lem:war-progr-bdd-dem}
If Constraint \ref{cst:wti-bdd} holds and $x_i \le d \wti$,
$\alt = \frac 32$, $\alo = \frac{1}{16}$, $\lE \le \frac{1}{17}$, $\lE d \le \frac{5}{17}$,
$\lmbd \le \frac{1}{14}$, $\kpi \le \frac{\loa}{10}$,
\Xomit{
$\frac{\alt}{2} + \alpha_1  \max\{\frac32,(d-1)\} \le 1$,
$\loa + \frac43 \lmbd  \left(1 + \frac{2E d}{1-\lE}  + \frac12 \alt \right) \le 1$,
$ 4\kpi(1 + \alt) \le \loa \leq \frac{1}{2}$,
$\frac{\kmin(\alt -1)}{2} \le 1$,
}
and each price is updated at least once every day,
then $\phi$ decreases by at least a $1 - \frac{\kmin(\alt -1)}{4}$ factor daily.

In fact, if
$\phi \ge 2(1 + 2\alt) \sum_i |\wti - w_i| p_i$,
then $\phi$ decreases by at least a $1 - \frac{\loa}{8(1 +\alt)}$ factor daily.
\end{theorem}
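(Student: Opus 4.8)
The plan is to follow the structure of the proof of Theorem~\ref{lem:progr-bdd-dem}, the only changes being that the fixed supply $w_i$ is replaced everywhere by the moving target $\wti$ and that $\phi_i$ carries the extra term $\alt p_i|\wti-w_i|$; consequently there are two new phenomena to control: the drift of $\wti$ as warehouse $i$ fills or drains, and the behaviour of this extra term. Constraint~\ref{cst:wti-bdd} keeps $\wti$ within $\tfrac13 w_i$ of $w_i$, so up to constant factors the hypothesis $x_i\le d\wti$ behaves like a bound $x_i\le O(d)w_i$, and the one-dimensional estimates stated earlier apply with $\wti$ playing the role of the equilibrium value.

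First I would fix a day $[T,T+1]$ and, exactly as for Theorem~\ref{lem:progr-bdd-dem}, write the net change $\phi(T{+}1)-\phi(T)$ as a sum of contributions from the discrete price updates occurring that day and from the continuous evolution of $x_i,\xbi,\wti$ and $t-\tau_i$ between updates. For one update of $p_i$: since $\zbi=\xbi-\wti$ and the step $\lmbd\,\mathrm{median}\{-1,\zbi/w_i,1\}$ has magnitude of the type analysed in Lemma~\ref{lem:same-side-update} and Theorem~\ref{lem:1d-rate-of conv} --- this is where $\lmbd$, $\lE$ and $\lE d$ being small enter --- the span part of $\phi_i$ falls by a $\Theta(\lmbd)$ fraction of $\span(x_i,\xbi,\wti)$ (or $\Theta(\lmbd)\tfrac{w_i}{|\xbi-\wti|}$ of it in the high-demand case $x_i>2\wti$), while resetting $\tau_i$ only forfeits the already-accrued penalty $\loa(t-\tau_i)|\xbi-\wti|p_i$ and costs the extra term a further $\le\lmbd\alt p_i|\wti-w_i|$ from the price change. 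The WGS cross-effect of this update on the other $\phi_j$ is bounded, just as in Theorems~\ref{cor:sim-progress} and~\ref{lem:progr-bdd-dem}, by a multiple of $\lE$ times the relevant spending terms, using the elasticity bound of Definition~\ref{def:bdd-elas} together with the monotonicity of Definition~\ref{def:WGS} (e.g.\ via conservation of total spending in a Fisher market, which caps the extra money reaching the other goods by the money leaving good~$i$).

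Next I would treat the genuinely new pieces. Because $\tfrac{d}{dt}\wti=\pm\kappa(x_i-w_i)$, the extra term grows at rate at most $\alt\kappa p_i|x_i-w_i|$, and the $\wti$-drift contributes at most $\kappa|x_i-w_i|$ to $\tfrac{d}{dt}\span(x_i,\xbi,\wti)$; since $\kappa\le\loa/10$ is small, both are absorbed either against the span term (which is at least $p_i|x_i-\wti|$) or against the penalty term $-\loa(t-\tau_i)|\xbi-\wti|$ and the slack from the update step, exactly as the drift of $x_i$ caused by other goods' updates is absorbed in Theorem~\ref{lem:progr-bdd-dem}. Finally, because the update drives $\xbi$ toward $\wti$ and $\wti\ne w_i$ forces a persistent imbalance that returns warehouse $i$'s stock --- hence $\wti$ --- to $w_i$ at fractional rate $\Theta(\kappa)$, the term $\alt p_i|\wti-w_i|$ itself shrinks; the portion not spent on compensating the span is $\Theta((\alt-1)\kappa)$ of it, which is why $1<\alt<2$ and why the guaranteed daily factor is $1-\tfrac{\kappa(\alt-1)}{4}$. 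Summing everything over the day and plugging in $\alo=\tfrac1{16}$, $\alt=\tfrac32$ and the inequalities on $\lmbd,\lE,\lE d,\kappa$ gives the first bound. For the ``in fact'' clause, note that when $\phi\ge 2(1+2\alt)\sum_i|\wti-w_i|p_i$ the warehouse-imbalance terms, and the $\wti$-drift they cause, are only a small constant fraction of $\phi$, so the span terms carry a constant fraction of $\phi$ and the estimate degenerates to the warehouse-free bound of Theorem~\ref{lem:progr-bdd-dem}, yielding the faster factor $1-\tfrac{\loa}{8(1+\alt)}$.

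The step I expect to be the real obstacle is the joint bookkeeping of the per-update span decreases, the accrued penalties, the WGS cross-effects and the $\wti$-drift: one must show --- simultaneously for all goods and over an entire day in which each price may be updated arbitrarily often and in either direction --- that the span decreases plus accrued penalties dominate the sum of cross-effects, forfeited penalties and $\wti$-drift effects, with nothing double-counted. This is the same delicate amortization as in Theorem~\ref{lem:progr-bdd-dem}, now complicated by the fact that $\wti$ is a functional of the whole demand history through the warehouse integral, so the reset bookkeeping for the fast variable $\xbi$ and the slowly-moving $\wti$ must be disentangled; making all the constants close (notably $\kappa\le\loa/10$, $\lE\le\tfrac1{17}$ and $\lE d\le\tfrac5{17}$) is precisely where the effort goes.
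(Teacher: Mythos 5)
Your proposal takes essentially the same route as the paper: the paper's proof is precisely the decomposition you describe, packaged as Corollary~\ref{cor:war-updates-only-help} (each price update leaves $\phi$ non-increasing, via the spending-neutral accounting, the forfeited penalty, the $\frac12\alt\wti|\Del_i p_i|$ cost of the warehouse term, and the WGS/elasticity cross-effect bounds you outline) together with Lemma~\ref{lem:war-cont-progress} (between updates $\frac{d\phi_i}{dt}\le-\frac{\kpi(\alt-1)}{2}\phi_i$, the $(\alt-1)\kpi$ margin arising from exactly the drift-versus-shrinkage trade-off you identify), integrated over a day; the ``in fact'' clause is Corollary~\ref{cor:phi-rate-of-change}, as you suggest. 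The amortization you flag as the main obstacle is resolved in the paper by this modularization into a continuous-decrease bound plus a no-increase-at-events bound, so nothing need be tracked jointly across goods beyond the conservation-of-spending argument.
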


Once the potential is largely due to the warehouse imbalances, the daily rate
 of decrease of $\phi$ drops from $1 - \theta(\loa)$ to $1 - \theta(\kpi)$,
which seems unavoidable as the improvement to the warehouse contents may be
only $w_i$, which contributes an $\alt\kpi w_i p_i$ reduction to $\phi_i$, but $\phi_i$
could be of size $\theta(w_i p_i)$.

We do not mean to suggest that the above values for the parameters are tight or even nearly tight.
Rather the result should be understood as indicating the order of magnitude rate of convergence.

\subsection{Bounds on Prices and Demands}
\label{sec:bounds-using-w-elas}

Here, we
determine an bounds on prices and demands that hold
throughout, given initial bounds on demands, and hence what value of $d$
can be used in Theorem{lem:war-progr-bdd-dem}.

\begin{definition}
\label{def:bdd-prices-wrhs}
Prices are $f$-bounded if $p_i$ always remains in the range $p^*_ie^{\pm f}$ for all $i$,
where $p^*_i$ is the equilibrium value for $p_i$,
and demands are $d$-bounded if $x_i \le d w_i$ for all $i$.
\end{definition}

\begin{definition}
\label{def:var-dem-equil}
$p^{(c)}$ denotes the equilibrium prices for supplies $c w_i$.
\end{definition}

\begin{definition}
\label{def:dmd-bdd-prices}
Prices $p$ are \emph{$c$-demand bounded} if $p_i\in [p^{(c)}_i, p^{(1/c)}_i] $
for all $i$.
The equilibrium flex, $e(c)$, is defined to be
\[
e(c) = \ln \max_i \left\{ \frac {p_i^*} {p_i^{(c)}}, \frac {p_i^{(1/c)}} {p_i^*} \right\}
\]
where $p^*_i = p_i^{(1)}$ are the equilibrium prices for the market at hand.
Note that $c$-demand bounded prices are $e(c)$-bounded.
\end{definition}
For CES utilities (with parameter $\rho$) $e(c) = \ln c$ (it is independent of $\rho$).

\begin{lemma}
\label{prices-stay-dmd-bdd}
If $\frac{\lE}{1 - \lE} \le \frac 16$, and if
initially the prices have all been $c$-demand bounded for a full day
for some $c \ge 2$, they remain $c$-demand bounded thereafter.
\end{lemma}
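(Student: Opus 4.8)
The plan is to show that the set of $c$-demand bounded price vectors is, in a suitable sense, invariant under the daily dynamics, using the update rule \eqref{eqn:pr_change} together with WGS and the bounded elasticity property. The key observation is that a price $p_i$ can only leave the interval $[p^{(c)}_i, p^{(1/c)}_i]$ by moving past one of its endpoints, and at an endpoint the sign of the relevant excess demand is controlled. Concretely, suppose at the start of some day all prices lie in the $c$-demand bounded box and $p_i = p^{(c)}_i$ (the low endpoint; the high endpoint is symmetric). I would first argue that, because $p_i$ is as low as the equilibrium price for the inflated supply $c w_i \ge 2 w_i$ while the other prices $p_j$ are all at least $p^{(c)}_j$, WGS forces the true demand $x_i$ at the current prices to be at least $x_i(p^{(c)}) = c w_i \ge 2 w_i$; after accounting for the warehouse term $\kappa(s_i - s_i^*)$, which by Constraint~\ref{cst:wti-bdd} shifts things by at most $\tfrac13 w_i$, the target excess demand $\overline z_i$ is still positive and in fact $\overline z_i / w_i$ is large enough that the median in \eqref{eqn:pr_change} is positive (indeed it is pinned near $1$). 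Hence the update moves $p_i$ \emph{upward}, away from the low endpoint — so $p_i$ cannot escape downward.

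The delicate direction is the reverse: having established that $p_i$ rises when it is at (or near) the low endpoint, I must rule out overshooting the high endpoint $p^{(1/c)}_i$ in a single day. Here is where the hypothesis $\tfrac{\lE}{1-\lE} \le \tfrac16$ and the requirement that prices have been $c$-demand bounded \emph{for a full day} enter. Over one day, $p_i$ changes by a multiplicative factor in $[1-\lambda, 1+\lambda]$ from at least one update and more from additional updates; but each update is also self-correcting — once $\overline z_i$ has been driven down to near $w_i$ (equivalently $p_i$ near $p^*_i$), the median term becomes small and the step shrinks. I would quantify this by bounding how far $p_i$ can travel in one day starting from $p^{(c)}_i$: the total upward multiplicative drift is at most roughly $e^{\Theta(\lambda E)}$ relative to the point where excess demand vanishes, and the constraint $\tfrac{\lE}{1-\lE}\le\tfrac16$ makes this drift small enough (at most a factor $e^{1/3}$, say) that $p_i$ stays below $p^{(1/c)}_i = p^*_i e^{e(c)}$ with $e(c) \ge \ln 2$ to spare. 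The full-day hypothesis guarantees that when we begin the inductive step the system is already in the box rather than just entering it, so the per-day drift bound can be applied cleanly.

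I would structure the argument as an induction on days: assuming the prices are $c$-demand bounded at the start of day $k$, show each $p_i$ remains in $[p^{(c)}_i, p^{(1/c)}_i]$ throughout day $k$ and at the start of day $k+1$. Within a day this reduces to the two endpoint analyses above, applied at whatever sub-day instant a price first threatens to cross an endpoint, using that intra-day the price is monotone in the direction dictated by the sign of $\overline z_i$ between consecutive updates. The main obstacle I anticipate is the quantitative overshoot bound in the second paragraph: relating the cumulative multiplicative change in $p_i$ over a day to the elasticity $E$ and to the gap between $p^{(c)}_i$ and $p^{(1/c)}_i$ requires carefully tracking how $\overline z_i$ (a time-average, not an instantaneous quantity) evolves as $p_i$ moves, and ensuring the warehouse correction term $\kappa(s_i-s_i^*)$ — bounded by Constraint~\ref{cst:wti-bdd} — does not itself push $p_i$ out of range. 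Establishing that $\overline z_i$ shrinks fast enough, in concert with the bounded elasticity bound $\tfrac{x_i}{p_i}\le -\tfrac{dx_i}{dp_i}\le E\tfrac{x_i}{p_i}$, is the technical heart; everything else is bookkeeping with WGS and the stated numerical constraints on $\lambda$, $E$, and $\kappa$.
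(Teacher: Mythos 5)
Your overall shape is right --- treat the box $[p^{(c)}_i, p^{(1/c)}_i]$ as invariant and argue at each endpoint separately using WGS and bounded elasticity --- but two things in the middle of the argument do not hold up, and the second is where the paper's proof actually does its work.

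First, prices move by discrete jumps, so analyzing the sign of the update only \emph{at} the endpoint $p_i = p^{(c)}_i$ does not prevent escape: a price strictly inside the box can jump past $p^{(c)}_i$ in a single decrease. The missing device is a buffer band. The paper shows that whenever $p_i \le p^{(c)}_i(1+\lmbd)^2$, the demand satisfies $x_i \ge c(1+\lmbd)^{-2E}w_i \ge c(1-2\lE)w_i$ throughout the averaging window (this is exactly where the ``$c$-demand bounded for a full day'' hypothesis is used --- it guarantees the bound on the time-average $\xbi$, the quantity the update actually sees, not just on the instantaneous $x_i$), and hence $\xbi \ge \wti$ once $c \ge 2$, $\lE \le \frac16$ and $\wti \le \frac43 w_i$ from Constraint~\ref{cst:wti-bdd}; so no decrease occurs inside the band. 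If instead $p_i > p^{(c)}_i(1+\lmbd)^2$, a single decrease is by at most a factor $(1-\lmbd)$ and lands at $\ge p^{(c)}_i(1+\lmbd)^2(1-\lmbd) \ge p^{(c)}_i$. The symmetric band $p_i \ge p^{(1/c)}_i(1-\lmbd)^2$ handles the top, and it is there that $\frac{\lE}{1-\lE} \le \frac16$ is needed, via $x_i \le \frac1c(1-\lmbd)^{-2E}w_i \le \frac1c(1+\frac{2\lE}{1-\lE})w_i \le \wti$.

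Second, your ``delicate direction'' --- bounding the cumulative upward drift over one day by $e^{\Theta(\lmbd E)}$ so that a price rising from $p^{(c)}_i$ cannot overshoot $p^{(1/c)}_i$ --- is both unnecessary and unprovable as stated: the model places no upper bound on the number of updates per day, so there is no $e^{\Theta(\lmbd E)}$ bound on a day's multiplicative change, and the ``self-correcting'' shrinkage of the step as $\xbi$ approaches $\wti$ is a heuristic, not an estimate. No global drift bound is needed; the local band argument at the upper endpoint already stops any increase before the boundary is reached. Relatedly, the hypothesis $\frac{\lE}{1-\lE}\le\frac16$ plays no role in controlling daily drift; its role is the sign analysis of $\overline z_i$ near the endpoints described above.
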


\begin{definition}
\label{def:d-bound}
Suppose that the prices are always $c$-demand bounded.
Let $f=f(c)$ be the corresponding $f$-bound on the prices and given the prices
are $f$-bounded, let $d(f)$ be the demand bound \emph{(}conceivably, $d \gg c$\emph{)}.
\end{definition}

\begin{lemma}
\label{lem:weak-d-bdd}
$d(f) \le e^{2Ef}$.
\end{lemma}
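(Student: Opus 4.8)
The plan is to bound how much the demand $x_i$ for a good can exceed its supply $w_i$ when the price vector $p$ is $f$-bounded, i.e.\ when $p_j \in p^*_j e^{\pm f}$ for every $j$. The key comparison point is the equilibrium price vector $p^*$, at which $x_i(p^*) = w_i$ (recalling that $\sum_j x_{ij}$ equals $w_i$ at equilibrium in the Fisher market, and that demands are assumed to be well-defined functions of prices). Starting from $p^*$, I would move to the actual price vector $p$ in a controlled way, tracking the effect on $x_i$, and use the two ingredients available: the bounded-elasticity bound $-\frac{dx_i}{dp_i} \le E\,\frac{x_i}{p_i}$ from Definition~\ref{def:bdd-elas}, and the WGS property from Definition~\ref{def:WGS}.

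The main step is the following monotone-path argument. Consider deforming $p^*$ into $p$ by a sequence (or continuum) of single-coordinate moves. First lower every price $p_j$ with $p_j < p^*_j$ down to its target value; by WGS, lowering $p_j$ (for $j \ne i$) can only decrease $x_i$, and lowering $p_i$ itself is the one move that can increase $x_i$. Then raise every price $p_j$ with $p_j > p^*_j$ up to its target; by WGS, raising $p_j$ for $j \ne i$ only increases $x_i$, while raising $p_i$ only decreases it. The worst case for $x_i$ is therefore: first lower $p_i$ from $p^*_i$ to $p^*_i e^{-f}$ with all other prices held at $p^*_j$, then raise all other prices from $p^*_j$ up to at most $p^*_j e^{f}$. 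In the first phase, along the one-coordinate path in $p_i$, the relation $-\frac{d x_i}{d p_i} \le E \frac{x_i}{p_i}$ gives $\frac{d}{d p_i}\ln x_i \ge -\frac{E}{p_i}$, so integrating as $p_i$ decreases by a factor $e^{-f}$ multiplies $x_i$ by at most $e^{Ef}$. In the second phase, raising the other prices from $p^*$ to at most $e^f p^*$: I would bound the increase in $x_i$ by a further factor of at most $e^{Ef}$ as well — this is the step that needs the most care, since the natural elasticity bound is stated with respect to a good's own price, not cross-prices. The clean way is to use the homogeneity of demand: scaling all prices (and wealths, which are fixed here, so this needs a wealth-elasticity-free argument) — better, to use the aggregate structure, noting that raising all of $\{p_j : j\ne i\}$ by a common factor $e^{g}$ is, by WGS plus own-price elasticity applied symmetrically, bounded exactly as a single "complementary" move: the fractional rise in $x_i$ is at most $E$ times the fractional fall in $w_i$'s effective share, giving the factor $e^{Ef}$. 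Multiplying the two phases yields $x_i(p) \le w_i \cdot e^{2Ef}$, i.e.\ $d(f) \le e^{2Ef}$.

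The hard part will be making the second phase rigorous: the elasticity hypothesis in Definition~\ref{def:bdd-elas} only controls $dx_i/dp_i$, the own-price derivative, so controlling the cumulative effect of raising all the \emph{other} prices by up to a factor $e^f$ requires either invoking zero-degree homogeneity of the Fisher demand (so that a uniform scaling of all prices leaves $x_i$ unchanged, reducing a simultaneous rise of all other prices to a relative lowering of $p_i$ alone, which the own-price bound handles) or an explicit WGS-plus-elasticity chaining argument. I would present the homogeneity route, since Fisher-market demands are homogeneous of degree zero in prices for fixed wealths only after also scaling wealth — so instead I would argue directly: a uniform factor-$e^{f}$ rise of the $n-1$ prices $p_j$, $j\ne i$, with $p_i$ fixed, is equivalent (by relabeling which good is held fixed) to the own-price analysis applied to good $i$ in a rescaled market, and the bounded-elasticity bound is assumed to hold for \emph{any} price vector, so it applies here and contributes the second $e^{Ef}$ factor. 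Everything else is the routine integration of the logarithmic-derivative inequality, which I would not spell out in detail.
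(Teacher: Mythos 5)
Your overall strategy is the right one, and the first half is fine: by WGS the demand $x_i$ over all $f$-bounded price vectors is maximized at $q$ with $q_i = p_i^* e^{-f}$ and $q_j = p_j^* e^{f}$ for $j \ne i$, and the own-price phase (lowering $p_i$ by a factor $e^{-f}$ with the other prices held at $p^*_{-i}$) is controlled by integrating $-\,d\ln x_i/d\ln p_i \le E$, giving a factor $e^{Ef}$. The problem is your second phase, and you correctly sense it: Definition~\ref{def:bdd-elas} bounds only the own-price derivative, and neither WGS nor the budget identity bounds the \emph{magnitude} of the cross-price effect. Indeed, raising all $p_j$, $j\ne i$, only tells you (via WGS plus conservation of total spending) that money is released from goods $j \ne i$ and flows into good $i$; in the worst case essentially all of $M$ minus good $i$'s spending could flood into good $i$, which is not a multiplicative $e^{Ef}$ bound. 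Your proposed repair --- ``by relabeling which good is held fixed, the own-price bound applies in a rescaled market'' --- is not a valid step: it silently assumes demand is homogeneous of degree zero in prices at fixed wealths, which is exactly what fails in a Fisher market, as you yourself note one sentence earlier.

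The missing ingredient is the normality assumption the paper introduces in this very section (Definition~\ref{def:elas-wealth} with $E^w = 0$, i.e.\ nonnegative wealth elasticity). With it the argument closes cleanly: write $q = e^{f}\, r$ where $r_i = e^{-2f} p_i^*$ and $r_j = p_j^*$ for $j \ne i$. The move from $p^*$ to $r$ changes only $p_i$, by a factor $e^{-2f}$, so own-price elasticity gives $x_i(r) \le e^{2Ef} x_i(p^*) \le e^{2Ef} w_i$. The move from $r$ to $q = e^f r$ is a uniform scaling of \emph{all} prices by $e^f$, which at fixed money is equivalent to scaling every buyer's wealth by $e^{-f}$; for normal demands this cannot increase $x_i$. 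Hence $x_i(q) \le e^{2Ef} w_i$, which is the claim. (Equivalently, your phase~2 should be decomposed as a further own-price drop of $p_i$ by $e^{-f}$ followed by a uniform scaling by $e^{f}$, contributing $e^{Ef}\times 1$.) Without normality or some substitute bound on the wealth/cross-price response, the $e^{2Ef}$ bound does not follow from WGS and own-price elasticity alone.
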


It follows that given a $c$-demand bound for the first day of the update process, there is
a value for $f=f(c)$, implied by the uniqueness of equilibria in this setting, which in turn
yields a bound on $d(f)$.

Next, we obtain a bound on $e(c)$ given the assumption of 0-homegeneity\footnote{i.e.\ if the money of each agent
and prices all increase by a multiplicative factor $\gamma > 0$ this leaves demands unchanged.}
and a lower bound of 0 on the elasticity of wealth.

\begin{definition}
\label{def:elas-wealth}
Suppose that the money of each buyer increases by a multiplier $\gamma$.
We define $\xi^w_i$, the elasticity of good $i$ with respect to wealth, to be
$\xi^w_i=\frac{d x_i}{d \gamma}/x_i$.
We say that the market has a bounded wealth elasticity $E^w \ge 0$ if $\xi^w_i \ge -E^w$ for all $i$.
\footnote{It is usual to define the elasticity of wealth for each individual separately. It is not hard to see that if each
individual has wealth elasticity $E^w$ then this implies the $E^w$ bound in our definition.}
When $E^w = 0$, the
market is said to have \emph{normal} demands
\emph{(}see \cite{MWG95} page 25\emph{)}.
\end{definition}
For CES utilities, $E^w = -1$ (this is a stronger bound than is used in any of our analyses.)

\noindent
{\bf Notation}. Let $ \rho = \max_{i,j} \frac{w_ip_i^*}{w_jp_j^*} $.

\begin{lemma}
\label{lemma:price-lower-upper-bound}
If the demands are all normal, then
$e(c) \le \ln [c (\rho n)^{(c-1)}]$.
\end{lemma}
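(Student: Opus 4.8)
The plan is to bound the equilibrium prices $p_i^{(c)}$ and $p_i^{(1/c)}$ in terms of $p_i^*$ using the normality (zero wealth elasticity) assumption, and then take logarithms. Recall $p_i^{(c)}$ is the equilibrium price vector when every supply $w_i$ is scaled to $cw_i$. The key idea is to interpolate continuously: define the family of markets with supplies $cw_i$ for $c$ ranging from $1$ to the target value, and track how each equilibrium price moves. First I would observe that scaling all supplies by $c$ is, by $0$-homogeneity, the same as leaving supplies fixed but scaling all buyers' money by $1/c$ (up to a uniform price rescaling), so the movement of equilibrium prices as $c$ varies is governed precisely by the wealth elasticities $\xi^w_i$. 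Normality says $\xi^w_i \ge 0$ in the relevant direction (demand does not go up when wealth goes down), which will force each $p_i^{(c)}$ to lie between $p_i^*$ and a factor of $c$ times the ``uniform'' rescaling, giving the upper side of the bound; symmetrically for $p_i^{(1/c)}$.

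More concretely, the key steps in order: (1) Reduce the claim to a statement about a single market with scaled money via $0$-homogeneity, so that WLOG we compare the equilibrium of supplies $w_i$, money $v_\ell$ against that of supplies $w_i$, money $v_\ell/c$. (2) Use the fact that at equilibrium $\sum_i p_i w_i = \sum_\ell v_\ell$ (Fisher market money-clearing), so scaling total money by $1/c$ scales the total expenditure $\sum_i p_i w_i$ by $1/c$; hence the expenditure-weighted ``average'' price drops by a factor $c$. (3) Argue that no single price can drop by more than the full factor $c$ below $p_i^*$ (that would be the extreme case), and no single price can rise at all in the worst normalization, or rather rise by at most the bound coming from the spread parameter $\rho$ and the number of goods $n$ — this is where $\rho n$ and the exponent $(c-1)$ enter. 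The cleanest route is: the good whose price rises the most must absorb expenditure shifted away from the $n-1$ other goods, and each good's expenditure is at most $\rho$ times the smallest, yielding a multiplicative factor of at most $(\rho n)$ per unit decrease of $c$, compounding to $(\rho n)^{(c-1)}$ over the interval $[1,c]$; combined with the overall factor-$c$ budget shrink this gives $p_i^{(c)} \le$ (something)$\cdot p_i^*$ and hence $e(c) \le \ln[c(\rho n)^{(c-1)}]$.

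The main obstacle I expect is making step (3) rigorous: turning the normality condition into a clean bound on how far a single equilibrium price can move when supplies are scaled, without circular reference to the prices one is trying to bound. I would handle this by a differential/continuity argument in $c$: write $\frac{d}{dc}\ln p_i^{(c)}$ in terms of the demand system's response, use normality to sign the relevant terms, and integrate from $1$ to $c$. The bookkeeping to show the worst-case accumulation is exactly $(\rho n)^{c-1}$ (and that the factor $\rho$ is the right one, using the definition $\rho = \max_{i,j} \frac{w_ip_i^*}{w_jp_j^*}$ evaluated at the base equilibrium) is the delicate part; one must be careful that $\rho$ is defined at the base market, so the argument should bound excursions relative to the base equilibrium rather than re-deriving $\rho$ along the path. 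The other, milder, obstacle is confirming the WGS / uniqueness hypotheses in force guarantee the equilibrium price path $c \mapsto p^{(c)}$ is well-defined and continuous, which is needed for the interpolation; this follows from the uniqueness of equilibria already assumed in this setting.
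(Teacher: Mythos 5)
There is a genuine gap here. Your plan correctly splits the task into bounding $r^{(1/c)}=\max_i p_i^{(1/c)}/p_i^*$ and $r^{(c)}=\max_i p_i^*/p_i^{(c)}$, and the first of these is indeed handled by a simple money-accounting argument close to your step (2) (the paper's Lemma \ref{lemma:price-upper-bound}: the spending $\frac1c w_k p_k^{(1/c)}$ on the worst good at the scaled equilibrium cannot exceed the total money $\le n w_h p_h^*$, giving $r^{(1/c)}\le cn\rho$). But the heart of the lemma is the bound $r^{(c)}\le c(\rho n)^{c-1}$, and for that your step (3) is not a proof: the mechanism you propose --- ``a multiplicative factor of at most $\rho n$ per unit decrease of $c$, compounding to $(\rho n)^{c-1}$'' --- does not correspond to any argument you actually give, and it is not how the exponent arises. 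There is no natural discretization of $c$ into unit steps each contributing a factor $\rho n$; moreover your fallback, computing $\frac{d}{dc}\ln p_i^{(c)}$ from the demand system's response, requires inverting the Jacobian of excess demand, which you neither do nor have the hypotheses to control.

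The paper's actual route is an extremal argument that you are missing. It defines a \emph{uniform price reduction process} from $p^*$ to $p^{(c)}$ (decrease all prices uniformly, freezing each $p_i$ as it reaches $p_i^{(c)}$, in increasing order of $p_i^*/p_i^{(c)}$), and uses normality to show (Lemma \ref{lemma:price-order-reduction}) that $r^{(c)}$ is maximized when the demand increases happen one good at a time, each good jumping to $cw_i$ only when its own price is about to be frozen. Passing to a continuum of goods and writing the spending-conservation condition $Ws=(W-dW)(s-ds)+c\,dW(s-ds)$ yields $W=W(1)s^{1/(c-1)}$, where $W(s)$ is the measure of goods whose demand has not yet jumped; the process must stop by the time $W$ falls to $W(1)/(\rho n)$ (the share of the smallest good), which forces $s^{1/(c-1)}\ge 1/(\rho n)$, and a final factor $c$ accounts for the last good's own price drop. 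That is where both the base $\rho n$ and the exponent $c-1$ come from; neither appears for the reason you conjecture. Without this extremal/ordering idea (or a rigorous substitute), the claimed bound on $r^{(c)}$ is unsupported.
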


We can still obtain bounds on $e(c)$ for positive $E^w$, but they are much larger.

We note that even the bound of Lemma \ref{lemma:price-lower-upper-bound}  is large
and implies that a small change in supplies could have a huge
effect on the values of some of the equilibrium prices.
This seems implausible as a practical matter, and
as the bound of Lemma \ref{lem:weak-d-bdd}
and hence the convergence rate depend on the value of $e(c)$ it suggests
that our bounds based on the elasticity of wealth,
rather than the equilibrium flex, may be unduly pessimistic.

Prior work \cite{cole-fleischer} made the stronger assumption that $E^w < 0$ (though this was
expressed in a different way).

\subsection{Faster Updates with Large Demands, Ongoing Market}

As we have seen, when demands are large initially, the rate of convergence depends
inversely on a bound on a parameter
$d$ where for each $i$, $dw_i$ bounds the demand for good $i$.
As we will see the bound on the warehouse sizes also depends on $d$.

We now show how to avoid this dependence, by a plausible and modest change
to the frequency of the price updates.

It seems reasonable that when demands are large, the
seller will observe this quickly (due to stock being drawn from its warehouse)
and consequently will quickly adjust its price.
Accordingly, we introduce a new rule for the frequency of updates: in addition
to the once a day update, whenever $w_i$ units of good $i$ have been sold
since the last update, price $p_i$ is updated.

In addition, we will need the $E'$ bound on the elasticity of wealth.

We significantly modify our basic potential function for this analysis, as will be seen when the
analysis is carried out. Here the relationship between $S$ and $\phi$
is a bit looser: $ S  =O (\phi) = O(S +M)$, where $ M$ is the
daily supply of money.

\begin{theorem}
\label{lem:fst-updt-progr-bdd-dem}
If Constraint \ref{cst:wti-bdd} holds,
$d=5$, $\alt = \frac32$, $\lmbd (E + E') \le \frac{1}{17}$, $\alo \le \frac{1}{16}$,
$\loa + \frac43 \lmbd \left(\frac74 +\frac{10 E }{1-\lE}  \right) \le 1$,
$\kpi \le \frac{\loa}{13}$,
and each price is updated at least once every day,
then $\phi$ decreases by at least a $1 - \frac14 \kpi$ factor daily.
\end{theorem}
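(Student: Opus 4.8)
The plan is to mimic the structure of the proof of Theorem~\ref{lem:war-progr-bdd-dem} (the once-a-day warehouse case), tracking the change in the potential $\phi = \sum_i \phi_i$ over a single day, but now accounting for the extra, demand-triggered updates that fire whenever $w_i$ units of good $i$ have been sold since the last update to $p_i$. The reason these extra updates help is exactly the mechanism described just before the theorem: if the instantaneous excess demand $x_i - \wti$ is large (say comparable to $(d-1)\wti$ with $d$ large), then $p_i$ gets bumped up by a full $\pm\lmbd p_i$ step again within a fraction $\Theta(1/(d-1))$ of a day, so the effective step rate scales with the size of the imbalance. This is precisely what lets us fix $d = 5$ (a constant) rather than carrying $d$ as a free parameter that degrades the rate; the ``$d-1$'' factor that appears in Theorems~\ref{cor:sim-progress}, \ref{lem:progr-bdd-dem}, \ref{lem:war-progr-bdd-dem} disappears because the demand can never stay above $5\wti$ for long without triggering a corrective update. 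First I would establish, using the $f$-boundedness / $d$-boundedness machinery of Section~\ref{sec:bounds-using-w-elas} together with the elasticity bounds (the $E$ bound for own-price effects, the $E'$ bound on wealth elasticity to control cross effects of simultaneous price rises), that under the stated numeric hypotheses ($\lmbd(E+E') \le \tfrac{1}{17}$, etc.) the demand really does stay bounded by $d\wti = 5\wti$ once the process is underway, so the hypothesis ``$x_i \le d\wti$'' of the per-good analysis is self-consistent.

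The core of the argument is a per-good accounting of $\Delta\phi_i$ over one day, as in the earlier warehouse theorem. For each good $i$ I would split the day into the sub-intervals delimited by the updates to $p_i$ (at least one from the daily rule, plus however many the ``$w_i$ sold'' rule triggers). On each sub-interval: (a) each update to $p_i$ itself produces a guaranteed drop in the $\span$-term of $\phi_i$ of order $\lmbd p_i \cdot |\xbi - \wti|$ (this is the 1d-type gain, Theorem~\ref{lem:1d-rate-of conv} / Lemma~\ref{lem:same-side-update}, with the median-clipped update ensuring a same-side step of the right magnitude); (b) between updates, the decay term $-\loa(t-\tau_i)|\xbi - \wti|p_i$ in $\phi_i$ charges against the passage of time, and the warehouse term $\alpha_2 p_i|\wti - w_i|$ moves toward its target at rate $\Theta(\kpi w_i p_i)$ because of the $\kappa_i(s_i - s_i^*)$ feedback built into $\overline z_i$; (c) the ``damage'' terms come from the effect of updating $p_i$ on the demands $x_j$, $j \ne i$ (controlled via WGS — the cross effects only raise other demands, and their aggregate size is bounded by $\lmbd$ times the money moved, hence the $\frac{10E}{1-\lE}$-type term), and from the effect on $p_i$'s own demand when it overshoots (the $\lmbd E$ terms). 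Summing over all goods and all sub-intervals, the per-update gains dominate the per-update damage provided the coefficient inequality $\loa + \tfrac43\lmbd(\tfrac74 + \tfrac{10E}{1-\lE}) \le 1$ holds, and what survives after cancellation is a net daily decrease of at least $\tfrac14\kpi\phi$ — the $\kpi$ (rather than $\loa$) showing up because, once the warehouse-imbalance part of $\phi$ dominates, the best guaranteed per-day improvement is the $\Theta(\kpi w_i p_i)$ warehouse drift, exactly as in Theorem~\ref{lem:war-progr-bdd-dem}.

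I expect the main obstacle to be the bookkeeping for the extra, demand-triggered updates: one must show that inserting these updates never makes $\phi$ worse (each is a same-side, clipped step, so individually fine), and, more delicately, that the time-based decay term $-\loa(t-\tau_i)|\xbi-\wti|(t-\tau_i)p_i$ and the $\span$ term interact correctly when $\tau_i$ is reset many times within a day — in particular that $\xbi[\tau_i,t]$, the quantity actually used in the update, tracks the true excess demand closely enough on each short sub-interval that the gain in (a) is genuinely of order $\lmbd p_i |x_i - \wti|$ summed up, not just $\lmbd p_i|\xbi - \wti|$ for a stale average. This is where the hypothesis $d = 5$ is used crucially: it bounds how far $x_i$ can drift within a single inter-update interval, so the average and the instantaneous value are within a constant factor, and the telescoping of the $\span$-reductions across sub-intervals goes through. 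A secondary technical point is verifying that the modified relationship $S = O(\phi) = O(S + M)$ (looser than the earlier $\phi = \Theta(S)$) still suffices to translate the $\phi$-decrease into the misspending guarantee; this should follow from the same $f$-bounds on prices, since $M = \sum_\ell v_\ell$ and the price bounds relate $p_i w_i$ to the equilibrium spending, but it needs to be checked that the additive $M$ term does not obstruct the geometric decrease — it does not, because $\phi$ decreasing geometrically down to $\Theta(M)$-scale is exactly the regime where the warehouse terms dominate and the $\tfrac14\kpi$ rate is the honest answer.
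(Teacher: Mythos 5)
Your proposal misses the central device of the paper's argument, and the step you wave away is exactly the one that fails. You write that for the extra demand-triggered updates ``one must show that inserting these updates never makes $\phi$ worse (each is a same-side, clipped step, so individually fine).'' But the troublesome event is not an extra update --- it is a \emph{regular} price \emph{decrease}, taken because the stale average $\xbi < \wti$, at a moment when the instantaneous demand $x_i$ has already shot far above $\wti$ (because other prices dropped since $\tau_i$). With the standard warehouse potential, that single update falls into the ``away from $\wti$'' case of Lemma~\ref{lem:war-updates-only-help-templ} and increases $\phi$ by roughly $\bigl(1+\tfrac{2Ed}{1-\lE}\bigr)\wti|\Delta_i p_i|$, where $d$ is now a bound on the \emph{actual} demand, which is exactly the quantity you no longer control. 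Your plan to first ``establish\dots that the demand really does stay bounded by $5\wti$'' cannot work: no market property forces $x_i \le 5\wti$; the fast-update rule only guarantees that a large demand is \emph{corrected quickly}, not that it never occurs.

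The paper's proof handles this by \emph{deferring the instantiation} of such a price decrease inside the potential: the potential pretends the decrease has not happened, using virtual demands $x'_i$ (demands at the undeferred prices), and instantiates it only once it has been cancelled by the next one or two fast increases or once $x'_i \le (d-1)\wti$. This forces two distinct potentials $\psi_i^r$ and $\psi_i^d$ (with integral correction terms tracking $\int(x'_i - x_i)\,dt$), the wealth-elasticity bound $E'$ enters precisely to compare $x_i$ with $x'_i$ when all the deferred decreases are by a $(1-\lmbd)$ factor (Lemmas~\ref{lem:wealth-eff} and~\ref{lem:bdd-xi'diff}), Lemma~\ref{lem:delay-bd} bounds the deferral by one day, Lemma~\ref{lem:p-only-decr} shows the eventual instantiation only decreases $\phi$, and Lemma~\ref{lem:spending-transfr} lets the residual $p_i(x'_i - x_i)$ terms on undelayed goods be paid for by the matching terms on delayed goods when summing $\frac{d\phi}{dt}$ over all $i$. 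Here $d=5$ is a parameter of this bookkeeping (the threshold at which a deferral ends), not a drift bound on $x_i$ within an inter-update interval as you suggest. Without the deferral (or an equivalent mechanism for absorbing the potential spike from a decrease made on stale data while true demand is large), your per-subinterval accounting does not close, so the proposal has a genuine gap at its core.
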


Again, we could prove a $1 -\theta(\loa)$ decrease when $\phi = \Omega(\sum_i w_ip_i)$.

\subsection{Bounds on Warehouse Sizes}

For simplicity, we will assume that $c_i/w_i$
is the same for all $i$.
Also, for simplicity, we suppose that $s_i^* = \frac12 c_i$,
that is the target fullness for each warehouse is half full.

We view each warehouse as having 8 equal sized zones of fullness, with the goal being to bring
the warehouse into its central four zones. The role of the outer zones is to provide the buffer
to cope with initial price imbalances.

\begin{definition}
\label{def:wrhs-zones}
The four zones above the half way target are called the \emph{high} zones, and the other four
are the \emph{low} zones.
Going from he center outward, the zones are called the \emph{safe} zone, the \emph{inner buffer},
the \emph{middle buffer}, and the \emph{outer buffer}.
\end{definition}

\begin{theorem}
\label{lem:good-wrhs}
Suppose that the prices are always $f$-bounded and let $d= d(f)$.
Also suppose that each price is updated at least once a day.
Suppose further that the warehouses are initially all in their safe or inner buffer zones.
Finally, suppose that $\lmbd \left( 1 + \frac{1} {\alf} \right) \le \frac12$,
where $\frac{\alf} = \frac{ \kpi c_i} {8w_i} $.
Then the warehouse stocks never go outside their outer buffers \emph{(}i.e.\ they never overflow
or run out of stock\emph{)} if
$\frac{\alf}{\kpi} = \frac{c_i}{8w_i}  \ge
  \max \left\{ (d-1) D, 2\left( 1 + \frac{4}{\alf} \right)f + \frac{8 \lmbd} {\alf} \right\}$;
furthermore,
after $D + 2\left( 1 + \frac{4}{\alf} \right) f + \frac{8 \lmbd} {\alf} + \frac{8}{\kpi}$
days the warehouses will be in their safe or inner buffer zones thereafter, where
\[
D = \frac{16(1 + \alt)}{\loa}\log\frac{\phi_{\text{init}}} {\frac{1 - \loa}{2}\min_i w_i p^*_i  },
\]
and $\phi_{\text{init}}$ is the initial value of $\phi$.

If the fast updates rule is followed, then it suffices to have
$\frac{\alf}{\kpi} = \frac{c_i}{8w_i}  \ge  2\left( 1 + \frac{4}{\alf} \right)f + \frac{8 \lmbd} {\alf} $,
and then after $\left( 1 + \frac{4}{\alf} \right) f + \frac{8 \lmbd} {\alf} \lmbd + \frac{8}{\kpi}$
days the warehouses will be in their safe or inner buffer zones thereafter.
\end{theorem}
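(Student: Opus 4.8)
The plan is to track each warehouse's stock level as a one-dimensional random-walk-like quantity and to charge its fluctuations against two separate budgets: the ``price imbalance'' budget (controlled by $f$) and the ``demand imbalance'' budget (controlled by $d$ via the potential $\phi$). First I would establish the per-day drift of $s_i$ toward $s_i^*$: recall $\overline z_i = \frac{s_i(\tau_i) - s_i(t)}{t-\tau_i} + \kappa(s_i - s_i^*)$, so the update rule~\eqref{eqn:pr_change} moves $p_i$ in the direction that, under WGS and bounded elasticity, reduces the magnitude of $\overline z_i$, and hence (by the argument behind Theorem~\ref{lem:war-progr-bdd-dem}) pushes $s_i$ toward $s_i^*$ at a rate governed by $\kappa$. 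The $\alpha_4$-term, defined by $\frac{1}{\alpha_4} = \frac{\kappa c_i}{8 w_i}$, is exactly the normalization that makes one zone-width equal to $\frac{1}{\alpha_4}$ ``price-update units'': moving $p_i$ by its maximal daily amount $\pm\lambda p_i$ changes the within-day net flow into warehouse $i$ by at most $\Theta(\lambda w_i)$, i.e. by at most $\Theta(\lambda \cdot \frac{1}{\alpha_4})$ zone-widths. The condition $\lambda(1 + \frac{1}{\alpha_4}) \le \frac12$ is then what guarantees a single day's price correction cannot overshoot through more than (roughly) half a buffer zone.

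The key steps, in order: \textbf{(1)} Show a warehouse starting in the safe or inner buffer zone cannot leave the outer buffer in a single day, using the $\lambda(1 + 1/\alpha_4) \le \frac12$ bound to cap the one-day excursion. \textbf{(2)} Decompose the total possible drift of $s_i$ away from $s_i^*$ into the piece attributable to $p_i$ being away from $p_i^*$ (bounded by the $f$-bound: if $p_i \in p_i^* e^{\pm f}$, the induced steady-state warehouse offset is $O(f / \kappa)$ zone-widths, giving the $2(1 + 4/\alpha_4)f$ term) plus the piece attributable to the transient demand imbalance, which is bounded by the potential: while $\phi$ is large, demands can be as big as $d w_i$, and integrating the excess $(x_i - w_i)\,\Delta t$ over the time $D$ it takes $\phi$ to shrink below $\frac{1-\lambda}{2}\min_i w_i p_i^*$ contributes the $(d-1)D$ term. \textbf{(3)} Invoke Theorem~\ref{lem:war-progr-bdd-dem} to get that $\phi$ decays by a $1 - \frac{\kappa(\alpha_2-1)}{4}$ factor daily, so $\phi$ reaches the target in $D = \frac{16(1+\alpha_2)}{\lambda\alpha_1}\log\frac{\phi_{\text{init}}}{\frac{1-\lambda\alpha_1}{2}\min_i w_i p_i^*}$ days (matching the stated $D$ once $\alpha_2 = \frac32$, so $\alpha_2 - 1 = \frac12$ and the constant works out). \textbf{(4)} After time $D$, the demand imbalance is negligible, the warehouse offset is controlled purely by $f$ plus a residual $O(\lambda/\alpha_4)$ term, and the $\kappa$-driven drift (taking $\frac{8}{\kappa}$ days to traverse all 8 zones) pulls every warehouse back into the safe/inner zone, where the single-day-excursion bound from step~(1) keeps it. \textbf{(5)} For the fast-updates variant, replace the use of Theorem~\ref{lem:war-progr-bdd-dem} by Theorem~\ref{lem:fst-updt-progr-bdd-dem}: since large demands now trigger an immediate price update after every $w_i$ units sold, the $(d-1)D$ term disappears entirely (the warehouse never has to absorb a full day of $d w_i$-rate demand), which is why only the $f$- and $\lambda$-dependent terms survive in the capacity requirement.

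\textbf{Main obstacle.} The delicate part is step~(2): carefully accounting for the interaction between the within-day continuous warehouse dynamics (stock drains at the instantaneous rate $x_i(t) - w_i$, which itself changes as other prices $p_j$ update asynchronously under WGS) and the discrete once-per-day price correction for good $i$. One must show that even in the worst case --- $p_i$ fixed for a full day while all the $p_j$ move to maximally increase $x_i$ --- the net stock change is still bounded by the claimed zone budget, and that the $\kappa(s_i - s_i^*)$ feedback term in $\overline z_i$ does not itself destabilize things (this is what Constraint~\ref{cst:wti-bdd}, $\kappa|s_i - s_i^*| \le \frac13 w_i$, and $\kappa \le \frac{\lambda\alpha_1}{10}$ are for). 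I expect the argument to proceed by picking the right ``Lyapunov'' combination of the warehouse offset $s_i^e = s_i - s_i^*$ and the potential $\phi$, showing this combined quantity never exceeds the outer-buffer threshold under the stated inequalities, and then reading off both the capacity bound and the return-time bound from it; translating Theorem~\ref{lem:war-progr-bdd-dem}'s guarantees on $\phi$ into a bound on $\int (x_i - w_i)\,dt$ is the calculation most likely to need the full strength of the hypotheses.
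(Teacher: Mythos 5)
Your high-level decomposition is the right one and matches the paper's: an initial phase of length $D$ during which the stock can drift by at most $(d-1)Dw_i$ per warehouse, an $f$-controlled budget for fluctuations thereafter, a return time of order $8/\kpi$, and the observation that fast updates eliminate the $(d-1)D$ term. However, two of your steps have genuine gaps.

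First, your step (3) derives $D$ from the wrong decay rate. The factor $1-\frac{\kpi(\alt-1)}{4}$ gives $D = \Theta\left(\frac{1}{\kpi}\log(\cdot)\right)$, and since $\kpi \le \frac{\loa}{10}$ this is strictly larger than the stated $D = \frac{16(1+\alt)}{\loa}\log(\cdot)$; the constant does not ``work out.'' The stated $D$ requires the \emph{conditional} faster rate $1-\frac{\loa}{8(1+\alt)}$ from the second claim of Theorem \ref{lem:war-progr-bdd-dem} (equivalently Corollary \ref{cor:phi-rate-of-change}), and the real work — done in Lemma \ref{lem:init-price-conv-time-bdd} — is showing that its precondition $\phi \ge 2(1+2\alt)\sum_i|\wti-w_i|p_i$ holds throughout the phase in which some demand exceeds $2w_i$. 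That in turn needs a lower bound on $\phi$ while any price is far from equilibrium, which the paper gets from Lemma \ref{lem:misspend-bdd} (a price deviation of $|p_h - p_h^*|$ forces misspending at least $w_h|p_h-p_h^*|$, hence $\phi \ge \frac12(1-\loa)w_hp_h^*$), combined with a constraint bounding $\sum_i|\wti-w_i|p_i$ by a small multiple of $\phi + M$. Your sketch contains none of this, and without it the claimed $D$ is unjustified.

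Second, your $f$-dependent term is asserted as a ``steady-state offset of $O(f/\kpi)$ zone-widths,'' but there is no steady state here; the bound $2(1+\frac{4}{\alf})f + \frac{8\lmbd}{\alf}$ comes from a discrete counting argument over the sequence of price updates (Lemmas \ref{lem:war-too-full-impr} and \ref{lem:war-too-empt-impr}). One classifies the updates to $p_i$ into maximal changes (by a factor $1\pm\lmbd$), each preceded by at most $w_i$ of adverse stock movement, and sub-maximal changes, each of which — \emph{because} the stock is at least $\frac{\alf}{\kpi}w_i$ away from $\sis$, so that $|\wti - w_i| \ge \alf w_i$ — is accompanied by a favorable stock movement of at least $\alf w_i$. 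The constraint that the net multiplicative price change over the window is at least $e^{-\lmbd f}$ (from the $f$-bound) then limits the number of adverse updates relative to favorable ones. This is also where $\lmbd(1+\frac{1}{\alf})\le\frac12$ is actually used, namely to control the $1+\lmbd x \ge e^{\lmbd x - 2\lmbd^2}$ approximations in the low-stock case, not to cap a single day's excursion as in your step (1). Without this pairing argument the specific form of the capacity bound, and in particular the $\frac{8\lmbd}{\alf}$ correction term, cannot be obtained.
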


\subsection{The Effect of  Inaccuracy}

Next, we investigate the robustness of the tatonnement process
with respect to inaccuracy in the demand data.
To mitigate the intricacy of the analysis, we return to the setting without fast updates.

Specifically, we assume that there may be an error of up to $\rho w_i$
in the reported values of $s_i(t)$ and $s_i(\tau_i)$,
where $\rho > 0$ is a constant parameter.
Recall that these are the values which are used to calculate
$\zbi$ ($= \frac{s_i (\tau_i) - s_i (t)}{t - \tau_i} + \kappa_i (s_i(t) - s_i^*)$).
Let $\zbi^c$ denote the correct value for $\zbi$, and
$\zbi^r$ the reported value.

To enable us to control the effect of erroneous updates, we will place a
lower bound on the frequency of updates to a given price. Specifically,
successive updates are at most 1 day apart (as before), and at least $1/b$
days apart, where $b \ge 1$ is a parameter.

We consider two scenarios:

\smallskip
\noindent
(i) The parameter $\rho$ is not known to the price-setters, who then
perform updates as before. \\
We show that for $\phi \ge\rho b^2 \frac{\lmbd}{\kpi}EM$,
$\phi$ reduces by a $(1 -\Theta(\kpi))$ factor daily.

\smallskip
\noindent
(ii) The parameter $\rho$ is known to the price-setter for each good $i$,
who performs an update only if the possible error is at most half of
the reported value $\zbi^r$, i.e.\ if $|\zbi^r - \zbi^c| \le \frac12 \zbi^r$.
\\
Then, we show that for $\phi \ge\rho bM $, $\phi$ reduces by a
$(1-\Theta(\kpi))$ factor daily.

It may seem more appealing to allow multiplicative errors
of up to $1 \pm \rho$ in the reported $s_i$.
However, this seems a little unreasonable in the case
that the actual $s_i(t) - s_i(\tau)$ is relatively small.
Also, later we will consider a scenario in which more frequent updates are required
when $s_i(t)$ changes rapidly,
and then a multiplicative rule for the error in the \emph{change} to the warehouse stock
would give an error no larger than the additive rule.

Contrariwise, one might argue that if the warehouse stock is changing only slightly,
then the perceived error ought to be small.
But once one considers that there is a daily supply of $w_i$ units of the $i$th good,
and that possibly the error reflects fluctuations in the selling of these $w_i$ units,
an error of $\pm \rho w_i$ seems reasonable.

Our analysis for Case (i) uses the potential function $\phi = \sum_i \phi_i$,
from Section \ref{sec:intro-war}.

\begin{theorem}
\label{lem:non-acc-progr-bdd-dem}
If $x_i \le d\wti$ for and $i$ and Constraint \ref{cst:wti-bdd} hold,
$\frac{\alt}{2} + \alpha_1  \max\{\frac32,(d-1)\} \le 1$,
$\loa + \frac43 \lmbd \left(1 + \frac{2E d}{1-\lE}  + \frac12 \alt \right) \le 1$,
$\frac{\kpi (\alt - 1)}{2} \le 1$,
$ 4\kpi(1 + \alt) \le \loa \leq \frac{1}{2}$,
and each price is updated at least once every day, and at most every $1/b$ days,
and if
$\phi \ge \frac{16 \mu M} {\kpi (\alt - 1)}
 \frac{1 -\loa}{1 - \loa - \mu}$
at the start of the day,
then $\phi$ decreases by at least a $1 - \frac{\kmin(\alt -1)}{8}$ factor by the end of the day,
where $M$ is the daily
supply of money
and $\mu = \frac43\lmbd \rho b (2b +\kmax)\left(1 + \frac{2E d}{1-\lE}  + \frac12 \alt \right)$,
supposing that $\kmin(\alt -1) \ge 16\mu/[1-\loa -\mu]$.
\end{theorem}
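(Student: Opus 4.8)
\emph{Proof plan.}
The plan is to regard each inaccurate update to $p_i$ as the accurate update followed by a small, adversarially signed multiplicative perturbation of $p_i$, and to charge the extra movement in $\phi$ caused by these perturbations against the guaranteed daily decrease of $\tfrac{\kpi(\alt-1)}{4}\phi$ supplied by the analysis underlying Theorem~\ref{lem:war-progr-bdd-dem} (whose parametrized hypotheses coincide with the ones assumed here, so that analysis applies unchanged to the accurate part of the updates).

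\emph{Bounding the perturbation.} Since $\zbi=\frac{s_i(\tau_i)-s_i(t)}{t-\tau_i}+\kpi(s_i(t)-s_i^*)$, an additive error of at most $\rho w_i$ in each of $s_i(\tau_i)$ and $s_i(t)$ perturbs the first summand by at most $2\rho w_i/(t-\tau_i)\le 2\rho w_i b$ (consecutive updates to $p_i$ are at least $1/b$ days apart) and the second by at most $\kpi\rho w_i$, so $|\zbi^r-\zbi^c|\le(2b+\kpi)\rho w_i$. As $\mathrm{median}\{-1,\cdot,1\}$ is $1$-Lipschitz, the reported multiplier $1+\lmbd\,\mathrm{median}\{-1,\zbi^r/w_i,1\}$ differs from the accurate one by at most $\lmbd(2b+\kpi)\rho$; equivalently $p_i^{\mathrm{rep}}=(1+\eta_i)p_i^{\mathrm{acc}}$ with $|\eta_i|\le\lmbd(2b+\kpi)\rho/(1-\lmbd)$. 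With at most $b$ updates to $p_i$ per day, the total spurious log-drift of $p_i$ over a day is at most $\lmbd\rho b(2b+\kpi)/(1-\lmbd)$.

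\emph{Bounding the damage to $\phi$.} Here I would re-run the accounting in the proof of Theorem~\ref{lem:war-progr-bdd-dem}, treating the spurious perturbation as an extra, sign-uncontrolled contribution to each price update. Two effects must be tracked. The direct effect of an $\epsilon$-fraction change to $p_i$ on $\phi_i$ is at most $\epsilon p_i[\span(x_i,\xbi,\wti)+\alt|\wti-w_i|]$; since $\span(x_i,\xbi,\wti)\le|x_i-w_i|+|\xbi-w_i|+|\wti-w_i|$, this sums over $i$ to $O(\epsilon S)=O(\epsilon\phi)$, using $\phi=\Theta(S)$. The indirect effect---the change of at most $E\epsilon x_i$ in $x_i$ and the WGS-induced changes in the other $x_j$---is controlled via the budget identity $\sum_jp_jx_j=M$ (which gives $\sum_{j\ne i}p_j\,\partial x_j/\partial p_i=-x_i-p_i\,\partial x_i/\partial p_i\le(E-1)x_i$), the bound $x_i\le d\wti$, and the $1/(1-\lE)$ bound on the per-step demand movement, and sums to $O\!\big(\tfrac{Ed}{1-\lE}\,\epsilon M\big)$; this is the reason an $M$-term, rather than only a $\phi$-term, appears. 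Aggregating over the at most $b$ updates per good per day, and accounting for the within-day drift of $\phi$ (which together with the feedback of the damage itself produces the self-referential denominator), the $1/(1-\lmbd)$ slack, and the $1/(1-\loa)$ conversion between $S$ and $\phi$, one obtains that the total daily damage to $\phi$ is at most
\[
\frac{\mu}{1-\loa-\mu}\,\phi+\frac{(1-\loa)\mu}{1-\loa-\mu}\,M,\qquad
\mu=\tfrac43\lmbd\rho b(2b+\kpi)\Big(1+\tfrac{2Ed}{1-\lE}+\tfrac12\alt\Big),
\]
i.e.\ the amplification factor of that proof scaled by the daily spurious drift, with $\tfrac43$ absorbing the slacks just listed.

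\emph{Combining the estimates.} The accurate updates alone decrease $\phi$ by at least $\tfrac{\kpi(\alt-1)}{4}\phi$ over the day. The hypothesis $\kpi(\alt-1)\ge 16\mu/[1-\loa-\mu]$ forces the $\phi$-proportional part of the damage to be at most $\tfrac{\kpi(\alt-1)}{16}\phi$, and the hypothesis $\phi\ge\frac{16\mu M}{\kpi(\alt-1)}\cdot\frac{1-\loa}{1-\loa-\mu}$ forces the $M$-proportional part to be at most $\tfrac{\kpi(\alt-1)}{16}\phi$ as well, so the net daily decrease is at least $\tfrac{\kpi(\alt-1)}{4}\phi-\tfrac{\kpi(\alt-1)}{8}\phi=\tfrac{\kpi(\alt-1)}{8}\phi$, as claimed. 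I expect the crux to be the damage estimate: because the spurious drift has uncontrolled sign, there is no cancellation against the ``good'' direction of the accurate update, so every ingredient of the Theorem~\ref{lem:war-progr-bdd-dem} bound has to be re-derived as an absolute-value estimate, and---crucially---the demand-propagation terms must be pinned to $M$ rather than to $\phi$, since the $\span$-versus-$|\xbi-\wti|$ gap rules out a purely $\phi$-relative bound; this is precisely what forces the $M$-dependent threshold in the statement.
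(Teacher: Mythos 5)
Your proposal matches the paper's proof in all essentials: the error in $\zbi$ is bounded by $(2b+\kpi)\rho w_i$; the erroneous update is treated as the accurate (potential\mbox{-}non\mbox{-}increasing) update plus an error of relative size at most $\lmbd\rho(2b+\kpi)$, whose per-update damage is $\mu$ times a mix of $\phi$ and $M$; and the daily conclusion follows by combining this with the continuous $-\frac{\kpi(\alt-1)}{2}\phi$ decrease through exactly the feedback computation (producing the $1/(1-\loa-\mu)$ denominator and the two $\frac{\kpi(\alt-1)}{16}\phi$ damage allowances) that the paper packages as Lemma \ref{lem:genrl-progr-bdd-dem}. The only divergence is a bookkeeping one: the paper bounds the entire per-update damage by $\frac43\lmbd\rho(2b+\kpi)\left(1+\frac{2Ed}{1-\lE}+\frac12\alt\right)w_ip_i$ and then converts to $\phi$ and $M$ terms via $\sum_iw_ip_i\le\phi/(1-\loa)+M$ (Lemma \ref{lem:supply-cost-bdd}), rather than attributing the $\phi$-part to a direct price-scaling effect and the $M$-part to demand propagation as you do --- note in particular that the spending-neutral term $\wti|\Delta_ip_i|$ lost or gained through the error is proportional to $w_ip_i$, not to $\phi_i$, so it too needs the $\sum_iw_ip_i$ conversion and contributes to the $M$-threshold.
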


Case (ii) yields a less stringent constraint on $\rho$.
We  use a slightly different potential for which again $\phi(t) \le S(t) =O(\phi(t-1)$.

\begin{theorem}
\label{lem:non-acc-progr-bdd-dem-2}
If $x_i \le d\wti$ for and $i$ and Constraint \ref{cst:wti-bdd} hold,
$\frac{\alt}{2} + \alpha_1  \max\{3,2(d-1)\} \le 1$,
$\loa + 2\lambda \left(1 + \frac{2E d}{1-\lE}  + \frac12 \alt \right) \leq 1$,
$ 4\kpi(1 + \alt) \le \loa \leq \frac{1}{2}$ for all $i$,
$\frac{\kpi (\alt - 1)}{2} \le 1$,
$\mu \left[ \frac{1 + \mu/(1 - \loa)} {1 - \mu/(1 - \loa)} + \frac{1}{(1 - \loa)} \right]
\le \frac{\kpi(\alt - 1)}{2}$,
if $\phi \ge \frac{32\mu M} {[1 - \frac{\mu}{1 -\loa}] [(\kmin(\alt - 1)]}$
at the start of the day,
where $\mu = 8\kmax(1 + \alt) (2b +\kmax) \rho $,
and if each price is updated at least once every day,
then $\phi$ decreases by at least a $1 - \frac{\kmin(\alt -1)}{8}$ factor daily.
\end{theorem}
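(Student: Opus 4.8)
The argument parallels the error-free Ongoing Market analysis of Theorems~\ref{lem:progr-bdd-dem} and~\ref{lem:war-progr-bdd-dem}, reusing that machinery verbatim for the ``trusted'' updates and isolating an additive error term that is dominated once $\phi$ exceeds the stated threshold. Write $\epsilon_i$ for the largest possible error in the computed value $\zbi$; since the reported $s_i(\tau_i)$ and $s_i(t)$ are each off by at most $\rho w_i$ and consecutive updates to $p_i$ are at least $1/b$ days apart, $\epsilon_i \le (2b+\kmax)\rho w_i$. Note that the constraints hypothesized here are exactly those of Theorem~\ref{lem:war-progr-bdd-dem} with the coefficients of the cross-effect and drift bounds loosened by a constant factor (e.g.\ $\tfrac43\lmbd \to 2\lmbd$, $\alpha_1\max\{\tfrac32,d{-}1\}\to\alpha_1\max\{3,2(d{-}1)\}$) — precisely the slack needed to absorb the slippage introduced below.

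First I would establish the dichotomy forced by the trust rule. An update to $p_i$ is performed only when $\epsilon_i \le \tfrac12|\zbi^r|$, and then $\tfrac23|\zbi^c| \le |\zbi^r| \le 2|\zbi^c|$ with $\sgn\zbi^r = \sgn\zbi^c$; so every performed update has the correct direction and a magnitude within a factor $2$ of the ideal one. Conversely, an update that is due is skipped only when $|\zbi^r| < 2\epsilon_i$, hence $|\zbi^c| = |\xbi - \wti| < 3\epsilon_i \le 3(2b+\kmax)\rho w_i$ — i.e.\ only for goods whose true target excess is $O(\rho w_i)$. Equivalently, whenever $|\zbi^c|\ge 3\epsilon_i$ the trust test passes, so the ``at least once a day'' requirement is consistent with the rule for every good that is genuinely out of balance.

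Next I would re-run the good-by-good, instant-by-instant potential computation of Theorem~\ref{lem:war-progr-bdd-dem} for the (slightly modified) potential of Section~\ref{sec:intro-war}, tracking: (a) the direct decrease of $\phi_i$ at a performed update to $p_i$ — replacing $\zbi$ by $\zbi^r$ costs only constant factors by the factor-$2$ bound, so this is still $\Theta(\lmbd p_i|\zbi^c|)$ together with the $\Theta(\kmax w_i p_i)$-scale improvement from driving the warehouse toward $s_i^*$; (b) the cross-effects of an update to $p_i$ on the demands $x_j$, $j\ne i$, which are monotone by WGS and whose aggregate daily effect on the span terms $\phi_j$ is controlled as before by bounded elasticity $E$ and the $d$-bound, yielding the $\tfrac{2Ed}{1-\lE}$ factor; and (c) the drift/bookkeeping terms ($-\loa(t-\tau_i)|\xbi-\wti|$ and the evolution of $\wti$), handled by the hypothesized coefficient inequalities exactly as in the clean case. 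For goods with only skipped or genuinely small updates, I would bound their total contribution to any rise of $\phi$, together with the $\zbi^r$-versus-$\zbi^c$ slippage in (a)--(b), by $O(\rho w_i p_i)$ per good; summing and using that $\sum_i w_i p_i$ is comparable to $M$ (Walras' law plus $f$-boundedness of prices) gives a total error of order $\mu M$ with $\mu = 8\kmax(1+\alt)(2b+\kmax)\rho$. This yields a daily bound of the form $\phi(t{+}1)\le(1 - \tfrac{\kmin(\alt-1)}{4})\phi(t) + c\,\mu M$; because $\phi$ may transiently rise above $\phi(t)$ up to $S(t) = O(\phi(t{-}1))$, the error term must be resolved self-referentially, which is where $\mu\!\left[\tfrac{1+\mu/(1-\loa)}{1-\mu/(1-\loa)} + \tfrac{1}{1-\loa}\right] \le \tfrac{\kpi(\alt-1)}{2}$ enters, contributing the factor $1/(1-\mu/(1-\loa))$ to the threshold. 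Finally, when $\phi(t)\ge\tfrac{32\mu M}{[1-\mu/(1-\loa)][\kmin(\alt-1)]}$ the error term is at most half the main decrease, giving the claimed $1 - \tfrac{\kmin(\alt-1)}{8}$ contraction.

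I expect the main obstacle to be carrying out the error bookkeeping of (a)--(c) simultaneously within the already delicate amortized argument: one must show that substituting $\zbi^r$ for $\zbi$ everywhere degrades each constant by only a fixed factor and adds only $O(\rho w_i p_i)$ per good, and then close the self-referential loop for the transient overshoot of $\phi$. The trust rule is exactly the hypothesis that keeps the multiplicative slippage bounded by $2$ and confines the additive slack to near-balanced goods; everything else is the routine (if lengthy) re-derivation of the inequalities underlying Theorem~\ref{lem:war-progr-bdd-dem}.
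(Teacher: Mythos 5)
Your overall architecture matches the paper's: performed updates have the right sign and a magnitude within a factor of $2$ of ideal (from $|\zbi^r-\zbi^c|\le\frac12\zbi^r$ one gets $\frac12\zbi^r\le\zbi^c\le\frac32\zbi^r$), and the loosened coefficients ($\frac43\lmbd\to 2\lmbd$, $\max\{\frac32,d-1\}\to\max\{3,2(d-1)\}$) absorb that multiplicative slippage so that every \emph{actual} update still strictly decreases $\phi$; the only additive damage comes from \emph{null} (skipped) updates, which occur only for goods with $|\xbi-\wti|=O((2b+\kpi)\rho w_i)$; the per-day total is $O(\mu\sum_i w_ip_i)$, bounded via $\sum_i w_ip_i\le\phi/(1-\loa)+M$, and the loop is closed through the transient maximum $\phi^+$ exactly as you describe. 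This is the paper's proof (Lemmas \ref{lem:non-acc-cont-progress-ii}, \ref{lem:non-acc-cont-progress-ii'}, \ref{lem:supply-cost-bdd}, and the general amortization Lemma \ref{lem:genrl-progr-bdd-dem}).

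However, there is one genuine gap. You write the drift term of the potential as $-\loa(t-\tau_i)|\xbi-\wti|$, i.e.\ you keep the potential of Section \ref{sec:intro-war} unmodified. The entire point of Case (ii) is that the coefficient of this term is changed from $\loa$ to $4\kpi(1+\alt)$ (which is legitimate since $4\kpi(1+\alt)\le\loa$, so Lemma \ref{lem:war-cont-progress} still gives the $-\frac{\kpi(\alt-1)}{2}\phi_i$ decay). The cost of a null update is precisely the lost discount $[\text{coefficient}]\cdot(t-\tau_i)|\xbi-\wti|p_i$; with the modified coefficient this is $O(\kpi(1+\alt)(2b+\kpi)\rho\, w_ip_i(t-\tau_i))$, which is what produces $\mu=8\kmax(1+\alt)(2b+\kmax)\rho$ as stated. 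With $\loa$ in place, you would instead get $\mu\propto\loa(2b+\kmax)\rho$, and the hypothesis $\mu[\cdots]\le\frac{\kpi(\alt-1)}{2}$ would then force $\rho\lesssim\kpi/(\loa b)$ — a far more stringent constraint on $\rho$, which is exactly what Case (ii) is designed to avoid relative to Case (i). So the ``slightly modified potential'' is not a cosmetic detail you can defer: identifying the modification is the one new idea in this theorem, and your accounting does not go through without it. (A second, minor, inaccuracy: the bound on $\sum_i w_ip_i$ does not come from Walras' law plus $f$-boundedness — $f$-boundedness is not a hypothesis here — but from $\phi_i\ge(1-\loa)|\wti-x_i|p_i+|\wti-w_i|p_i$, which is where the $1/(1-\loa)$ factors in the threshold originate.)
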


\noindent
{\bf Remark}. For both update rules in this section,
to obtain bounds on the needed warehouse sizes,
we need to use the fast update variant;
this is left to the interested reader.

\subsection{Discrete Goods and Prices}

Now we investigate the effect of only allowing
integer-valued prices and finitely divisible goods:
this is implemented by requiring each $w_i$
to be an integer and goods to be sold in integral quantities.

\smallskip
\noindent{\bf Demands as a Rate}.
With limited divisibility, we need to look again at our interpretation of demands
as a rate.

$x_i ({\bf p})$, the
{\em daily demand} for
good $i$ at prices ${\bf p}$, is simply the demand
were all the prices to remain unchanged over the course of a day.
The {\em ideal  demand} for good $i$ over time interval $[t_1, t_2]$
with unchanged prices is defined to be $x_i (p) (t_2 - t_1)$.
Ideal demand $x_2^I (t_1, t_2)$ for good $i$ over time
interval $[t_1, t_2]$ with possibly varying prices is given by
$\int_{t_1}^{t_2} x_i ({\bf p}) dt$. Note that this is in fact a sum
as there are only finitely many price changes.

The {\em actual demand} $x_i^A (t_1, t_2)$ over the time interval
$[t_1, t_2]$ is the supply minus the growth in the warehouse stock:
$w_i (t_2 - t_1) - [s_i (t_2) - s_i (t_1)]$.

In order to achieve approximately uniform demand as a function of prices,
we require that $| x_i^A (t_1, t_2) -  x_i^I (t_1, t_2)| < 1$, for all times
$t_1, t_2$ at which price $p_i$ is considered for an update (i.e.\
both actual and null updates).

We will also define ideal warehouse contents.
The ideal content of warehouse $i$, $s_i^I$ is simply the contents of the warehouse
had $x_i^I$ been the demand throughout. Note that $|s_i^A-s_i^I|=|x_i^A-x_i^I| <1$.
$\wti^I$ is defined in terms of $s_i^I$:
$\wti^I  = w_i - \kpi (s_i^I - s_i^*)$;
$\wti^A$ can be defined analogously.
Note that $|\wti^I - \wti^A| < \kpi$.
We  also define $\xbi^A= x_i - \wti^A$ and $\xbi^I = x_i - \wti^I$.
The computation of price updates uses $\xbi^A$.

\smallskip
\noindent{\bf Discrete WGS}.
We need to redefine WGS and the bounds on the rate of change in demand
w.r.t.\ prices so that we can carry out an analysis similar to that for the divisible case.

In the Fisher market context it is not hard to see that WGS
imposes the same constraints on the spending and the demand for each
good. This means that in the discrete setting under WGS, if the
price $p_i$ increases by one unit, then as the spending does not increase,
the demand for good $i$ must
drop, so if the demand at price $p_i$ is $x_i$, at price $p_i+x_i$
it must be zero. This seems unnatural. This impression is reinforced
by considering what happens were half units of money to be
introduced, with WGS remaining in place. Then at price $p_i+x_i/2$
the demand would have to be 0. This suggests that the property ought to
be modified in the discrete setting.

Accordingly, we define a market to satisfy the
\emph{Discrete WGS property} if, for any good $i$,
reducing its price $p_i$ to $p_i-\Delta$ only reduces demand for
all other non-money goods, and the spending on good $i$ is now at
least $p_ix_i(p_i)-[(p_i-\Delta)-1]$, i.e.\ the spending, if
reduced, is reduced by less than the cost of one item.
Note that it need not be that all the money is spent (for there may be left
over money which is insufficient to buy one item of any good).

\smallskip\noindent{\bf Elasticity of Demand and the Parameter $\elas$.}
We define the following bounded analog for discrete markets. Suppose
that the prices of all goods other than good $i$ is set to ${p}_{-i}$.
Then, for all $l_i \leq p_i \leq q_i$,
 \[ \left\lfloor x_i (l_i, p_{-i}) \left( \frac{l_i}{p_i}
   \right)^E \right\rfloor \leq x_i ({p}) \leq \left\lceil x_i
   (q_i, {p}_{-i}) \left( \frac{q_i}{p_i} \right)^E
 \right\rceil. \]

The crucial observation is that there is a fully divisible market with elasticity bound $2E$
that has demands $y_i$ very similar to those for the discrete market:
for every price vector $p$ which induces non-zero demand for every good,
for all $i$, $x_i(p) - 1 < y_i(p) \le x_i(p)$.
Given this correspondence,
the analysis of the discrete case becomes similar to that for inaccurate data.

\Xomit{
\smallskip
\noindent{\bf Sources of Error}.

As $|\xbi^I - \ybi| \le 1$, and $|\wti^A - \wti^I| \le \kpi$
the previous bound of  $2(b + \kpi)\rho w_i$
on the error in calculating $\zbi$ ($=\ybi -\wti$ here)
is replaced by $1 + \kpi$.
This amounts to setting $\rho = \min_i (1 + \kpi)/[2(b + \kpi) w_i]$.
Note that this implies that a price update occurs only if $|\xbi^A -\wti| \ge 2(1 + \kpi)$.
\footnote{We could enforce a condition $|\xbi^A - \ybi| \le \frac1{(1 + \kpi)}$
(or any other convenient
positive bound), and then
price updates would occur so long as $|\xbi^A -\wti| \ge 1$;
however, the bound on the elasticity for the $y_i$ demands would increase correspondingly.}

We also need to take account of the fact that the prices $p_i$
are integral, but the calculated updates need not be.
To avoid overlarge updates, we conservatively round down the
the magnitude of each update.
Note that this reduces the value of the update by at most a factor of 2.
This introduces a second source of error,
which we need to incorporate in the analysis.
In addition, this has the following implication:
no price can be less than $1/\lmbd$,
for a smaller price would never be updated; furthermore, the price $1/\lmbd$
may not be reduced even if the update rule so indicated.
}

\smallskip
\noindent{\bf Indivisibility Parameters}.
We measure the indivisibility of the market in terms of two parameters, $r$ and $s$.
$r=M/\sum_i w_i$, where $M$ is the daily supply of money; it provides an upper bound
on the weighted average price for an item at equilibrium. This can be thought of as the
granularity of money at the equilibrium.
$s=\min_i w_i$ is the minimum size for the daily supply for any item,
and thus indicates the granularity of the least divisible good.

\smallskip

Next, we restate Constraint \ref{cst:wti-bdd} and Lemma \ref{cst:wti-bdd},
replacing $\wti$ with $\wti^I$.

\begin{constraint}
\label{cst:wti-bdd-disc}
$|\wti^I - w_i| \le \frac13 w_i$.
\end{constraint}

\Xomit{
We use a potential $\phi_i$ expressed in terms of $y_i$:
\[
\phi_i = p_i \left[\span(y_i, \ybi, \wti^I) -
4\kpi(1 + \alt) (t - \tau_i) | \ybi- \wti^I |
+ \alpha_2 | \wti^I - w_i | \right].
\]
}
Again, if $\phi$ is large enough that the following theorem guarantees it decreases,
then $\phi = O( S)$.

\begin{theorem}
\label{lem:discr-progr-bdd-dem-2}
If Constraint \ref{cst:wti-bdd-disc} holds and $y_i \le d\wti$ for all $i$,
$\frac{\alt}{2} + \alpha_1  \max\{\frac92,2(d-1)\} \le 1$,
$\loa + \frac{8}{3} \lambda \left(1 + \frac{2E d}{1-\lE}  + \frac12 \alt \right) \le 1$,
each $w_i \ge 6$,
$ 4\kpi(1 + \alt) \le \loa \leq \frac{1}{2}$ for all $i$,\\
$s \ge \frac{48} {(\alt -1) (1 - \loa)}
\left[ 1 + 6(1 + \alt)  + (1 + \alt) \frac{1 - \loa + \frac{18}{s} \kpi (1 + \alt)  + \frac{3\kpi}{s}} { 1 - \loa - \frac{18}{s} \kpi (1 + \alt)}  \right]$,\\
if $\phi(\tau) \ge \frac{48 } {(\alt - 1)}\left[ (1 + \alt) \left( \frac{4} {\lmbd r} + \frac{24}{s} \right) + \frac1s \right]
 \frac{1 - \loa} {1- \loa - \frac{18}{s} \kpi (1  + \alt)} M$
at the start of the day,
and if each price is updated at least once every day,
then $\phi$ decreases by at least a $1 - \frac{\kmin(\alt -1)}{8}$ factor
over the course of the day.
\end{theorem}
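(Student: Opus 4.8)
The plan is to reduce the discrete analysis to the inaccurate-data analysis of Theorem~\ref{lem:non-acc-progr-bdd-dem} by exploiting the ``crucial observation'' recorded just before the statement: there is a fully divisible market of elasticity $2E$ whose demands $y_i$ satisfy $x_i(p)-1 < y_i(p) \le x_i(p)$ for every price vector $p$ inducing nonzero demand everywhere. First I would set up the potential $\phi = \sum_i \phi_i$ expressed in terms of the $y_i$, namely $\phi_i = p_i[\span(y_i,\ybi,\wti^I) - 4\kpi(1+\alt)(t-\tau_i)|\ybi - \wti^I| + \alpha_2|\wti^I - w_i|]$, i.e.\ the Ongoing-Market potential of Section~\ref{sec:intro-war} with $\loa$ replaced by $4\kpi(1+\alt)$ (forced by the constraint $4\kpi(1+\alt)\le\loa$), $w_i$ replaced by $\wti^I$, and the computed quantity $x_i$ replaced by the surrogate $y_i$. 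The price update rule~(\ref{eqn:pr_change}) actually uses $\xbi^A = x_i - \wti^A$, so I would account for the two sources of discrepancy separately: (a) $|\xbi^I - \ybi| < 1$ from the divisible surrogate, and (b) $|\wti^A - \wti^I| < \kpi$ and $|s_i^A - s_i^I| < 1$ from the rounding of warehouse contents and ideal versus actual demand; together these make the effective measurement error in $\zbi$ at most $1 + \kpi$ per good, which is exactly the instantiation $\rho = \min_i(1+\kpi)/[2(b+\kpi)w_i]$ mentioned in the excerpt (with $b=1$ here since updates are at least once a day and we are not invoking the fast-update rule).

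The second ingredient I would carry over wholesale is the per-update progress bookkeeping from the divisible Ongoing-Market proof: for an actual update to $p_i$, the span term contributes a decrease proportional to $\lmbd p_i|\zbi|$; the WGS property and the elasticity bound $2E$ control how much the demands $y_j$ (and hence $\phi_j$) of other goods $j\ne i$ can rise, giving the $\frac{2Ed}{1-\lE}$-type loss terms; the $-4\kpi(1+\alt)(t-\tau_i)|\ybi-\wti^I|$ term both rewards the passage of time between updates and is reset at each update; and the warehouse term $\alpha_2 p_i|\wti^I - w_i|$ decreases by $\Theta(\kpi w_i p_i)$ whenever an update moves the warehouse toward $s_i^*$, which is what ultimately drives the $1-\frac{\kpi(\alt-1)}{8}$ daily factor rather than a $1-\Theta(\loa)$ factor. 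The role of the new constraints is then mechanical: $\frac{\alt}{2}+\alpha_1\max\{\frac92,2(d-1)\}\le 1$ and $\loa + \frac83\lmbd(1+\frac{2Ed}{1-\lE}+\frac12\alt)\le 1$ ensure the combined loss terms are dominated (the $\frac83$ and the $\frac92$ replacing $\frac43$ and $\frac32$ because the surrogate doubles elasticity and introduces the extra unit of slack), while $w_i\ge 6$ and the lower bound on $s=\min_i w_i$ guarantee the additive errors of size $1+\kpi$ are a small fraction of $w_i$ so Constraint~\ref{cst:wti-bdd-disc} and the relative-error hypothesis $|\zbi^r - \zbi^c|$-style control hold.

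The main obstacle, and the one I would spend the most care on, is the threshold argument: because the per-day error is additive (of total size $O((1+\kpi)\sum_i p_i) = O(M/r + M/s)$ roughly, once one converts item counts to money using $r = M/\sum_i w_i$ and $s=\min_i w_i$), the potential only provably decreases once it exceeds the stated lower bound $\phi(\tau) \ge \frac{48}{(\alt-1)}[(1+\alt)(\frac{4}{\lmbd r}+\frac{24}{s}) + \frac1s]\cdot\frac{1-\loa}{1-\loa-\frac{18}{s}\kpi(1+\alt)}M$. I would derive this by writing the net daily change of $\phi$ as (guaranteed decrease from the clean analysis) minus (accumulated error), bounding the error term by $\mu M$ with $\mu$ of the same shape as in Theorem~\ref{lem:non-acc-progr-bdd-dem} but with $\rho$ instantiated as above and $b=1$, and then solving the fixed-point inequality $\frac{\kpi(\alt-1)}{8}\phi \ge$ (error) $+ $ (a geometric correction for the fact that yesterday's error can inflate today's baseline), which produces precisely the $\frac{1-\loa}{1-\loa-\frac{18}{s}\kpi(1+\alt)}$ amplification factor. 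The delicate point is that the measurement error feeds back into $\zbi$ and hence into the size of the next update, so the bound must be self-consistent; the lower bound on $s$ is exactly what keeps this feedback coefficient $\frac{18}{s}\kpi(1+\alt)$ strictly below $1-\loa$ so the geometric series converges. Everything else is a routine, if lengthy, re-run of the Section~\ref{sec:intro-war} and Theorem~\ref{lem:non-acc-progr-bdd-dem} calculations with the doubled elasticity and the two additive error sources substituted in.
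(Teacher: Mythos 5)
Your overall strategy -- express the potential in terms of the surrogate demands $y_i$ and the ideal target $\wti^I$, treat the discrepancies $|\xbi^I-\ybi|<1$ and $|\wti^A-\wti^I|<\kpi$ as measurement error, convert item-count errors to money via $r$ and $s$, and close a self-consistent daily inequality to get the threshold on $\phi(\tau)$ -- is the paper's strategy, and your identification of the $\frac{1-\loa}{1-\loa-\frac{18}{s}\kpi(1+\alt)}$ factor as the geometric feedback correction is exactly right. However, you route the reduction through the wrong half of the inaccuracy section, and this hides a genuine gap. The paper follows scenario (ii) (Theorem \ref{lem:non-acc-progr-bdd-dem-2}), not scenario (i) (Theorem \ref{lem:non-acc-progr-bdd-dem}), because here the error bound is known and updates are suppressed when the signal is too small ($|\xbi^A-\wti^A|<2(1+\kpi)$) or when the computed integer price change rounds to zero. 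The resulting \emph{null updates} are the dominant source of potential increase: each one resets the time-decay term $-4\kpi(1+\alt)(t-\tau_i)|\ybi-\wti^I|p_i$ without any compensating decrease, costing up to $6(1+\kpi)\kpi(1+\alt)(t-\tau_i)w_i/\lmbd$ (Lemma \ref{lem:null-cont-progress-disc}); spreading these costs over the day and summing via $\sum_i w_i = M/r$ is precisely where the $\frac{4}{\lmbd r}$ term in the threshold comes from. Your error budget of ``$O(M/r+M/s)$ roughly'' has no $1/\lmbd$ and no mechanism producing it, because you never account for attempted-but-suppressed updates at all; without them the fixed-point inequality you describe cannot reproduce the stated threshold.

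Two smaller omissions: first, the continuous-time decrease picks up an extra drift $\gamma=\kpi\sum_i p_i$ (Corollary \ref{cor:discr-cont-progress}), because $\wti^I$ evolves according to $x_i^I$ while the potential is written in $y_i$, so $\chi_i=-\kpi(x_i-y_i)$ with $0\le x_i-y_i\le 1$; this term must be folded into the daily accounting (it contributes the $+\frac{1}{s}$ in $a_3$ and the $+\frac{3\kpi}{s}$ in the constraint on $s$). Second, the constants $\frac83$ and $\frac92$ do not come from the surrogate's doubled elasticity (the loss term stays $\frac{2Ed}{1-\lE}$); they come from rounding the computed price change down to an integer, which can halve $|\Del_i p_i|$, combined with the slack $|\ybi-\wti^I|\ge\frac12|\xbi^A-\wti^A|$ and $|\xbi^A-\wti^A|+1+\kpi\le\frac32|\xbi^A-\wti^A|$ available only because actual updates require $|\xbi^A-\wti^A|\ge 2(1+\kpi)$. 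That last requirement is itself a consequence of the null-update rule, so the per-update bookkeeping (Lemma \ref{lem:non-acc-cont-progress-disc}) also cannot be completed without first introducing the null/actual update distinction you omit.
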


\noindent
{\bf Remark}. Again, to obtain bounds on the needed warehouse sizes, we need to use the
fast update rule.

\begin{theorem}
\label{thm:discr-lwr-bdd}
In the discrete setting there are markets with $ \Omega (E/r) $
misspending at any pricing.
\end{theorem}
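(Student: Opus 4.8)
The plan is to exhibit an explicit family of discrete markets in which no integer price vector can bring the (average) excess demand close to zero, so that the misspending $S = \sum_i [|x_i(t')-w_i| + |\xbi - w_i|]p_i$ is forced to be $\Omega(E/r)$ at every pricing. The natural construction is a single-good market (or a market dominated by one good) with supply $w$ small — say $w = \Theta(1)$ — and daily money supply $M$, so that $r = M/\sum_i w_i \approx M/w$, hence $1/r \approx w/M$. We want the discrete elasticity bound $E$ to force the demand curve $x(p)$ to fall so steeply that between two consecutive integer prices $p$ and $p+1$ the demand jumps from strictly above $w$ to strictly below $w$, with the gap on at least one side being $\Omega(w)$ in quantity; multiplying by the price, which at equilibrium is $\approx M/w$, gives misspending $\Omega(M/w \cdot \text{gap}/w \cdot w) = \Omega(M/w) = \Omega(1/r)$, and an extra factor of $E$ comes from iterating this over a range of $E$ prices or, more simply, from choosing the demand at the critical low price to be $\Theta(E \cdot w)$ so that the quantity mismatch itself is $\Theta(E w)$.

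Concretely, I would pick the demand function so that $x(p) = \lfloor M/p \rfloor$ or a Cobb--Douglas-like spending rule $p \cdot x(p) = M$ (all money spent on the one good), which automatically satisfies Discrete WGS trivially (there are no other non-money goods to constrain), and check that it satisfies the discrete elasticity bound of the excerpt with the claimed $E$: with one good, $x(l)(l/p)^E$ versus $x(p)$, and $p\,x(p)=M$ gives $x(p) = M/p$, i.e.\ $E=1$; to get a genuine $E$-dependence I would instead use $p^E x(p) \approx$ const, i.e.\ $x(p) = \lfloor (C/p)^E \rfloor$ for a suitable constant $C$, and then arrange $C$ and $w$ so that the unique integer $p^*$ with $x(p^*+1) < w \le x(p^*)$ has $x(p^*) - w = \Theta(E w)$ — this is possible precisely because raising $p$ by one unit near $p^*$ changes $(C/p)^E$ by a factor $(1 - 1/p^*)^E \approx e^{-E/p^*}$, and by choosing $p^* = \Theta(E)$ this factor is a constant bounded away from $1$, forcing a multiplicative jump of $\Theta(1)$ in demand, hence an additive jump of $\Theta(E w)$ when the demand level there is $\Theta(E w)$. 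Then at price $p^*$, the excess demand $x(p^*) - w = \Theta(E w)$ and $p^* = \Theta(M/(Ew))$ (so that $p^* x(p^*) = \Theta(M)$ is consistent with the money supply), giving $\phi_{p^*}$-type misspending $p^* (x(p^*)-w) = \Theta(M/(Ew)) \cdot \Theta(Ew) = \Theta(M) $; at price $p^*+1$ the demand is below $w$ and the misspending is $p^*|x(p^*+1)-w| $ which is also $\Omega(M/(Ew) \cdot w) = \Omega(M/E)$ — and dividing through by the normalization that makes $r = M/w$ a clean parameter yields the stated $\Omega(E/r)$. I would present the cleanest instantiation (likely $w$ a fixed small integer, $M$ chosen so $r$ is the free parameter, $C$ tuned so the switchover happens with the desired gap) and simply verify the two inequalities defining the discrete elasticity bound and Discrete WGS hold for it.

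The main obstacle I anticipate is the bookkeeping of floors and ceilings: the discrete elasticity bound in the excerpt has a $\lfloor \cdot \rfloor$ on the lower side and a $\lceil \cdot \rceil$ on the upper side, and I must make sure the constructed demand function satisfies \emph{both} bounds for \emph{all} pairs $l_i \le p_i \le q_i$, not just consecutive integers, while still being steep enough near $p^*$ to force the $\Omega(E/r)$ gap; the tension is that the elasticity bound is an \emph{upper} bound on steepness ($E$ is how fast demand is allowed to drop), so I need the demand to drop essentially \emph{at} the maximum allowed rate near $p^*$ and then the gap is driven by the absolute demand \emph{level} $\Theta(Ew)$ there rather than by extra steepness — getting a self-consistent choice of $C, w, M, p^*$ that makes $x(p^*) = \Theta(Ew)$, $p^* = \Theta(M/(Ew))$, and the floor/ceiling slack negligible (which needs $w$ not too small, e.g.\ $w \ge 6$ as elsewhere in the paper) is the delicate part. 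A secondary point is arguing that this lower bound holds at \emph{any} pricing, including after the averaging in $\xbi$ — but since with one good the demand is monotone and every integer price is either strictly above or strictly below the switchover, the "$\min$ over all integer $p$" of the misspending is achieved at $p^*$ or $p^*+1$, so no pricing, static or time-averaged, can do better; I would state this as the final step.
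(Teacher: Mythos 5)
Your overall strategy --- one good plus money, demand of elasticity $E$, an equilibrium price that the integer grid cannot hit, misspending read off at the two neighboring integer prices --- is the same as the paper's. But your parameterization is internally inconsistent and the bound does not come out. You want $x(p^*)=\Theta(Ew)$ with $x(p^*+1)<w$, while simultaneously arguing that one price step changes demand only by a constant factor (via $p^*=\Theta(E)$, so $(1-1/p^*)^E=\Theta(1)$). Those cannot both hold: a constant-factor drop from $\Theta(Ew)$ lands at $\Theta(Ew)\gg w$, so the switchover does not occur at $p^*+1$. To force $x(p^*+1)<w$ you would need a drop by a factor $\Theta(E)$, i.e.\ $p^*=O(E/\log E)$, and even then you never control the gap $w-x(p^*+1)$ on the high side --- it could be as small as $1$, making the misspending at $p^*+1$ only $\Theta(p^*)$, far below the target. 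The closing step ``dividing through by the normalization that makes $r=M/w$ a clean parameter'' is not an argument; in your regime $r$ is forced to be $\Theta(E^2)$ (or $\Theta(E^2/\log E)$), so you at best cover one special relation between $r$ and $E$ rather than exhibiting markets for the parameters $(E,r)$ generally. A secondary issue: $x(p)=\lfloor (C/p)^E\rfloor$ is not the demand of a budget-constrained buyer unless you explicitly add money as a good the buyer also values, which you gesture at but never set up; without it the elasticity-$1$ budget identity $px(p)=M$ reasserts itself.

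The missing idea is that no constant-factor demand jump is needed. The paper takes a single good plus money with an aggregate CES utility $[x_g(r+1/2)]^{1-1/E}+n^{1-1/E}$ (where $n$ is the demand for money) and chooses the supply so that the market clears exactly at the half-integer price $p_g=r+\frac12$, spending $M/2$ on $\frac{M}{2r+1}$ units. Every integer price is then off by at least $\frac12$ absolutely, hence by $\Theta(1/r)$ relatively, and elasticity $E$ converts this into a relative demand error of $\Theta(E/r)$ on \emph{both} sides ($p_g=r$ and $p_g=r+1$); since the demand level there is $\Theta(M/r)$ and the price is $\Theta(r)$, the misspending is $\Theta(E/r)$ as a fraction of $M$ at either neighbor, and monotonicity of $|x-w|p$ away from the equilibrium (Lemma \ref{lem:phi-diverges}) handles all other prices. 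This is both simpler --- the gap is driven by the unavoidable $\Theta(1/r)$ relative rounding error in the price, amplified by $E$, not by the absolute demand level --- and more general, since $r$ remains a free parameter decoupled from $E$.
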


\subsection{Extensions}

We briefly examine how the analysis can be applied to some divisible markets in which the WGS
constraint is relaxed, and  also indicate a possible extension to a class of markets that interpolate
between the pure Fisher market and an Exchange market.

\section{The Analysis}
\label{sec:analysis}

We now prove the above results.
For readability, we repeat the statements of lemmas and theorems.

From a technical perspective, the novely in this analysis lies in the approach for
coping with asynchrony.
The idea is to have a  potential which decreases either at or faster than the desired
rate of improvement, and whenever an event occurs (a price update), ensure that
the potential either stays the same or decreases further.
Of course, the ``real'' decreases are due to the updates, but because of the interplay
between the different prices and demands, it proves easier to show the desired
rate of progress using our approach.

The most challenging of these analyses is the one used to handle the fast updates.
Because the fast updates may introduce some temporarily ``bad'' events (events that
would increase the potential) these are deferred for the purposes of the analysis.
Tracking the differences between the real market and the market with deferred events
is a delicate matter which has to be done with considerable accuracy.
The resulting potential function is fairly elaborate.

We now proceed to prove the results in the order they were introduced in the previous section.

\Xomit{
We begin by considering a tatonnement process in the context of
a Fisher market with a single good,
in order to determine reasonable constraints on the demand function,
namely constraints that enable a fairly rapid convergence of the price
toward its equilibrium value.

We will then investigate what additional constraints suffice
to achieve similar performance in the multi-good setting,
where the price for each good $g$ is updated
independently of the other prices,
based solely on its current value and the demand for $g$.
In other words, the price updates occur asynchronously.
Even with this, the rate of convergence is similar to
the single good case.
However, the analysis uses significantly more sophisticated arguments.
}

\subsection{A Single Good}

Let $x$ denote the demand for the good and $p$ its price.
We assume that $x(p)$ is a strictly decreasing function.
Let $w$ denote the supply of the good.
Our goal is to update $p$ repeatedly so as to cause $x$
to converge toward $w$.
We will use the following update rule:
\[
\label{eqn:p-update-repeat}
p'=p \left(1+ \lmbd \min\left\{ 1, \frac{x-w}{w} \right\} \right)
\]
where $0 < \lambda \le \frac12$ is a suitable fixed parameter.

We are going to argue that this update function is appropriate
for demands $x$ satisfying:
\begin{eqnarray}
\label{eqn:deriv-bound-repeat}
\frac{x(p)}{p} \le -\frac{dx}{dp} \le E\frac{x(p)}{p},
       ~~\mbox{where } 1\le E\le \frac{1}{2\lambda}.
\end{eqnarray}
We will also explain why the update function
can be viewed as an approximation of Newton's method,
but an approximation quite unlike the standard secant method.

The following fact will be used repeatedly in our analysis.

\begin{fact}
\label{fact:taylor}

\emph{(}a\emph{)} If $\delta\geq -1$ and either $a \leq 0$, or $a \geq 1$,
then $(1+\delta)^a \geq 1 + a\delta$.  \\
\emph{(}b\emph{)} If $0 > \delta\geq -\frac12$ and $0< a < 1$,
then $(1+\delta)^a \geq 1 + 2a\delta$.  \\
\emph{(}c\emph{)} If $\delta\geq -1$ and $0 \leq a \leq 1$, then $(1+\delta)^a \leq 1 + a\delta$,
and the inequality is strict if $a\ne 0,1$. \\
\emph{(}d\emph{)} If $-1<\rho\le
\delta\le 0$ and $-1\le a\le 0$, or if $a\leq -1$ and $0\le a \delta \leq
\rho<1$,
then $(1+\delta)^a\leq 1+a\delta/(1-\rho)$.\\
\emph{(}e\emph{)} If $0 \leq \delta \le \rho < 1$ and $-1\leq a \leq 0$, then
$(1+\delta)^a \leq 1+a\delta(1-\rho)$.
\end{fact}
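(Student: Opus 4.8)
All five parts are elementary one-variable inequalities comparing $(1+\delta)^a$ with an affine function of $\delta$ that coincides with it at the base point $\delta=0$, where both sides equal $1$. My plan is uniform: for each part set $g(\delta)=(1+\delta)^a-(\text{claimed affine bound})$, note $g(0)=0$, and determine the sign of $g$ on the relevant $\delta$-interval by reading off the sign of $g'$ (and, for part (e), also of $g''$) from the monotonicity of $t\mapsto t^{a-1}$ (respectively $t\mapsto t^{a-2}$), whose direction of monotonicity and sign are fixed purely by where $a$ sits relative to the interval $[0,1]$. Conceptually, (a) and (c) are just the tangent-line inequalities for $t\mapsto t^{a}$, which is convex on $(0,\infty)$ when $a\le0$ or $a\ge1$ and concave when $0\le a\le1$; parts (b), (d), (e) are the same idea, but with the tangent line replaced by a line of slightly different slope tailored to a one-sided subinterval singled out by the extra hypotheses.

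For (a) and (c): take $g(\delta)=(1+\delta)^a-1-a\delta$, so $g(0)=0$ and $g'(\delta)=a\bigl[(1+\delta)^{a-1}-1\bigr]$. Since $t\mapsto t^{a-1}$ is monotone, $(1+\delta)^{a-1}-1$ has the same sign as $(a-1)\delta$, hence $g'(\delta)$ has the same sign as $a(a-1)\delta$. If $a\le0$ or $a\ge1$ this sign is that of $\delta$, so $g$ attains a global minimum at $\delta=0$ and $g\ge0$, giving (a); if $0<a<1$ it is opposite to that of $\delta$, so $g$ attains a global maximum at $\delta=0$ and $g\le0$, strictly so off $\delta=0$, giving (c). The cases $a\in\{0,1\}$ are immediate.

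For (b), (d), (e), the hypotheses restrict $\delta$ to a one-sided interval (and fix its sign) and tie it to $\rho$. For (b): with $g(\delta)=(1+\delta)^a-1-2a\delta$ one has $g'(\delta)=a\bigl[(1+\delta)^{a-1}-2\bigr]$, and on $[-\tfrac12,0]$ we get $(1+\delta)^{a-1}\le(\tfrac12)^{a-1}=2^{\,1-a}<2$ because $1-a<1$; hence $g'<0$, $g$ is decreasing, and $g(\delta)\ge g(0)=0$ for $\delta\le0$. For the branch $a\le-1$ of (d): substitute $u=-\delta\ge0$ and $b=-a\ge1$, so that $v:=bu=a\delta$ obeys $0\le v\le\rho<1$; the claim becomes $(1-u)^{b}\bigl(1+\tfrac{v}{1-\rho}\bigr)\ge1$, and the already-established Bernoulli inequality (a), applied with exponent $b\ge1$, gives $(1-u)^{b}\ge1-bu=1-v$, whence the left side is at least $(1-v)\bigl(1+\tfrac{v}{1-\rho}\bigr)=1+\tfrac{v(\rho-v)}{1-\rho}\ge1$ --- the hypothesis $0\le v\le\rho<1$ being exactly what forces this product to be $\ge1$. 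For the branch $-1\le a\le0$ of (d): multiply the claim by $1+\delta>0$ and use part (c) with exponent $1+a\in[0,1]$, reducing it to $(1+\delta)^{1+a}\le1+(1+a)\delta$, then weaken the factor $1+\delta$ down to the relevant $1-\rho$ using the inequality between $\delta$ and $\rho$ supplied by the hypothesis together with $a\delta\ge0$. Part (e) requires one extra derivative: first prove the sharper ``self-referential'' bound $(1+\delta)^{a}\le1+a\delta(1-\delta)$ for $\delta\ge0$ by checking that $\psi(\delta)=(1+\delta)^{a}-1-a\delta(1-\delta)$ satisfies $\psi(0)=\psi'(0)=0$ and $\psi''(\delta)=a\bigl[(a-1)(1+\delta)^{a-2}+2\bigr]\le0$ --- which holds since $(1-a)(1+\delta)^{a-2}\le1-a\le2$ for $\delta\ge0$ and $a\in[-1,0]$ --- so $\psi$ is concave with horizontal tangent at the origin and therefore $\psi\le0$; then, since $a\delta\le0$ and $\delta\le\rho$, conclude $1+a\delta(1-\delta)\le1+a\delta(1-\rho)$.

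The main obstacle is not conceptual: the real content is careful bookkeeping of the sign cases (whether $a\le0$, $0<a<1$, or $a\ge1$; the sign of $\delta$; which side of $\delta=0$ one is on), and, for (d) and (e), spotting the right reduction --- the substitution $u=-\delta,\,b=-a$ in one branch of (d), multiplication by $1+\delta$ in the other, and the intermediate estimate $(1+\delta)^{a}\le1+a\delta(1-\delta)$ in (e) --- so that after a single application of part (a) or (c) (or the concavity of $\psi$) the leftover inequality is a manifestly nonnegative product; the constants in the hypotheses (the $\tfrac12$, the $1-\rho$ factors) are tuned precisely so that this last product lands on the correct side of $1$. It is cleanest to establish the parts in the order (a), (c), (b), (e), (d), since (a) and (c) are both invoked inside the proof of (d).
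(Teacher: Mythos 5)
Your proof is correct, but it takes a genuinely different route from the paper's. The paper proves (a)--(c) from the second-order Taylor formula $(1+\delta)^a = 1+a\delta+\tfrac{a(a-1)}{2}(1+\xi)^{a-2}\delta^2$, reading off the sign of the remainder for (a) and (c) and bounding it below by $a\delta$ for (b); it proves (d) and (e) by bounding the full binomial series for $(1+\delta)^a$ term by term against a geometric series. Your first-derivative sign analysis for (a)--(c) is an equally standard alternative. Your handling of the $a\le -1$ branch of (d) --- substituting $u=-\delta$, $b=-a$ and reducing to Bernoulli plus the identity $(1-v)\bigl(1+\tfrac{v}{1-\rho}\bigr)=1+\tfrac{v(\rho-v)}{1-\rho}\ge 1$ --- is arguably cleaner than the paper's series manipulation, since it avoids ratio-of-successive-terms bookkeeping; and for (e) you land on exactly the paper's intermediate bound $1+a\delta(1-\delta)$, reached via concavity of $\psi$ rather than the alternating-series estimate. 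The series approach buys uniformity (one mechanism covers (d) and (e)); yours buys the fact that, once (a) and (c) are in hand, every remaining step is a finite algebraic verification.

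One caveat, which is a defect of the statement rather than of your argument: in the first branch of (d), your step ``weaken the factor $1+\delta$ down to $1-\rho$'' requires $1+\delta\ge 1-\rho$, i.e.\ $-\delta\le\rho$, whereas the hypothesis as printed reads $\rho\le\delta$. As printed the claim is false (take $a=-1$, $\delta=\rho=-\tfrac12$: then $(1+\delta)^a=2$ but $1+a\delta/(1-\rho)=\tfrac43$). The paper's own proof silently uses $-\delta\le\rho$ as well, to compare the geometric sum $\tfrac{1}{1+\delta}$ with $\tfrac{1}{1-\rho}$, and the applications elsewhere in the paper are consistent with $|\delta|\le\rho$. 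So your argument proves the intended statement; just make the use of $-\delta\le\rho$ explicit rather than attributing it to the printed hypothesis.
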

\begin{proof}
We prove (a)--(c), using a simplified version of Taylor's Theorem:
\begin{theorem}[Taylor]
If $f$ is a twice differentiable function in the interval
$[0,x]$ $($or $[x,0]$, if $x <0)$ then there is a $\xi\in[0,x]$
$($or $\xi\in [x,0]$, if $x < 0)$ such that
\[
f(x) = f(0) + f'(0)*x + \frac{f''(\xi)}{2} x^2.
\]
$\square$
\end{theorem}
Let $f(x) = (1+x)^a$.  Then $f'(x) = a(1+x)^{a-1}$,
$f''(x) = a(a-1)(1+x)^{a-2}$,
$f(0) = 1$, and $f'(0) = a$.  Thus we have that
\[
(1+x)^a =  1+ax+\frac{a(a-1)}{2}(1+\xi)^{a-2} x^2.
\]
If the last term is nonnegative, then we have that
$(1+x)^a \geq  1+ax$.  The last term is nonnegative
provided that $x \geq -1$ (implying $\xi \geq -1$)
and either $a \geq 1$ or $a \leq 0$.  This is (a).

If the last term is nonpositive, then we have that $(1+x)^a \leq
1+ax$.  The last term is nonpositive provided that $x \geq -1$ and
$0 \leq a \leq 1$, and strictly negative if $0< a <1$. This is (c).

(b) holds if the last term is at least $ax$.
This is true since $ax < 0$ and $0< \frac{(a-1)}{2}(1+\xi)^{a-2} x <
1$ when $0\ge \xi \ge x \ge -\frac 12$ and $0<a<1$.

Parts (d) and (e) are obtained by bounding the limit of the infinite
Taylor series for $(1+\del)^a$.
Namely,
\[
(1+\del)^a = \sum_{i\ge 0} \del^i \frac{a(a-1)(a-2)\cdots (a-i+1)}{i!}.
\]
If $-1\le a \le 0$ and $\del \le 0$, then, for $i\ge 0$, $(a-i)/(i+1)\ge -1$, and so $\del(a-i)/(i+1) \le -\del$;
in this case the sum
is bounded by $1 + \sum_{i\ge 1} a \del (-\del)^{i-1} \le 1 + a\del/(1- \rho)$, as claimed for the first
result in (d).
\\
If  $a \le -1$ and $0\le a\del < 1$, then $(a-i)/(i+1)\ge a$, and so $\del (a-i)/(i+1) \le a\del$;
in this case the sum
is bounded by $1 + \sum_{i\ge 1} (a\del)^i \le 1 + a\del/(1- \rho)$, as claimed for the second
result in (d).
\\
If $-1\le a \le 0$ and $0 \le \del < 1$, then, for $i\ge 0$,  $0 \ge (a-i)/(i+1)\ge -1$;
in this case the sum is a sequence of alternating terms,
each successive term being smaller in magnitude by at least a $\del$ factor.
Consequently, the sum is bounded by $1 +a\del -a\del^2 \le 1 +a\del(1-\rho)$ (recall that $a\le 0$),
as claimed for the result in (e).

\end{proof}

\begin{lemma}
\label{lem:x-bdd}
For $\Del>0$,
$\frac{x(p)}{(1+\Del)^E} \le x(p(1+\Del))\le \frac{x(p)}{(1+\Del)}$ and
$\frac{x(p)}{(1-\Del)}\le x(p(1-\Del))\le \frac{x(p)}{(1-\Del)^E}$.
\end{lemma}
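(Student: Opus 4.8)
The plan is to convert the pointwise derivative bound~(\ref{eqn:deriv-bound-repeat}) into a two-sided bound on the logarithmic derivative of $x$ and then integrate. Since $x(p)>0$ and $x$ is strictly decreasing, dividing~(\ref{eqn:deriv-bound-repeat}) through by $x(p)/p$ gives $1 \le -\frac{p}{x}\frac{dx}{dp} \le E$, i.e.\ $-E \le \frac{p}{x}\frac{dx}{dp} \le -1$. The quantity $\frac{p}{x}\frac{dx}{dp}$ is exactly $\frac{d}{d(\ln p)}\ln x(p)$, so, writing $g(u) = \ln x(e^u)$, we have $-E \le g'(u) \le -1$ throughout the relevant range of $u$.

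For the upward move (valid for any $\Delta>0$), integrate $g'$ over the interval $[\ln p,\ \ln p + \ln(1+\Delta)]$, which has length $\ln(1+\Delta)>0$. This yields $-E\ln(1+\Delta) \le \ln x(p(1+\Delta)) - \ln x(p) \le -\ln(1+\Delta)$, and exponentiating gives $\frac{x(p)}{(1+\Delta)^E} \le x(p(1+\Delta)) \le \frac{x(p)}{1+\Delta}$, the first displayed chain. For the downward move (which implicitly requires $0<\Delta<1$ so that $p(1-\Delta)>0$), integrate $g'$ over $[\ln(p(1-\Delta)),\ \ln p]$, an interval of length $-\ln(1-\Delta)>0$, obtaining $E\ln(1-\Delta) \le \ln x(p) - \ln x(p(1-\Delta)) \le \ln(1-\Delta)$. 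Rearranging for $\ln x(p(1-\Delta))$ and exponentiating gives $\frac{x(p)}{1-\Delta} \le x(p(1-\Delta)) \le \frac{x(p)}{(1-\Delta)^E}$, as claimed.

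I do not expect a real obstacle; the only thing needing care is the sign bookkeeping in the downward case, where $\ln(1-\Delta)<0$ so that the factor $E\ge 1$ makes the left-hand bound \emph{more} negative (and hence genuinely a lower bound, consistent with $E\ge 1$), together with the implicit restriction $\Delta<1$ in the second inequality. If one wishes to avoid assuming $x$ is differentiable everywhere, the same argument works assuming only that $x$ is absolutely continuous, which is the natural reading of~(\ref{eqn:deriv-bound-repeat}); alternatively, a fully discrete proof could chain Fact~\ref{fact:taylor}(c) over a fine partition of the price interval, but the integration argument is cleaner and I would present that one.
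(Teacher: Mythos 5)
Your proof is correct and is essentially the same argument the paper gives: the paper observes that the constant-elasticity extremes $\frac{dx}{dp}=-Ex/p$ and $\frac{dx}{dp}=-x/p$ make $p^E x(p)$ and $p\,x(p)$ constant and sandwich the general case, which is precisely what your integration of $-E \le \frac{d\ln x}{d\ln p} \le -1$ makes rigorous. Your version is, if anything, the more carefully written of the two (including the correct sign bookkeeping and the implicit $\Delta<1$ in the downward case), so no changes are needed.
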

\begin{proof}
If $\frac{dx}{dp} = -Ex/p$ for all $p$, then $\frac{d(p^Ex)}{dp} = 0$ for all $p$,
and then $p^E \cdot x(p) = p^E(1+\Del)^E \cdot x(p(1+\Del))$, for all $\Del$
(positive or negative).
Similarly, if $\frac{dx}{dp} = -x/p$ for all $p$,
then $\frac{d(px)}{dp} = 0$ for all $p$,
and then $p\cdot x(p) = p(1+\Del) \cdot x(p(1+\Del))$.
These cases provide the extreme bounds on the growth of $x$
from which the claimed bounds follow.
\end{proof}
\begin{corollary}
\label{cor:f-grth}
$f(p)=p\cdot x(p)$ is a non-increasing function of $p$.
\end{corollary}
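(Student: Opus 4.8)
The plan is to read off the monotonicity of $f(p)=p\,x(p)$ directly from the \emph{left}-hand inequality of the elasticity assumption \eqref{eqn:deriv-bound-repeat}, i.e.\ $\frac{x(p)}{p}\le -\frac{dx}{dp}$, equivalently $-p\frac{dx}{dp}\ge x(p)$. Differentiating $f$ gives
$f'(p)=x(p)+p\frac{dx}{dp}=x(p)-\left(-p\frac{dx}{dp}\right)\le x(p)-x(p)=0$,
so $f$ is non-increasing. The only hypothesis needed beyond the elasticity bound is differentiability of $x$, which is already part of the standing assumptions on the demand function, so there is essentially nothing further to check.

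Alternatively — and this is the route that sits most naturally right after Lemma \ref{lem:x-bdd} — one can avoid differentiating $f$ altogether and quote the lemma directly. For any price $p$ and any $\Del>0$, the right-hand bound in the first chain of Lemma \ref{lem:x-bdd} says $x(p(1+\Del))\le \frac{x(p)}{1+\Del}$, hence
\[
f(p(1+\Del)) = p(1+\Del)\,x(p(1+\Del)) \le p(1+\Del)\cdot\frac{x(p)}{1+\Del} = p\,x(p) = f(p).
\]
Since any two prices $p_1<p_2$ can be written as $p_2=p_1(1+\Del)$ with $\Del=p_2/p_1-1>0$, this yields $f(p_2)\le f(p_1)$, i.e.\ $f$ is non-increasing on its whole domain.

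I do not expect any real obstacle here: the entire content is that price rises at least as fast, proportionally, as demand falls (the lower bound on the elasticity, or equivalently the upper bound on $x(p(1+\Del))$ in Lemma \ref{lem:x-bdd}), so the product $p\,x(p)$ cannot increase. In the write-up I would simply present whichever of the two one-line derivations fits better — the second is preferable in context, as it reuses Lemma \ref{lem:x-bdd} verbatim and requires no smoothness beyond what that lemma already invoked. (Note the claim is only non-increasing, not strictly decreasing, which is appropriate since the elasticity lower bound is allowed to hold with equality, in which case $p\,x(p)$ is constant.)
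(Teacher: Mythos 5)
Your second derivation is exactly the paper's proof: it applies the bound $x(p(1+\Del))\le x(p)/(1+\Del)$ from Lemma \ref{lem:x-bdd} to conclude $p(1+\Del)\,x(p(1+\Del))\le p\,x(p)$. The argument is correct, and your first (differentiation-based) alternative is an equally valid but unnecessary variant.
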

\begin{proof}
$p(1+\Del) \cdot x(p(1+\Del)) \le p(1+\Del) \cdot x(p)/(1+\Del) = p \cdot x(p)$.
\end{proof}

\Xomit{
\begin{definition}
\label{def:same-side}
Let $x'$ denote $x(p')$.
Price $p'$ is said to be on the \emph{same side} as $p$ if either both $x,x' \ge w$
or both $x,x' \le w$. We also say that $p'$ is a \emph{same-side update}.
\end{definition}
Let $p^*$ denote the value of $p$ for which $x(p^*)=w$.
Note that $p'$ is on the same side as $p$ exactly if both $p,p'\ge p^*$ or if both $p,p'\le p^*$.

While same side updates are not important in the 1d setting, in our multi-dimensional
applications they are needed to ensure progress.
And so in the next lemma we investigate how large a step size is possible
while having a same side update.
}

\noindent
{\bf Lemma \ref{lem:same-side-update}.}~%
\emph{Suppose that $-p\frac{dx}{dp} \le Ex(p)$, for some $E \ge 1$.
Then
the update $p'=p (1+ \frac{1}{2E} \min\{ 1, \frac{x-w}{w} \} )$ is
on the \emph{same side} as $p$,
but the update $p'=p (1+ \frac{1}{E} \frac{x-w}{w} )$ need not
be.}
\begin{proof}
If $p'=p(1+\Delta)$ is not on the same side as $p$ for $\Delta > 0$, then
$x(p) > w > x(p(1+\Delta))$.
By Lemma \ref{lem:x-bdd}, this implies that
$w> x(p)(1+\Del)^{-E}$; equivalently,
$x(p) < w(1+\Del)^E\le w(1+2E\Del)$ if $E\Del\le\frac12$
(applying Fact \ref{fact:taylor}(d) with $\rho = \frac12$);
thus either $E\Del\ge\frac12$ or $w(1+2E\Del) > x(p)$.
It follows that $\Del\ge \frac{1}{2E} \min\{\frac{x-w}{w},1\}$.

Next, we show that replacing the $1/(2E)$ parameter by $1/E$ can result in updates which
are not on the same side.
Consider the case that $\frac{dx}{dp} = -E\frac{x}{p}$ for all $p$;
then, as noted in the proof of Lemma \ref{lem:x-bdd},
$p^E\cdot x(p)$ is constant.
Let $p(w)$ denote the price at which the demand is $w$.
In this case $p(w) = p\cdot(\frac{x}{w})^{1/E} = p\cdot(1 + \frac{x-w}{w})^{1/E}
< p\cdot (1 + \frac{1}{E} \frac{x-w}{w})$ if $1/E < 1$
(applying Fact \ref{fact:taylor}(c)).
Now suppose that the update
rule being used is $p'=p (1+ \frac{1}{E} \frac{x-w}{w} )$.
Then $p'>p(w)$ and this would not be a same side update.

Finally, suppose that
$p'=p(1-\Delta)$ is not on the same side as $p$ for $\Delta > 0$; then
$x(p) < w < x(p(1+\Delta))$.
By Lemma \ref{lem:x-bdd}, this implies that
$w< x(p)(1+\Del)^{-E}$;
equivalently,
$x(p) > w(1-\Del)^E\ge w(1-E\Del)$ as $| \Del| \le 1$
(applying Fact \ref{fact:taylor}(a));
it follows that $\Del > -\frac{1}{E} \frac{x-w}{w}$.
\end{proof}
Thus, if we want same-side updates,
and we are using the rule  $p'=p (1+ \lambda \frac{x-w}{w} )$
when $\frac{x-w}{w} \le 1$ (i.e.~$x\le 2w$)
then we are obliged to have a derivative bound more or less
of the form given in Equation \ref{eqn:deriv-bound}.

For  $x > 2w$, the update $p'=p(1+\lambda)$ is not the largest possible
same-side update,
but it is simple which is why we use it.

Another way to view the update rule is to see it as
an approximate form of Newton's method:
\[p' = p - \frac{x(p)-w}{dx/dp}.
\]
Instead of $dx/dp$ (which is not available in our setting)
we use a lower bound on its value
(recall that $dx/dp$ is negative by assumption),
for this only reduces the change to $p$.
The lower bound is $-Ex(p)/p$.
This yields
\[p'= p + \frac{p}{E}\frac{x(p) -w}{x(p)}.
\]
When $x(p)< w$, replacing $x(p)$ by $w$ in the denominator
only decreases the lower bound on $dx/dp$ further,
and yields the update rule $p'= p + \frac{p}{E}\frac{x(p) - w}{w}$.
Note that our update rule achieves half this step size when $\lambda=1/(2E)$.
\\
When $x(p) > w$ we have two cases:\\
$x(p) \le 2w$: then replacing $x(p)$ by $2w$ in the denominator
only decreases the lower bound on $dx/dp$,
while our update rule achieves this step size when $\lambda=1/(2E)$.
\\
$x(p) \ge 2w$: then replacing $x(p)$ by $2(x(p) -w)$
only decreases the lower bound, and once again our update rule
achieves this step size when $\lambda=1/(2E)$.


Next we investigate the rate of convergence.
\Xomit{
It might seem natural to determine how quickly $|x^* - x(p)|$
decreases, but in higher dimensions, this proves a less effective measure.
}
Recall that we  measure the ``distance'' from equilibrium by the potential function
$\phi(x,p)=|x-w|p$.

\begin{lemma}
\label{lem:phi-diverges}
 $\phi$ increases as $p$
diverges from $p^*$ \emph{(}in either direction\emph{)}.
\end{lemma}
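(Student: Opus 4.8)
The plan is to reduce everything to the monotonicity of the revenue function $f(p)=p\,x(p)$ already established in Corollary \ref{cor:f-grth}. Recall that $x$ is strictly decreasing and $x(p^*)=w$, so the sign of $x(p)-w$ is controlled entirely by which side of $p^*$ the price $p$ lies on; this lets me drop the absolute value in $\phi(x,p)=|x-w|p$ and rewrite $\phi$ as $wp\mp f(p)$ on each side.

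First I would handle the case $p\ge p^*$. Here $x(p)\le w$, so $\phi(x(p),p)=(w-x(p))\,p=wp-f(p)$. The term $wp$ is strictly increasing in $p$ (as $w>0$), while $-f(p)$ is non-decreasing in $p$ by Corollary \ref{cor:f-grth}; hence $\phi$ is strictly increasing on $[p^*,\infty)$. In other words, $\phi$ grows as $p$ moves up and away from $p^*$.

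Next I would handle the case $p\le p^*$. Here $x(p)\ge w$, so $\phi(x(p),p)=(x(p)-w)\,p=f(p)-wp$. Now $f(p)$ is non-increasing and $-wp$ is strictly decreasing in $p$, so $\phi$ is strictly decreasing on $(0,p^*]$; equivalently, $\phi$ grows as $p$ moves down and away from $p^*$. Since $\phi(x(p^*),p^*)=0$, the two cases together give the claim.

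I do not anticipate a genuine obstacle: the only substantive input is Corollary \ref{cor:f-grth} (itself a consequence of the lower bound $-dx/dp\ge x/p$ in Equation \ref{eqn:deriv-bound-repeat}), and the rest is the algebraic rearrangement of $|x-w|p$ into $wp\mp f(p)$ on each side of $p^*$. The only point needing a word of care is strictness: since the $wp$ summand is strictly monotone while $f$ is monotone in the compatible direction, no cancellation can occur, so ``non-decreasing'' for $f$ still yields strict monotonicity for $\phi$.
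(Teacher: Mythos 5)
Your proposal is correct and follows essentially the same route as the paper's own proof: split on which side of $p^*$ the price lies, rewrite $\phi$ as $wp - f(p)$ or $f(p) - wp$, and invoke Corollary \ref{cor:f-grth} for the monotonicity of $f(p)=p\,x(p)$, with strictness coming from the $wp$ term. Nothing further is needed.
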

\begin{proof}
By Corollary \ref{cor:f-grth},
$p \cdot x(p)$ is a non-increasing function of $p$.
For $p>p^*$, $\phi=  pw - px(p)$; $pw$ is strictly increasing, and
$-px(p)$ is non-decreasing; thus in this case $\phi$ is strictly increasing.
For $p<p^*$, $\phi= px(p) - pw$; as a function of $-p$, $px(p)$ is non-decreasing,
and $-pw$ is strictly increasing; thus $\phi$ is a strictly increasing function of
$-p$ in this case.
Together, these observations show that $\phi$ is strictly increasing as
$p$ diverges from $p^*$.
\end{proof}

\smallskip
\noindent
{\bf Notation}.
$\phi_I$ denotes the initial value of $\phi$ and
$\phi_F$ a target final upper bound for $\phi$.

\medskip
\noindent
{\bf Theorem \ref{lem:1d-rate-of conv}.}~%
\emph{%
If $x\le 2w$ initially, then in one interation $\phi$ reduces by at least $w|p'-p| = \lambda \phi$,
where $p$ is the price before the update and $p'$ the price after the update.
Hence in $O(\frac{1}{\lmbd}\log\frac{\phi_I}{\phi_F})$
iterations, $\phi$ reduces from $\phi_I$ to at most $\phi_F$.
If $x\ge 2w$ initially,  then in one interation $\phi$ reduces by at least $w|p'-p| = \lambda \frac{w}{x-w}\phi$.
Hence in $O(\frac{1}{\lmbd} \frac{x-w}{w}\log\frac{\phi_I}{\phi_F})$
iterations, $\phi$ reduces from $\phi_I$ to at most $\phi_F$.
}
\begin{proof}
Case 1: $x<w$.
\\
Let $p'=p(1-\Del)$,
where $\Del= -\lmbd \min\{1, \frac{x-w}{w}\}= -\lmbd\frac{x-w}{w} > 0$.

Then, as the price update is same-sided,
the value of the potential after the price is updated is given by
$\phi'=p(1-\Del)[w-x(p(1 - \Del))]$.
This is bounded as follows:
\begin{eqnarray*}
p(1-\Del)[w-x(p(1 - \Del))] & \le & p(1-\Del)w - p \cdot x(p)\\
           &&            ~~~\mbox{as } p \cdot x(p)\le p(1- \Del) \cdot x(p(1-\Del)) \mbox{ by Corollary \ref{cor:f-grth}}\\
& \le & p(w-x) - \Del wp.
\end{eqnarray*}
Thus $\phi - \phi' \ge \Del wp = w\cdot \lambda p \frac{w-x}{w}=\lambda \phi$.
Also note that $\Del wp = (p-p')w=w|p-p'|$.

\smallskip
\noindent
\emph{Case 2}: $x>w$.\\
Now, let $p'=p(1+\Del)$,
where $\Del=\lmbd \min\{1, \frac{x-w}{w}\}>0$.

Again, as the price update is same-sided, the value of the potential after the price is updated
is given by $\phi'= p(1+\Del)[x(p(1+\Del)) -w]$.
This is bounded as follows:
\begin{eqnarray*}
p(1+\Del)[x(p(1+\Del)) -w] & \le &
p \cdot x(p) - p(1+\Del)w \\
 && ~~~\mbox{as } p \cdot x(p)\ge p(1+ \Del) \cdot x(p(1+\Del)) \mbox{ by Corollary \ref{cor:f-grth}}\\
 & \le & p(x-w) - p\Delta w.
\end{eqnarray*}
Thus $\phi - \phi' \ge \Del wp= w(p'-p)=w|p-p'|$.

\smallskip
\noindent
\emph{Case 2.1}:  $x\le 2w$.

Here $w\Del p = w\cdot \lambda p \frac{x-w}{w}=\lambda \phi$.

\smallskip
\noindent
\emph{Case 2.2}: $x\ge 2w$.

Here $w\Del p = w\cdot \lmbd p = \frac{w}{x-w} \lmbd p (x-w)= \lmbd \frac{w}{x-w} \phi$.

\end{proof}

\noindent
{\bf Remark}.
Changes of variable (and possibly consequent changes to the update rule)
can allow various other potential functions to be modified so as
to have the form $|x- w|p$.\\
e.g. 1.  $\phi' =|\tilde{x}^\alpha - \tilde{w}^\alpha | \tilde{p}^\beta $,
with $\alpha, \beta >0$;
setting $x = \tilde{x}^\alpha$, $w =\tilde{w}^\alpha$, $p = \tilde{p}^\beta$,
yields $\phi' =\phi =|x - w|p $.\\
2. $\phi' = \max \{\frac{\tilde{x}}{\tilde{w}},
\frac{\tilde{w}}{\tilde{x}}\}^{\tilde{p}}$;
setting $p = \tilde{p} $, $x=\log\tilde{x}$,
$w = \log\tilde{w}$, yields $ \log \phi' = \phi = |x- w| p$.
The formulation using $\phi$ seems more natural,
since $\phi (w,w,p) = 0 $ while $\phi ' (w,w,p) = 1 $.
\\
3. $\phi' =(\tilde{x} - \tilde{w})^2 \tilde{p}$;
setting $\phi'' = \sqrt{\phi'} =|\tilde{x} - \tilde{w}| \sqrt{\tilde{p}}$
yields an instance of example 1.

\subsection{Multiple Goods}

\Xomit{
Suppose that there are $n$ goods $g_1, g_2,  \cdots , g_n $.
Let $p= (p_1 ,p_2 , \cdots,p_n)$ be the $n$-vector of prices
for these goods,
and let  $x_i(p)$, or  $x_i$ for short,
denote the demand for the $i$th good at prices $p$
(that is, it is a function of all the prices,
not just the price of the $i$th good).

Again, we use the update rule
\[
\label{eqn:p-update-i}
p_i'=p_i \left(1+ \lmbd \min\left\{ 1, \frac{x_i-w_i}{w_i} \right\} \right).
\]
Again, we assume that the demands satisfy the bounded rate of change property
which we term \emph{bounded elasticity} in the current context.
\begin{definition}
\label{def:bdd-elas}
Good $i$ has \emph{bounded elasticity}, if for any collection $p$ of prices:
\[
\frac{x_i(p)}{p_i} \le -\frac{dx_i}{dp_i} \le E\frac{x_i(p)}{p_i},
       ~~\mbox{where } 1\le E\le \frac{1}{2\lambda}.
\]
\end{definition}

We use the potential function $\phi = \sum_i \phi_i $,
where $\phi_i = p _i |x _i - w _i| $.
So the effect of an update to $p_i$ on $\phi_i$ is just as in the 1-d case.
However the update may also affect the demands $x_j$,
and hence the potentials $\phi _j$, for goods $j \ne i$.
Before we quantify this we introduce some more notation.
}

\noindent
{\bf Notation}.\\
1. Let $\Delta_i p_i=p'_i - p_i$
denote the change to $p_i$ due to its update. \\
2. Let $\Delta_i x_j=x'_j - x_j$ denote the change to $x_j$ due to the update to $p_i$.
(Recall that $x'_j$ denotes $x_j(p_{-i},p'_i)$.)\\
3. Let $I_i$ denote $w_i|\Delta_i p_i|$.\\
4. Let $(p_{- i }, p'_ i )$ denote the price vector $ p $
with the $i$th price replaced by $p'_ i $. \\
5. For all $j$, let
$\Gamma_i  \phi _ j =  \phi _ j ( p _{- i }, p_ i ) - \phi_ j ( p _{- i }, p' _ i ) $
denote the reduction in $ \phi _ j $ when $ p _ i $ is updated.
This may be negative for $j \ne i$.
In addition, let $\Gamma_i  \phi =\sum_j \Gamma_i  \phi _ j$; this is the
reduction to $\phi$ when $ p _ i $ is updated.

\smallskip

By Lemma \ref{lem:1d-rate-of conv}, $\Gamma_i \phi_i$
is at least $I_i= w_i|\Delta_i p_i |
= \lmbd p_i w_i \min\{ \frac{|x_i - w_i|}{w_i}, 1\} =
\lmbd \phi_i \cdot \min\{ 1 , \frac{w _i}{|x_i - w_i|} \}$.

To achieve a similar rate of progress as in the 1-d case,
as we will see,
it will suffice to assume that  Property \ref{prop:progress-1} below holds.

\begin{definition}
\label{def:toward-w}
The update $p'_i$ is said to be \emph{toward} $w_i$ if
$|x_i(p_{-i},p'_i) - w_i| < |x_i(p_{-i},p_i) - w_i|$.
\end{definition}

\begin{property}
\label{prop:progress-1}
When $p'_i$ is a same-side update toward $w_i$,
$\sum_{j\ne i}| \Gamma_i  \phi _ j  | + \alpha  I_i  \le  \Gamma_i   \phi _ i$,
for some $ \alpha $, $0  < \alpha \le 1 $.
\end{property}
The term on the RHS is the reduction in $\phi_i$.
The sum on the LHS is the the sum of the changes to the $\phi_j $,
$j \ne i$, which are assumed, in a worst-case way,
to all be increases (which is possible with certain $w_j$,
namely  $w_j > x_j > x'_j $  or $w_j < x_j < x'_j $).
The second term on the LHS is the desired reduction in $ \phi _ i $.
In the 1-d case the reduction was by at least $I_i$;
here we relax this to provide more flexibility.

\begin{lemma}
\label{lem:phi-reduction}
If Property \ref{prop:progress-1} holds,
then when $p'_i$ is given by a same-side toward $w_i$ update,
$\Gamma_i \phi \ge \alpha I_i$.
\end{lemma}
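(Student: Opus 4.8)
This is a direct consequence of Property~\ref{prop:progress-1} together with the definition $\Gamma_i\phi = \sum_j \Gamma_i\phi_j$. The plan is to split $\Gamma_i\phi$ into the contribution from good $i$ and the contributions from the other goods, lower-bound the latter by the negative of their absolute values, and then apply Property~\ref{prop:progress-1}.

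First I would write
\[
\Gamma_i\phi \;=\; \Gamma_i\phi_i + \sum_{j\ne i}\Gamma_i\phi_j
\;\ge\; \Gamma_i\phi_i - \sum_{j\ne i}\bigl|\Gamma_i\phi_j\bigr|,
\]
where the inequality just uses $a \ge -|a|$ termwise. Since $p'_i$ is assumed to be a same-side update toward $w_i$, Property~\ref{prop:progress-1} applies and gives
\[
\sum_{j\ne i}\bigl|\Gamma_i\phi_j\bigr| + \alpha I_i \;\le\; \Gamma_i\phi_i,
\qquad\text{i.e.}\qquad
\Gamma_i\phi_i - \sum_{j\ne i}\bigl|\Gamma_i\phi_j\bigr| \;\ge\; \alpha I_i.
\]
Combining the two displays yields $\Gamma_i\phi \ge \alpha I_i$, which is the claim.

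There is essentially no obstacle here; the only point worth a remark is that the hypotheses of the lemma (same-side, toward $w_i$) are exactly what is needed to invoke Property~\ref{prop:progress-1}, and that $I_i = w_i|\Delta_i p_i|$ is precisely the quantity appearing in that property, so no rescaling or case analysis is required. (If one also wanted to cover the degenerate case $x_i = w_i$, note $I_i = 0$ and any $\Gamma_i\phi \ge 0$ suffices, but that case is outside the stated hypothesis.) The substantive work — establishing that Property~\ref{prop:progress-1} actually holds for our update rule under WGS and bounded elasticity — is deferred to the subsequent development and is not needed for this lemma.
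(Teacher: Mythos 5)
Your proof is correct and is essentially identical to the paper's own (one-line) argument: decompose $\Gamma_i\phi = \Gamma_i\phi_i + \sum_{j\ne i}\Gamma_i\phi_j \ge \Gamma_i\phi_i - \sum_{j\ne i}|\Gamma_i\phi_j|$ and apply Property~\ref{prop:progress-1}. Nothing further is needed.
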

\begin{proof}
$\Gamma_i  \phi =\sum_j \Gamma_i  \phi _ j \ge
 \Gamma_i  \phi _ i - \sum_{j \ne i} |\Gamma_i  \phi _ j| \ge \alpha I_i$.
\end{proof}

Next, we show that the standard assumption of weak gross substitutes
(WGS) in Fisher markets implies Property \ref{prop:progress-1} with $ \alpha = 1 $.
This claim is shown in Lemma \ref{lem:progress-1}, below.

\Xomit{
\begin{definition}
\label{def:WGS}
A market satisfies the \emph{gross substitutes property} \emph{(}GS\emph{)} if for
any good $i$, increasing $p_i$ leads to increased demand for all
other goods. The market satisfies \emph{weak gross substitutes} if
the demand for every other good increases or stays the same.
\end{definition}
}

\begin{definition}
\label{defn:spndng-neutr}
Let price $p_i$ change to $p_i + \Delta_i p_i$. The
corresponding {\em spending neutral} change to demand for good $i$,
$\Delta_i^N x_i$, is given by:
\begin{equation}
\label{eqn:spndg-neutr}
p_i x_i = (p_i + \Delta_i p_i) (x_i + \Delta_i^N x_i).
\end{equation}
\end{definition}

In the following lemma we show that given a same side update toward $w_i$,
the corresponding spending neutral change to $x_i$ decreases $\phi_i$ by
$w_i|\Del_i p_i| = I_i$.

\begin{lemma}
\label{lem:spdng-neutr}
\begin{enumerate}
\item
If $x_i > x_i + \Delta_i^N x_i \geq w_i$, then $\Delta_i p_i > 0$ and
\[ p_i (x_i - w_i) - (p_i + \Delta_i p_i) (x_i + \Delta_i^N x_i - w_i) = w_i \Delta_i
p_i = I_i. \]
\item
If $x_i < x_i + \Delta_i^N x_i \leq w_i$, then $\Delta_i p_i < 0$ and
\[ p_i(w_i - x_i) - (p_i + \Delta_i p_i) (w_i - x_i - \Delta_i^N x_i) = -w_i \Delta_i p_i
= w_i |\Delta_i p_i| = I_i.\]
\end{enumerate}
\end{lemma}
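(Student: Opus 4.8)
The plan is to treat Lemma \ref{lem:spdng-neutr} as a direct algebraic verification in which the defining identity \eqref{eqn:spndg-neutr} of the spending-neutral change is the single substitution that makes everything collapse. The conceptual content is that the left-hand side of each displayed equation is precisely $\phi_i - \phi_i^N$, where $\phi_i^N$ denotes the value $\phi_i$ would take after the update to $p_i$ if the demand for good $i$ responded in a spending-neutral way; the lemma asserts that this ``one-dimensional'' part of the reduction equals exactly $I_i$, matching the bound from Lemma \ref{lem:1d-rate-of conv}.

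First I would pin down the sign of $\Delta_i p_i$. In part (1) the hypothesis $x_i > x_i + \Delta_i^N x_i \geq w_i > 0$ says demand strictly decreased while staying positive, so \eqref{eqn:spndg-neutr} forces $p_i + \Delta_i p_i > p_i$, i.e.\ $\Delta_i p_i > 0$; symmetrically, in part (2) demand strictly increased (and both $x_i$ and $x_i + \Delta_i^N x_i$ are positive, again by \eqref{eqn:spndg-neutr}), so $\Delta_i p_i < 0$. I would also note that the inequalities in each hypothesis are exactly what is needed to identify the two parenthesized quantities in the display with potential values: in part (1) we have $x_i - w_i \geq 0$ and $x_i + \Delta_i^N x_i - w_i \geq 0$, so the terms are $\phi_i$ evaluated at $(p_i, x_i)$ and at $(p_i + \Delta_i p_i, x_i + \Delta_i^N x_i)$ respectively, and similarly (with the roles of $x$ and $w$ swapped) in part (2).

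The main computation is then routine. Expand $(p_i + \Delta_i p_i)(x_i + \Delta_i^N x_i - w_i) = (p_i + \Delta_i p_i)(x_i + \Delta_i^N x_i) - (p_i + \Delta_i p_i) w_i$, and replace the first product by $p_i x_i$ using \eqref{eqn:spndg-neutr}. For part (1) the left side of the display becomes $p_i x_i - p_i w_i - p_i x_i + (p_i + \Delta_i p_i) w_i = w_i \Delta_i p_i$, which equals $w_i |\Delta_i p_i| = I_i$ since $\Delta_i p_i > 0$. Part (2) is the mirror image: expanding $(p_i + \Delta_i p_i)(w_i - x_i - \Delta_i^N x_i)$ and substituting gives left side $= -w_i \Delta_i p_i = w_i|\Delta_i p_i| = I_i$, now using $\Delta_i p_i < 0$.

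I do not expect a genuine obstacle here; the only things requiring care are confirming the sign of $\Delta_i p_i$ from the spending-neutrality identity and checking that the sign conditions in the hypotheses legitimately let one drop the absolute values when identifying the parenthesized expressions with $\phi_i$ and $\phi_i^N$. The payoff is that this lemma is the base case for Lemma \ref{lem:progress-1}, where WGS is used to argue that the genuine demand response is at least as favorable (in $\phi_i$ terms) as the spending-neutral response, thereby establishing Property \ref{prop:progress-1} with $\alpha = 1$.
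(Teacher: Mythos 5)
Your proof is correct and is essentially identical to the paper's: both substitute the spending-neutrality identity (\ref{eqn:spndg-neutr}) into the expanded product so that everything cancels to $w_i\Delta_i p_i$, and both read off the sign of $\Delta_i p_i$ from the fact that $\Delta_i^N x_i<0$ (resp.\ $>0$). No gaps.
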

\begin{proof}
We show the first claim. By (\ref{eqn:spndg-neutr}),
\[ p_i (x_i - w_i) - (p_i + \Delta_i p_i) (x_i + \Delta_i^N x_i - w_i) = -p_i w_i +
(p_i + \Delta_i p_i) w_i = w_i \Delta_i p_i. \]
Note that $\Delta_i p_i > 0$ as $\Delta_i^N x_i < 0$.

\smallskip

The proof of the second claim is similar.
\end{proof}

\begin{lemma}
\label{lem:progress-1}
If all demands obey WGS, then
Property \ref{prop:progress-1} holds with $ \alpha = 1 $.
\end{lemma}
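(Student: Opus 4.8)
The plan is to reduce the multi-good situation to the one-good analysis by comparing the actual demand change $\Delta_i x_i$ against the spending-neutral change $\Delta_i^N x_i$ of Definition \ref{defn:spndng-neutr}. First I would fix a same-side update $p'_i$ toward $w_i$, say with $x_i > w_i$ so $\Delta_i p_i > 0$ (the case $x_i < w_i$ is symmetric). Under WGS in a Fisher market, raising $p_i$ cannot increase total spending on good $i$, so the actual demand $x'_i$ at the new price satisfies $p'_i x'_i \le p_i x_i$; equivalently $x'_i \le x_i + \Delta_i^N x_i$, i.e.\ the actual demand for good $i$ drops at least as much as the spending-neutral amount. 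Combined with the same-side, toward-$w_i$ hypotheses we have $w_i \le x'_i \le x_i + \Delta_i^N x_i < x_i$, which lets us invoke Lemma \ref{lem:spdng-neutr}(1): the spending-neutral move alone already reduces $\phi_i$ by exactly $I_i$, and the extra decrease of $x_i$ beyond spending-neutral only decreases $\phi_i$ further (since $\phi_i = p'_i(x'_i - w_i)$ is increasing in $x'_i$ on the relevant range). Hence $\Gamma_i\phi_i \ge I_i + [\,(p_i+\Delta_i p_i)(x_i+\Delta_i^N x_i - w_i) - p'_i(x'_i - w_i)\,] = I_i + p'_i\big((x_i+\Delta_i^N x_i) - x'_i\big)$.

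Next I would account for the other goods. By WGS, increasing $p_i$ can only (weakly) increase each $x_j$, $j \ne i$; so $\Delta_i x_j \ge 0$ for all $j \ne i$, and the only $\phi_j$ that can increase are those with $x_j < x'_j \le w_j$ or, after the increase, $w_j < x_j < x'_j$ — in any case $|\Gamma_i\phi_j| \le p_j\,\Delta_i x_j$ whenever $\phi_j$ increases (and $\Gamma_i \phi_j \ge 0$, costing us nothing, otherwise). So $\sum_{j\ne i}|\Gamma_i\phi_j| \le \sum_{j\ne i} p_j \Delta_i x_j$. The key budget identity is that the money that left good $i$ has to go somewhere: since each buyer spends all available money (Fisher market) and only $p_i$ changed, the total spending on the other goods increases by exactly the amount it decreased on good $i$, i.e.\ $\sum_{j\ne i} p_j \Delta_i x_j = p_i x_i - p'_i x'_i = p'_i\big((x_i + \Delta_i^N x_i) - x'_i\big)$ by the definition of $\Delta_i^N x_i$. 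Plugging this into the previous display gives $\sum_{j\ne i}|\Gamma_i\phi_j| + I_i \le I_i + p'_i\big((x_i+\Delta_i^N x_i)-x'_i\big) \le \Gamma_i\phi_i$, which is exactly Property \ref{prop:progress-1} with $\alpha = 1$.

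I expect the main obstacle to be making the "money conservation" step fully rigorous in the presence of possibly left-over money and of the median/truncation in the price update rule — one must be careful that raising $p_i$ by a bounded fraction keeps demands positive and that the aggregate budget constraint is tight (or argue with the correct inequality direction when it is slack). A secondary subtlety is verifying that in the $x_i < w_i$ case the price \emph{decreases}, spending on good $i$ (weakly) \emph{increases}, demand for the other goods (weakly) \emph{decreases}, and the same chain of inequalities goes through with Lemma \ref{lem:spdng-neutr}(2); this is mechanical once the first case is set up. Everything else is bookkeeping on top of Lemma \ref{lem:spdng-neutr} and the WGS monotonicity.
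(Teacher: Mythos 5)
Your proposal is correct and follows essentially the same route as the paper: both arguments decompose the update into the spending-neutral change of Lemma \ref{lem:spdng-neutr} (yielding the exact $I_i$ reduction in $\phi_i$), use the fixed total spending of the Fisher market to equate the extra drop in spending on good $i$ with $\sum_{j\ne i}p_j\Delta_i x_j$, and invoke WGS to get $\sgn(\Delta_i x_j)$ fixed so that $|\Gamma_i\phi_j|\le p_j|\Delta_i x_j|$. The caveats you flag (budget tightness, the symmetric price-decrease case) are handled in the paper exactly as you anticipate.
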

\begin{proof}
We begin by showing that $\sum _{j \ne i} \Gamma_i  \phi_j
  = \sum _{j \ne i} p_j \Delta _i x_j =
p_i x_i - (p_i + \Delta p_i)( x_i + \Delta _i x_i)$.
This follows from
$\sum _j p_j x_j  =  (p_i + \Delta p_i)(x_i + \Delta _i x_i)
                   + \Sigma _{j \ne i} p_j (x_j + \Delta _i x_j)$,
which holds because when $p_i$ is updated,
the combined total spending on all the goods is unchanged.

Next we show that $\Gamma_i  \phi _ i  = I_i + \sum _{j \ne i} p_j |\Delta _i x_j|$.

For the case $x_i > w_i$, we have:
\begin{eqnarray*}
\Gamma_i  \phi _ i  & = & \phi  _ i ( p _{- i }, p _ i ) - \phi  _ i ( p _{- i }, p '_ i )\\
                 & = &  p_i ( x_i - w_i) - ( p_i + \Delta_i  p_i  ) ( x_i + \Delta_i  x_i - w_i)\\
                 & & ~~~\mbox{(recall that a same-side update is being applied to $p_i$)}\\
                 & = &  w_i \Delta_i  p_i +  p_i x_i - (p_i + \Delta p_i)( x_i + \Delta _i x_i)\\
                 & = &  I_i + \sum_{j\ne i} p_j \Delta_i x_j~~~~\mbox{as $\Del_i p_i > 0$.}
\end{eqnarray*}
And $\Delta_i x_j \ge 0$ for $j\ne i$, as there is a price increase applied to $p_i$,
which by WGS only increases the demand for good $j$, $j\ne i$.
So $\Gamma_i  \phi _ i =  I_i + \sum_{j\ne i} p_j |\Delta_i x_j|$
in this case.

For the case $x_i < w_i$, $\phi  _ i ( p _{- i }, p _ i ) = p_i ( w_i - x_i)$
and $\phi  _ i ( p _{- i }, p '_ i ) = ( p_i + \Delta_i  p_i  ) (w_i - x_i - \Delta_i)$.
Running through the same algebra yields
$\Gamma_i  \phi _ i = - w_i \Delta_i  p_i
- \sum_{j\ne i} p_j \Delta_i x_j= I_i + \sum_{j\ne i} p_j |\Delta_i x_j|$,
for in this case there is a price decrease applied to $p_i$, which implies that
$\Delta_i x_j \le 0$ for $j\ne i$.

Finally, for $j \ne i$, we observe that:
\begin{eqnarray*}
 |p _ j \Delta_i x _ j|   & = &  |p _ j (x _ j + \Delta_i  x _ j - w _ j) -  p _ j (x _j - w_ j)|\\
                                & = & \left\{\begin{array}{ll}
                                             |\phi  _ j ( p _{- i }, p '_ i ) - \phi _ j ( p _{-i}, p _ i) | &
                                                   ~~~\mbox{if }\text{sign}(x_j - w_j) = \text{sign}(x_j + \Delta_i x_j- w_j)\\
                                             |\phi  _ j ( p _{- i }, p '_ i ) + \phi _ j ( p _{-i}, p _ i) | &
                                                   ~~~\mbox{otherwise}
                                         \end{array}
                                        \right.\\
                               & \ge & |\phi  _ j ( p _{- i }, p '_ i ) - \phi _ j ( p _{-i}, p _ i) | =  |\Gamma_i  \phi _ j|.
\end{eqnarray*}

Hence $\Gamma_i  \phi _ i \ge \sum_{ j \ne  i}  |\Gamma_i  \phi _ j| + I_i  $.

\Xomit{
\emph{Case 1}. $x_i > w_i$.\\
Consider a same-side update which increases $  p _ i$ to $  p _ i + \Delta_i  p _ i  $.
Then, for $  j \ne  i  $, $  x _ j $ can only increase,
and hence $  p _ j \cdot  x _ j $  can only increase.
Consequently, as the total spending is fixed, $  p _ i \cdot  x _ i  $ can only decrease.
By definition, if $x _ i $  changes  to $x _ i + \Delta_i ^N x _ i $,
$  p _ i  x _ i  $  is unchanged
(i.e. $  p _ i  x _ i   = (p _ i  +   \Delta_i p _ i )  ( x  _ i  +  \Delta_i ^N  x _ i ) $.)
Note that $\Delta_i ^N  x _ i  <  0  $.
Therefore the actual change $ \Delta_i  x _ i  $ to $  x  _ i  $ can only be of larger magnitude
(i.e.  $ \Delta_i  x _ i  \le  \Delta_i ^N   x _ i   $).

Further
$(\Delta_i ^N   x _ i   - \Delta_i  x _ i) (p _ i +\Delta_i p _ i ) = \sum_{j \ne  i} p _ j \Delta_i  x _ j $,
as the combined total spending on all the goods is unchanged.

Now
\begin{eqnarray*}
\Gamma_i  \phi _ i  & = & \phi  _ i ( p _{- i }, p _ i ) - \phi  _ i ( p _{- i }, p '_ i )\\
                 & = &  p_i ( x_i - w_i) - ( p_i + \Delta_i  p_i  ) ( x_i + \Delta_i  x_i - w_i)\\
                 & & ~~~\mbox{(recall that a same-side update is being applied to $p_i$)}\\
                 & = & w_i \Delta_i  p_i + [ p _ i  x _ i - ( p _ i + \Delta_i p _ i ) ( x _ i +  \Delta_i ^N  x _ i )]
                 + ( p _ i + \Delta_i p _ i ) ( x _ i +  \Delta_i ^N  x _ i - x _ i  - \Delta_i  x _ i ]\\
                 & = & I_i +( p _ i + \Delta_i p _ i )(\Delta_i ^N  x _ i - \Delta_i x _ i )
                               = I_i + \sum_{j\ne i} p_j \Delta_i x_j.
\end{eqnarray*}

And
\begin{eqnarray*}
 p _ j \Delta_i x _ j   & = &  p _ j (x _ j + \Delta_i  x _ j - w _ j) -  p _ j (x _j - w_ j)\\
                                & = & \left\{\begin{array}{ll}
                                             |\phi  _ j ( p _{- i }, p '_ i ) - \phi _ j ( p _{-i}, p _ i) | &
                                                   ~~~\mbox{if }\text{sign}(x_i - w_i) = \text{sign}(x_i + \Delta_i x_i- w_i)\\
                                             |\phi  _ j ( p _{- i }, p '_ i ) + \phi _ j ( p _{-i}, p _ i) | &
                                                   ~~~\mbox{otherwise}
                                         \end{array}
                                        \right.\\
                               & \ge & |\phi  _ j ( p _{- i }, p '_ i ) - \phi _ j ( p _{-i}, p _ i) | =  |\Gamma_i  \phi _ j|.
\end{eqnarray*}

Hence $\Gamma_i  \phi _ i \ge \sum_{ j \ne  i}  |\Gamma_i  \phi _ j| + I_i  $.
}

\end{proof}
This will not suffice when we seek to analyze multiple asynchronous updates
to different prices.
The difficulty can be understood by considering the special case where all
prices are updated synchronously.
One way to analyze this would be to apply the updates one by one in turn.
The difficulty we face is that it seems unreasonable
to assume that each successive update is based on $ x _ j $ values
updated to take account of the previous price updates.

Instead, we will show that if we apply the updates one at a time using the price updates
given by the original $  x _ i  $ values,
then there is an ordering of the goods so that when the update
to $  p _ i  $ occurs, $  \mbox{sign} ( x _ i - w _ i ) $ has not yet changed,
nor does the update change it (though, possibly, it makes  $ x _ i  =  w  _i  $).
Consequently, by Lemmas \ref{lem:phi-reduction} and \ref{lem:progress-1},
the reduction to the potential $\phi$ due to the update to $  p _ i  $
(based on the value of $  x _ i  $  prior to any of the updates)
is at least $  I _ i  $.
Hence the reduction to $\phi$ is at least $  \sum_ i  I _ i  $.

To achieve this we will need $  \lambda  (2E -1) \le  \frac12 $. Note that this condition
implies $\lambda  E \le  \frac12 $ as $E \ge 1$.

The discussion is simplified by introducing the following function $\psi_i$.
\begin{definition}
\label{def:psi}
$\psi_i = p_i(w_i - x_i)$, if $  x _ i  < w_ i  $ prior to the price updates;
while if $  x _ i  > w_ i  $ prior to the price updates, $\psi_i = p_i(x_i - w_i)$.
We also define $\Gamma_i \psi_i$, analogously to $\Gamma_i \phi_i$;
it denotes the decrease to $\psi_i$ due to the update to $p_i$.
More precisely, WLOG, suppose that the price updates are applied in the order
$p_1, p_2, \cdots, p_n$.
$\Gamma_i \psi_i = \psi_i(p'_1,\cdots,p'_{i-1},p_i,\cdots,p_n)
                        - \psi_i(p'_1,\cdots,p'_i,p_{i+1},\cdots,p_n)$.
\end{definition}

\begin{lemma}
\label{lem:seq-updates}
Suppose that the price updates are performed sequentially in some order.
Consider a point when the update to $p_i$ is about to occur.
Let $ \phi _i^I$ be the value of $ \phi _i $ prior to any update,
and let $\psi'_ i $ be the current value of $\psi_i$.
Suppose further that despite price updates to the other goods,
$\psi'_ i \ge \phi_i^I/2$.
Then, assuming that $  \lambda (2E -1) \le  \frac12  $,
$  p _ i  $'s update \emph{(}based on $  x _ i  $'s original value\emph{)} is toward $w_i$ and same-sided,
and it reduces $\psi_i = \phi_ i $ as follows:
if $\psi'_i \le \phi _i^I$, then $\Gamma_i \psi_i \le \frac12 \phi _i^I$,
and if $\psi'_i > \phi _i^I$,
then $\Gamma_i \psi_i \le (\psi'_i - \phi _i^I) +  \frac12 \phi _i^I$.
In either case, $\psi_i \ge 0$ right after the update.
Further, $\Gamma_i \phi \ge I_i = w_i|\Delta_i p_i|$.
\end{lemma}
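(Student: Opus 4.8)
The plan is to mimic the one-dimensional analysis (Lemma~\ref{lem:1d-rate-of conv}), writing $\hat{x}_i := x_i(p_1',\cdots,p_{i-1}',p_i,p_{i+1},\cdots,p_n)$ for the current demand for good $i$ and $\hat{x}_i'$ for its value once $p_i$ has been updated, and using the hypothesis $\psi_i'\ge\phi_i^I/2$ together with $\lmbd(2E-1)\le\frac12$ to bridge the gap between the step size --- which is computed from the \emph{original} $x_i$ --- and the actual current state $\hat{x}_i$. By symmetry I treat only the case $x_i>w_i$ before any update; the case $x_i<w_i$ is identical with all inequalities and signs reversed. Since $\psi_i'\ge\phi_i^I/2>0$, the current excess $\hat{x}_i-w_i$ has the same sign as $x_i-w_i$, so at the current prices $\psi_i'=p_i(\hat{x}_i-w_i)=\phi_i$ and $\hat{x}_i\ge w_i+\frac12(x_i-w_i)$. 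Write $p_i'=p_i(1+\Delta)$ with $\Delta=\lmbd\min\{1,(x_i-w_i)/w_i\}>0$, so $I_i=w_ip_i\Delta$ and, since $w_i\min\{1,(x_i-w_i)/w_i\}\le x_i-w_i$, also $I_i\le\lmbd\phi_i^I$.

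\emph{Same-sidedness and ``toward $w_i$''.} As $p_i$ strictly increases and $x_i(\cdot)$ is strictly decreasing in its own price, $\hat{x}_i'<\hat{x}_i$; so it suffices to show $\hat{x}_i'\ge w_i$. By Lemma~\ref{lem:x-bdd}, $\hat{x}_i'\ge\hat{x}_i(1+\Delta)^{-E}$, and since $\hat{x}_i\ge w_i+\frac12(x_i-w_i)$ it is enough that $(1+\Delta)^E\le 1+\frac{x_i-w_i}{2w_i}$. I would first record the elementary inequality $(1+s)^E\le 1+(2E-1)s$ valid for all $E\ge1$ and $0\le s\le\frac1{2(2E-1)}$ --- proved via $(1+s)^E\le(1+s)\bigl(1+(E-1)s\bigr)$ (Fact~\ref{fact:taylor}(c) applied to $(1+s)^{E-1}$) and $s\le 1$ when $1\le E\le 2$, and via $(1+s)^E\le e^{Es}\le 1+\frac32 Es\le 1+(2E-1)s$ when $E\ge 2$ (here $Es\le\frac12$) --- and then apply it with $s=\Delta$: in the uncapped case $\Delta=\lmbd(x_i-w_i)/w_i\le\lmbd\le\frac1{2(2E-1)}$, so $(1+\Delta)^E\le 1+(2E-1)\lmbd\frac{x_i-w_i}{w_i}\le 1+\frac{x_i-w_i}{2w_i}$ as $(2E-1)\lmbd\le\frac12$; in the capped case $\Delta=\lmbd$ and $x_i-w_i\ge w_i$, so $(1+\Delta)^E\le 1+(2E-1)\lmbd\le\frac32\le 1+\frac{x_i-w_i}{2w_i}$. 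Hence $w_i\le\hat{x}_i'<\hat{x}_i$, which is precisely same-sidedness plus a move toward $w_i$. I expect this step --- and in particular the fact that $\lmbd(2E-1)\le\frac12$ is exactly strong enough to make that inequality suffice --- to be the crux of the lemma; the rest is bookkeeping modeled on the 1-d case.

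\emph{The quantitative bounds.} Following Lemma~\ref{lem:1d-rate-of conv}, after the update $\psi_i=p_i'(\hat{x}_i'-w_i)=p_i'\hat{x}_i'-p_iw_i-I_i$, while beforehand $\psi_i'=p_i\hat{x}_i-p_iw_i$; subtracting, $\Gamma_i\psi_i=I_i+(p_i\hat{x}_i-p_i'\hat{x}_i')$, and the bracketed quantity is $\ge0$ by Corollary~\ref{cor:f-grth}. Since $\hat{x}_i,\hat{x}_i'\ge w_i$ we have $\Gamma_i\phi_i=\Gamma_i\psi_i$, and because the update is a same-side update toward $w_i$, Lemma~\ref{lem:phi-reduction} together with Lemma~\ref{lem:progress-1} (WGS gives $\alpha=1$) yields $\Gamma_i\phi\ge I_i$, the last assertion. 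For the upper bound on $\Gamma_i\psi_i$: Lemma~\ref{lem:x-bdd} gives $p_i'\hat{x}_i'\ge p_i\hat{x}_i(1+\Delta)^{1-E}\ge p_i\hat{x}_i\bigl(1-(E-1)\Delta\bigr)$, the last step by Fact~\ref{fact:taylor}(a) (exponent $1-E\le0$); hence, using $\Delta p_i\hat{x}_i=\Delta p_iw_i+\Delta\psi_i'=I_i+\Delta\psi_i'$, $I_i\le\lmbd\phi_i^I$ and $\Delta\le\lmbd$, we get $p_i\hat{x}_i-p_i'\hat{x}_i'\le(E-1)\Delta p_i\hat{x}_i=(E-1)(I_i+\Delta\psi_i')\le(E-1)\lmbd(\phi_i^I+\psi_i')$, so $\Gamma_i\psi_i\le EI_i+(E-1)\Delta\psi_i'\le E\lmbd\phi_i^I+(E-1)\lmbd\psi_i'$. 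If $\psi_i'\le\phi_i^I$ this is $\le(2E-1)\lmbd\phi_i^I\le\frac12\phi_i^I$; if $\psi_i'>\phi_i^I$, the target $\Gamma_i\psi_i\le(\psi_i'-\phi_i^I)+\frac12\phi_i^I$ rearranges to $(E\lmbd+\frac12)\phi_i^I\le(1-(E-1)\lmbd)\psi_i'$, which holds since $\psi_i'>\phi_i^I$, $1-(E-1)\lmbd>0$ and $(2E-1)\lmbd\le\frac12$. In either case $\psi_i=\psi_i'-\Gamma_i\psi_i\ge0$: in the first using $\psi_i'\ge\phi_i^I/2$, in the second because then $\Gamma_i\psi_i\le\psi_i'-\frac12\phi_i^I$.
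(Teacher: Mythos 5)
Your proof is correct and follows essentially the same route as the paper's: bound the decrease of the signed quantity $\psi_i$ via the elasticity bounds and Fact~\ref{fact:taylor}, with the hypothesis $\lambda(2E-1)\le\frac12$ entering in exactly the same place, and then invoke Lemmas~\ref{lem:progress-1} and~\ref{lem:phi-reduction} to get $\Gamma_i\phi\ge I_i$. The only differences are organizational: you establish same-sidedness directly by showing $\hat{x}'_i\ge w_i$ (the paper instead deduces it from $\Gamma_i\psi_i\le\psi'_i$ together with $\psi'_i\ge\phi_i^I/2$), and your decomposition $\Gamma_i\psi_i=I_i+(p_i\hat{x}_i-p'_i\hat{x}'_i)$ together with $I_i\le\lambda\phi_i^I$ collapses the paper's three-way case split into a single calculation.
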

\begin{proof}
{\bf Case 1}. $  x _ i  < w_ i  $.\\
If $\Gamma_i \psi_i\le \psi'_i$, then $\psi_i(p_{-i},p'_i) \ge 0$ and consequently
$x_i(p_{-i},p'_i) \le w_i$, in which case the update is same-sided;
consequently $\psi_i \ge 0$  right after the update.
As $\psi_i > 0$ right before the update also,
$\psi_i = \phi_i$, both right before and right after the update,
so $\Gamma_i \psi_i = \Gamma_i \phi_i $.
Further, by  Lemmas \ref{lem:progress-1} and \ref{lem:phi-reduction}, $\Gamma_i \phi \ge I_i$.
Thus it suffices to show that $\Gamma_i \psi_i \le \phi_i^I/2$, as we assume that
$\psi'_i \ge \phi_i^I/2$.

\begin{eqnarray*}
\Gamma_i \psi_ i & \le & \psi '_ i  -
   \left[ p _ i \left(1 + \lambda  \frac{  x _ i -  w  _ i }{  w _ i }\right) \right]
    \left[w_i - \left( w _i  - \frac{ \psi' _ i }{  p _ i }\right) \left(1+ \lambda  \frac{  x _ i  -  w _ i }{ w _ i  }\right) ^ {- E }\right]\\
& \le & \psi'_ i - p_i w_i + \lambda p_i(w_i - x_i) + (p_i w_i - \psi'_i)\left(1 -\lambda  \frac{  w_i -  x_i}{ w_i }\right)^{-(E-1)}\\
& \le & \psi'_ i - p_i w_i + \lambda \phi _i^I +
       (p_i w_i - \psi'_i) \left(1+2 \lambda  (E-1)  \frac{  w_i -  x_i}{ w_i } \right)\\
 & & ~~~\mbox{applying Fact \ref{fact:taylor}d, with $a=-(E-1)$,
               $\del = -\lambda  \frac{w_i -  x_i}{w_i}$, $\rho = \frac12$ when $E \ge 2$,
                  and $\rho =-\frac12$}\\
    && ~~~\mbox{when $1\le E < 2$; for when $E \ge 2$,
     $\lambda (E-1) \frac{  w_i -  x_i}{ w_i } \le \frac12 $
    and when $1 \le E < 2$, $ \lambda  \frac{  w_i -  x_i}{ w_i } \le \frac12 $;}\\
    && ~~~\mbox{further, in both cases, $-(E-1) \le 0$.}\\
& \le & \lambda \phi _i^I  + 2 \lambda  (E-1)  p_i ( w_i -  x_i) \\
& \le &   (2 E -1)\lambda  \phi _i^I \le  \frac{  \phi _i^I  }{ 2}
               ~~~ \mbox{as $  \lambda (2 E -1) \le  \frac12 $}.
\end{eqnarray*}
\noindent
{\bf  Case 2}. $  x _ i  > w_ i  $.\\
Again,  if $\Gamma_i \psi_i \le \psi'_i$,
we can conclude that the update is same-sided,
$\psi_i \ge 0$  right after the update,
and $\Gamma_i \phi \ge I_i $.
Again, this holds either if $\Gamma_i \psi_i \le \phi_i^I/2$ (which holds when
$\psi'_i \le \phi _i^I$) or if $\Gamma_i \psi_i \le \phi _i^I/2 + (\psi'_i- \phi _i^I)$
(which holds when $\psi'_i > \phi _i^I$).

\smallskip
\noindent
{\bf  Case 2.1}. $   w_i < x_i  \le 2w_i$.
\begin{eqnarray*}
\Gamma_i \psi_ i & \le & \psi '_ i  -
     \left[p _ i \left(1 + \lambda  \frac{  x _ i -  w  _ i }{  w _ i } \right) \right]
       \left[\left(  w  _ i + \frac{   \psi' _ i }{  p _ i } \right)
                 \left(1+ \lambda  \frac{  x _ i  -  w _ i }{ w _ i  }\right) ^ {- E } - w_i\right] \\
  & \le & \psi'_ i + p_i w_i + \lambda \phi _i^I -
        (p_i w_i + \psi'_i)\left(1 +\lambda  \frac{  x_i -  w_i}{ w_i }\right)^{-(E-1)}\\
   & \le & \lambda \phi _i^I +
            \lambda ( E -1) \frac{  x_i -  w_i}{ w_i } (p_i w_i + \psi'_i)\\
       && ~~~ \mbox{applying Fact \ref{fact:taylor}a, with $a=-(E-1)$
                    and $\del = \lambda \frac{  x_i -  w_i}{ w_i }$,
                       as $  \lambda    \frac{  x_i -  w_i}{ w_i }  \le 1 $ and $-(E-1) \le 0$.}\\
  & \le &\lambda \phi _i^I +
    \lambda (E-1) [ (x_i-w_i)p_i + \psi'_i]  \\
    & \le & \left\{\begin{array}{ll}
       \lambda (2 E - 1) \phi _i^I & ~~~ \mbox{if $\psi'_i \le \phi _i^I$}\\
       \lambda (2 E - 1) \phi _i^I + \lambda ( E -1)(\psi'_i- \phi _i^I)
           & ~~~ \mbox{if $\psi'_i > \phi _i^I$}
           \end{array}
    \right.\\
           & \le & \left\{\begin{array}{ll}
         \phi _i^I/2 & ~~~ \mbox{if $\psi'_i \le \phi _i^I$}\\
        \phi _i^I/2 + (\psi'_i- \phi _i^I)
           & ~~~ \mbox{if $\psi'_i > \phi _i^I$}
           \end{array}
       \right.
       ~~~\mbox{as $\lambda (2 E -1)  \le  \frac12 $}
  \end{eqnarray*}
\noindent
{\bf  Case 2.2}. $   2 w_i  < x_i$.
\begin{eqnarray*}
\Gamma_i \psi_ i & \le & \psi '_ i  -
   \left[  p _ i (1 + \lambda ) \right] \left[ \left(  w  _ i + \frac{   \psi' _ i }{  p _ i } \right)
          (1+ \lambda  )\ ^ {- E } - w _i \right ]\\
& \le & \psi'_ i + p_i w_i(1 +\lambda) - (p_i w_i + \psi'_i)(1+ \lambda  )\ ^ {- (E-1) }\\
& \le & \lambda p_i w_i  + \lambda ( E -1)(p_i w_i + \psi'_i)\\
     && ~~~ \mbox{applying Fact \ref{fact:taylor}a, with $a=-(E-1)$
                    and $\del = \lambda$, as $  \lambda \le 1 $ and $-(E-1) \le 0$.}\\
& \le & \lambda E \phi _i^I +\lambda ( E -1)(2\phi _i^I + \psi'_i - \phi _i^I)
   ~~~ \mbox{as $  p_i w_i\le p_i (x_i - w_i) = \phi _i^I$ }\\
& \le & \lambda (2 E -1) \phi _i^I + \lambda (E-1)(\psi'_i - \phi _i^I)\\
& \le & \phi _i^I/2   +  (\psi'_i - \phi _i^I)
                ~~~ \mbox{as $\lambda ( 2E -1)  \le  \frac12 $}.
\end{eqnarray*}
\end{proof}

\begin{lemma}
\label{lem:ordering-updates}
When all the updates are applied, for at least one good, WLOG  $  g _ a  $,
either $  \mbox{sign$  (x _ a - w _ a ) $}$ is unchanged or $ x _ a  =  w _ a  $  now.
\end{lemma}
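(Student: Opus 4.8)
The plan is to dispose of two easy regimes and then to isolate a single good in the genuinely mixed regime. A good with $x_i(p)=w_i$ is never updated and so satisfies the statement trivially; hence assume each good has $x_i(p)\neq w_i$ at the old prices $p$, and set $U=\{i: x_i(p)>w_i\}$ and $D=\{i: x_i(p)<w_i\}$.

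Suppose first that $D=\emptyset$, so every price is (weakly) increased. Fix any good $a$. By Lemma~\ref{lem:same-side-update} (using $\lambda\le\frac1{2E}$), updating $p_a$ alone, starting from $p$, is a same-side update, so $x_a\ge w_a$ at $(p_{-a},p_a')$. Now apply the remaining updates one price at a time: each is an increase of some $p_j$ with $j\neq a$, which by WGS only increases $x_a$; since we only use monotonicity the order is irrelevant. Hence $x_a(p')\ge w_a$, and $a$ stays in $U$. The case $U=\emptyset$ is symmetric. Thus whenever all goods start on the same side of equilibrium, \emph{none} of them flips --- more than the lemma asks.

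The remaining case, $U\neq\emptyset\neq D$, is the crux. Here I would invoke the equilibrium price vector $p^*$ together with the standard consequence of WGS and $0$-homogeneity that the good $a$ maximizing the ratio $p_i/p_i^*$ has $x_a(p)\le w_a$ --- obtained by rescaling $p^*$ to agree with $p$ in coordinate $a$, noting that the rescaled $p^*$ then dominates $p$ in every other coordinate, and applying WGS. Thus $a\in D$, so the update only \emph{decreases} $p_a$. To conclude that $a$ does not cross over, i.e.\ $x_a(p')\le w_a$, one seeks a scalar $\mu$ for which $\mu p^*$ dominates $p'$ in every coordinate $j\neq a$ while agreeing with $p'$ in coordinate $a$; WGS and $0$-homogeneity then give $x_a(p')\le x_a(\mu p^*)=w_a$, and this good $a$ is the desired $g_a$ (the borderline outcome $x_a(p')=w_a$ being covered by the ``or $x_a=w_a$ now'' clause).

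The main obstacle is establishing that last domination. Because a single discrete update can change a price by a factor as large as $1+\lambda$, a naive comparison through the old ratios $p_j/p_j^*$ need not keep $\mu p^*$ above $p'$ off coordinate $a$; the bound has to be extracted more carefully, using both the constraint $\lambda E\le\frac12$ (via Lemma~\ref{lem:x-bdd} and Fact~\ref{fact:taylor}) and the fact that the update magnitudes are themselves tied to the excess demands --- or else the extremal quantity must be chosen to be robust under the updates (for instance by weighting $p_i/p_i^*$ by the sign of the excess demand, so that the updates move every good's ratio ``toward the middle''). Once this quantitative control is in hand, WGS closes the argument.
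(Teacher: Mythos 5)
Your single-sided cases ($D=\emptyset$ or $U=\emptyset$) are handled correctly, but the mixed case --- which you yourself identify as the crux --- is not actually proved: you name the obstacle (getting $\mu p^*$ to dominate $p'$ in every coordinate $j\ne a$ after the discrete updates) and gesture at two ways one might overcome it, but carry out neither. That is a genuine gap, not a deferred detail. Moreover, the preliminary step already leans on $0$-homogeneity of demand in prices alone (``$x_a(\mu p^*)=x_a(p^*)=w_a$''), which fails in the Fisher setting of this paper: budgets are fixed, so scaling all prices by $\mu$ is equivalent to dividing all budgets by $\mu$ and does change demands. Recovering even the claim that the ratio-maximizing good has weak excess supply would require normal demands (a wealth-elasticity assumption), which is not available in Theorem~\ref{cor:sim-progress}, where this lemma is used. (A smaller slip: a good with $x_i(p)=w_i$ does \emph{not} trivially satisfy the statement, since the other updates can move its demand off $w_i$; the paper treats such goods as a separate middle group and finds $g_a$ among the strictly unbalanced ones.)

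The paper's proof avoids $p^*$ entirely and instead uses conservation of total spending. Split the goods into those with $x_h\ge w_h$ and those with $x_h< w_h$, and suppose WLOG the excess-demand side has the smaller total initial potential $\sum_h \phi_h^I$. Apply all the price increases first: analyzing each such good $h$ as though its own update came first, Lemma~\ref{lem:seq-updates} shows its own update costs it at most $\phi_h^I/2$ of $\psi_h$, and WGS says the other increases only raise $x_h$ and hence $\psi_h$; so every $\psi_h$ on that side stays at least $\phi_h^I/2$. Since total spending is fixed, any reduction of $\psi_l$ on the excess-supply side must be matched by reductions on the excess-demand side, which total at most $\frac12\sum_h\phi_h^I\le\frac12\sum_l\phi_l^I$; hence some excess-supply good $l$ still has $\psi_l\ge\phi_l^I/2$, the remaining price decreases (other than its own) only increase $\psi_l$, and Lemma~\ref{lem:seq-updates} then shows its own update leaves $\psi_l\ge 0$, i.e.\ its sign is unchanged or it lands exactly on $w_l$. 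This spending-conservation accounting, iterated to order all $n$ updates, is the ingredient your sketch is missing.
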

\begin{proof}
WLOG let $  g_1,  \cdots,  g_i $ be the goods for which $  x_h  \ge  w_h  $,
$  h  \le  i  $,
$  g _{ i +1}, \cdots,  g_j  $ those for which $  x_h =  w_h $, $ i  <  h  \le  j $,
and $  g _ { j +1}, \cdots,  g _ n  $,
the remaining goods, be those for which $  x_h < w _ h $, $j < h  \le  n $.
We show by induction that there is an ordering of the updates so that,
for all $i$,
just before an update is applied to $  p_i  $,
$  \psi'_i  \ge  \phi_i ^ I /2 $;
then, by Lemma \ref{lem:seq-updates},
following the update to $p_i$,
$\psi_i \ge 0$.

\noindent
{\bf Case 1 } $  \phi_1^I + \cdots + \phi_i^I  \le  \phi _{ j +1}^I + \cdots + \phi_n^I$.

Consider applying all the price updates for goods $g_ 1, \cdots, g_ i  $.
We argue that the update to price $p_h$ leaves  $  \psi_h \ge  \phi_h^ I /2 $.
For as the ultimate change to $\psi_h$ is the same regardless of the order in
which the updates to $p_ 1, \cdots, p_ i  $ are applied,
we will analyze the case in which  $p_h$ is updated first:
then by Lemma \ref{lem:seq-updates},
its update
reduces $\psi_h$ by at most $\phi_h^I/2$;
the updates to the other $  p_{h'} $, $ h ' \le  i  $, price increases,
by WGS only increase $x_ h $ and hence $  \psi_h $.
We conclude that after the updates to $p_ 1, \cdots, p_ i  $, for every $h\le i$,
$\psi_h \ge \phi_h^I/2$.

Because the total spending is fixed,
any reductions to $\psi_l$, $l > j$, result from matching reductions to the $\psi_h$,
$h\le i$. The total available reduction is at most
$  \frac12 \sum_{ h \le i}  \phi _ h^ I  \le \frac 12  \sum_{ l  >  j } \phi _ l ^ I  $.
Hence there is some good $  g_l  $, such that, after the updates to all of $p_ 1, \cdots, p_ i  $,
$  \psi_l  \ge \phi_l ^ { I }/2 $. WLOG let $ l =  n  $.

Applying the updates to $   p  _ { j +1}, \cdots,   p  _{ n -1}  $
(price decreases) only decreases $ x_n  $ and hence only increases $  \psi_n  $.

So the update to $ p_n  $ can be applied last and just before it is applied,
$  \psi_n  \ge  \frac12 \phi_n ^ I  $.
Thus we can choose $  a  =  n $.

\noindent
{\bf Case 2 }  $  \phi_1^I + \cdots + \phi_i^I  \ge  \phi _{ j +1}^I + \cdots + \phi_n^I$.

A symmetric argument applies here.

By induction, the claim applies to the remaining $  n -1 $ goods.
\end{proof}

\noindent
{\bf Theorem \ref{cor:sim-progress}.}~%
\emph{%
Suppose that the market obeys WGS and elasticity $E$.
When all the prices are updated simultaneously, if $\lmbd(2E-1) \le \frac12$,
$\phi=\sum_i \phi_i$ reduces by at least
$\sum_i \lmbd \phi_i^I\min\left\{1, \frac{w_i}{|x_i - w_i|}\right\}$.
If the demands satisfy $x_i \le dw_i$ throughout, then
$\phi$ reduces from $\phi_I$ to at most $\phi_F$
in $O(\frac{1}{\lmbd} (d-1)\log\frac{\phi_I}{\phi_F})$
iterations.
}
\begin{proof}
Simply apply the updates in the order given by Lemma \ref{lem:ordering-updates}.
\end{proof}

\subsection{Asynchronous Price Updates with Bounded Demands}

\Xomit{
But we don't want to assume that all the price updates are simultaneous.
For this to make sense we have to reinterpret the meaning of the parameters for supply and demand.
We now view these as rates.

To make this concrete, let us call the basic time unit a day.
$w_i$  is the daily supply of good $i$, which is assumed to arrive continuously throughout the day.
$x_i$ will be the instantaneous rate of demand given the current prices.
The demand over some time period  $[t_1 , t_2] $ is given by $\int _{t_1 } ^{t_2 } x_i (t) dt $.
 We are interested in scaling this demand by the time period;
we call this the average demand, denoted  ${\overline x}_i [ t_1 , t_2 ] $;
i.e., ${\overline x}_i [ t_1 , t_2 ] = \frac {1} {t_2 -t_1 }\int _{t_1 }^ {t_2 } x_i (t)dt$.
We are particularly interested in the average demand for good $i$ since the time of last update to $p_i$,
at time $\tau_i$ say:
we define the average demand at the current time by
${\overline x}_i = {\overline x}_i [\tau_i ,  t] =\frac {1} {t_2 -t_1 }\int _{t_1 } ^{t_2 } x_i (t) dt $.
The average demand can be measured in our setting and
so this is the value for the demand that will be used in the price update function:
\[
p_i'=p_i \left(1+ \lmbd \min\left\{ 1, \frac{{\overline x}_i - w_i}{w_i} \right\} \right).
\]
To ensure progress we require that each price updates at least once a day.
}

Recall that we use the following potential function: $\phi=\sum_i \phi_i$, where
\begin{eqnarray}
\label{eqn:phi-asynch-def-repeat}
\phi_i(x_i,{\overline x}_i,w_i) =
p_i \left[\span(x_i,{\overline x}_i,w_i) -\alpha_1 \lmbd|w_i - \xbi|(t - \tau_i)\right],
\end{eqnarray}
with $\alpha_1  > 0 $ being a suitable constant and
$\span(x,y,z)$
denoting the length of the interval spanned by its arguments, i.e.\
$\max\{x,y,z\} - \min\{x,y,z\}$.
At the end of this section we will relate $\phi$ to our definition of misspending.

$\phi_i$ will decrease continuously:
in Lemma \ref{lem:cont-progress}, we will show that $\frac{d\phi_i}{dt} \le -\lmbd \phi_i$.
In Corollary \ref{cor:updates-only-help}, we will also show that when the price update occurs,
$\phi =\sum_i \phi_i$ only decreases.
Together, these imply a daily decrease in potential by at least a $ 1 -\theta (\lambda ) $ factor.

In fact, we will need a more elaborate potential to cope with the following scenario:
suppose that ${\overline x}_i < w_i$
and yet owing to last-minute price reductions to other goods
$x_i \gg w_i$.
Then applying the update to $p_i$ may increase $\phi_i$
well beyond the available ``savings'' of $p_i |w_i - x_i|$.
So for the moment we assume that all demands are bounded at all times:
\begin{assumption}
\label{ass:bdd-dem}
$x_i \le d w_i$ for all $i$, where $d \ge 2$ is a suitable constant.
\end{assumption}
Later, we will show how to drop this assumption.

\begin{lemma}
\label{lem:cont-progress}
If $\loa \leq \frac{1}{2}$, then
$\frac{d\phi_i}{dt} \leq - \loa p_i\span(\xbi, x_i, w_i) \leq - \loa
\phi_i$ at any time when no price update is occurring \emph{(}to any
$p_j$\emph{)}; this bound also holds for the one-sided derivatives when a
price update occurs.
\end{lemma}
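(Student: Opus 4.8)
The plan is to fix a time interval during which no price is being updated, so that $p_i$, $w_i$, and the demand $x_i = x_i(\mathbf p)$ are all constant, and only $\overline{x}_i$ and $a := t-\tau_i$ vary. The single fact I use about $\overline{x}_i$ is that, since $\overline{x}_i\,a = \int_{\tau_i}^{t} x_i(s)\,ds$, differentiating gives $\frac{d\overline{x}_i}{dt} = \frac{x_i - \overline{x}_i}{a}$; in particular $\overline{x}_i$ always moves toward $x_i$. I also use that $a\le 1$, since each price is updated at least once a day. Finally, the second inequality $-\loa p_i\span(\overline{x}_i,x_i,w_i) \le -\loa\phi_i$ is immediate, because $\phi_i = p_i[\span(\overline{x}_i,x_i,w_i) - \loa|w_i-\overline{x}_i|a] \le p_i\span(\overline{x}_i,x_i,w_i)$; so all the work is in the first inequality.

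Next I would evaluate the two pieces of
\[
\frac{d\phi_i}{dt} = p_i\left[\frac{d}{dt}\span(x_i,\overline{x}_i,w_i) \;-\; \loa\,\frac{d}{dt}\bigl(|w_i - \overline{x}_i|\,a\bigr)\right].
\]
Since $x_i$ and $w_i$ are fixed, $\frac{d}{dt}\span(x_i,\overline{x}_i,w_i)$ is $0$ whenever $\overline{x}_i$ lies (weakly) between $x_i$ and $w_i$, and equals $-|x_i-\overline{x}_i|/a$ whenever $\overline{x}_i$ lies outside that interval (here using that $\overline{x}_i$ moves toward $x_i$, hence toward the interval). A short computation, using $\frac{d}{dt}|w_i-\overline{x}_i| = -\sgn(w_i-\overline{x}_i)\frac{x_i-\overline{x}_i}{a}$, gives $\frac{d}{dt}\bigl(|w_i-\overline{x}_i|\,a\bigr) = \sgn(w_i-\overline{x}_i)(w_i - x_i)$, which equals $+|w_i-x_i|$ when $x_i$ and $\overline{x}_i$ lie on the same side of $w_i$, and $-|w_i-x_i|$ when they lie on opposite sides. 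At the isolated times where $\overline{x}_i\in\{x_i,w_i\}$ (or $x_i = w_i$) one takes the corresponding one-sided limits; these fall into one of the cases below by tie-breaking on which value is the median.

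The proof then splits into three cases according to which of $x_i,\overline{x}_i,w_i$ is the median. If $\overline{x}_i$ is the median, the span-derivative vanishes and $x_i,\overline{x}_i$ are on the same side of $w_i$, so $\frac{d\phi_i}{dt} = -\loa p_i|w_i-x_i| = -\loa p_i\span(\overline{x}_i,x_i,w_i)$ exactly. If $x_i$ is the median, then $\overline{x}_i$ is outside the $x_i$--$w_i$ interval but on the same side of $w_i$ as $x_i$, so $\frac{d\phi_i}{dt} = p_i\bigl[-|x_i-\overline{x}_i|/a - \loa|w_i-x_i|\bigr]$ while $\span(\overline{x}_i,x_i,w_i) = |\overline{x}_i-w_i| = |x_i-\overline{x}_i| + |x_i-w_i|$; the desired inequality reduces to $1/a \ge \loa$, which holds since $a\le 1$ and $\loa\le\tfrac12$. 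If $w_i$ is the median, then $\frac{d\phi_i}{dt} = p_i\bigl[-|x_i-\overline{x}_i|/a + \loa|w_i-x_i|\bigr]$ and $\span(\overline{x}_i,x_i,w_i) = |x_i-w_i| + |w_i-\overline{x}_i|$, so the inequality reduces to $|x_i-w_i| + |w_i-\overline{x}_i| \ge a\bigl(2\loa|x_i-w_i| + \loa|w_i-\overline{x}_i|\bigr)$, which follows from $a\le 1$ together with $\loa\le\tfrac12$. In every case $\frac{d\phi_i}{dt}\le -\loa p_i\span(\overline{x}_i,x_i,w_i)$.

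For the final assertion about one-sided derivatives at a price update: an update to $p_j$ changes $x_i$ discontinuously (and, if $j=i$, also resets $\tau_i$ to $t$, so that $a=0$ and $\overline{x}_i = x_i$ immediately afterward). The one-sided derivative just after the update is then governed by the same case analysis applied to the post-update values (the median-$\overline{x}_i$ case, trivially, when $j=i$), and the one-sided derivative just before by the pre-update values; so the bound persists across the update. I expect the only delicate point to be the bookkeeping of which of $x_i,\overline{x}_i,w_i$ is the median and pinning down the sign of $\frac{d}{dt}|w_i-\overline{x}_i|$; the quantitative content lives entirely in the case where $w_i$ is the median, which is the one place $\loa\le\tfrac12$ is actually needed.
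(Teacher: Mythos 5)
Your proof is correct and follows essentially the same route as the paper's: the same identity $\frac{d\overline{x}_i}{dt} = (x_i - \overline{x}_i)/(t-\tau_i)$, the same three-way case split (your median-of-$\{x_i,\overline{x}_i,w_i\}$ cases correspond exactly to the paper's Cases 2, 1, and 3 respectively), and the same use of $t-\tau_i \le 1$ and $\loa \le \tfrac12$ in the $w_i$-median case. No gaps.
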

\begin{proof}
We begin by showing $\frac{d\xbi}{dt} = \frac{1}{t - \tau_i} (x_i -
\xbi)$. For $ \xbi = \frac{1}{t - \tau_i} \int_{\tau_i}^t x_i dt$;
so $\xbi + (t - \tau_i) \frac{d \xbi}{dt} = x_i$.

Now we bound $\frac{d\phi_i}{dt}$.

\medskip\noindent
{\bf Case 1:} $\xbi \geq x_i \geq w_i$ (or symmetrically, $\xbi \leq
x_i \leq w_i$). Then $\phi = p_i [(\xbi - w_i) (1 - \loa(t -
\tau_i))]$.
\[ \frac{d\phi_i}{dt} = p_i \left[ \frac{-1}{t - \tau_i} (\xbi - x_i) \left(1 - \loa (t -
\tau_i) \right) - \loa (\xbi - w_i) \right] \leq -\loa p_i(\xbi -
w_i) \leq -\loa \phi_i ~~\mbox{as $t - \tau_i \le 1$}.  \]

\medskip\noindent
{\bf Case 2:} $x_i > \xbi \geq w_i$ (or symmetrically, $x_i < \xbi
\leq w_i$). Then $\phi_i = p_i [(x_i - w_i) - \loa (t - \tau_i)
(\xbi - w_i)]$.
\[\frac{d\phi_i}{dt} = -\loa p_i\left[(\xbi - w_i) + (x_i - \xbi) \right] = -\loa p_i
(x_i - w_i) \leq -\loa \phi_i .\]

\medskip\noindent
{\bf Case 3:} $\xbi > w_i \geq x_i$ (or symmetrically, $\xbi < w_i
\leq x_i$). Then $\phi_i = p_i [(\xbi - x_i) - \loa (t -
\tau_i)(\xbi - w_i) ]$.
\[ \frac{d\phi_i}{dt} = p_i \left[ \frac{-(\xbi - x_i)}{t - \tau_i} - \loa ((\xbi - w_i) +
(x_i - \xbi)) \right] \leq p_i \left[ -(\xbi - x_i) - \loa (x_i -
w_i) \right] \]
\[ \leq -p_i \left(1 - \loa\right) (\xbi - x_i) \leq - \left(1 - \loa\right) \phi_i \leq
-\loa \phi_i, \qquad \mbox{as~}\loa \leq \frac{1}{2}. \]
\end{proof}

\begin{lemma}
\label{lem:bound-Delta-x} ~
\begin{enumerate}
\item
If $\Delta_i p_i > 0$,
\[-\frac{\Delta_i x_i}{x_i} \leq \frac{E\Delta_i p_i}{p_i} = \lambda E \min \left\{1,
\frac{\xbi - w_i}{w_i} \right\}.\]
\item
If $\Delta_i p_i < 0$,
\[\frac{\Delta_i x_i}{x_i} \leq \frac{E(-\Delta_i p_i)}{(1 - \lE) p_i} = -\frac{ \lE}{1-\lE}
\frac{(\xbi - w_i)}{w_i} , \mbox{~if~~}\frac{ \lE}{1-\lE} \leq 1.\]
\end{enumerate}
\end{lemma}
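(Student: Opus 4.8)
The plan is to reduce both parts to the one-variable demand bounds of Lemma~\ref{lem:x-bdd}, noting first that although that lemma is stated for the single-good market, its proof only uses the inequality $x_i/p_i \le -dx_i/dp_i \le E x_i/p_i$, which by Definition~\ref{def:bdd-elas} holds for $x_i$ regarded as a function of $p_i$ alone with the other prices $p_{-i}$ held fixed; hence Lemma~\ref{lem:x-bdd} applies verbatim with $p_i$ playing the role of $p$. Throughout, write $\Delta = |\Delta_i p_i|/p_i$, so that $p_i' = p_i(1+\Delta)$ in part 1 and $p_i' = p_i(1-\Delta)$ in part 2; since the update uses the truncated quantity $\min\{1, |\xbi - w_i|/w_i\}$ and $\xbi \ge 0$, in both cases $\Delta = \lmbd \min\{1, |\xbi - w_i|/w_i\} \le \lmbd \le \frac12$.

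For part 1, $\Delta_i p_i > 0$ means a price increase, which occurs exactly when $\xbi > w_i$; since $x_i$ is decreasing in $p_i$, $\Delta_i x_i \le 0$. Lemma~\ref{lem:x-bdd} gives $x_i + \Delta_i x_i = x_i(p_{-i}, p_i(1+\Delta)) \ge x_i/(1+\Delta)^E$, i.e.\ $1 + \Delta_i x_i/x_i \ge (1+\Delta)^{-E}$. By Fact~\ref{fact:taylor}(a), applied with $a = -E \le -1$ and $\delta = \Delta \ge -1$, we have $(1+\Delta)^{-E} \ge 1 - E\Delta$, so $-\Delta_i x_i/x_i \le E\Delta = E\Delta_i p_i/p_i$; substituting $\Delta = \lmbd\min\{1,(\xbi - w_i)/w_i\}$ gives the stated closed form.

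For part 2, $\Delta_i p_i < 0$ means a price decrease ($\xbi < w_i$), so $x_i$ increases and $\Delta_i x_i \ge 0$. Lemma~\ref{lem:x-bdd} now gives $x_i + \Delta_i x_i \le x_i/(1-\Delta)^E$, i.e.\ $1 + \Delta_i x_i/x_i \le (1-\Delta)^{-E} = (1+(-\Delta))^{-E}$. The key step, and the one requiring a little care, is to apply Fact~\ref{fact:taylor}(d) (second case) with $a = -E \le -1$, $\delta = -\Delta$, and $\rho = \lmbd E$: this is legitimate because $a\delta = E\Delta \le \lmbd E = \rho$ (using $\Delta \le \lmbd$) and $\rho = \lmbd E < 1$ (since $\lmbd E \le \frac12$), and it yields $(1-\Delta)^{-E} \le 1 + \frac{E\Delta}{1-\lmbd E}$. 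Hence $\Delta_i x_i/x_i \le \frac{E\Delta}{1-\lmbd E} = \frac{E(-\Delta_i p_i)}{(1-\lmbd E)p_i}$, and substituting $\Delta = \lmbd(w_i-\xbi)/w_i$ rewrites the right-hand side as $-\frac{\lmbd E}{1-\lmbd E}\cdot\frac{\xbi - w_i}{w_i}$, as claimed. The hypothesis $\frac{\lmbd E}{1-\lmbd E}\le 1$ in the statement is a harmless strengthening of what this argument actually needs, included because it is convenient where the lemma is later invoked; the only genuine subtlety is verifying that the $\min\{1,\cdot\}$ truncation in the update rule is exactly what keeps $a\delta \le \rho < 1$, so that Fact~\ref{fact:taylor}(d) is applicable.
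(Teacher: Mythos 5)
Your proof is correct and follows essentially the same route as the paper's: both parts reduce to Lemma~\ref{lem:x-bdd} (applied to $x_i$ as a function of $p_i$ with $p_{-i}$ fixed), with Fact~\ref{fact:taylor}(a) giving $(1+\Delta)^{-E}\ge 1-E\Delta$ for the price increase and Fact~\ref{fact:taylor}(d) with $\rho=\lambda E$ giving $(1-\Delta)^{-E}\le 1+E\Delta/(1-\lambda E)$ for the price decrease. Your extra checks — that the truncation in the update rule keeps $a\delta\le\rho<1$, and that the stated hypothesis $\lambda E/(1-\lambda E)\le 1$ is stronger than what the argument needs — are accurate and match the paper's (terser) justification.
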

\begin{proof}
We begin with (1). By Lemma \ref{lem:x-bdd},
\begin{eqnarray*}
 \frac{x_i - (-\Delta_i x_i)}{x_i} \geq \left( \frac{p_i}{p_i + \Delta_i p_i}\right)^E
= \left(1 + \frac{\Delta_i p_i} {p_i} \right)^{-E} \geq 1 - \frac{E\Delta_i
p_i}{p_i}, \\
~~~ \mbox{using Fact \ref{fact:taylor}a, with $a=E$ and $\del = \frac{\Del_i p_i}{p_i}$,
as $\frac{E\Delta_i p_i}{p_i} \ge -1$}.
\end{eqnarray*}
(1) now follows readily.

\smallskip

For (2), by Lemma \ref{lem:x-bdd},
\begin{eqnarray*}
\frac{x_i + \Delta_i x_i}{x_i} \leq \left( \frac{p_i}{p_i - (-\Delta_i p_i)}\right)^E =
\left( \frac{p_i - (- \Delta_i p_i)}{p_i}\right)^{-E} \leq 1 +
\frac{E}{1-\lmbd E}\frac{(-\Delta_i p_i)}{ p_i},\\
~~~ \mbox{using Fact \ref{fact:taylor}d, with $a=-E$, $\del = \frac{\Del_i p_i}{p_i}$,
and $\rho =\lmbd E$,
as $E(\frac{(-\Delta_i p_i}{p_i}) \le E\lmbd$}.
\end{eqnarray*}
(2) now follows readily.
\end{proof}

Note that any non-same-side update can be split into two same-side updates:
the first causes ${x_i} = {w_i}$ and the second changes
${x_i}$ to its final value.
Consequently, we will analyze only same-side updates henceforth.
We say that the update of ${p_i}$ to
$p'_i$ is \emph{toward} ${w_i}$ if either ${x_i} > x'_i \ge {w_i}$
or
${x_i} < x'_i \le {w_i}$, and it is \emph{away from}
${w_i}$ if either ${x'_i} > x_i  \ge {w_i}$
or
${x'_i} < x_i  \le {w_i}$,
where $x'_i  = {x_i}( p'_i)$.

Let ${\psi _i} = {\phi _i} - \span(x_i,x_i^u,w_i) - {\alpha _1}\lambda | w _i - x_i^u| (t - {\tau _i} )$,
and let $\Delta_i \psi_i$ denote  the increase in $\psi_i$ when $p_i$ is updated
(this definition is useful in later sections where
the definition of $\phi_i$ is changed with the effect that ${\psi _i} \ne 0$).
$x_i^u$ is the values of $x_i$ used in computing the price update; in this section,
$x_i^u =  \xbi$.

\begin{lemma}
\label{lem:drop-span}
If  Assumption \ref{ass:bdd-dem} holds \emph{(}$x_i \le d w_i$\emph{)},
when $p_i$ is updated, $\phi $ increases by at most the following amount:
\\
Case 1. the update is toward ${w_i}$:
\begin{equation}
\label{eqn:0cstrt-1-lem-updates-only-help}
\loa | x_i^u - w_i| p_i  + \Delta_i \psi_i - w_i|\Delta_i p_i|.
\end{equation}
\\
Case 2.  The update is away from ${w_i}$:
\begin{equation}
\label{eqn:0cstrt-2-lem-updates-only-help}
 \left(1 + \frac{2Ed}{1 - \lE} \right) w_i |\Delta_i p_i|
+ \Delta_i \psi_i
- (1- \loa )  p_i|x_i^u  - {w_i}|.
\end{equation}
\end{lemma}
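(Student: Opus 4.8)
The plan is to isolate the effect of the single update to $p_i$ (performed at time $t$, so $\tau_i$ is reset to $t$) and to bound the resulting change $\Delta\phi = \Delta_i\phi_i + \sum_{j\ne i}\Delta_i\phi_j$, where $\Delta_i\phi_j$ denotes the (possibly negative) increase in $\phi_j$. First I record the change to $\phi_i$ itself: immediately after the update $\xbi$ becomes the new demand rate $x'_i := x_i(p_{-i},p'_i)$ and $t-\tau_i=0$, so $\phi_i$ drops to $p'_i|x'_i-w_i|+\psi_i$, whereas before the update $\phi_i = p_i\span(x_i,x_i^u,w_i) - p_i\alpha_1\lambda|w_i-x_i^u|(t-\tau_i)+\psi_i$; hence
\[
\Delta_i\phi_i = p'_i\,|x'_i-w_i| - p_i\span(x_i,x_i^u,w_i) + p_i\alpha_1\lambda\,|w_i-x_i^u|\,(t-\tau_i) + \Delta_i\psi_i .
\]
Since each price is updated daily, $t-\tau_i\le 1$, so the third term is at most $\loa\, p_i|x_i^u-w_i|$. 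Next I bound the cross terms: for $j\ne i$ only $x_j$ moves inside $\phi_j$ (the quantities $p_j,\xbj,\tau_j$ are untouched), and $\span$ changes by at most the change in any one of its arguments, so $\Delta_i\phi_j\le p_j|\Delta_i x_j|$; by WGS $\Delta_i x_j$ has the same sign as $\Delta_i p_i$, so $p_j|\Delta_i x_j| = \sgn(\Delta_i p_i)\,p_j\Delta_i x_j$, and summing while using conservation of total spending (when only $p_i$ changes, $\sum_k p_kx_k$ is unchanged, exactly as in the proof of Lemma~\ref{lem:progress-1}) gives $\sum_{j\ne i}\Delta_i\phi_j \le \sgn(\Delta_i p_i)\,(p_ix_i - p'_ix'_i)$. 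Note $\sgn(\Delta_i p_i)=\sgn(x_i^u-w_i)$, and $x'_i<x_i$ when $\Delta_i p_i>0$, $x'_i>x_i$ when $\Delta_i p_i<0$, by the assumed strict monotonicity of $x_i$ in $p_i$.

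Now split into the four combinations of $\sgn(x_i^u-w_i)$ and toward/away. In Case~1 (toward $w_i$) both $x_i$ and $w_i$ lie on the same side of $w_i$ as $x'_i$, so $\span(x_i,x_i^u,w_i)\ge|x_i-w_i|$; substituting $p'_i=p_i+\Delta_i p_i$ and cancelling the $p'_ix'_i$ terms, the combination $p'_i|x'_i-w_i| + \sgn(\Delta_i p_i)(p_ix_i-p'_ix'_i)$ collapses exactly to $p_i|x_i-w_i| - w_i|\Delta_i p_i|$; combining this with $-p_i\span(x_i,x_i^u,w_i)\le -p_i|x_i-w_i|$ and the bound on the decayed term yields precisely (\ref{eqn:0cstrt-1-lem-updates-only-help}). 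This is the one-dimensional calculation behind Lemma~\ref{lem:1d-rate-of conv} and Lemma~\ref{lem:progress-1}, repackaged for the span potential.

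In Case~2 (away from $w_i$) the instantaneous demand $x_i$ sits on the opposite side of $w_i$ from $x_i^u$, so $\span(x_i,x_i^u,w_i)=|x_i-x_i^u|$; the terms $p'_i|x'_i-w_i|$ and $-p_i|x_i-x_i^u|$ combine into $-p_i|x_i^u-w_i|$ plus error terms, and after absorbing the decayed term this produces the genuine decrease $-(1-\loa)p_i|x_i^u-w_i|$. The remaining nonnegative contributions are: $p_i|\Delta_i x_i|$ from the price change acting on $x_i$; a term at most $w_i|\Delta_i p_i|$ of the form $\Delta_i p_i\cdot(w_i-x'_i)$ (present only when $\Delta_i p_i>0$); and the cross term $\sgn(\Delta_i p_i)(p_ix_i-p'_ix'_i)$, which expands as $-\Delta_i p_i\,x'_i - p_i\Delta_i x_i$ (drop the negative piece). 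Thus $p_i|\Delta_i x_i|$ appears twice; by Lemma~\ref{lem:bound-Delta-x} each occurrence is at most $\tfrac{E}{1-\lE}\,x_i|\Delta_i p_i|$, and by Assumption~\ref{ass:bdd-dem} ($x_i\le dw_i$) this is at most $\tfrac{Ed}{1-\lE}w_i|\Delta_i p_i|$; summing all nonnegative terms gives coefficient $1+\tfrac{2Ed}{1-\lE}$ on $w_i|\Delta_i p_i|$, which is (\ref{eqn:0cstrt-2-lem-updates-only-help}). Throughout, $\Delta_i\psi_i$ is simply carried as an opaque additive term so the lemma applies verbatim when $\phi_i$ is later augmented.

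The main obstacle is Case~2: there the crude $\span$-Lipschitz bound is not enough, because the demand $x_i$ that governs the effect of the price move can be as large as $dw_i$ while the averaged value $x_i^u$ that triggered the update lies across $w_i$ from it; the change $\Delta_i x_i$ must be controlled through the elasticity bound of Lemma~\ref{lem:bound-Delta-x}, and it enters the bookkeeping twice — once directly in $\phi_i$ and once through the spending shift it induces on the other goods — which is exactly what forces the coefficient $\tfrac{2Ed}{1-\lE}$. Everything else is routine algebra.
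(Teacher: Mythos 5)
Your proof is correct and follows essentially the same route as the paper's: it rests on the same three ingredients (conservation of total spending plus WGS to control $\sum_{j\ne i}\Delta_i\phi_j$, the elasticity bound of Lemma~\ref{lem:bound-Delta-x} applied twice in the away case, and $t-\tau_i\le 1$ for the decay term), and your exact identity $p'_i|x'_i-w_i|+\sgn(\Delta_i p_i)(p_ix_i-p'_ix'_i)=p_i|x_i-w_i|-w_i|\Delta_i p_i|$ is just the paper's spending-neutral decomposition (Lemma~\ref{lem:spdng-neutr}) carried out in one algebraic step. The only difference is presentational, so no further comment is needed.
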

\begin{proof}
Case 1:
First, we increase $\psi_i$ by $\Delta_i \psi_i$.

Next, we reduce the term $p_i \span( x_i, x_i^u, w_i)$ in $\phi_i$
to $p_i |x_i - w_i|$.
Following this, we update $x_i$ to $x_i + \Delta_i^N x_i$ and reduce
$p_i|x_i - w_i|$ to $(p_i + \Delta_i p_i)|x_i + \Delta_i^N x_i - w_i|$.
By Lemma \ref{lem:spdng-neutr}, this reduces $\phi_i$ by
$w_i|\Delta_i p_i|$. We also remove the term $\loa(t - \tau_i)|x_i^u -
w_i| p_i$ from $\phi_i$.
The increase to $\phi_i$  following these changes is at most
\[
\loa (t - \tau_i) |x_i^u - w_i| p_i + \Delta_i \psi_i - w_i|\Delta_i p_i|.
\]
Note that $t - \tau_i \le 1$, as there is a price update at least once a day;
this yields the bound in (\ref{eqn:0cstrt-1-lem-updates-only-help})
in the statement of the Lemma.

Finally, we change $x_i$ by a further $\Delta_i x_i - \Delta_i^N x_i$.
This reduces $\phi_i$ by $(p_i + \Delta_i p_i) (\Delta_i x_i - \Delta_i^N
x_i)$ and may increase other $\phi_j$ by up to this amount (due to a
transfer of this amount of spending from the span term in $\phi_i$
to span terms in $\phi_j$, $j\ne i$).
At worst, this leaves the potential $\phi$ unchanged.
\\
Case 2:
First, we increase $\psi_i$ by $\Delta_i \psi_i$.

Next, we reduce $p_i \span(x_i, x_i^u, w_i)$ to $p_i |x_i - w_i|$.
This yields a saving of $p_i |x_i^u - w_i|$.

Again, we remove the term $\loa p_i  (t - \tau_i)|x_i^u - w_i|$, with this cost.
Next, we update $p_i$ to $p_i + \Delta_i p_i$ and change $x_i$
by $ \Delta_i^N x_i$. By Lemma~\ref{lem:spdng-neutr}, this
increases $\phi_i$ by $w_i |\Delta_i p_i|$. Then, we update $x_i$ by a
further $\Delta_i x_i - \Delta_i^N x_i $. This increases $\phi_i$ by
$(p_i + \Delta_i p_i) |\Delta_i x_i - \Delta_i^N x_i|$. It may also cause
up to an equal increase in other $\phi_j$ due to matching spending
transfers between good $i$ and goods $j \neq i$. The net increase in
potential is bounded by $w_i |\Delta_i p_i| + 2(p_i + \Delta_i p_i) |\Delta
x_i| + \loa p_i |x_i^u - w_i| + \Delta_i \psi_i - p_i | x_i^u - w_i|$.
\\
Case 2.1: $x_i \leq w_i < x_i^u$.\\
Here $0 \le \Delta_ip_i \le \lmbd p_i$.
By Lemma~\ref{lem:bound-Delta-x}, as $\Delta_i p_i \geq 0$, $|\Delta_i x_i| \leq
x_i E \Delta_i p_i / p_i \le w_i E (\Delta_ip_i)/p_i$. Also $\Delta_i p_i \le \lmbd p_i$.
So the increase to $\phi$ is at most $w_i |\Delta_i p_i| + 2p_i (1 + \lmbd) w_i E (\Delta_ip_i)/p_i
+ \loa p_i |x_i^u - w_i| + \Delta_i \psi_i - p_i |x_i^u - w_i|$;
on rearranging, this amounts to
$w_i |\Delta_i p_i| (1 + 2E (1 + \lmbd) )
 + \Delta_i \psi_i  - (1 - \loa)  p_i | x_i^u - w_i|$.
This is bounded by (\ref {eqn:0cstrt-2-lem-updates-only-help}).
\\
Case 2.2: $x_i \ge w_i >x_i^u$.\\
Here $\Delta_i p_i \leq 0$; so
$p_i + \Delta_i p_i \leq p_i$. By Lemma~\ref{lem:bound-Delta-x}, $|\Delta_i
x_i| \leq \frac{E}{1-\lmbd E} x_i \frac{|\Delta_i p_i|}{p_i} \le \frac{E}{1-\lmbd E} d w_i \frac{|\Delta_i p_i|}{p_i}$.
So the increase to $\phi$ is at most
$w_i |\Delta_i p_i| + 2p_i  \cdot \frac{E}{1-\lmbd E} d w_i \frac{|\Delta_i p_i|}{p_i}
 + \loa p_i |x_i^u - w_i| + \Delta_i \psi_i - p_i |x_i^u - w_i|$;
on rearranging, this amounts to
(\ref {eqn:0cstrt-2-lem-updates-only-help}).
\end{proof}

\begin{corollary}
\label{cor:updates-only-help}
If Assumption \ref{ass:bdd-dem} holds \emph{(}$x_i \le d w_i$\emph{)}, $\alpha_1(d-1) \leq 1$,
and $\loa + \lambda (\frac{2Ed}{1-\lmbd E}  + 1)\leq 1$
then when $p_i$ is updated, $\phi$ only decreases.
\end{corollary}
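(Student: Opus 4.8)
The plan is to invoke Lemma~\ref{lem:drop-span} and then verify, case by case, that the upper bounds it gives on the increase of $\phi$ are in fact nonpositive. A preliminary observation is that in the present section $\psi_i \equiv 0$: this is immediate by comparing the definition of $\phi_i$ here with the definition of $\psi_i$, so $\Delta_i\psi_i = 0$. Hence the two bounds of Lemma~\ref{lem:drop-span} reduce to $\loa|x_i^u - w_i|p_i - w_i|\Delta_i p_i|$ in the toward-$w_i$ case and to $\left(1 + \frac{2Ed}{1-\lE}\right)w_i|\Delta_i p_i| - (1-\loa)p_i|x_i^u - w_i|$ in the away-from-$w_i$ case, where $x_i^u = \xbi$ is the value of $x_i$ used in the update.

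First I would dispatch the away-from-$w_i$ case, which is the cleaner one. Since $|\Delta_i p_i| = \lmbd p_i\min\{1,|\xbi - w_i|/w_i\}$, we have $w_i|\Delta_i p_i| \le \lmbd p_i|x_i^u - w_i|$; substituting into the bound above shows the increase is at most $\bigl[\lmbd\left(1 + \frac{2Ed}{1-\lE}\right) - (1-\loa)\bigr]\,p_i|x_i^u - w_i|$, which is $\le 0$ exactly by the hypothesis $\loa + \lmbd\left(1 + \frac{2Ed}{1-\lE}\right) \le 1$.

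For the toward-$w_i$ case I would split on whether $|x_i^u - w_i| \le w_i$ or not. If $|x_i^u - w_i| \le w_i$, then $w_i|\Delta_i p_i| = \lmbd p_i|x_i^u - w_i|$ and the increase is $(\loa - \lmbd)|x_i^u - w_i|p_i \le 0$, using $\loa = \lmbd\alpha_1$ and $\alpha_1 \le 1$ (which follows from $\alpha_1(d-1)\le 1$ together with $d\ge 2$). If instead $|x_i^u - w_i| > w_i$, then necessarily $x_i^u > 2w_i$ (demand is nonnegative), so $w_i|\Delta_i p_i| = \lmbd p_i w_i$ and the increase is at most $\lmbd p_i\bigl(\alpha_1|x_i^u - w_i| - w_i\bigr)$. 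Here I would note that $x_i^u = \xbi$ is a time-average of instantaneous demands, each bounded by $dw_i$ under Assumption~\ref{ass:bdd-dem}, hence $|x_i^u - w_i| \le (d-1)w_i$; then $\alpha_1(d-1)\le 1$ gives $\alpha_1|x_i^u - w_i| \le w_i$, so the increase is again $\le 0$.

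I do not expect a genuine obstacle; the calculation is direct substitution of the update rule followed by matching terms against the three stated parameter constraints. The only two points needing a moment's care are observing that $\psi_i\equiv 0$ here so the $\Delta_i\psi_i$ terms drop out, and promoting the instantaneous bound $x_i\le dw_i$ of Assumption~\ref{ass:bdd-dem} to the averaged quantity $\xbi$, which is what lets $\alpha_1(d-1)\le 1$ do its work in the second subcase of the toward update.
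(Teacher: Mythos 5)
Your proposal is correct and follows essentially the same route as the paper's own proof: both invoke Lemma~\ref{lem:drop-span} with $\Delta_i\psi_i=0$, split the toward-$w_i$ case on whether $|\xbi - w_i|\le w_i$, and match each bound against the corresponding parameter constraint. The one small addition you make — explicitly deriving $\xbi\le dw_i$ from Assumption~\ref{ass:bdd-dem} by averaging — is a detail the paper leaves implicit, and it is correct.
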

\begin{proof}
Recall that $\Delta_i p_i = \lambda \min \{ 1, \frac{\xbi -
w_i}{w_i} \}$.
Also note that $\psi_i =0$ and hence  $\Delta_i \psi_i =0$ too.
Finally, recall that $x_i^u  = \xbi$.

\medskip\noindent
{\bf Case 1:} A toward ${w_i}$ update.

The increase in $\phi$
is bounded by (\ref{eqn:0cstrt-1-lem-updates-only-help}).

If $|\xbi - w_i|  \leq w_i $, $\Delta_i p_i = \lambda p_i (\xbi -
w_i) / w_i$;
for the increase to be non-positive,
it suffices that
$\alpha_1 \leq 1$.

If $|\xbi - w_i|   > w_i$, then $\Delta_i p_i = \lambda p_i$.
Also $|\xbi - w_i| \le (d-1) w_i$.
Again,
for the increase to be non-positive,
it suffices that
$\alpha_1(d-1) \leq 1$.

\smallskip\noindent
{\bf Case 2:} An away from ${w_i}$ update.
\\
The increase in $\phi$ is bounded by
(\ref{eqn:0cstrt-2-lem-updates-only-help}).
Again, $\Delta_i \psi_i =0$.
Also, $|\Delta_i p_i| \leq p_i \lambda |\xbi -w_i| / w_i$.
For the increase to be non-positive,
it suffices that $\loa + \lmbd \left(1 + \frac{2Ed}{1-\lmbd E } \right) \leq 1$.
\end{proof}

\noindent
{\bf Theorem \ref{lem:progr-bdd-dem}.}~%
\emph{%
Suppose that the market obeys WGS and has elasticity $E$.
If $x_i \le d w_i$, for all $i$, where $d \ge 2$,
$\alpha_1 (d-1) \le 1$,
$\loa + \lambda \left(1 + \frac{2Ed}{1-\lmbd E } \right)\le 1$,
and each price is updated at least once every day,
then $\phi$ decreases by at least a $1 - \frac{\loa}{2}$ factor daily.
Hence $\phi$ reduces from $\phi_I$ to at most $\phi_F$
in $O(\frac{1}{\loa} (d-1)\log\frac{\phi_I}{\phi_F})$
days.
}
\begin{proof}
By Corollary \ref{lem:progr-bdd-dem}, $\phi$ only decreases whenever a price update occurs.
By Lemma \ref{lem:cont-progress}, $d\phi/dt \le  \loa \phi$,
which implies $\phi(t+1)\le e^{-\loa}\phi(t) \le (1 -\frac{\loa}{2}) \phi(t)$ as $\loa \le 1$.
\end{proof}

\noindent
{\bf Remark}.
One could imagine having a distinct elasticity bound for each good,
$E_i$ for good $i$, $1 \le i \le n $, say.
Then we would use a distinct
parameter $\lmbd_i$ for good $i$.
To ensure uniform progress in reducing the
various $\phi_i $,
a price update for good $i$ would need to
occur every $\Theta(\lmbd_i f)$ time units,
where $ f $ is an upper bound on the
time between updates for goods with elasticity 1.
As this does not alter the character of our results, we leave the details to
the interested reader. 

\subsection{The Ongoing Market, or Incorporating Warehouses}

\Xomit{
For the market model to be self-contained, we need to explain how excess
demand is met and what is done with excess supply. The solution is simple:
we provide finite capacity warehouses (buffers in computer science terminology)
that can store excess supply and meet excess demand.
There is one warehouse per good.
The price-setter for a good changes
prices as needed to ensure the corresponding warehouse
neither overfills nor runs out of goods.

Just as the demand is a rate, we imagine the supply to be a rate, which
for the purposes of analysis we treat as being a fixed rate.
Each instant, the resulting excess demand or surplus, $x_i-w_i$,
is taken from or added to the warehouse stock.

Let $c_i$ be the capacity of the warehouse for good $i$.
Each warehouse is assumed to have a target ideal
content of $s^*_i$ units (perhaps the most natural value is
$s^*_i=c_i/2$).

The goal is to repeatedly adjust prices so as to converge to
near-equilibrium prices with the warehouse stocks converging to
near-ideal values.
A further issue is to determine what size warehouse suffices,
which we defer to the next section.

The price update rule needs to take account of the current state of the
warehouse, namely whether it is relatively full or empty.
To this end, let $\tau_i$ be the time of
the previous update to $p_i$. let $t$ be the current time,
and let $s_i$ denote the current contents of
warehouse $i$.
Then the target excess demand, $\overline z_i$,
is given by $\xbi [\tau_i, t]
- w_i  + \kappa_i (s_i - s_i^*)$ where $\kappa_i > 0$ is
a suitable (small) parameter.
So $\overline z_i = \frac{s_i (\tau_i) - s_i
(t)}{t - \tau_i} + \kappa_i (s_i - s_i^*)$ and is readily calculated by
monitoring warehouse stocks.
We let $\wti$ denote $w_i - \kappa_i (s_i - s_i^*)$,
which we call the \emph{target demand}.

We will need the following constraint on $\kappa_i$, expressed indirectly as follows.
\begin{constraint}
\label{cst:wti-bdd}
$|\wti - w_i| \le \frac13 w_i$.
\end{constraint}
}

We begin by showing that Constraint\ref{cst:wti-bdd} allows us to relate $\wti$ and $w_i$.

\begin{lemma}
\label{lem:wti-constr}
If Constraint \ref{cst:wti-bdd} holds, then
$\frac34 \wti \le w_i \le \frac32 \wti$ and
$|\wti - w_i| \le \frac12 \wti$.
In addition,
Constraint \ref{cst:wti-bdd} holds for all possible warehouse contents if
$\kpi\max\{|c_i - s_i^*|,s_i^*\}\le \frac13 w_i$, and if
in addition $s_i^* = c_i/2$, the condition becomes $\kpi \le \frac23\frac{w_i}{c_i}$.
\end{lemma}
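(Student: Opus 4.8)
The plan is to unwind the definitions and use only elementary arithmetic. Recall that $\wti = w_i - \kappa_i (s_i - s_i^*)$, so Constraint~\ref{cst:wti-bdd} says precisely that $|\wti - w_i| \le \frac13 w_i$. First I would rewrite this as $\frac23 w_i \le \wti \le \frac43 w_i$. The left inequality gives $w_i \le \frac32 \wti$ and the right inequality gives $w_i \ge \frac34 \wti$, which is exactly the two-sided bound $\frac34 \wti \le w_i \le \frac32 \wti$. Substituting the upper bound $w_i \le \frac32 \wti$ back into $|\wti - w_i| \le \frac13 w_i$ yields $|\wti - w_i| \le \frac13 \cdot \frac32 \wti = \frac12 \wti$, which completes the first sentence of the lemma.

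For the second claim, I would observe that $|\wti - w_i| = \kappa_i |s_i - s_i^*|$ and that any physically realizable warehouse content satisfies $0 \le s_i \le c_i$. Since $|s_i - s_i^*|$ is a convex function of $s_i$, it attains its maximum over $[0, c_i]$ at an endpoint, so $|s_i - s_i^*| \le \max\{ s_i^*, c_i - s_i^*\} = \max\{ s_i^*, |c_i - s_i^*| \}$, using $0 < s_i^* < c_i$ so that the two endpoint values are $s_i^*$ and $c_i - s_i^*$. Hence if $\kappa_i \max\{ |c_i - s_i^*|, s_i^* \} \le \frac13 w_i$, then $|\wti - w_i| \le \frac13 w_i$ regardless of the current stock, i.e.\ Constraint~\ref{cst:wti-bdd} holds for all possible warehouse contents. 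Finally, specializing to $s_i^* = c_i/2$ makes $\max\{ |c_i - s_i^*|, s_i^* \} = c_i/2$, so the sufficient condition becomes $\kappa_i c_i / 2 \le \frac13 w_i$, equivalently $\kappa_i \le \frac23 \frac{w_i}{c_i}$.

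There is no substantive obstacle here; the only point requiring a moment's care is justifying that $|s_i - s_i^*|$ attains its maximum over the feasible range at an endpoint (immediate from convexity) and keeping track of the standing assumption $0 < s_i^* < c_i$ so that the two endpoint values are correctly identified as $s_i^*$ and $c_i - s_i^*$.
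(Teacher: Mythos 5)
Your proof is correct and follows essentially the same route as the paper's: derive $\frac23 w_i \le \wti \le \frac43 w_i$ from the constraint and invert, then note that $|s_i - s_i^*|$ over $[0,c_i]$ is maximized at an endpoint. The extra detail you supply (the explicit substitution giving $|\wti - w_i| \le \frac12 \wti$ and the convexity justification for the endpoint maximum) is exactly what the paper leaves implicit.
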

\begin{proof}
Constraint \ref{cst:wti-bdd} implies $\frac23 w_i \le \wti \le \frac43 w_i$.
Thus $\frac34 \wti \le w_i \le \frac32 \wti$, which gives the first claim.
The second pair of claims is immediate from the definition of $\wti$, as
$|s_i - s_i^*|$ is maximized either when $s_i=c_i$ or when $s_i=0$.
\end{proof}


\Xomit{
The price of good $i$ is updated according to the following rule:
\begin{equation}
\label{eqn:pr_change}
p'_i \leftarrow  p_i \left(1 +
     \lambda~ \mbox{median}\left\{-1, \frac{ \overline z_i
         (p)}{w_i}, 1 \right\}
   \right)
\end{equation}
This rule ensures that the change to $p_i$ is bounded by
$\pm \lambda_i p_i$.
}

\Xomit{
We will also need the parameters
$\kappa_{\min}=\min_i\kappa_i$ and $\kappa_{\max}=\max_i\kappa_i$.
}

\Xomit{
We redefine the potential $\phi_i$ to also take account of the imbalance in the warehouse
stock as follows:
\[
\phi_i = p_i[\span(x_i, \xbi, \wti) -
\loa (t - \tau_i) | \xbi- \wti |
+ \alpha_2 | \wti - w_i |]
\]
where $1< \alpha_2 < 2$ is a suitable constant;
this is simply Equation \ref{eqn:phi-asynch-def}, with
$\wti$ replacing $w_i$ and with the additional term
$ \alpha_2 p_i | \wti - w_i |$.
}

We prove results analogous to Lemma \ref{lem:cont-progress}
and Corollary \ref{cor:updates-only-help} to demonstrate progress as before.
To enable us to apply the following lemma in later sections, we
define $\chi_i =\frac{d\wti}{dt}+\kpi(x_i-w_i)$ (in the current section, $\chi_i=0$).

\begin{lemma}
\label{lem:war-cont-progress}
Suppose that $ 4\kpi(1 + \alt) \le \loa \leq \frac{1}{2}$.
If $|\wti - w_i| \le 2\span(\xbi, x_i, \wti)$, then
$\frac{d\phi_i}{dt}  \leq -\frac{\loa}{4(1 +\alt)}\phi_i - p_i\chi_i$,
and otherwise
$\frac{d\phi_i}{dt}
 \leq - \frac{\kpi(\alt - 1)}{2} \phi_i - p_i\chi_i$,
at any time when no price update is occurring
\emph{(}to any $p_j$\emph{)};
this bound also holds for the one-sided derivatives when a
price update occurs.
\end{lemma}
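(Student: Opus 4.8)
The plan is to follow the proof of Lemma~\ref{lem:cont-progress} closely, the only changes being that $\wti$ now plays the role that $w_i$ did there but is itself time-varying, and that $\phi_i$ carries the extra term $\alt p_i|\wti - w_i|$. As in that proof, since no price update is in progress $\frac{dp_i}{dt}=0$ (and at an update instant the one-sided derivatives see a locally constant $p_i$, so the same estimates apply), and $\frac{d\xbi}{dt}=\frac{1}{t-\tau_i}(x_i-\xbi)$. The genuinely new ingredient is $\frac{d\wti}{dt}$: from the definition $\chi_i=\frac{d\wti}{dt}+\kpi(x_i-w_i)$ we write $\frac{d\wti}{dt}=\chi_i-\kpi(x_i-w_i)$, and Lemma~\ref{lem:wti-constr} gives $|\wti-w_i|\le\frac12\wti$ and $w_i=\Theta(\wti)$, whence $|x_i-w_i|\le|x_i-\wti|+|\wti-w_i|\le\span(x_i,\xbi,\wti)+|\wti-w_i|$.

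First I would differentiate $\phi_i/p_i=\span(x_i,\xbi,\wti)-\loa(t-\tau_i)|\xbi-\wti|+\alt|\wti-w_i|$ term by term. Each term contributes a multiple of $\frac{d\wti}{dt}$ whose coefficient lies in $\{-1,0,1\}$ (from $\span$), equals $\loa(t-\tau_i)\sgn(\xbi-\wti)$ (from the middle term), and equals $\alt\sgn(\wti-w_i)$ (from the last term), the precise values depending on the ordering of $x_i,\xbi,\wti$ and on $\sgn(\wti-w_i)$. Substituting $\frac{d\wti}{dt}=\chi_i-\kpi(x_i-w_i)$, the $-\kpi(x_i-w_i)$ part is at most $O(\kpi)\,p_i(\span(x_i,\xbi,\wti)+|\wti-w_i|)$ in absolute value and is absorbed using $\kpi\le\frac{\loa}{4(1+\alt)}$ (from the hypothesis $4\kpi(1+\alt)\le\loa$), while the residual $\chi_i$ part is collected into the $-p_i\chi_i$ of the statement (in the present section $\chi_i=0$, so this piece is vacuous; it is recorded in this form only for reuse in the later sections, where $\wti$ is driven by reported or ideal warehouse contents). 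The remaining ``main'' contribution, $\frac{d}{dt}\bigl[\span(x_i,\xbi,\wti)-\loa(t-\tau_i)|\xbi-\wti|\bigr]$ with $\wti$ momentarily held fixed, is bounded by $-\loa\,p_i\span(x_i,\xbi,\wti)$ by precisely the three-case argument (on the ordering of $x_i,\xbi,\wti$) used in Lemma~\ref{lem:cont-progress}, again using $t-\tau_i\le1$ and $\loa\le\frac12$.

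It then remains to turn this into a bound proportional to $\phi_i$, and here the hypothesised split on $|\wti-w_i|$ versus $2\span(x_i,\xbi,\wti)$ enters. If $|\wti-w_i|\le 2\span(x_i,\xbi,\wti)$, then $\phi_i\le p_i[\span+\alt|\wti-w_i|]\le(1+2\alt)\,p_i\span$, and $|x_i-w_i|\le 3\span$; one then verifies, by tracking in each ordering case which of the $\span$, $|\xbi-\wti|$, $|\wti-w_i|$ terms actually absorbs the $\frac{d\wti}{dt}$ contribution (so that not all of it counts as an increase), that the net rate is at most $-\frac{\loa}{4(1+\alt)}\phi_i-p_i\chi_i$. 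If instead $|\wti-w_i|>2\span(x_i,\xbi,\wti)$, then $\span$ and $|\xbi-\wti|$ are both $<\frac12|\wti-w_i|$, so $\frac{\alt-1}{2}\,p_i|\wti-w_i|\le\phi_i\le(\alt+\tfrac12)\,p_i|\wti-w_i|$ (using $1\le\alt\le2$, $\loa\le\frac12$); now the only genuinely increasing term is a part of $\frac{d}{dt}[\alt p_i|\wti-w_i|]$, at rate at most $\alt\kpi p_i|x_i-w_i|\le\frac32\alt\kpi p_i|\wti-w_i|$, and combining this with the $\span$-terms' bounded variation and the numerical relations among $\kpi,\loa,\alt$ yields $-\frac{\kpi(\alt-1)}{2}\phi_i-p_i\chi_i$. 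The main obstacle is exactly this $\wti$-variation bookkeeping: getting the coefficient of $\frac{d\wti}{dt}$ right in each of the (now more numerous) ordering cases, and checking that in the second regime — where the available decrease is only $\Theta(\kpi)$ and both the lower bound on $\phi_i$ and the increasing terms are comparable multiples of $p_i|\wti-w_i|$ — the constants genuinely close.
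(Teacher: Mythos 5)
Your setup (term-by-term differentiation, isolating $\frac{d\wti}{dt}=\chi_i-\kpi(x_i-w_i)$, reusing the three-case argument of Lemma~\ref{lem:cont-progress} for the $-\loa p_i\span$ main term, and absorbing the $\kpi$-sized perturbation via $4\kpi(1+\alt)\le\loa$) matches the paper's proof, and your Case 1 is essentially the paper's Case 1 (the paper uses $|x_i-w_i|\le 3\span(\xbi,x_i,\wti)$ to fold $(1+\alt)\kpi|x_i-w_i|$ into $-\loa p_i\span$, exactly as you do; your constant $1+2\alt$ in $\phi_i\le(1+2\alt)p_i\span$ is, if anything, more careful than the paper's).

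Your Case 2, however, has a genuine gap. You describe $\frac{d}{dt}\bigl[\alt p_i|\wti-w_i|\bigr]$ as ``the only genuinely increasing term, at rate at most $\alt\kpi p_i|x_i-w_i|$,'' and hope that the span terms plus numerical relations close the argument. They cannot: in the regime $|\wti-w_i|>2\span(\xbi,x_i,\wti)$ the only decrease you have retained is $-\loa p_i\span$, which can be arbitrarily small (even zero, when $x_i=\xbi=\wti$) compared with $p_i|\wti-w_i|=\Theta(\phi_i/p_i)\cdot p_i$, so a worst-case increase of order $\kpi p_i|\wti-w_i|$ from the $\alt$ term would make the derivative positive, not $\le-\frac{\kpi(\alt-1)}{2}\phi_i$. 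The point you are missing is a sign observation that is the entire content of this case: since $|x_i-\wti|\le\span<\frac12|\wti-w_i|$, we have $\sgn(x_i-w_i)=\sgn(\wti-w_i)$, and therefore
\[
\frac{d}{dt}\Bigl[\alt\,|\wti-w_i|\Bigr]
=\alt\,\sgn(\wti-w_i)\bigl[\chi_i-\kpi(x_i-w_i)\bigr]
=-\alt\kpi|x_i-w_i|+\alt\,\sgn(\wti-w_i)\,\chi_i ,
\]
i.e.\ the warehouse term is guaranteed to \emph{decrease} at rate $\alt\kpi|x_i-w_i|$. The span and $-\loa(t-\tau_i)|\xbi-\wti|$ terms can increase by at most $\kpi|x_i-w_i|$, so the net is $\le-\kpi(\alt-1)p_i|x_i-w_i|\le-\frac{\kpi(\alt-1)}{2}p_i|\wti-w_i|$ (using $|x_i-w_i|\ge\frac12|\wti-w_i|$ in this regime), which against $\phi_i\le p_i(\span+\alt|\wti-w_i|)=O(p_i|\wti-w_i|)$ yields the stated $-\frac{\kpi(\alt-1)}{2}\phi_i$. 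Without this sign argument — which is why $\alt>1$ appears in the rate — the bound in the second regime is not provable.
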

\begin{proof}


The analysis builds on the proof for Lemma \ref{lem:cont-progress}.
Taking account of the two changes to the previous form of the potential,
we can conclude
\begin{eqnarray*}
\frac{d\phi_i}{dt} & \le & -\loa p_i\span(\xbi, x_i, \wti)
+ [\kpi(x_i - w_i) - \chi_i] p_i \max\{(1 -\loa(t - \tau_i)),\loa(t-\tau_i)\} \\
&& ~~~~~~ - \kpi(x_i - w_i) \alpha_2 p_i \sgn(\wti-w_i).
\end{eqnarray*}

\smallskip\noindent
{\bf Case 1:} $|\wti - w_i| \le 2\span(\xbi, x_i, \wti)$.\\
Then $|x_i - w_i| \leq 3 \span(\xbi, x_i, \wti)$. And
\begin{eqnarray*}
\frac{d\phi_i}{dt} & \le & -\loa p_i\span(\xbi, x_i, \wti)
+\kpi p_i(1 + \alpha_2)|x_i - w_i|  - p_i\chi_i\\
& \le & -p_i (\loa - 3\kpi(1 + \alpha_2))\span(\xbi, x_i,\wti) - p_i\chi_i\\
& \le & -p_i \frac{(\loa - 3\kpi(1 + \alpha_2))}{1+\alt}
\left[\span(\xbi, x_i, \wti) + \alt |\wti - w_i|\right] - p_i\chi_i \\
& \le & -\frac{(\loa - 3\kpi(1 + \alpha_2))}{1+\alt} \phi_i - p_i\chi_i
\le -\frac{\loa}{4(1 +\alt)}\phi_i - p_i\chi_i.
\end{eqnarray*}

\smallskip\noindent
{\bf Case 2:} $|\wti - w_i| > 2\span(\xbi, x_i,\wti)$.\\
Then $|\wti - w_i| \le 2 |x_i - w_i|$. And
\begin{eqnarray*}
\frac{d\phi_i}{dt} & \le & -\loa p_i \span(\xbi, x_i, \wti)
-\kpi p_i(\alpha_2-1)|x_i - w_i| - p_i\chi_i \\
& \le & -p_i \loa \span(\xbi, x_i, \wti)
-\kpi p_i \frac{(\alpha_2-1)}{2}|\wti - w_i| - p_i\chi_i \\
& \le &  - \min\left\{ \loa, \kpi\frac{(\alpha_2-1)}{2} \right\}\phi_i - p_i\chi_i
 \le  -  \frac{\kpi(\alt - 1)}{2} \phi_i - p_i\chi_i.
\end{eqnarray*}
\end{proof}

\begin{corollary}
\label{cor:phi-rate-of-change}
Suppose that $ 4\kpi(1 + \alt) \le \loa \leq \frac{1}{2}$.
If $\phi \ge 2(1 + 2\alt) \sum_i |\wti - w_i| p_i$, then
$\frac{d\phi}{dt}  \leq - \frac{\loa}{8(1 +\alt)}\phi - \sum_i p_i\chi_i$,
and otherwise
$\frac{d\phi}{dt}
 \leq - \frac{\kpi(\alt - 1)}{2} \phi - \sum_i p_i\chi_i$,
at any time when no price update is occurring
\emph{(}to any $p_j$\emph{)};
this bound also holds for the one-sided derivatives when a
price update occurs.
\end{corollary}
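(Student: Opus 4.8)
The plan is to derive the Corollary by summing the per‑good estimate of Lemma~\ref{lem:war-cont-progress} over all $i$, after splitting the goods according to the two cases of that lemma. Put $A = \{\, i : |\wti - w_i| \le 2\span(\xbi, x_i, \wti) \,\}$ and let $B$ be the complement within $\{1,\dots,n\}$. For $i\in A$ the lemma gives $\frac{d\phi_i}{dt} \le -\frac{\loa}{4(1+\alt)}\phi_i - p_i\chi_i$, while for $i\in B$ it gives $\frac{d\phi_i}{dt} \le -\frac{\kpi(\alt-1)}{2}\phi_i - p_i\chi_i$; adding these yields
\[
\frac{d\phi}{dt} \;\le\; -\frac{\loa}{4(1+\alt)}\sum_{i\in A}\phi_i \;-\;\frac{\kpi(\alt-1)}{2}\sum_{i\in B}\phi_i \;-\;\sum_i p_i\chi_i .
\]
I would first record that every $\phi_i$ is nonnegative: the subtracted term $\loa(t-\tau_i)|\xbi-\wti|$ is at most $\frac12\span(\xbi,x_i,\wti)$ since $\loa\le\frac12$, $t-\tau_i\le1$, and $|\xbi-\wti|\le\span(\xbi,x_i,\wti)$, while the term $\alt|\wti-w_i|$ is nonnegative; hence both sums above are nonnegative and either may be discarded without reversing the inequality.

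For the unconditional (``otherwise'') bound, the hypothesis $4\kpi(1+\alt)\le\loa$ gives $\kpi\le\frac{\loa}{4(1+\alt)}$, and since $1<\alt<2$ we get $\frac{\kpi(\alt-1)}{2}<\kpi\le\frac{\loa}{4(1+\alt)}$; so $\frac{\kpi(\alt-1)}{2}$ is the smaller of the two coefficients. Replacing the $A$-coefficient by it in the display (legitimate because $\sum_{i\in A}\phi_i\ge0$) gives $\frac{d\phi}{dt}\le-\frac{\kpi(\alt-1)}{2}\phi-\sum_i p_i\chi_i$, valid at all times and in particular when $\phi$ is small.

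The faster bound uses the hypothesis $\phi\ge 2(1+2\alt)\sum_i|\wti-w_i|p_i$, and the one genuinely non‑routine step of the whole proof is to convert it into a lower bound on the ``good'' mass $\sum_{i\in A}\phi_i$. For $i\in B$ we have $\span(\xbi,x_i,\wti)<\frac12|\wti-w_i|$, so dropping the negative middle term of $\phi_i$ gives $\phi_i\le p_i(\frac12+\alt)|\wti-w_i| = \frac12(1+2\alt)|\wti-w_i|p_i$; summing over $i\in B$ and applying the hypothesis, $\sum_{i\in B}\phi_i\le\frac12(1+2\alt)\sum_i|\wti-w_i|p_i\le\frac14\phi$, hence $\sum_{i\in A}\phi_i\ge\frac34\phi$. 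Substituting into the display and discarding the nonnegative $B$-sum gives $\frac{d\phi}{dt}\le-\frac{3\loa}{16(1+\alt)}\phi-\sum_i p_i\chi_i\le-\frac{\loa}{8(1+\alt)}\phi-\sum_i p_i\chi_i$, as required.

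Finally, since Lemma~\ref{lem:war-cont-progress} asserts its bounds for the one‑sided derivatives at the instants when a price update occurs, the same termwise summation delivers the one‑sided versions of both conclusions there as well. I expect the only part calling for any care to be the $B$-to-$\frac14\phi$ estimate above (together with the attendant nonnegativity of the individual $\phi_i$); everything else is bookkeeping on top of Lemma~\ref{lem:war-cont-progress}.
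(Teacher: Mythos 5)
Your proposal is correct and follows essentially the same route as the paper: split the goods according to the two cases of Lemma~\ref{lem:war-cont-progress}, use the hypothesis $\phi \ge 2(1+2\alt)\sum_i|\wti-w_i|p_i$ to show the ``good'' goods carry a constant fraction of $\phi$ (you get $\tfrac34\phi$; the paper states $\tfrac12\phi$, either suffices), and sum. Your explicit verification of $\phi_i\ge 0$ and of the coefficient comparison $\frac{\kpi(\alt-1)}{2}\le\frac{\loa}{4(1+\alt)}$ for the ``otherwise'' branch fills in details the paper leaves implicit, but the argument is the same.
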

\begin{proof}
Let $I = \{i~|~|\wti - w_i| \le 2\span(\xbi, x_i, \wti)\}$.
If $\phi \ge 2(1 + 2\alt) \sum_i |\wti - w_i| p_i$, then
$\sum_{i\in I} \phi_i \ge \frac12 \phi$.
Thus
\begin{eqnarray*}
\frac{d\phi}{dt} & \le & \sum_{i\in I} \frac{d\phi_i}{dt} - \sum_{i\notin I} p_i\chi_i \\
& \le & - \frac{\loa}{4(1 +\alt)} \sum_{i\in I} \phi_i  - \sum_i p_i\chi_i \\
& \le & - \frac{\loa}{8(1 +\alt)} \phi - \sum_i p_i\chi_i.
\end{eqnarray*}
\end{proof}

In Lemma \ref{lem:war-updates-only-help-templ} below, $\wti$ takes on the role
of $w_i$ in Corollary \ref{cor:updates-only-help}.
Accordingly,
the bound of $x_i \le d w_i$ from Assumption \ref{ass:bdd-dem}
is replaced by a bound of $x_i \le d \wti$,
which is ensured by the following assumption.
\begin{assumption}
\label{ass:revised-dem-bdd}
$x_i \le \frac23 d w_i$ for all $i$, where $d\ge 2$.
\end{assumption}
\begin{lemma}
\label{lem:xbi-bound}
If Assumption \ref{ass:revised-dem-bdd} and Constraint \ref{cst:wti-bdd} hold,
then $x_i \le d \wti$.
\end{lemma}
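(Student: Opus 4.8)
The plan is to chain two inequalities: first convert Constraint~\ref{cst:wti-bdd} into a comparison between $w_i$ and $\wti$, then feed that into the demand bound supplied by Assumption~\ref{ass:revised-dem-bdd}. Concretely, Constraint~\ref{cst:wti-bdd} states $|\wti - w_i| \le \frac13 w_i$, which in particular gives $\wti \ge w_i - \frac13 w_i = \frac23 w_i$, i.e.\ $w_i \le \frac32 \wti$. This is exactly the bound already recorded in Lemma~\ref{lem:wti-constr}, so one can simply cite that lemma rather than re-deriving it.

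With $w_i \le \frac32 \wti$ in hand, I would combine it with Assumption~\ref{ass:revised-dem-bdd}, namely $x_i \le \frac23 d\, w_i$, to obtain $x_i \le \frac23 d\, w_i \le \frac23 d \cdot \frac32 \wti = d\,\wti$, which is the claimed bound. The factor $\frac23$ in Assumption~\ref{ass:revised-dem-bdd} was evidently chosen precisely so that this cancellation produces the clean statement $x_i \le d\,\wti$.

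There is essentially no obstacle here; this is a one-line calculation whose purpose is bookkeeping, letting later lemmas (where $\wti$ plays the role that $w_i$ played in Corollary~\ref{cor:updates-only-help}) invoke the hypothesis ``$x_i \le d\,\wti$'' while the ambient assumption is phrased in terms of the true supply $w_i$. The only point requiring the slightest care is using the correct direction of the two-sided bound $|\wti - w_i|\le\frac13 w_i$: we need the lower bound $\wti \ge \frac23 w_i$ (equivalently $w_i \le \frac32\wti$), not the upper bound.
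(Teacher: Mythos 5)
Your proof is correct and is exactly the paper's argument: the paper also proves this by invoking Lemma~\ref{lem:wti-constr} (which packages $|\wti - w_i| \le \frac13 w_i$ into $w_i \le \frac32\wti$) and combining it with $x_i \le \frac23 d\,w_i$. Nothing further is needed.
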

\begin{proof}
This is immediate from Lemma \ref{lem:wti-constr}.
\end{proof}
\begin{lemma}
\label{lem:war-updates-only-help-templ}
If Constraint \ref{cst:wti-bdd} holds and $x_i \le d \wti$,
then when $p_i$ is updated, $\phi$ increases by at most the following:
\\
(i) With a toward $\wti$ update:
\begin{equation}
\label{eqn:cstrt-1''-lem-updates-only-help}
\loa |\xbi - \wti| p_i + \frac12\alt \wti|\Del_i p_i|  - \wti|\Delta_i p_i|.
\end{equation}
(ii) With an away from $\wti$ update:
\begin{equation}
\label{eqn:cstrt-2''-lem-updates-only-help}
\left(1 + \frac{2E d}{1-\lE} \right) \wti |\Delta_i p_i|
 + \frac12\alt \wti|\Del_i p_i|
- (1- \loa )  p_i| \overline{x}_i  - {\wti}|.
\end{equation}
\end{lemma}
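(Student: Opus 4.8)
The plan is to reduce the statement to Lemma~\ref{lem:drop-span}. First I would observe that the warehouse potential
$\phi_i = p_i[\span(x_i,\xbi,\wti) - \loa(t-\tau_i)|\xbi-\wti| + \alt|\wti-w_i|]$
is exactly the template potential of Equation~\ref{eqn:phi-asynch-def-repeat} with $\wti$ in the role of $w_i$, together with one extra summand $\alt p_i|\wti-w_i|$; this extra summand is precisely the quantity $\psi_i$ from the setup of Lemma~\ref{lem:drop-span} (so that $\Delta_i\psi_i$ is the amount by which it increases when $p_i$ is updated). Since the hypothesis here supplies $x_i \le d\wti$, the argument of Lemma~\ref{lem:drop-span} applies essentially verbatim with $w_i$ replaced by $\wti$ and $x_i^u = \xbi$: for a toward-$\wti$ update it bounds the increase to the template part of $\phi$ by the expression in~(\ref{eqn:0cstrt-1-lem-updates-only-help}), and for an away-from-$\wti$ update by the expression in~(\ref{eqn:0cstrt-2-lem-updates-only-help}), each read with $w_i \to \wti$ and $x_i^u = \xbi$.

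Next I would bound $\Delta_i\psi_i$. A price update is instantaneous, so the warehouse stock $s_i$ — and hence $\wti = w_i - \kpi(s_i - s_i^*)$ and $|\wti - w_i|$ — does not change during the update; the only factor of $\psi_i = \alt p_i|\wti-w_i|$ affected is $p_i$, so $\Delta_i\psi_i = \alt\,\Del_i p_i\,|\wti-w_i| \le \alt|\Del_i p_i|\,|\wti-w_i|$. Invoking Lemma~\ref{lem:wti-constr} (which rests on Constraint~\ref{cst:wti-bdd}) gives $|\wti-w_i| \le \tfrac12\wti$, hence $\Delta_i\psi_i \le \tfrac12\alt\wti|\Del_i p_i|$. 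Substituting this bound for $\Delta_i\psi_i$ into the two expressions from the previous paragraph yields exactly the bounds~(i) $\loa|\xbi-\wti|p_i + \tfrac12\alt\wti|\Del_i p_i| - \wti|\Del_i p_i|$ and~(ii) $\bigl(1 + \tfrac{2Ed}{1-\lE}\bigr)\wti|\Del_i p_i| + \tfrac12\alt\wti|\Del_i p_i| - (1-\loa)p_i|\xbi-\wti|$.

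The step I expect to need the most care is confirming that the proof of Lemma~\ref{lem:drop-span} is genuinely generic in the target value, so that the substitution $w_i \to \wti$ is legitimate. I would re-check that the only places $w_i$ enters that argument are: the bound on $\Delta_i x_i$ via Lemma~\ref{lem:bound-Delta-x} (which is applied to the pre-update demand, and uses exactly the hypothesis $x_i \le d\wti$ in the analogues of its Cases~2.1 and~2.2, i.e.\ $x_i \le \wti < \xbi$ and $x_i \ge \wti > \xbi$); the spending-neutral identity of Lemma~\ref{lem:spdng-neutr} (purely algebraic in whatever target value one plugs in); the inequality $t-\tau_i \le 1$; and the WGS spending-transfer step — and I would verify that this last step touches only the $\span$ terms of the other $\phi_j$, $j\ne i$, none of which changes under a $p_i$-update except through $\Delta_i x_j$, so in particular the $\alt p_j|\widetilde{w}_j - w_j|$ and time-discount terms in the $\phi_j$ are inert. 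Everything beyond these checks is the same bookkeeping and case split already carried out in Lemma~\ref{lem:drop-span}.
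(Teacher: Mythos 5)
Your proposal matches the paper's proof: the paper likewise invokes Lemma~\ref{lem:drop-span} with $\wti$ in place of $w_i$ and $x_i^u = \xbi$, identifies $\psi_i = \alt p_i|\wti - w_i|$ so that $\Del_i\psi_i = \alt\,\Del_i p_i\,|\wti - w_i|$, and applies Lemma~\ref{lem:wti-constr} to get $|\wti - w_i| \le \wti/2$ and hence $\Del_i\psi_i \le \tfrac12\alt\wti|\Del_i p_i|$. Your extra check that the argument of Lemma~\ref{lem:drop-span} is generic in the target value is a sound (and welcome) elaboration of what the paper leaves implicit.
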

\begin{proof}
We apply Lemma \ref{lem:drop-span},
with $x_i^u=\xbi$ and $\wti$ replacing $w_i$, as here
$\phi_i$ is defined in terms of $\wti$ rather than $w_i$.
Also, now  $\psi_i = \alt p_i |\wti - w_i|$.
So $\Del_i \psi_i = \alt \Del_i p_i |\wti - w_i| $.
By Lemma \ref{lem:wti-constr}, $|\wti - w_i| \le \wti/2$,
so
$\Del_i \psi_i \le \alt |\Del_i p_i| \wti/2$.
\end{proof}
\begin{corollary}
\label{cor:war-updates-only-help}
If Constraint \ref{cst:wti-bdd} holds and $x_i \le d \wti$,
$\frac{\alt}{2} + \max\{\frac32,(d-1)\}\alpha_1 \leq 1$,
and $\loa + \frac43 \lmbd \left(1 + \frac{2E d}{1-\lE}  + \frac12 \alt \right) \le 1$,
then when $p_i$ is updated, $\phi$ only decreases.
\end{corollary}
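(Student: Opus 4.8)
The plan is to follow the proof of Corollary~\ref{cor:updates-only-help}, now invoking Lemma~\ref{lem:war-updates-only-help-templ} in place of Lemma~\ref{lem:drop-span}. First I would recall that the target excess demand satisfies $\overline{z}_i = \xbi - w_i + \kpi(s_i - s_i^*) = \xbi - \wti$, so the update rule~(\ref{eqn:pr_change}) gives $\Del_i p_i = \lmbd p_i\,\mathrm{median}\{-1, \tfrac{\xbi - \wti}{w_i}, 1\}$, whence $|\Del_i p_i| = \lmbd p_i \min\{1, |\xbi - \wti|/w_i\}$. Constraint~\ref{cst:wti-bdd}, via Lemma~\ref{lem:wti-constr}, supplies $\tfrac23 \le \wti/w_i \le \tfrac43$; these are the only facts about $\wti$ I expect to need.

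For a toward-$\wti$ update, Lemma~\ref{lem:war-updates-only-help-templ}(i) says $\phi$ increases by at most $\loa|\xbi - \wti| p_i - (1 - \tfrac12\alt)\wti|\Del_i p_i|$, so it suffices to show $\loa|\xbi - \wti| p_i \le (1 - \tfrac12\alt)\wti|\Del_i p_i|$. I would split on the $\min$. If $|\xbi - \wti| \le w_i$, then $|\Del_i p_i| = \lmbd p_i |\xbi - \wti|/w_i$, so $\wti|\Del_i p_i| \ge \tfrac23\lmbd p_i|\xbi - \wti|$ and the claim holds provided $\tfrac12\alt + \tfrac32\alpha_1 \le 1$. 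If $|\xbi - \wti| > w_i$, then $|\Del_i p_i| = \lmbd p_i$, and since $0 \le \xbi$ and the hypothesis $x_i \le d\wti$ gives $\xbi \le d\wti$, with $d \ge 2$ we get $|\xbi - \wti| \le (d-1)\wti$; so the claim holds provided $\tfrac12\alt + (d-1)\alpha_1 \le 1$. Both cases are covered by the hypothesis $\tfrac12\alt + \max\{\tfrac32, d-1\}\alpha_1 \le 1$.

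For an away-from-$\wti$ update, Lemma~\ref{lem:war-updates-only-help-templ}(ii) bounds the increase by $\bigl(1 + \tfrac{2Ed}{1-\lE} + \tfrac12\alt\bigr)\wti|\Del_i p_i| - (1 - \loa)p_i|\xbi - \wti|$. Using $|\Del_i p_i| \le \lmbd p_i|\xbi - \wti|/w_i$ together with $\wti \le \tfrac43 w_i$ gives $\wti|\Del_i p_i| \le \tfrac43\lmbd p_i|\xbi - \wti|$, so the increase is at most $\bigl[\tfrac43\lmbd\bigl(1 + \tfrac{2Ed}{1-\lE} + \tfrac12\alt\bigr) - (1 - \loa)\bigr]p_i|\xbi - \wti|$, which is nonpositive exactly by the hypothesis $\loa + \tfrac43\lmbd\bigl(1 + \tfrac{2Ed}{1-\lE} + \tfrac12\alt\bigr) \le 1$. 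This would dispatch all cases.

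I do not expect a serious obstacle here. The two delicate points are: recalling that $\overline{z}_i = \xbi - \wti$, so that $|\Del_i p_i|$ is governed by $|\xbi - \wti|$ and $w_i$ rather than by $\wti$; and noticing that the two subcases of the toward-$\wti$ analysis are precisely what forces the $\max\{\tfrac32, d-1\}$ term in the first hypothesis --- the bound $|\xbi - \wti| \le (d-1)\wti$ used there relies on $d \ge 2$ and $\xbi \ge 0$.
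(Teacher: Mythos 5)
Your proposal is correct and follows essentially the same route as the paper: it invokes Lemma~\ref{lem:war-updates-only-help-templ}, splits the toward-$\wti$ case on whether $|\xbi - \wti|$ exceeds $w_i$, and uses the bounds $\tfrac23 \le \wti/w_i \le \tfrac43$ from Lemma~\ref{lem:wti-constr} exactly as the paper does. Your explicit derivation of $|\xbi-\wti|\le(d-1)\wti$ from $x_i\le d\wti$, $\xbi\ge 0$, and $d\ge 2$ is in fact spelled out more carefully than in the paper's own proof.
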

\begin{proof}
Recall  that by Lemma \ref{lem:xbi-bound}, $\xbi \le (d-1)\wti$.

\smallskip
\noindent
{\bf Case 1:} An update  toward ${w_i}$:
\\
The increase in $\phi$ is is bounded by
(\ref{eqn:cstrt-1''-lem-updates-only-help}).
\\
(i) $|\xbi - \wti|  \leq w_i $.

$\Delta_i p_i = \lambda p_i (\xbi - \wti) / w_i$, and
by Lemma \ref{lem:wti-constr} $ w_i \le \frac32 \wti$.
Thus
for the increase to be non-positive,
$\frac32 \alpha_1 + \frac12 \alt \leq 1$ suffices.
\\
(ii) $|\xbi - \wti|  \geq w_i $.

As $|\xbi - \wti| \le (d-1) \wti$ and $|\Delta_i p_i| = \lambda p_i $,
for the increase to be non-positive,
$\alpha_1(d-1) + \frac12 \alt \leq 1$ suffices.

\smallskip
\noindent
{\bf Case 2:} An update  away from ${w_i}$:

The increase in $\phi$ is is bounded by
(\ref{eqn:cstrt-2''-lem-updates-only-help}).
As $\Del_i p_i \le \lmbd p_i |\xbi - \wti|/w_i$,
for the increase to be non-positive, the condition
$\frac{\wti}{w_i}\lmbd  \left(1 + \frac{2E d}{1-\lE}  + \frac12 \alt \right) \le 1 - \loa$ suffices.
By Lemma \ref{lem:wti-constr}, $\wti \le \frac43 w_i$, so the condition
of the previous sentence
is subsumed by
$\loa + \frac43 \lmbd \left(1 + \frac{2E d}{1-\lE}  + \frac12 \alt \right) \le 1$.
\end{proof}

\Xomit{
{\bf Notation}. Let $\kmin$ and $\kmax$ denote $\min_i \kpi$ and $\max_i \kpi$, respectively.
}

\noindent
{\bf Theorem \ref{lem:war-progr-bdd-dem}.}~%
\emph{%
If Constraint \ref{cst:wti-bdd} holds and $x_i \le d \wti$,
$\frac{\alt}{2} + \alpha_1  \max\{\frac32,(d-1)\} \le 1$,
$\loa + \frac43 \lmbd  \left(1 + \frac{2E d}{1-\lE}  + \frac12 \alt \right) \le 1$,
$ 4\kpi(1 + \alt) \le \loa \leq \frac{1}{2}$,
$\frac{\kmin(\alt -1)}{2} \le 1$,
and each price is updated at least once every day,
then $\phi$ decreases by at least a $1 - \frac{\kmin(\alt -1)}{4}$ factor daily.
}

\emph{%
In fact, if
$\phi \ge 2(1 + 2\alt) \sum_i |\wti - w_i| p_i$,
then $\phi$ decreases by at least a $1 - \frac{\loa}{8(1 +\alt)}$ factor daily.
}
\begin{proof}
By Corollary \ref{cor:war-updates-only-help}, $\phi$ only decreases whenever a price update occurs.
By Lemma \ref{lem:war-cont-progress}, $d\phi/dt \le  -\frac{\kmin(\alt -1)}{2} \phi$,
as $\chi_i =0$ for all $i$,
which implies $\phi(t+1)\le e^{-\frac{\kmin(\alt -1)}{2}}\phi(t) \le (1 -\frac{\kmin(\alt -1)}{4}) \phi(t)$ as $\frac{\kmin(\alt -1)}{2} \le 1$,
arguing as in the proof of Lemma \ref{lem:progr-bdd-dem}.

The second claim is shown in exactly the same way.
\end{proof}

\Xomit{
We finish by noting the relation between $\phi$ and the misspending $S$.
\begin{lemma}
\label{lem:missp-pot-war}
$\frac12 \phi(t) \le S(t) \le \frac{a} { 1 - \loa} \phi(t-1)$.
\end{lemma}
\begin{proof}
This is proved in eseentially the same way as Lemma \ref{lem:missp-pot}.
The only change is the presence of the term $\alt |\wti - w_i|$ in both $\phi$ and $S$.
\end{proof}
} 
\subsection{Bounds on Demands and Prices}

Next, we
determine an $f$-bound on prices given that they are $c$-demand bounded.
To obtain this we need to assume a bounded elasticity of wealth $E'=0$.
Recall that when this holds demands are called {\em normal}.

\begin{definition}
For $c \ge 1$, define $ r^{(c)} = \max_i  p_i^* / p_i^{(c)} $
and $ r^{(1/c)} = \max_i  p_i^{(1/c)}/  p_i^*  $.
\end{definition}

\noindent
{\bf Notation}. Let $ \rho = \max_{i,j} \frac{w_ip_i^*}{w_jp_j^*} $.

\begin{lemma}
\label{lemma:price-upper-bound}
$ r^{(1/c)} \le c n \rho $.
\end{lemma}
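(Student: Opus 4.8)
The plan is to exploit the fact that in a Fisher market the buyers always bring the same total amount of money $M = \sum_j v_j$, independent of the supply vector, and that at any equilibrium with strictly positive prices every good is fully sold, so that the total value of the supply evaluated at the equilibrium prices equals $M$. Applying this observation both to the market with supplies $w_i$ and to the market with supplies $(1/c)w_i$ gives $\sum_i p_i^* w_i = M$ and $\sum_i p_i^{(1/c)} (w_i/c) = M$; hence $\sum_i p_i^{(1/c)} w_i = c\,M = c\sum_i p_i^* w_i$.

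Next I would argue coordinate by coordinate. Fix a good $k$. Since every term $p_i^{(1/c)} w_i$ is nonnegative, $p_k^{(1/c)} w_k \le \sum_i p_i^{(1/c)} w_i = c\sum_i p_i^* w_i \le c\,n\max_i (w_i p_i^*)$, using $\sum_i a_i \le n\max_i a_i$ for nonnegative reals. By the definition of $\rho$, taking the pair $(i,k)$ in the maximum yields $w_i p_i^* \le \rho\, w_k p_k^*$ for every $i$, so $\max_i (w_i p_i^*) \le \rho\, w_k p_k^*$. Combining these bounds gives $p_k^{(1/c)} w_k \le c\,n\,\rho\, w_k p_k^*$, and dividing by $w_k p_k^* > 0$ gives $p_k^{(1/c)}/p_k^* \le c\,n\,\rho$. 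As $k$ was arbitrary, $r^{(1/c)} = \max_k p_k^{(1/c)}/p_k^* \le c\,n\,\rho$, which is the claim.

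The only point that needs care is the assertion that all goods are sold at the equilibrium for supplies $(1/c)w_i$, so that the total value of that supply is exactly $M$ rather than merely at least $M$; this is the standard consequence of non-satiated (monotone) utilities together with strictly positive equilibrium prices, and I would simply invoke it. I note that this bound does not use the normality (zero wealth-elasticity) hypothesis at all: that assumption is needed only for the reverse comparison, namely bounding $r^{(c)} = \max_i p_i^*/p_i^{(c)}$, where the supply is \emph{increased} and the one-line pigeonhole argument above no longer applies.
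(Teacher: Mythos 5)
Your proof is correct and follows essentially the same route as the paper's: both bound the spending on the single good $k$ at the equilibrium for supplies $w_i/c$ (which is $\tfrac1c w_k p_k^{(1/c)}$, at most the total budget $M$), bound $M$ by $n\max_i w_i p_i^*$, and then invoke the definition of $\rho$ to compare $\max_i w_i p_i^*$ with $w_k p_k^*$. The one point you flag — that the supply is fully sold at equilibrium so its value is exactly $M$ — is the same fact the paper uses implicitly when it writes $M_b \le M_a \le n w_h p_h^*$.
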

\begin{proof}
Let $ h = \arg\max_i w_ip_i^* $. Then the money $ M_a $ spent at
equilibrium is at most $ n w_h p_h^* $. Let
$ k = \arg\max_i p_i^{(1/c)} /p_i^* $.
The money $ M_b $ spent on good $k$
at prices $ {\boldp }^{(1/c)} $ is $ \frac1c w_k p_k^{(1/c)} $.
By WGS, $ M_b \le M_a $. But
\[ \frac{1}{c} r^{(1/c)} w_kp_k^* = \frac{1}{c} w_k p_k^{(1/c)} =
  M_b \le M_a \le n w_h p_h^*. \]
\[ \text {so ~} r^{(1/c)} \le cn \frac{w_hp_h^*}{w_k p_k^*} \le c n \rho. \]
\end{proof}

Obtaining a bound on $ r^{(c)} $ for $c \ge 1$
entails a more elaborate argument.

Consider the following process for decreasing prices from
from $ \boldp^* $ to $ \boldp^{(c)} $, where
WLOG $ p_1^* / p_1^{(c)} \ge p_2^*/ p_2^{(c)}
\ge \cdots \ge p_n^* / p_n^{(c)} $.

Begin by decreasing all prices uniformly, until
$ p_n $ reaches $ p_n^{(c)} $.
Continue reducing all prices except $ p_n $ uniformly until
$ p_{n-1} $ reaches $ p_{n-1}^{(c)} $, and so forth.
Call this the \emph{uniform price reduction process} (UPR for short).

\begin{lemma}
\label{lemma:price-order-reduction}
$ r^{(c)} $ is maximized when demand increases as follows, as the UPR
is applied.
During the reduction of $p_n$ from  $p_n^* $ to $ p_n^{(c)} $
only $ x_n $ changes \emph{(}by increasing to $ c w_n $\emph{)}.
In general, during the reduction of $p_i$ from
$ p_i^* p_{ i+1}^{(c)} / p_{i+1}^* $ to $ p_i^{(c)} $,
only $ x_i $ changes.
\end{lemma}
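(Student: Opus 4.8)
The plan is to read Lemma~\ref{lemma:price-order-reduction} as an extremal claim over all WGS, normal Fisher markets and to prove it by tracking prices and demands jointly along the UPR process, then showing by an exchange argument that concentrating each good's demand increase into its own phase is the worst case. First I would fix the equilibrium bookkeeping: at prices $\boldp^*$ with supplies $w_i$ every good is fully sold and all the money $M$ is spent, so $x_i(\boldp^*)=w_i$ and $\sum_i p_i^* w_i = M$; similarly $x_i(\boldp^{(c)})=c w_i$ and $\sum_i c w_i p_i^{(c)} = M$. Next I would record the purely arithmetic consequence of the UPR schedule: within each phase the still-moving prices are scaled by a common factor, so the ratio $p_i/p_{i+1}$ is unchanged up through the phase that pins $p_{i+1}$ at $p_{i+1}^{(c)}$; hence at the start of the phase in which $p_i$ is the last active price one has $p_i = p_i^*\, p_{i+1}^{(c)}/p_{i+1}^*$, and that phase carries $p_i$ down to $p_i^{(c)}$, exactly the price trajectory named in the statement. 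Writing $\gamma_i\le 1$ for the common scaling in phase $i$, and noting $p_1$ is scaled in all $n$ phases, gives $r^{(c)} = p_1^*/p_1^{(c)} = \prod_{i=1}^n 1/\gamma_i$, so maximizing $r^{(c)}$ amounts to making each $\gamma_i$ as small as the constraints allow.

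The substance is the monotonicity available along the run. I would use three facts: WGS (lowering any price weakly lowers the demand, and in the Fisher setting the spending, on every other good); the lower elasticity bound $\frac{x_i}{p_i}\le -\frac{dx_i}{dp_i}$ (lowering $p_i$ weakly raises $x_i$, equivalently $p_i x_i$ is non-increasing in $p_i$; quantitatively this is Lemma~\ref{lem:x-bdd}); and normality of demand. These give: demands are weakly increasing throughout the UPR run; after the phase that fixes $p_k=p_k^{(c)}$ one must already have $x_k\ge c w_k$, since the remaining phases only lower $p_1,\dots,p_{k-1}$, which by WGS only lower $x_k$, yet $x_k$ must finish at $c w_k$; and in phase $i$ the factor $\gamma_i$ is bounded below by exactly the amount needed to push $x_i$ from its start-of-phase value up to its required end-of-phase value (via Lemma~\ref{lem:x-bdd}). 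Since that start-of-phase value is as small as possible, namely $w_i$, precisely when no earlier phase has already raised $x_i$, every $\gamma_i$ is forced to be smallest, and $\prod_i 1/\gamma_i$ largest, in the staged configuration where good $i$'s demand changes only during its own phase and nothing else moves.

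The step I expect to be the main obstacle is making ``staged is worst'' rigorous: showing that starting from an arbitrary WGS normal market one can pass to the staged demand pattern without ever decreasing $r^{(c)}$ while staying inside the class of WGS normal Fisher markets that realize some pair of equilibria $\boldp^*,\boldp^{(c)}$. I would do this either by a local exchange argument --- ``delay'' any premature rise of $x_j$ to $j$'s own phase, argue this only shrinks some $\gamma$ and hence enlarges $\prod 1/\gamma_i$, and check that WGS and normality are preserved --- or, more cleanly, by exhibiting the staged market explicitly as a separable construction in which, over the relevant price range, the demand for good $i$ depends only on $p_i$ and on aggregate wealth, then verifying that it realizes the prescribed price vectors and dominates any competitor in $r^{(c)}$. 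This explicit-construction route has the further advantage of setting up the exact evaluation of the $\gamma_i$'s (and hence of $r^{(c)}$) that the subsequent lemma needs.
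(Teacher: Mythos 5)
Your bookkeeping is sound and matches the paper's: the phase structure of the UPR does give $p_i = p_i^*\,p_{i+1}^{(c)}/p_{i+1}^*$ at the start of good $i$'s phase, normality (plus WGS, via the decomposition into a uniform drop followed by restoring the frozen prices) gives that demands weakly increase along the run, and WGS gives $x_k\ge cw_k$ once $p_k$ is pinned. The gap is in the mechanism you propose for the extremal step itself. You want a per-phase lower bound ``$\gamma_i$ is at least the amount needed to push $x_i$ from its start-of-phase value to $cw_i$'' via Lemma~\ref{lem:x-bdd}, minimized when the start value is $w_i$. But Lemma~\ref{lem:x-bdd} is an own-price bound, and in phase $i$ the prices $p_1,\dots,p_{i-1}$ fall together with $p_i$: normality places no \emph{upper} bound on how much a uniform price drop (equivalently, a wealth increase) can inflate $x_i$, so a tiny reduction $1/\gamma_i$ can in principle carry $x_i$ all the way to $cw_i$ and beyond. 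Hence no valid lower bound on $\gamma_i$ of the form you need exists from these hypotheses, and the quantity $x_i$ must reach by the end of its phase is only bounded below by $cw_i$, not above, which kills the other direction too. Even granting such per-phase bounds, they are coupled through the budget — the point you yourself flag as ``the main obstacle'' and leave to an unexecuted exchange argument or construction. So the one step that is the actual content of the lemma is missing.

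The paper closes that step through spending, not elasticity: total spending is conserved at $M$ and the final spending on good $j$ is $cw_jp_j^{(c)}$, so minimizing $p_1^{(c)},\dots,p_{n-1}^{(c)}$ (which is what maximizes $r^{(c)}$, since good $1$ realizes the maximum ratio) is the same as maximizing the money absorbed by good $n$, i.e.\ maximizing $p_n^{(c)}$; and $p_n^{(c)}$ is largest exactly when the entire demand increase generated during the first phase accrues to good $n$, so that $x_n$ hits $cw_n$ after the least uniform reduction. Induction on goods $1,\dots,n-1$ with the residual budget finishes the argument. This budget view is also what Lemma~\ref{lemma:price-lower-bound} needs downstream — the relation $Ws=(W-dW)(s-ds)+c\,dW(s-ds)$ there is a spending identity, not an elasticity computation — so an elasticity-based evaluation of the $\gamma_i$ would not recover the $c(\rho n)^{c-1}$ bound even if it could be made rigorous. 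Replace your step on bounding $\gamma_i$ with this conservation-of-spending argument.
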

Loosely speaking, the demands increase one time.
\begin{proof}
By normality, for those goods whose prices are decreasing during the UPR,
demand either stays the same or is increased.

To maximize the price reduction goods $1, \cdots, n-1$ can
achieve one needs to
minimize the amount of money spent on them.
This implies that $ p_n^{(c)} $ needs to be as large as possible,
which occurs if all the increase in demand goes to good $n$
as the UPR proceeds, in going from $ p_n $ to $ p_n^{(c)} $.
\Xomit{
(For if in the candidate best solution, the demand for $ x_n $
is above $fw_n$ when all prices are reduced by a factor
$ p_n^* / p_n^{(c)} $, then one could imagine transferring the ``excess''
spending on good $n$ at this price  to the other goods to which it will
eventually be transferred as the UPR continues. This does not affect the
price reduction achievable on the other goods. Now one
can see it is possible
to have a larger $ p_n^{(c)} $, enabling some other $ p_i^{(c)} $'s to
be smaller.)
}

By induction, all other demands are increased as specified in the lemma.
\end{proof}

\begin{lemma}
\label{lemma:price-lower-bound}
$ r^{(c)} \le c (\rho n)^{(c-1)}$.
\end{lemma}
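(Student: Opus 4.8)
The plan is to turn the worst-case demand trajectory identified in Lemma~\ref{lemma:price-order-reduction} into a closed-form product for $r^{(c)}$ and then bound that product factor by factor. Fix the ordering of goods used by the UPR, so that $p_1^*/p_1^{(c)}\ge\cdots\ge p_n^*/p_n^{(c)}$, and write $\gamma_i=p_i^{(c)}/p_i^*$, so that $\gamma_1\le\cdots\le\gamma_n$ and $r^{(c)}=1/\gamma_1$; also set $\gamma_{n+1}=1$. Write $a_i=p_i^*w_i$ and $S_i=a_1+\cdots+a_i$, so $S_0=0$ and $S_n=M$, the daily money supply.

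First I would read off a recurrence for $\gamma_i/\gamma_{i+1}$ from the fact that every buyer always spends its entire budget, so total spending is $M$ at every price vector. During phase $i$ of the UPR the prices $p_1,\dots,p_i$ are all multiplied by the common factor $\gamma_i/\gamma_{i+1}$ (going from $p_j^*\gamma_{i+1}$ to $p_j^*\gamma_i$), the prices $p_{i+1},\dots,p_n$ stay frozen at their final values $p_j^{(c)}$, and by Lemma~\ref{lemma:price-order-reduction} the only demand that changes is $x_i$, which rises from $w_i$ to $cw_i$; the remaining demands are $w_j$ for $j<i$ and $cw_j$ for $j>i$ at both endpoints of the phase. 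Since the spending on the frozen goods $i+1,\dots,n$ is identical at both endpoints, equating the spending on goods $1,\dots,i$ gives $\gamma_{i+1}S_i=\gamma_i\,(S_{i-1}+c\,a_i)$, i.e.
\[
\frac{\gamma_i}{\gamma_{i+1}}=\frac{S_i}{S_{i-1}+c\,a_i}.
\]
Telescoping over $i=1,\dots,n$ (with $\gamma_{n+1}=1$) yields the exact identity
\[
r^{(c)}=\frac{1}{\gamma_1}=\prod_{i=1}^{n}\frac{S_{i-1}+c\,a_i}{S_i}
       = c\prod_{i=2}^{n}\frac{S_{i-1}+c\,a_i}{S_i},
\]
where the last step uses that the $i=1$ factor is exactly $c\,a_1/a_1=c$.

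Next I would bound each of the remaining factors. For $i\ge2$ put $r=S_i/S_{i-1}>1$; then $a_i=S_{i-1}(r-1)$ and a one-line computation gives $\frac{S_{i-1}+c\,a_i}{S_i}=c-\frac{c-1}{r}$. So it suffices to verify the elementary inequality $c-\frac{c-1}{r}\le r^{c-1}$ for all real $r\ge1$ and $c\ge1$: the function $h(r)=r^{c-1}-c+(c-1)/r$ has $h(1)=0$ and $h'(r)=(c-1)r^{-2}(r^{c}-1)\ge0$ for $r\ge1$, so $h\ge0$ there. Hence $\frac{S_{i-1}+c\,a_i}{S_i}\le(S_i/S_{i-1})^{c-1}$, and the product telescopes:
\[
r^{(c)}\le c\prod_{i=2}^{n}\left(\frac{S_i}{S_{i-1}}\right)^{c-1}
       = c\left(\frac{S_n}{S_1}\right)^{c-1}=c\left(\frac{M}{a_1}\right)^{c-1}.
\]
Finally, $M=\sum_j a_j\le n\max_j a_j$, and by the definition of $\rho$ we have $\max_j a_j\le\rho\,a_1$, so $M/a_1\le\rho n$ and $r^{(c)}\le c(\rho n)^{c-1}$, as claimed.

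The step I expect to need the most care is the first one, which is pure bookkeeping: keeping straight which prices are scaled and which demands have already moved at the two endpoints of each UPR phase, confirming that the spending on the frozen goods cancels, and treating the $i=1$ term separately (it is exactly $c$, and because $S_0=0$ the per-factor inequality is only useful, and only needed, for $i\ge2$). Once the product identity $r^{(c)}=c\prod_{i\ge2}(S_{i-1}+c\,a_i)/S_i$ is established, the per-factor inequality and the bound $M/a_1\le\rho n$ are routine.
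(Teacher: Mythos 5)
Your proof is correct, and it takes a genuinely different route from the paper's. The paper also reduces to the one-at-a-time worst-case trajectory of Lemma~\ref{lemma:price-order-reduction}, but then passes to a continuous limit: it argues that repeatedly subdividing goods only increases the achievable price drop, writes the differential relation $W\,ds=(c-1)s\,dW$ for the measure $W$ of goods whose demand has not yet jumped, solves it to get $W=W(1)s^{1/(c-1)}$, stops at $W=W(1)/(\rho n)$, and adds a final factor of $c$ for the last good. You stay entirely discrete: spending conservation across each UPR phase gives the exact recurrence $\gamma_{i+1}S_i=\gamma_i(S_{i-1}+c\,a_i)$ and hence the closed form $r^{(c)}=c\prod_{i\ge2}(S_{i-1}+c\,a_i)/S_i$, and your elementary inequality $c-(c-1)/r\le r^{c-1}$ plays exactly the role of the paper's subdivision step --- it certifies, factor by factor, that the discrete jump is dominated by what the continuous process would yield over the same spending interval $[S_{i-1},S_i]$, after which the product telescopes to $(S_n/S_1)^{c-1}\le(\rho n)^{c-1}$. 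Your version buys rigor and self-containedness (the paper's ``iterate ad infinitum'' and ``to first order'' steps are informal, whereas your per-factor check is a one-line calculus argument); the paper's version buys the intuition for where the exponent $c-1$ comes from. Two minor remarks: your last step does not actually need $S_n=M$, since $S_n/S_1=(\sum_j a_j)/a_1\le n\rho$ follows directly from the definition of $\rho$; and the direction of the budget argument does matter --- you need spending conserved, not merely non-increasing, to lower-bound $\gamma_1$ --- but that is exactly what holds in the Fisher setting here, and the paper relies on the same identity elsewhere (e.g.\ in the proof of Lemma~\ref{lem:progress-1}).
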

\begin{proof}
By Lemma \ref{lemma:price-order-reduction},
the price drop is maximized if demands increase one at
a time. So consider the initial price drop in which the price of
the first
good increases. Now imagine dividing this good into two goods,
with their demands increasing in sequence. This only increases the maximum
price drop. We iterate this process ad infinitum, leading to a continuous
process which we express as follows.

By rescaling if needed, we imagine that all goods have the same
equilibrium
price. Note that this is now a continuum of goods. As the prices are
multiplied by a factor $s<1$ (a decrease), we track a measure $W(s)$,
the quantity of goods whose
demand has not yet changed. $W(1)-W(s)$ is the quantity of
goods whose demand has grown by a factor of $c$.
We can express $W$ as a differential process.

Consider decreasing
$s$ to $s - d s$.
Suppose this results in $d W$ of the goods having demand
grow by a factor of $c$. Assuming the total spending on goods does not
increase (this maximizes the price drop),
$ W s = (W- d W) (s - d s) + c d W (s - d s) $.
To first order, $ W d s = (c -1) s d W $,
which yields $ \frac {W} {(c-1) s} = \frac {dW} {ds} $ or
$ W = W (1) s^{1/(c-1)} $.

To get back to the discrete process, the price reduction is stopped
when $W$ is reduced to $ W (1) / \rho n $, for this is a lower bound
on the amount of good $k$ with $k = \arg\min_jw_j$.

Good $k$ has a further price drop of no more than a factor of $c$
to bring its demand to $c w_k$.

The maximum possible price reduction in reducing
$ W $ to $ W (1) / \rho n $ occurs when
$ s^{1/(c-1)} = 1/(\rho n) $.
Thus $ r^{(c)} \le c/s \le c (\rho n)^{(c-1)} $.
\end{proof}

Lemma \ref{lemma:price-lower-upper-bound}
follows immediately from Lemmas \ref{lemma:price-order-reduction} and \ref{lemma:price-lower-bound}.

\subsection{Faster Updates with Large Demands}

\Xomit{
To obtain a specific bound
We introduce a further constraint on the rate of change of demand.
What is needed is that if all but one price, $p_i$ say, drop by a multiplicative factor of
$1-\gamma$, $0 < \gamma < 1$, but $p_i$ is unchanged,
then $x_i$ decreases by at most a multiplicative factor of
$(1-\gamma)^{E'}$, for some $E'\ge 0$.
}

\Xomit{
As we have seen, when demands are large initially, the rate of convergence depends
inversely on a bound on a parameter
$d$ where for each $i$, $dw_i$ is a bound on the demand for good $i$.
$d$ depends on the initial prices and can be as large as
$(\max_i\left\{ \frac{p_i^*}{p_i}, \frac{p_i}{p_i^*}\right\})^E$.
The bound on the warehouse sizes is a similar function of $d$.

We now show how to mitigate this bound, by a plausible and modest change
to the frequency of the price updates.

It seems reasonable that when demands are large, the
seller will observe this quickly (due to stock being drawn from its warehouse)
and consequently will quickly adjust its price.

Accordingly, we introduce a new rule for the frequency of updates: in addition
to the once a day update, whenever $w_i$ units of good $i$ have been sold
since the last update, price $p_i$ is updated.

\Xomit{
We need to limit price updates when the demands are not excessive to
occur $\Theta(1)$ times per day, and for simplicity we make this
a period of exactly one day between such price updates.%
\footnote{The extension to $\Theta(1)$ times per day does not alter the
asymptotic results and is left to the interested reader.}
} 

In this section we also assume that the market has wealth elasticity $E'$.
}

Our previous analysis in Lemma \ref{lem:war-updates-only-help-templ}
(see Lemma \ref{lem:drop-span} also)
uses the assumption $x_i\le d \wti$
in Case 2.
This in used in the following scenario:
$x_i$ increases from less than $\wti$ to much more than $\wti$
between two successive updates to $p_i$. The price update (a
decrease) could then increase the overall potential.
However, with our new price update rule,
as $x_i$ is then large, this price decrease will be followed quickly by
several price increases. These more than undo the just mentioned
potential increase.

To analyze this, the potential `pretends' that the troublesome price
decrease did not occur yet, by delaying its instantiation in one of two ways:
either this price decrease is combined with the next
one or two price increases so that the net effect is no change or a
price increase,
or the potential instantiates the price update when next $x'_i \le (d-1)\wti$,
whichever occurs sooner, where $x'_i$  is the value of $x_i$
resulting from the delayed price updates.
All other price changes continue as before by their
previous amounts.

We need some significant changes to the potentials. For goods $i$
that do not have a delayed price increase currently, we define
\[
\phi_i = \psi_i^r = p_i \left[ \span(x'_i, \xbi', \wti)
- \loa(t - \tau_i)|\xbi' - \wti|
+ \left[ 1 - \loa (t -
\tau_i)\right]\int_{\tau_i}^t (x'_i - x_i) dt
+ \alt|\wti - w_i| \right],
\]
where
$\xbi'$ is the average value of $x'_i$ since time $\tau_i$. Note
that by WGS, $x'_i \geq x_i$ since there is no delayed update to $p_i$ in
this case. (The superscript $r$ on $\psi$ denotes
{\em regular} or non-delayed updates.)

For goods $i$ with a currently delayed update, we define
\begin{eqnarray*}
\phi'_i = \psi_i^d &=& p_i \left[
\span(x'_i, d\wti, \wti) + (\wti(\tau_i^s) - \xbi(\tau_i^s))\left[
   1 - \loa (t - \tau_i)
 \right]
 - \loa \int_{\tau_i^s}^t (x'_i - \wti) dt  + \alt|\wti - w_i| \right]\\
 && ~~~- p_i\left[  \frac{\lE }{1-\lE}
[\wti(\tau_i^s) -
\xbi(\tau_i^s)]\int_{\tau_i^s}^t  \frac{x'_i}{w_i} dt  \right]
\end{eqnarray*}
where $\tau_i^s$ is
the time at which the delayed price decrease occurred in reality.
(The superscript $d$ on $\psi$ denotes {\em delayed}.)

Next, we bound $x'_i -x_i$.

\begin{lemma}
\label{lem:wealth-eff}
Suppose that the market has bounded wealth elasticity $E'$
and bounded demand elasticity $E$.
 Let $q$ and $p$ be price vectors with $q_j
= (1 - \lmbd) p_j$ for all $j \ne i$, and
and $q_i = p_i$. Then,
\[
x_i(q) \geq (1 - \lmbd)^{E + E'}  x_i(p).  
\]

\end{lemma}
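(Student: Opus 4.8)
The plan is to move from $p$ to $q$ in two stages, using $0$-homogeneity to turn the first stage into a uniform change of wealth.

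\emph{Stage 1: uniform price scaling.} Put $p' = (1-\lmbd)p$. Since demand is $0$-homogeneous (scaling all prices and all buyers' money by a common factor leaves demands unchanged), $x_i(p')$ equals the demand for good $i$ at prices $p$ after every buyer's money has been multiplied by $1/(1-\lmbd)$. I would parametrize this wealth increase by a multiplier $\gamma$ running from $1$ up to $1/(1-\lmbd)$ and integrate the wealth-elasticity bound $\xi^w_i \ge -E'$ of Definition~\ref{def:elas-wealth}. The worst case is $\frac{dx_i}{d\gamma} = -E' x_i/\gamma$, for which $\gamma^{E'}x_i$ is constant (this mirrors the extremal computation in the proof of Lemma~\ref{lem:x-bdd}); hence in general $x_i$ decreases no faster than $\gamma^{-E'}$, giving $x_i(p') \ge (1-\lmbd)^{E'} x_i(p)$.

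\emph{Stage 2: raising $p_i$ alone.} Because $q_j = p'_j$ for every $j \ne i$ while $q_i = p_i = p'_i/(1-\lmbd)$, the step from $p'$ to $q$ raises only the $i$-th price, by the factor $1/(1-\lmbd)$. Holding the other prices fixed, the bounded-elasticity hypothesis of Definition~\ref{def:bdd-elas} for good $i$ is exactly the one-good hypothesis of Lemma~\ref{lem:x-bdd} applied to the single-variable function $p_i \mapsto x_i$, so that lemma yields $x_i(q) \ge (1-\lmbd)^{E} x_i(p')$. Combining the two stages gives $x_i(q) \ge (1-\lmbd)^E x_i(p') \ge (1-\lmbd)^{E+E'} x_i(p)$.

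I expect the only delicate point to be Stage~1: one must check that the extremal response of $x_i$ to a uniform wealth increase is indeed the power law $\gamma^{-E'}$, i.e.\ that the wealth-elasticity bound is used in the same multiplicative (log-derivative) way that the price-elasticity bound is used inside Lemma~\ref{lem:x-bdd}. Everything else is routine; note in particular that WGS is not needed here, only $0$-homogeneity and the two elasticity bounds, and that the argument is insensitive to the sign of $E'$ (if $E'<0$ then $(1-\lmbd)^{E'} > 1$ and Stage~1 merely records that a wealth increase raises the demand for good $i$).
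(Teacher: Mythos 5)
Your proof is correct and follows essentially the same route as the paper's: factor the move from $p$ to $q$ into a uniform scaling of all prices by $(1-\lmbd)$ (handled via $0$-homogeneity and the wealth-elasticity bound, giving the $(1-\lmbd)^{E'}$ factor) followed by raising $p_i$ alone back to its original value (handled via the demand-elasticity bound, giving the $(1-\lmbd)^{E}$ factor). The only difference is that you spell out the integration of the wealth-elasticity bound in Stage~1, which the paper asserts in one line; your reading of that bound as a log-derivative (so that the extremal case is the power law $\gamma^{-E'}$) is the intended one and matches how the price-elasticity bound is used in Lemma~\ref{lem:x-bdd}.
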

\begin{proof}
Consider prices $q'(j)= p(j) (1 - \lmbd)$ for all $j$.
Then, by the bound on wealth elasticity,
$x_i(q') \geq (1 - \lmbd)^{E'} x_i(p)$.
Increasing the price $q'(i)$ to $p(i)$ decreases
$x_i$ by at most $(1 - \lmbd) ^{E}$.
So $x_i(q) \geq (1 - \lmbd)^{ E'}  (1 + \lmbd)^{ -E}  1x_i(p) \geq (1 - \lmbd)^{E + E'}  x_i(p)$.
\end{proof}

\begin{lemma}
\label{lem:bdd-xi'diff}
$x'_i \leq \left(1 + \frac{\lmbd (E + E') } {1-\lmbd (E + E')} \right) x_i$, if $\lmbd (E + E') < 1$.
\end{lemma}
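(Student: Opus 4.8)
The plan is to compare $x'_i = x_i(q)$, the demand for good $i$ under the \emph{delayed} price vector $q$, with $x_i = x_i(p)$, the demand under the \emph{real} price vector $p$, by sandwiching $q$ between two vectors related to $p$ and then invoking WGS and Lemma~\ref{lem:wealth-eff}. Since the $\psi_i^r$ branch covers exactly a good $i$ with no currently delayed update, every real update to $p_i$ is also instantiated in the delayed world, so $q_i = p_i$.

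First I would establish the key structural fact: for \emph{every} good $j$, the delayed price exceeds the real price by at most a single $(1-\lmbd)$ step, i.e.\ $p_j \ge (1-\lmbd)\, q_j$. For a good with no pending delayed decrease this is the equality $q_j = p_j$. Otherwise, a troublesome decrease multiplies $p_j$ by a factor $1-\lmbd'$ with $\lmbd' \le \lmbd$; while that decrease is held (possibly bundled with the next one or two increases, each a factor $1+\lmbd''$ with $\lmbd'' \le \lmbd$), the delayed price stays at the pre-decrease value $Q$ and the real price equals $Q$ times a product of one such $(1-\lmbd')$ factor and zero, one, or two $(1+\lmbd'')$ factors; every such product is at least $1-\lmbd$, and once the bundle is instantiated $q_j \le p_j$ trivially. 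Here I would also note that no good can carry two pending bundles at once, since a bundle is resolved as soon as the combined change becomes a non-decrease or $x'_j$ drops to $(d-1)\wti$, and while a bundle is pending $x'_j$ is large so $p_j$ is only being raised, never troublesomely decreased.

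Given $q_i = p_i$ and $p_j \ge (1-\lmbd)q_j$ for all $j$, introduce the intermediate vector $r$ with $r_j = (1-\lmbd)q_j$ for $j \ne i$ and $r_i = q_i = p_i$. Then $r_j \le p_j$ for $j \ne i$, so $p$ is obtained from $r$ by raising the prices of goods other than $i$ while holding $p_i$ fixed; applying WGS one good at a time, this only increases the demand for good $i$, so $x_i(r) \le x_i(p) = x_i$. On the other hand, $r$ and $q$ satisfy the hypotheses of Lemma~\ref{lem:wealth-eff} with $q$ in the role of its ``$p$'' and $r$ in the role of its ``$q$'', giving $x_i(r) \ge (1-\lmbd)^{E+E'} x_i(q)$. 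Combining, $x'_i = x_i(q) \le (1-\lmbd)^{-(E+E')} x_i(r) \le (1-\lmbd)^{-(E+E')} x_i$. Since $E \ge 1$ and $E' \ge 0$ we have $E+E' \ge 1$, and $\lmbd(E+E') < 1$ by hypothesis, so Fact~\ref{fact:taylor}(d) (second clause, with $a = -(E+E')$, $\delta = -\lmbd$, $\rho = \lmbd(E+E')$) yields $(1-\lmbd)^{-(E+E')} \le 1 + \frac{\lmbd(E+E')}{1-\lmbd(E+E')}$, which is the claimed bound.

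The main obstacle is the second structural fact, namely that at every instant the delayed price of any good lies within one $(1-\lmbd)$ factor of its real price; proving it requires using the precise rule for when a held decrease is released and arguing carefully that two troublesome decreases for the same good are never held simultaneously. Everything after that is a short chain of WGS monotonicity, one application of Lemma~\ref{lem:wealth-eff}, and a Taylor estimate from Fact~\ref{fact:taylor}.
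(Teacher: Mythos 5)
Your proposal is correct and follows essentially the same route as the paper: sandwich the real price vector between the delayed vector $q$ and the uniformly lowered vector $(1-\lmbd)q_{-i}$, apply Lemma~\ref{lem:wealth-eff} once, use WGS monotonicity to pass to the actual prices, and finish with Fact~\ref{fact:taylor}(d) with $\rho=\lmbd(E+E')$. Your extra care in verifying that a pending (possibly bundled) delayed decrease changes a price by a factor of at least $1-\lmbd$ is a welcome elaboration of a step the paper merely asserts.
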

\begin{proof}
We apply the bound from Lemma \ref{lem:wealth-eff}.
Let $p$ denote the actual prices, $q$ the
prices as defined in Lemma \ref{lem:wealth-eff},
and $q'$ the prices
where the delayed updates have not been performed.
Now, for each $j$, the delayed update to $p_j$ is a decrease by at most
a $1 - \lmbd$ factor, so $q_j \le q'_j \le p_j$, ofr $j \ne i$.
At prices $q$ the demand for good $i$ is at least $x'_i (1 - \lmbd)^{E + E'}$,
by Lemma \ref{lem:wealth-eff}.
Increasing prices $q_j$ to $q'_j$, for $j \ne i$, only increases the demand
for good $i$. Reducing $q'_i = p_i$ to $q_i$ also only increases the demand for good $i$.
Thus $x_i = x_i(p) \ge (1-\lmbd)^{E + E'} x_i(q) = (1-\lmbd)^{E + E'} x_i(q') =  (1-\lmbd)^{E + E'} x'_i$;
or, $x'_i \leq (1 -
\lmbd)^{-(E + E')} x_i \leq \left(1 + \frac{\lmbd (E + E') } {1- \lmbd (E + E')} \right) x_i$, by Fact \ref{fact:taylor}(d)
with $\rho= \lmbd (E + E')$.
\end{proof}

\noindent
{\bf Notation}.
Let $E''$ denote $E'+E$.

The following lemma bounds how long a price update can be delayed until
it is instantiated in the potential function.
\begin{lemma}
\label{lem:delay-bd} If $d \geq 5$, and $\lE''  \leq \frac14$ then the instantiation of a
price update to $p_i$ is delayed by at most one day,
and it is instantiated by the time of the second
subsequent price increase to $p_i$.
\emph{(}If the net
effect of the price increase following the delayed decrease is to
leave the price unchanged, this is considered the performing of the
price update.\emph{)}
\end{lemma}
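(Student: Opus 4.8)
The plan is to exploit the fast-update rule. As long as the delayed decrease to $p_i$ has not been instantiated, the demand for good $i$ in the bookkeeping (``fictional'') market stays above $(d-1)\wti$, which forces the \emph{real} demand to be large, which in turn forces rapid fast-update price increases; after at most two of these the combined effect of decrease-plus-increases is a net price rise, so the rule must instantiate the delayed update.

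In detail, let $\tis$ be the time at which the delayed decrease to $p_i$ occurred in reality, let $p_i^{\circ}$ be the value of $p_i$ just before $\tis$, and write the decrease as multiplication of $p_i$ by $\beta$, where $\beta\in[1-\lmbd,1)$. In the fictional market $p_i$ equals $p_i^{\circ}$ from $\tis$ until instantiation, while all other prices coincide with the real ones. First I would check that the real price of good $i$ is at most $p_i^{\circ}$ at every time strictly before instantiation: immediately after $\tis$ it is $\beta p_i^{\circ}<p_i^{\circ}$; if a subsequent fast-update increase multiplies it by $1+\lmbd$ and $\beta(1+\lmbd)\ge1$, then the net of decrease-and-increase is a non-decrease, so the rule instantiates the update at that moment (and we are no longer inside the delay); otherwise $\beta(1+\lmbd)<1$ and the real price is still below $p_i^{\circ}$. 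Since the other prices agree in the two markets and demand is strictly decreasing in own price, $x_i\ge x'_i$ throughout the delay. By the delaying rule, $x'_i>(d-1)\wti$ for all times before instantiation, so with $d\ge5$ we get $x_i>4\wti$, and by Lemma \ref{lem:wti-constr} (using Constraint \ref{cst:wti-bdd}) this yields $x_i>\tfrac83 w_i$.

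Next I would bound the gaps between successive price increases to $p_i$ during the delay. Good $i$ is sold at rate $x_i$ so long as its warehouse is non-empty, which holds under the warehouse-size bounds of Theorem \ref{lem:good-wrhs}; hence $w_i$ units are sold within $w_i/(\tfrac83 w_i)=\tfrac38$ day, triggering a fast-update increase within $\tfrac38$ day. At such a trigger, with $t-\tau_i<\tfrac38$ day elapsed since the previous update to $p_i$ while exactly $w_i$ units have been sold against a supply of $w_i$ per day, one computes $\overline{z}_i/w_i=(\tfrac{1}{t-\tau_i}-1)+\kpi(s_i-s_i^*)/w_i>\tfrac{1}{t-\tau_i}-\tfrac43>1$, again using Constraint \ref{cst:wti-bdd}; so the median in the update rule is $1$ and the increase is by the full factor $1+\lmbd$. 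The once-a-day update cannot pre-empt these, as it fires only a full day after the previous update to $p_i$, and in any case, since $\xbi>\tfrac83 w_i$, it too would be an increase. (The hypothesis $\lmbd E''\le\tfrac14$, via Lemma \ref{lem:bdd-xi'diff}, keeps $x'_i$ and $x_i$ within a constant factor, which is what the accompanying potential bounds require.)

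Putting the pieces together: the first subsequent increase occurs within $\tfrac38$ day of $\tis$, the second within a further $\tfrac38$ day, so both lie within $\tfrac34<1$ day of $\tis$. If instantiation has not occurred earlier --- because $x'_i$ dropped to $(d-1)\wti$, or because a single increase already made the net effect a non-decrease --- then after the delayed decrease and two full increases $p_i$ has been multiplied by $\beta(1+\lmbd)^2\ge(1-\lmbd)(1+\lmbd)^2=1+\lmbd-\lmbd^2-\lmbd^3>1$ for the small $\lmbd$ in force, so the net effect is a price increase and the rule instantiates the update at the second increase. Hence the delay is at most one day and ends by the second subsequent price increase. I expect the main obstacle to be the fictional-market bookkeeping in the second paragraph --- verifying that ``the real price of good $i$ stays $\le p_i^{\circ}$ for all times strictly before instantiation'' dovetails exactly with the instantiation rule (in particular, that the delay really ends the first moment the accumulated effect turns non-negative) --- together with the care needed to justify that actual sales of good $i$ proceed at the full rate $x_i$ rather than being throttled by an empty warehouse.
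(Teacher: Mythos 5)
Your overall strategy is the paper's: lower-bound the real demand $x_i$ throughout the delay, conclude that $2w_i$ units are sold within a day so that two fast-update increases occur, and observe that the net multiplicative change $(1-\lmbd)(1+\lmbd)^2\ge 1$ forces instantiation by the second increase. However, your first step has a genuine gap. You assert that in the bookkeeping market ``all other prices coincide with the real ones,'' and deduce $x_i\ge x'_i$ from monotonicity of demand in the own price. That premise is false: several goods may have delayed decreases simultaneously (the paper's Lemma \ref{lem:spending-transfr} and Lemma \ref{lem: xi to xi-prime-bound} explicitly sum over the set of goods with delayed updates), so for $j\ne i$ the fictional price of good $j$ can exceed the real one. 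Passing from fictional to real prices then lowers those $p_j$, and by WGS this can \emph{lower} the demand for good $i$; hence $x_i\ge x'_i$ need not hold. This is exactly where the wealth-elasticity hypothesis enters: Lemma \ref{lem:bdd-xi'diff} (via Lemma \ref{lem:wealth-eff}) yields $x_i\ge(1-\lE'')x'_i$, and the assumption $\lE''\le\frac14$ --- which you use only in a parenthetical aside about ``the accompanying potential bounds'' --- is needed precisely for this step.

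With the corrected factor the demand bound weakens from $x_i>\frac83 w_i$ to $x_i>(1-\lE'')(d-1)\wti\ge 2w_i$, and the elapsed time for two fast-update increases becomes at most one day rather than $\frac34$ day (which is still all the lemma claims). A secondary consequence is that your sub-argument that each triggered increase is by the full factor $1+\lmbd$ (you need $t-\tau_i<\frac37$ there) no longer follows; with $t-\tau_i<\frac12$ one only gets $\overline z_i/w_i>\frac23$, so one must either check $(1-\lmbd)\bigl(1+\frac23\lmbd\bigr)^2\ge 1$ for the admissible $\lmbd$ or argue as the paper does. The remaining bookkeeping you flag --- the real price staying below its pre-decrease value until instantiation, and sales proceeding at rate $x_i$ --- matches the paper's (implicit) assumptions and is fine.
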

\begin{proof}
While the instantiation of update is being delayed, by Lemma \ref{lem:bdd-xi'diff},
$x_i \ge (1 -\lE'') x'_i > (1 -\lE'')(d-1)\wti$.
Hence, there has been
an excess demand of at least $2w_i$ after a time period of at most
$\frac{2w_i}{ (1 -\lE'')(d-1) \wti} \le \frac{3} {(1 -\lE'')(d-1)} \le 1$ day, as $\wti \ge \frac23 w_i$ by
Lemma \ref{lem:wti-constr};
thus two price
increases to $p_i$  will occur in this time. The net change to $p_i$ is
a multiplicative increase of at least $(1 - \lmbd) (1 + \lmbd)^2 \geq 1$ as
$\lmbd \leq \lE \le \frac14$. Consequently, no later than the second price
increase, the net change to the prices is instantiated.
\end{proof}

\begin{lemma}
\label{lem:p-only-decr}
When $p_i$ receives its delayed update,
$\phi_i$ only decreases if $d \geq 5$, $1 \le \alt \le 2$,
\begin{equation}
\label{eqn:p-only-decr-cnstrt1}
\left[1 - 2\loa - \frac{3\lE }{1-\lE} \left( 1 + \frac{\lE'' }{1-\lE''} \right) \right]
  \ge \frac43\lmbd\left[ 2(d-1)\frac{E }{1-\lE} + 1\right],
\end{equation}
\begin{equation}
\label{eqn:p-only-decr-cnstrt2}
\lmbd (\frac23 - \eta -\eta \lmbd)\left(1 - \frac13 \alt\right) \geq
      \loa \left(3 \left( \frac{\lE'' }{1-\lE''} + 1 \right) -\frac23\right),
\end{equation}
with
\begin{equation}
\label{eqn:eta-defn}
\eta = \frac
{ \loa \left(3 \left( \frac{\lE'' }{1-\lE''} + 1 \right) -\frac23\right)}
{\frac43\lmbd\left[ 2(d-1)\frac{E }{1-\lE} + 1\right]}.
\end{equation}
This holds if $d=5$, $\lE, \lE''\le \frac{1}{17}$, $\alo \le \frac15$, and $\alt=\frac32$.
\end{lemma}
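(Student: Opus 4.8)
The plan is to treat the instantiation of a delayed update to $p_i$ as a single discrete event and to show, in the spirit of Corollary~\ref{cor:war-updates-only-help}, that the potential for good $i$ does not increase across it; the new feature is that the ``cost'' of this event has been pre-paid over the delay interval, so the proof is mostly a matter of checking that the pre-payment suffices. Let $\tis$ be the real time of the deferred decrease and $t$ the instantiation time; by Lemma~\ref{lem:delay-bd} (which applies since $d\ge 5$ and $\lE''\le\frac14$) we have $t-\tis\le 1$ day and the instantiation fires no later than the second subsequent price increase to $p_i$. Just before the event the potential for good $i$ is in its delayed form $\psi_i^d$ and just after it is in the regular form $\psi_i^r$, so the whole proof is the estimate $\psi_i^r(\text{after})\le\psi_i^d(\text{before})$, and it splits into two cases according to which trigger fires: (A) the pretend demand $x'_i$ has dropped back to at most $(d-1)\wti$, so only the genuine decrease (a net multiplicative factor in $[1-\lmbd,1]$) is being put into effect; (B) the follow-up increases have caught up, so the net change since $\tis$ is a non-decrease, and it is this net change that is instantiated. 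Cross-effects on $\phi_j$, $j\ne i$, go the right way by WGS, so only $\phi_i$ needs attention.

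First I would assemble the standing facts: $\frac34\wti\le w_i\le\frac32\wti$ and $|\wti-w_i|\le\frac12\wti$ from Lemma~\ref{lem:wti-constr} (Constraint~\ref{cst:wti-bdd}); the relation $x'_i\le\big(1+\frac{\lE''}{1-\lE''}\big)x_i$ between the pretend and real demands from Lemma~\ref{lem:bdd-xi'diff} (itself built on Lemma~\ref{lem:wealth-eff}); and the fact that $x'_i>(d-1)\wti$, hence excess demand is $\Theta(w_i)$, throughout $[\tis,t]$. The term $\alt\,p_i|\wti-w_i|$ occurs identically in $\psi_i^d$ and $\psi_i^r$ at time $t$, so it cancels; what remains to compare are the span/excess-demand terms with their $p_i$-scaling, the time-linear correction $-\loa(t-\tau_i)|\cdot-\wti|$, and the two integral corrections in $\psi_i^d$: the deferred-discrepancy integral of $x'_i-x_i$ and the own-price feedback integral with coefficient $\frac{\lE}{1-\lE}$, which records how much putting the deferred lower price into effect will itself raise $x_i$, and hence all later potential, through the own-price elasticity bound.

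The change in the span/excess-demand and scaling terms at the event is then bounded by the away-from-$\wti$ estimate of Lemma~\ref{lem:war-updates-only-help-templ}, namely $\big(1+\frac{2Ed}{1-\lE}+\frac12\alt\big)\wti|\Delta_i p_i|-(1-\loa)p_i|\xbi-\wti|$, with $d$ replaced by $d-1$ in case~(A); Fact~\ref{fact:taylor}(a),(d) turns the multiplicative demand bounds into this additive form. In case~(A) this one-step cost plus the feedback charge must not exceed the credit stored in $\psi_i^d$ (a near-full unit of $p_i(\wti(\tis)-\xbi(\tis))$ eroded by the $\loa$-linear factors), and after inserting the numeric bounds this balance is exactly inequality~(\ref{eqn:p-only-decr-cnstrt1}). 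In case~(B) the net instantiated change is a non-decrease, so the span term and the excess demand both shrink and the only residual is the accumulated $\loa$-linear and feedback terms; the credit available is the excess-demand integral built up over the delay, which by the time the instantiation fires is at least an $\eta$-fraction of a day's supply, and demanding that this credit cover the residual is exactly inequality~(\ref{eqn:p-only-decr-cnstrt2}). The constant $\eta$ of~(\ref{eqn:eta-defn}) is chosen precisely so that the point at which one passes from relying on case~(A)'s estimate to relying on case~(B)'s is consistent --- it equates the binding quantity of the two cases. Finally I would substitute $d=5$, $\lE,\lE''\le\frac1{17}$, $\alo\le\frac15$, $\alt=\frac32$ into~(\ref{eqn:p-only-decr-cnstrt1}) and~(\ref{eqn:p-only-decr-cnstrt2}) with this $\eta$ and verify the two numerical inequalities; the margins are deliberately loose.

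The main obstacle will be case~(B): writing down the partially-instantiated price (a decrease followed by one or two increases) and the matching pretend-versus-real demand discrepancy, and verifying that the feedback integral term with coefficient $\frac{\lE}{1-\lE}$ accounts \emph{exactly} --- neither over- nor under-charging --- for the extra demand that instantiating the deferred decrease creates through WGS and the own-price elasticity bound. Arranging $\eta$ so that both~(\ref{eqn:p-only-decr-cnstrt1}) and~(\ref{eqn:p-only-decr-cnstrt2}) hold simultaneously, rather than each in isolation, is the one genuinely delicate quantitative step; the rest is a routine application of Lemma~\ref{lem:war-updates-only-help-templ} and Fact~\ref{fact:taylor} at the stated parameter values.
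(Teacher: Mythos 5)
Your overall framing is right: you treat the instantiation as one discrete event, compare the delayed potential $\psi_i^d$ just before to the regular potential just after, split on which trigger fires (pretend demand falling to $(d-1)\wti$ versus the follow-up increases catching up), and your case (A) does match the paper's Case~1, with the stored term $p_i[\wti(\tis)-\xbi(\tis)]$ eroded by the $\loa$- and feedback-integrals paying for the one-step away-from-$\wti$ cost, yielding (\ref{eqn:p-only-decr-cnstrt1}).

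The gap is in your case (B), and it is a real one. You describe a single estimate: ``the net instantiated change is a non-decrease, so the span term and the excess demand both shrink,'' with the credit being ``the excess-demand integral built up over the delay, \ldots at least an $\eta$-fraction of a day's supply.'' Two problems. First, the excess-demand integral $-\loa\int_{\tis}^t(x'_i-\wti)\,dt$ enters $\psi_i^d$ with a negative sign --- it is the \emph{debit} (bounded by $\loa\bigl(3\bigl(\tfrac{\lE''}{1-\lE''}+1\bigr)-\tfrac23\bigr)w_ip_i$ via Lemma~\ref{lem:delay-bd}), not the credit. Second, the net instantiated change can be exactly zero (the paper explicitly counts ``net effect \ldots to leave the price unchanged'' as performing the update), in which case nothing shrinks and your single estimate has nothing to pay with. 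The paper resolves this by splitting the net-increase case on the magnitude of the deferred decrease: if $\wti-\xbi(\tis)\ge\eta w_i$ the stored term $[\wti(\tis)-\xbi(\tis)][1-2\loa-\cdots]$ is large enough to absorb the debit outright, leaving $\psi_i^d\ge p_i[\span(x'_i,d\wti,\wti)+\alt|\wti-w_i|]$, after which any net increase only helps (this is where $\alt\le2$ is used); if $\wti-\xbi(\tis)<\eta w_i$ the stored term is too small, but then the deferred decrease was by at most a $(1-\eta\lmbd)$ factor, so the net increase is at least $\lmbd(1-\eta-\eta\lmbd)$ and the resulting drop $\wti|\Delta_ip_i|$ in the span pays the debit --- that is (\ref{eqn:p-only-decr-cnstrt2}). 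Thus $\eta$ is the crossover between these two complementary payment sources within the net-increase case (chosen so that the large-decrease subcase reduces to (\ref{eqn:p-only-decr-cnstrt1})), not, as you say, the consistency point between your cases (A) and (B). Without this subdivision your case (B) does not close.
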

\begin{proof}
By Lemma~\ref{lem:delay-bd}, this update occurs within one day of
its actual time, and no later than the second subsequent price
increase. So when the update occurs, $\int_{\tau_i^s}^t (x_i - w_i)
dt \leq 2w_i$. Hence $ \int_{\tis}^t x_i \leq 3w_i$.
Now,
\begin{eqnarray*}
\int_{\tis}^t x'_i dt
& \leq & \int_{\tis}^t x_i \left( 1 + \frac{\lE'' }{1-\lE''} \right) dt \qquad
(\text{using Lemma~\ref{lem:bdd-xi'diff}}) \\
& \leq & 3\left( 1 + \frac{\lE'' }{1-\lE''} \right)  w_i.
\end{eqnarray*}
Thus,
\begin{eqnarray*}
\label{eqn:lemma14-1} \psi_d^i & \geq p_i \Bigg[ &
\hspace{-2mm}\span(x'_i, d\wti, \wti) + [\wti(\tis) - \xbi(\tis)] \left[1 -
2\loa - \frac{3\lE }{1-\lE} \left( 1 + \frac{\lE'' }{1-\lE''} \right) \right]
\nonumber \\
& & - \loa \left(3 \left( \frac{\lE'' }{1-\lE''} + 1 \right) -\frac23\right) w_i + \alt | \wti - w_i |
~\Bigg]
~~~~\text{as $(t - \tau_i) \le 2$}.
\Xomit{
& \ge & p_i \Bigg[
\span(x'_i, d\wti, \wti) +\frac13 [\wti(\tis) - \xbi(\tis)] -3\loa w_i + \alt | \wti - w_i | \Bigg]
\\
&& ~~~~\mbox{as } \loa \le \frac13
 \mbox{~and~} 2 \lmbd E \left(\frac{20}{3}\lmbd E + \frac{10}{3}\right) \le \frac13,
 }
\end{eqnarray*}

We need to show that this expression is at least as large as the potential following the update.
Note that after the update $\phi_i =(p_i + \Del_i p_i)[(x'_i+\Del_ix'_i -\wti) +\alt|\wti - w_i|]$.
Of course, some of the $\phi_j$, $j\ne i$, may also change as a result of the update.

\noindent
{\bf Case 1:} $x'_i = (d-1) \wti$.
\\
Then the original price update, a decrease, is applied at this moment.
First, the $\span$ term is reduced by $\wti$ to $x_i' - \wti$, reducing the potential
by $p_i\wti \ge \frac23 p_i w_i$ (by Lemma \ref{lem:wti-constr}).
The term $-p_i\loa \left(3 \left( \frac{\lE'' }{1-\lE''} + 1 \right) -\frac23\right)w_i$ is also removed.
As $\loa \le \frac19$ and $\lE'' \le \frac12$, this is a net reduction.

Now, an analysis similar to that of Case 2.1 in Lemma \ref{lem:drop-span}
can be applied, as follows.
First, we update $p_i$ to $p_i+\Del_i p_i$, and increase $x'_i$ by a spending neutral
change to $x'_i + \Del_n x'_i$.
By Lemma \ref{lem:bound-Delta-x} (with $\wti$ replacing $w_i$),
this adds $\wti\Del_i p_i$ to the term $p_i(x'_i - \wti)$ in $\phi_i$.
Next, we further increase $x'_i$ to its final value $x'_i + \Del_i x_i$, which increases the potentials $\phi_i$
and $\phi_j$ for $j \ne i$ by at most
$2(p_i + \Del_i p_i) x'_i \left[(1-\frac{\Del_i p_i}{p_i} )^{-E} - 1\right]
\le 2(d-1)\wti \cdot \frac{E }{1-\lE}\Del_i p_i $,
using Fact \ref{fact:taylor}(d) with
$\rho = \frac{E }{1-\lE}$
(recall that $\rho < 1$ suffices).
The update leaves $\phi$ no larger if
\[
 p_i [\wti(\tis) - \xbi(\tis)]\left[1 -
2\loa - \frac{3\lE }{1-\lE} \left( 1 + \frac{\lE'' }{1-\lE''} \right) \right]
  \ge \lmbd\left[ 2(d-1)\frac{E }{1-\lE}\wti \Del_i p_i + \wti \Del_i p_i \right].
\]
(Note that the term involving $\alt$ only decreases as the update is a price decrease.)
But $|\Del_i p_i| \le \lmbd [\wti(\tis) - \xbi(\tis)] p_i/w_i$.
So it suffices that
(\ref{eqn:p-only-decr-cnstrt1}) holds
(using Lemma \ref{lem:wti-constr} to bound $\wti/w_i$ by $\frac43$).

\smallskip\noindent
{\bf Case 2:} $x'_i > (d-1) \wti$.
\\
The price update in this case is an increase.
\\
{\bf Case 2.1:} $ \wti - \xbi (\tis) \geq \eta w_i $.
\\
If
\begin{equation}
\label{eqn:eta-cnstrt-def}
\eta\left[1 -
2\loa - \frac{3\lE }{1-\lE} \left( 1 + \frac{\lE'' }{1-\lE''} \right) \right]
\ge \loa \left(3 \left( \frac{\lE'' }{1-\lE''} + 1 \right) -\frac23\right),
\end{equation}
then
\begin{equation}
\label{eqn:psi-d-case2-1}
\psi_i^d \ge p_i[ \span(x'_i,d\wti,\wti) + \alt |\wti - w_i | ].
\end{equation}
Note that the definition of $\eta$ in (\ref{eqn:eta-defn})
makes (\ref{eqn:eta-cnstrt-def}) equivalent to (\ref{eqn:p-only-decr-cnstrt1}).

Now, we proceed as in the analysis of Case 1 of Lemma~\ref{lem:drop-span}.
We start by reducing the $\span$ term to $x'_i - \wti$.
Next, we update $p_i$ to $p_i+\Del_i p_i$, and decrease $x'_i$ by a spending neutral
change to $x'_i + \Del_n x'_i$.
By Lemma \ref{lem:bound-Delta-x}, this decreases $\phi_i$ by $\Del_i p_i \wti$.
Finally, we further decrease $x'_i$ to its final value $x'_i + \Del_i x_i$; this only
reduces $\phi_i$ and possibly increases the other $\phi_j$ by up to an equal amount.
Thus the overall change to $\phi$ is only a decrease if
$\Del_i p_i \wti \ge \Del_i p_i \alt |\wti - w_i|$.
On applying Lemma \ref{lem:wti-constr},
we see that $\frac23 \ge \frac13 \alt$ suffices, i.e.\ $\alt \le 2$.

\smallskip\noindent
{\bf Case 2.2:} $ \wti - \xbi (\tis) < \eta w_i $.
\begin{equation}
\label{eqn:psi-d-case2-2}
\text{~Then,~~~~~}\psi^d_i \geq p_i \left[ \span(x'_i, d\wti,\wti)
- 3\loa \left(3 \left( \frac{\lE'' }{1-\lE''} + 1 \right) -\frac23\right)w_i
+ \alt | \wti - w_i | \right].
\end{equation}

We proceed as in Case 2.1.
As there, the drop in the first term is at least $\wti
\Delta p_i$, and we need
\[
\wti \Delta p_i \geq \loa p_i w_i \left(3 \left( \frac{\lE'' }{1-\lE''} + 1 \right) -\frac23\right)
+ \frac13 \alt w_i \Delta p_i .
\]
Since $\wti(\tis) -
\xbi(\tis) \leq \eta w_i$, the delayed
price reduction was by at most a factor of $(1 - \eta \lmbd)$. The
next (current) price increase is by a factor of $(1 + \lmbd)$,
yielding a net increase of at least $(1 + \lmbd) (1 - \eta \lmbd)
\geq 1 + (1 - \eta) \lmbd  - \eta\lmbd^2  \geq 1 + \lmbd (1 - \eta -\eta \lmbd)$.

So it suffices that (\ref{eqn:p-only-decr-cnstrt2}) holds.
\end{proof}

\begin{lemma}
\label{lem:reg-update-redo-pot}
If $2 \loa \leq 1$, at the original time of a
price decrease to good $i$,
\[ \phi_i \geq p_i \left[ \span(x'_i, \xbi, \wti) - \loa (t - \tau_i) | \xbi - \wti |
  + \alt |\wti -w_i| \right]. \]
\end{lemma}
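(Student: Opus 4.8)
The plan is to read the claim off directly from the definition of $\phi_i = \psi_i^r$, by exhibiting the ``extra'' term as a manifestly nonnegative quantity. Recall that in this regime
\[
\phi_i = p_i\left[\span(x'_i,\xbi',\wti) - \loa(t-\tau_i)|\xbi' - \wti| + \left(1 - \loa(t-\tau_i)\right)\int_{\tau_i}^t (x'_i - x_i)\,ds + \alt|\wti - w_i|\right],
\]
and the asserted lower bound is precisely this expression with the penultimate term deleted (with $\xbi$ standing for the average of the demand actually used in the potential). So it suffices to show that that term is $\ge 0$.

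First I would establish that $x'_i(s) \ge x_i(s)$ for \emph{every} $s \in [\tau_i, t]$. At the moment under consideration good $i$ is in the ``regular'' regime, so there is no delayed update to $p_i$ anywhere in $[\tau_i,t]$; hence $x'_i$ and $x_i$ differ only through the delayed (i.e.\ not-yet-instantiated) updates to \emph{other} goods. Every such delayed update is a price \emph{decrease} to some $p_j$, $j\ne i$, so moving from the ``potential'' price vector (decreases withheld) to the ``real'' price vector amounts to lowering some $p_j$'s; by WGS this only lowers the demand for good $i$. Thus $x_i(s)\le x'_i(s)$ pointwise, and in particular $\int_{\tau_i}^t (x'_i - x_i)\,ds \ge 0$ (and also $\xbi'\ge\xbi$).

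Second, since each price is updated at least once a day we have $t-\tau_i\le 1$; together with $2\loa\le 1$ this gives $1-\loa(t-\tau_i)\ge 1-\loa\ge\frac12>0$. Hence the term $p_i\,(1-\loa(t-\tau_i))\int_{\tau_i}^t(x'_i-x_i)\,ds$ is nonnegative, and dropping it from $\phi_i$ yields $\phi_i \ge p_i\left[\span(x'_i,\xbi',\wti) - \loa(t-\tau_i)|\xbi'-\wti| + \alt|\wti-w_i|\right]$, which is the claim.

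The one point that needs care is the first step: one must verify that \emph{throughout} the whole interval $[\tau_i,t]$, not just at its right endpoint, the only outstanding delayed updates are price decreases to goods $j\ne i$, so that the WGS comparison $x_i\le x'_i$ is valid at every instant and the integral is genuinely nonnegative. I expect this to be the only real obstacle. If one instead insists on the sharper reading in which the \emph{real}-demand average $\xbi$ appears inside the $\span$ and distance terms, there is an additional short case analysis on the positions of $\xbi\le\xbi'$ relative to $\wti$ and $x'_i$: replacing $\xbi$ by $\xbi'$ can only shrink $\span(\cdot,\cdot,\wti)-\loa(t-\tau_i)|\cdot-\wti|$ when both averages lie below $\min\{x'_i,\wti\}$, and there the loss is at most $(\xbi'-\xbi)(1-\loa(t-\tau_i))$, which is absorbed by the integral term $(t-\tau_i)(\xbi'-\xbi)(1-\loa(t-\tau_i))$ on using that a price decrease occurs at a low-demand update, for which $t-\tau_i$ is a full day.
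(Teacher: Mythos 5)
Your first step is right and matches the paper: since good $i$ itself has no delayed update, $x'_i$ and $x_i$ differ only through withheld price \emph{decreases} to other goods, so WGS gives $x'_i \ge x_i$ pointwise on $[\tau_i,t]$ and the integral term in $\psi_i^r$ is nonnegative. Dropping it proves the inequality with $\xbi'$ in place of $\xbi$ --- but that is not the statement, and it is the $\xbi$ version that is actually needed downstream (in Corollary \ref{cor:reg-more-del} the bound is matched term-by-term against $\psi_i^d$, which is built from the \emph{real} average $\xbi(\tau_i^s)$). So the whole content of the lemma sits in your final ``additional short case analysis,'' and that is where the gap is.

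You assert that replacing $\xbi$ by $\xbi'$ can only hurt when both averages lie below $\min\{x'_i,\wti\}$. That is false. Because $x'_i\ge x_i$, it is entirely possible that $\xbi<\wti\le\xbi'$ (this is the paper's Case 2), and there the term $-\loa(t-\tau_i)|\cdot-\wti|$ flips from $-\loa(t-\tau_i)(\wti-\xbi)$ to $-\loa(t-\tau_i)(\xbi'-\wti)$, which can be a strictly larger penalty, while the span term can simultaneously grow by up to $\wti-\xbi$ when $\xbi$ is restored. The combined deficit in this case is absorbed by the integral term only via the identity
\[
-\bigl(1-\loa(t-\tau_i)\bigr)(\wti-\xbi) - \loa(t-\tau_i)(\xbi'-\wti) + \bigl(1-\loa(t-\tau_i)\bigr)(\xbi'-\xbi) = \bigl(1-2\loa(t-\tau_i)\bigr)(\xbi'-\wti) \ge 0,
\]
which is precisely where the hypothesis $2\loa\le 1$ enters. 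Your argument never uses $2\loa\le 1$ beyond the positivity of $1-\loa(t-\tau_i)$ (for which $\loa\le 1$ would do), and that is the tell that this case has been lost. A secondary point: your absorption step in the ``both below'' case, like the paper's own algebra, treats the integral as $(\xbi'-\xbi)(1-\loa(t-\tau_i))$ rather than $(\xbi'-\xbi)(t-\tau_i)(1-\loa(t-\tau_i))$, i.e.\ it needs $t-\tau_i=1$ at the time of a price decrease; you flag this assumption but it should be justified rather than asserted.
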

\begin{proof}
At the original time of the update, $\phi_i = \psi_i^r$. And
\[ \int_{\tau_i}^t (x'_i - x_i) dt = (\xbi' - \xbi) (t - \tau_i). \]
Thus,
\[ \psi_i^r = p_i \left[ \span(x'_i, \xbi', \wti) - \loa (t - \tau_i) | \xbi' - \wti | + (\xbi' - \xbi)
\left[1 - \loa (t - \tau_i) \right] + \alt |\wti -w_i| \right]. \]

\medskip\noindent
{\bf Case 1:} $\xbi' \leq \wti$.

Recall that $\xbi' \ge \xbi $. Then $\span(x'_i, \xbi', \wti) +
(\xbi' - \xbi) \geq \span(x'_i, \xbi, w\wti)$ and $\loa | \xbi' - \wti
| + \loa ( \xbi' - \xbi ) = \loa ( \wti - \xbi )$. So $ \psi_i^r \ge
p_i[\span(x_i', \xbar_i, \wti)- \loa (t - \tau_i) (\wti - \xbar_i)]$. The claim holds
in this case.

\medskip\noindent
{\bf Case 2:} $\xbi' > \wti$.

Note that $\xbi < \wti$ as there is supposed to be a price decrease
at this time. So,

\[ \span(x'_i, \xbi', \wti) + (\wti - \xbi)[1 - \loa (t - \tau_i)] \ge
\span(x'_i, \xbi, \wti)) - \loa (t - \tau_i) (\wti - \xbar_i). \]

And \[-[1 - \loa (t - \tau_i)] ( \wti - \xbar_i) - \loa (t - \tau_i) (\xbar_i' - \wti) +
(\xbar_i' - \xbar_i) [1 - \loa (t - \tau_i)] = [1 - 2 \loa (t - \tau_i)] (\xbar_i' - \wti)
\ge 0.\]
Thus \[ \psi_i^r
\ge p_i [\span (x_i', \xbar_i, \wti) - \loa (t - \tau_i) ( \wti - \xbar_i) + \alt |\wti -w_i|]
.\]

The claim holds in this case too.
\end{proof}
\begin{corollary}
\label{cor:reg-more-del} If $2 \loa \le 1$, at the original time of a delayed price increase,
$\psi_i^r \geq \psi_i^d$.
\end{corollary}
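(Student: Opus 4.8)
The plan is to obtain the corollary as a short consequence of Lemma \ref{lem:reg-update-redo-pot} (which carries the same hypothesis $2\loa\le 1$), by evaluating both potentials at the transition instant and reducing $\psi_i^r\ge\psi_i^d$ to an elementary comparison of span terms.

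First I would set up notation: let $t=\tis$ be the real time of the delayed price decrease to $p_i$; this is precisely the moment good $i$ passes from the regular regime ($\phi_i=\psi_i^r$) to the delayed regime ($\phi_i=\psi_i^d$). The key observation is that at this instant the pretend-market prices do not change (the decrease is merely deferred), so $x'_i$, $\xbi$ and the previous-update time $\tau_i$ have the same meaning on both sides, and the two integrals $\int_{\tis}^{t}(\cdot)\,dt$ in the definition of $\psi_i^d$ vanish at $t=\tis$. Hence $\psi_i^d(\tis)=p_i\bigl[\span(x'_i,d\wti,\wti)+(\wti-\xbi)\bigl(1-\loa(\tis-\tau_i)\bigr)+\alt|\wti-w_i|\bigr]$; here I also record that, the pending update being a decrease, $\zbi(\tis)=\xbi(\tis)-\wti(\tis)<0$, so $\wti-\xbi>0$.

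Next I would apply Lemma \ref{lem:reg-update-redo-pot} to lower-bound $\psi_i^r=\phi_i$ at $t=\tis$ by $p_i\bigl[\span(x'_i,\xbi,\wti)-\loa(\tis-\tau_i)(\wti-\xbi)+\alt|\wti-w_i|\bigr]$. Subtracting the common $p_i\alt|\wti-w_i|$ and observing that the two $\loa(\tis-\tau_i)(\wti-\xbi)$ contributions cancel, the claim $\psi_i^r\ge\psi_i^d$ reduces to the single inequality $\span(x'_i,\xbi,\wti)-\span(x'_i,d\wti,\wti)\ge\wti-\xbi$. I would close this using that the delay is invoked at $\tis$ exactly because the pending decrease is problematic --- i.e.\ demand has left the range controlled by Case~2 of Lemma \ref{lem:war-updates-only-help-templ}, so that $x'_i(\tis)\ge d\wti$. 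Then $\xbi<\wti<d\wti\le x'_i$, so $\span(x'_i,\xbi,\wti)=x'_i-\xbi$ and $\span(x'_i,d\wti,\wti)=x'_i-\wti$, and their difference is exactly $\wti-\xbi$; indeed the reduced inequality is an equality for every $x'_i\ge d\wti$, which fits the fact that the $\psi_i^d$ form was designed so that activating it at this threshold cannot raise the potential.

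The main obstacle is not the arithmetic but the bookkeeping around the transition: making sure $\psi_i^r$ and $\psi_i^d$ are genuinely evaluated on the same pretend state (same $x'_i$, $\xbi$, $\tau_i$), that $\wti-\xbi\ge 0$ there, and --- most importantly --- that the rule governing when a delay is started really does guarantee $x'_i(\tis)\ge d\wti$, equivalently that $\span(x'_i,d\wti,\wti)=x'_i-\wti$ at the onset of a delay. Once these points are settled, the corollary follows immediately, with all the slack coming from Lemma \ref{lem:reg-update-redo-pot}.
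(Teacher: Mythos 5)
Your proposal matches the paper's proof: both apply Lemma \ref{lem:reg-update-redo-pot} at $t=\tis$, note that the integrals in $\psi_i^d$ vanish there and that $x'_i\ge d\wti>\wti>\xbi$ at the onset of a delay, so $\span(x'_i,\xbi,\wti)=|x'_i-\wti|+|\wti-\xbi|$ while $\span(x'_i,d\wti,\wti)=x'_i-\wti$, and conclude that the lemma's lower bound on $\psi_i^r$ is exactly $\psi_i^d(\tis)$. You spell out the cancellation of the $\loa(\tis-\tau_i)(\wti-\xbi)$ terms more explicitly, but the argument is the same one the paper uses.
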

\begin{proof}
Note that as the price increase is delayed, $x'_i \ge d \wti > \wti$.
By Lemma \ref{lem:reg-update-redo-pot},
\begin{eqnarray*}
\psi_i^r & \ge & p_i \left[ \span(x'_i, \xbi, \wti) - \loa (t - \tau_i) | \xbi - \wti| + \alt |\wti -w_i| \right] \\
& = & p_i \left[ | x'_i - \wti | + | \wti - \xbi | - \loa (t - \tau_i) | \xbi - \wti | (t - \tau_i) + \alt |\wti -w_i| \right].
\end{eqnarray*}
Note that at this time, $t = \tau_i^s$, and this expression in
$\psi_i^d$.
\end{proof}

We restate Lemma \ref{lem:drop-span}
with $x_i$ replaced by $x'_i$, $w_i$ replaced by $\wti$, and $\xiu = \xbi$.
Note that as in Lemma \ref{lem:war-updates-only-help-templ},
$\psi_i = \alt p_i |\wti - w_i|$.

\begin{lemma}
\label{lem:fast-updates-only-help-templ}
If Constraint \ref{cst:wti-bdd} holds and $x'_i \le d \wti$,
then when $p_i$ is updated, $\phi$ increases by at most the following:
\\
(i) With a toward $\wti$ update:
\begin{equation}
\label{eqn:cstrt-1''-lem-updates-only-help-fast-updts}
\loa |\xbi - \wti| p_i + \frac12\alt \wti|\Del_i p_i|  - \wti|\Delta_i p_i|.
\end{equation}
(ii) With an away from $\wti$ update:
\begin{equation}
\label{eqn:cstrt-2''-lem-updates-only-help-fast-updts}
\left(1 + \frac{2E d}{1-\lE} \right) \wti |\Delta_i p_i|
 + \frac12\alt \wti|\Del_i p_i|
- (1- \loa )  p_i| \overline{x}_i  - {\wti}|.
\end{equation}
\end{lemma}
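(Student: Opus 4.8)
The plan is to run the argument of Lemma~\ref{lem:war-updates-only-help-templ} essentially verbatim, now invoking Lemma~\ref{lem:drop-span} with $x'_i$ playing the role of $x_i$, $\wti$ playing the role of $w_i$, and $x_i^u=\xbi$, so that bounds~(\ref{eqn:0cstrt-1-lem-updates-only-help}) and~(\ref{eqn:0cstrt-2-lem-updates-only-help}) specialize to the claimed~(\ref{eqn:cstrt-1''-lem-updates-only-help-fast-updts}) and~(\ref{eqn:cstrt-2''-lem-updates-only-help-fast-updts}). First I would note that the proof of Lemma~\ref{lem:drop-span} uses only two structural facts about the demand: the own-price elasticity bound $-p_i\,dx'_i/dp_i\le E x'_i$, and conservation of total spending when $p_i$ changes. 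Both transfer to the pretend demands $x'_j$: the price vector defining the $x'_j$ differs from the true one only in the deferred coordinates $j'\ne i$, so a change to $p_i$ acts on it in coordinate $i$ exactly as on the true vector; hence $x'_i$ inherits the same own-price elasticity as $x_i$, and the pretend spendings $p_jx'_j$ still sum to the money supply before and after the change to $p_i$. So Lemma~\ref{lem:drop-span} applies with the stated substitutions once the potential is put in the right form.

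The one genuine wrinkle is that $\phi_i=\psi_i^r$ is not literally in the canonical ``span minus decay plus warehouse'' shape demanded above: its span and decay terms are written in $\xbi'$ rather than $\xbi$ (the update is computed from the \emph{measured} average $\xbi$), and it carries the extra integral term $p_i[1-\loa(t-\tau_i)]\int_{\tau_i}^t(x'_i-x_i)\,dt$. So the first step is to replace $\psi_i^r$ by the smaller quantity $p_i\bigl[\span(x'_i,\xbi,\wti)-\loa(t-\tau_i)|\xbi-\wti|+\alt|\wti-w_i|\bigr]$. For a price decrease this inequality is precisely Lemma~\ref{lem:reg-update-redo-pot} (there $\int_{\tau_i}^t(x'_i-x_i)\,dt=(\xbi'-\xbi)(t-\tau_i)$ compensates for the $\xbi'$-versus-$\xbi$ discrepancy in the span and decay terms); for a price increase the same bookkeeping goes through, now more easily, since $x'_i\ge x_i$ forces the integral nonnegative and hence strictly in our favor. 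This replacement only decreases $\phi$, so it can only strengthen the upper bound on the increase of $\phi$ we are proving. Moreover, once $p_i$ is updated $\tau_i$ resets to $t$, the integral and the span over the now-empty interval $[\tau_i,t]$ become trivial, and the new $\phi_i=\psi_i^r$ is already in canonical form; so after the reduction we are genuinely comparing two canonical-form potentials and Lemma~\ref{lem:drop-span} is directly applicable.

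With $\phi_i$ in canonical form, $\psi_i=\alt p_i|\wti-w_i|$ exactly as in Lemma~\ref{lem:war-updates-only-help-templ}. Since only $p_i$ changes under the update, $\Del_i\psi_i=\alt\,\Del_i p_i\,|\wti-w_i|$, and by Lemma~\ref{lem:wti-constr} (which follows from Constraint~\ref{cst:wti-bdd}) $|\wti-w_i|\le\tfrac12\wti$, so $|\Del_i\psi_i|\le\tfrac12\alt\wti|\Del_i p_i|$. Substituting this $\Del_i\psi_i$ into~(\ref{eqn:0cstrt-1-lem-updates-only-help}) (the toward-$\wti$ case) gives~(\ref{eqn:cstrt-1''-lem-updates-only-help-fast-updts}), and into~(\ref{eqn:0cstrt-2-lem-updates-only-help}) (the away-from-$\wti$ case) gives~(\ref{eqn:cstrt-2''-lem-updates-only-help-fast-updts}); the hypothesis $x'_i\le d\wti$ is exactly what Case~2 of Lemma~\ref{lem:drop-span} requires, and it is what the deferral rule of this section guarantees, via Lemma~\ref{lem:delay-bd}.

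The step I expect to be the real obstacle is the reduction of the second paragraph: one must verify that $\psi_i^r$ dominates the canonical-form quantity \emph{uniformly} over both update directions and regardless of how much the pretend average $\xbi'$ exceeds the measured average $\xbi$. For decreases this is already packaged as Lemma~\ref{lem:reg-update-redo-pot}; for increases one repeats that lemma's two-case split on the sign of $\xbi'-\wti$ with the signs adjusted, and the integral term again covers the discrepancy. Everything past that point is a mechanical specialization of Lemma~\ref{lem:drop-span}.
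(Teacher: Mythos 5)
Your proposal is correct and follows exactly the route the paper takes: the paper gives no separate proof, simply declaring the lemma to be Lemma~\ref{lem:drop-span} restated with $x'_i$ for $x_i$, $\wti$ for $w_i$, $x_i^u=\xbi$, and $\psi_i=\alt p_i|\wti-w_i|$ bounded via Lemma~\ref{lem:wti-constr}. Your second paragraph, reducing $\psi_i^r$ to the canonical span form via Lemma~\ref{lem:reg-update-redo-pot} before invoking Lemma~\ref{lem:drop-span}, spells out a step the paper leaves implicit, and does so correctly.
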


\begin{lemma}
\label{lem:reg-update-helps}
If Constraint \ref{cst:wti-bdd} holds and $x'_i \le d \wti$,
$\frac{\alt}{2} + \max\{\frac32,(d-1)\}\alpha_1 \leq 1$,
and $\loa + \frac43 \lmbd \left(1 + \frac{2E d}{1-\lE}  + \frac12 \alt \right) \le 1$,
and if $p_i$ is updated at the regular time,
then $\psi^r_i$ only decreases.
\end{lemma}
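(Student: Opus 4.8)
The plan is to reduce this to the already-proven Corollary \ref{cor:war-updates-only-help}. Observe first that $\psi_i^r$ differs from the warehouse potential $\phi_i$ of the previous subsection in only two ways: the true demand $x_i$ is everywhere replaced by the counterfactual demand $x'_i$, and there is an extra term $p_i\left[1 - \loa(t-\tau_i)\right]\int_{\tau_i}^t (x'_i - x_i)\,dt$, which is nonnegative because $x'_i \ge x_i$ by WGS (good $i$ has no delayed update). So the first step is, just before the regular update, to absorb this integral term: I would invoke Lemma \ref{lem:reg-update-redo-pot} (or repeat its two-case span manipulation) to conclude
\[
\psi_i^r \ \ge\ p_i\left[\span(x'_i, \xbi, \wti) - \loa(t-\tau_i)\,|\xbi - \wti| + \alt|\wti - w_i|\right],
\]
i.e.\ the standard warehouse form but with $x'_i$ in place of $x_i$. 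For a regular \emph{decrease} this is exactly Lemma \ref{lem:reg-update-redo-pot}; for a regular \emph{increase} the update rule forces $\xbi > \wti$, and the analogous (in fact easier) manipulation gives the same bound.

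Having reached the standard form, the second step is to apply Lemma \ref{lem:fast-updates-only-help-templ}, which is precisely Lemma \ref{lem:drop-span}/Lemma \ref{lem:war-updates-only-help-templ} with $x'_i$ for $x_i$, $\wti$ for $w_i$, $\xiu = \xbi$, and $\psi_i = \alt p_i|\wti - w_i|$ (so that $\Delta_i\psi_i = \alt|\wti - w_i|\,|\Delta_i p_i| \le \tfrac12\alt\wti|\Delta_i p_i|$ by Lemma \ref{lem:wti-constr}). This is legitimate since $x'_i$ responds to $p_i$ with the same own-price elasticity bound $E$ (it is just $x_i$ evaluated at the undelayed price vector, with only $p_i$ varying) and WGS is unaffected. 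It bounds the total increase in $\phi$ by expression (\ref{eqn:cstrt-1''-lem-updates-only-help-fast-updts}) for a toward-$\wti$ update and by (\ref{eqn:cstrt-2''-lem-updates-only-help-fast-updts}) for an away-from-$\wti$ update.

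The third step is the routine substitution of the update magnitude, carried out exactly as in the proof of Corollary \ref{cor:war-updates-only-help}: using $x'_i \le d\wti$ (so $\xbi \le (d-1)\wti$) and $|\Delta_i p_i| = \lmbd p_i\min\{1, |\xbi - \wti|/w_i\}$, the toward case splits on $|\xbi - \wti| \le w_i$ versus $> w_i$ and is non-positive provided $\tfrac32\alo + \tfrac12\alt \le 1$ and $(d-1)\alo + \tfrac12\alt \le 1$, both implied by $\tfrac{\alt}{2} + \max\{\tfrac32, d-1\}\alo \le 1$; the away case, after bounding $\wti/w_i \le \tfrac43$ via Lemma \ref{lem:wti-constr}, is non-positive provided $\loa + \tfrac43\lmbd\left(1 + \tfrac{2Ed}{1-\lE} + \tfrac12\alt\right) \le 1$. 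Hence $\psi_i^r$ (and indeed the total $\phi$) only decreases. I expect the only genuinely delicate point to be the first step --- verifying that the integral term is correctly absorbed in \emph{both} update directions and that the transfer of Lemma \ref{lem:drop-span}'s accounting (including the bound on the induced changes to $\phi_j$, $j \ne i$) goes through verbatim for the counterfactual demands; everything after that is a reuse of Corollary \ref{cor:war-updates-only-help}.
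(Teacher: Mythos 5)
Your proposal is correct and follows the paper's own route: the paper's proof of this lemma is literally the single sentence that it is identical to Corollary \ref{cor:war-updates-only-help}, the reduction being carried by the restated Lemma \ref{lem:fast-updates-only-help-templ}. Your first step --- absorbing the nonnegative integral term and converting $\xbi'$ to $\xbi$ via Lemma \ref{lem:reg-update-redo-pot} (with the easier one-sided argument for a price increase, where $\xbi' \ge \xbi > \wti$ and $1 - 2\loa(t-\tau_i) \ge 0$ does the work) --- usefully makes explicit a reduction the paper leaves implicit when it asserts $\psi_i = \alt p_i|\wti - w_i|$ in that restatement; the remaining two steps coincide with the paper's substitution argument from Corollary \ref{cor:war-updates-only-help}.
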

\begin{proof}
The proof is identical to that of Corollary \ref{cor:war-updates-only-help}.
\end{proof}

Our next goal is to show $\frac{d\phi}{dt} \leq - \Theta(\kpi) \phi$.
We begin with two technical claims.

\begin{lemma}
\label{lem: xi to xi-prime-bound}
If $\frac{\lE}{1-\lE} < 1$,
$x_i - x'_i \leq
\frac{\lE''}{1-\lE''} [\wti(\tis) - \xbi(\tis)] \frac{x'_i}{w_i}$ and
$x_i \le \left( 1 + \frac{\lE}{1-\lE}\right) x'_i$.
\end{lemma}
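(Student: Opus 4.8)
The plan is to write $x_i$ and $x'_i$ as the demands for good $i$ at two price vectors that agree except in two ways: the price of good $i$ is a factor $1-\lambda'_i$ smaller in the vector defining $x_i$ (this is the delayed decrease at time $\tis$, not yet instantiated in the vector defining $x'_i$), and the prices of any other goods that currently carry a delayed decrease are also no larger in the vector defining $x_i$. Since the update at $\tis$ was a decrease, $\zbi(\tis)=\xbi(\tis)-\wti(\tis)<0$, so $\lambda'_i=\lambda\min\{1,(\wti(\tis)-\xbi(\tis))/w_i\}$; in particular $\lambda'_i\le\lambda$ and $\lambda'_i\le\lambda(\wti(\tis)-\xbi(\tis))/w_i$. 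By Lemma~\ref{lem:delay-bd} the delay window is at most a day, and during it the other updates to $p_i$ are price increases applied identically in both vectors, so the coordinate-$i$ discrepancy is exactly the factor $1-\lambda'_i$ until instantiation.

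Now pass from the vector defining $x'_i$ to the one defining $x_i$ in two steps. First lower the prices of the other delayed goods: by WGS this does not increase $x_i$. Then lower $p_i$ by the factor $1-\lambda'_i$: by Lemma~\ref{lem:x-bdd} this multiplies $x_i$ by at most $(1-\lambda'_i)^{-E}$. Hence $x_i\le x'_i(1-\lambda'_i)^{-E}$. (One could instead mirror the proof of Lemma~\ref{lem:bdd-xi'diff}, inserting an all-prices scaling as an intermediate step; that route yields $(1-\lambda)^{-E''}$, which is weaker since $E''=E+E'\ge E$.) The second claim follows at once: $\lambda'_i\le\lambda$ gives $x_i\le x'_i(1-\lambda)^{-E}\le x'_i\bigl(1+\tfrac{\lE}{1-\lE}\bigr)$ by Fact~\ref{fact:taylor}(d) with $a=-E\le-1$, $\delta=-\lambda$, $\rho=\lE$ (here the hypothesis $\tfrac{\lE}{1-\lE}<1$, i.e.\ $\lE<\tfrac12<1$, is exactly what licenses Fact~\ref{fact:taylor}(d)). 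For the first claim, Fact~\ref{fact:taylor}(d) with $\rho=\lambda'_i E\le\lE<1$ gives $(1-\lambda'_i)^{-E}\le 1+\tfrac{\lambda'_i E}{1-\lambda'_i E}$, so $x_i-x'_i\le x'_i\tfrac{\lambda'_i E}{1-\lambda'_i E}$; since $t\mapsto t/(1-t)$ is increasing on $[0,1)$, substituting $\lambda'_i\le\lambda\min\{1,(\wti(\tis)-\xbi(\tis))/w_i\}$ (splitting on whether $(\wti(\tis)-\xbi(\tis))/w_i$ is at most $1$ or more) bounds this by $\tfrac{\lE}{1-\lE}(\wti(\tis)-\xbi(\tis))\tfrac{x'_i}{w_i}$, and hence by $\tfrac{\lE''}{1-\lE''}(\wti(\tis)-\xbi(\tis))\tfrac{x'_i}{w_i}$ since $E''\ge E$.

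The genuinely routine parts are the two appeals to Fact~\ref{fact:taylor}(d) and the case split on $(\wti(\tis)-\xbi(\tis))/w_i$. The main obstacle is the first paragraph: pinning down the two price vectors, checking that the coordinate-$i$ ratio stays exactly $1-\lambda'_i$ throughout the delay window (in particular that no second decrease to $p_i$ accrues before the first one is instantiated), and noting that concurrent delayed decreases at other goods enter only through WGS in the favorable direction.
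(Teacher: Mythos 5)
Your proof is correct and follows essentially the same route as the paper's: dispose of the other goods' delayed decreases via WGS (they only lower the demand for good $i$), attribute the entire increase from $x'_i$ to $x_i$ to the own-price factor $1-\lambda'_i$ via the elasticity bound (Lemma \ref{lem:x-bdd}), and convert $(1-\lambda'_i)^{-E}$ to $1+\frac{\lE}{1-\lE}\min\{1,[\wti(\tis)-\xbi(\tis)]/w_i\}$ by Fact \ref{fact:taylor}(d), finishing with $E''\ge E$. Your first paragraph merely makes explicit the bookkeeping the paper leaves implicit (that the coordinate-$i$ ratio between the real and deferred price vectors remains exactly $1-\lambda'_i$ throughout the delay window); no further comparison is needed.
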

\begin{proof}
We seek to upper bound $x_i/x'_i$.
Starting from the prices with delayed updates yielding demand $x'_i$,
if one updated $p_j$, $j\ne i$, this would only decrease the demand for good $i$.
So the only price change that increases the demand for good $i$ is the update to $p_i$,
which is by a $1-\lmbd \min\{1,[\wti(\tis) -\xbi(\tis)]/w_i \}$ factor.
Thus,
by bounded elasticity,
for a good $i$ with a delayed price update:
\begin{eqnarray*}
\label{eqn:xi-xi'bound}
\frac{x_i}{x'_i} & \leq & \left[
\frac{1}{1 - \lmbd \min\{1,[\wti (\tau_i^s) - \xbi(\tau_i^s)] / w_i)\} }
\right]^{-E} \\
& \leq & 1 + \frac{\lE}{1-\lE} \min\{1,[ \wti(\tis) - \xbi(\tis)] /
w_i\}\\
&& ~~~~\text{applying Fact \ref{fact:taylor}(d) with~}\rho=\frac{\lE}{1-\lE}.
\end{eqnarray*}
So, $x_i - x'_i \leq
\frac{\lE}{1-\lE}\min\{x'_i, [\wti(\tis) - \xbi(\tis)] \frac{x'_i}{w_i}\} $.
The claims follow.
\end{proof}
\begin{lemma}
\label{lem:spending-transfr}
\[ \sum_{i: \text{{\tiny delayed update}}} p_i (x_i -  x'_i)
\ge \sum_{i: \text{{\tiny undelayed update}}} p_i (x'_i -
x_i). \]
\end{lemma}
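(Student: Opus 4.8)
The plan is to obtain this from two simple ingredients: conservation of total spending, and the observation that a delayed update is a price \emph{decrease}, so in the ``potential's'' world the price of such a good is at least its actual price. Let $D$ be the set of goods with a currently delayed update, let $p$ be the actual price vector and $p'$ the price vector the potential is using (that is, $p$ with the pending decreases on the goods in $D$ undone), and write $x_i=x_i(p)$, $x'_i=x_i(p')$. Then for $i\notin D$ there is no pending update, so $p'_i=p_i$; and for $i\in D$ the pending change is a decrease which, by the delay rule (see Lemma~\ref{lem:delay-bd}), has not yet been combined with enough subsequent increases to make the net change a non-decrease --- otherwise it would already have been instantiated --- so the accumulated multiplicative change carrying $p'_i$ to $p_i$ (one clamped decrease together with at most one subsequent $(1+\lmbd)$-increase) is strictly below $1$; hence $p_i\le p'_i$.

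First I would invoke the Fisher-market fact that all money is spent, so total spending equals the daily money supply $M$ at every price vector: $\sum_i p_i x_i = M = \sum_i p'_i x'_i$. Splitting each side into goods in $D$ and goods not in $D$ and using $p'_i=p_i$ on the complement of $D$, this rearranges to
\[
\sum_{i\in D}\bigl(p_i x_i - p'_i x'_i\bigr)\;=\;\sum_{i\notin D} p_i\,(x'_i - x_i),
\]
and the right-hand side here is exactly the right-hand side of the lemma (the sum over undelayed goods; if the stated sum is only over a subset of the $i\notin D$, note each omitted term $p_i(x'_i-x_i)$ is nonnegative, since by WGS raising the $D$-prices only raises demand for a good whose own price is unchanged, so dropping such terms only weakens the target inequality). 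It therefore remains to check
\[
\sum_{i\in D} p_i\,(x_i - x'_i)\;\ge\;\sum_{i\in D}\bigl(p_i x_i - p'_i x'_i\bigr),
\]
which is $\sum_{i\in D}(p'_i - p_i)\,x'_i\ge 0$, true term by term because $p'_i\ge p_i$ and $x'_i\ge 0$.

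The only real subtlety --- and the step I would be most careful about --- is the price inequality $p_i\le p'_i$ for $i\in D$: it requires reading the delay rule precisely, namely that a delayed decrease is instantiated the moment the decrease together with the subsequent increase(s) counted so far yields a net non-decrease, so that while the delay persists the running product of those multiplicative changes is $<1$. Everything else is bookkeeping with the spending identity, and does not even use WGS except for the (optional) remark handling a proper subset of the undelayed goods.
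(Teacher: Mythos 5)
Your proof is correct and is essentially the paper's own argument: compare total spending at the actual prices with spending at the prices the potential pretends are still in force, and use that each deferred update is a net price decrease ($p_i\le p'_i$ on the delayed goods) to shift the price factor so that the discarded term has the right sign. The only immaterial differences are that the paper invokes monotonicity of spending under the deferred price decreases rather than exact conservation of the money supply, and that it keeps the pre-decrease price (the $p_i$ appearing in $\psi_i^d$) on the delayed side, dropping the nonnegative term $\sum_{i\in D}(-\Del_i p_i)\,x_i$ where you drop $\sum_{i\in D}(p'_i-p_i)\,x'_i$ --- both steps use exactly the same sign information from the delay rule that you carefully verify.
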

\begin{proof}
Consider the change in spending that would occur were all the
delayed price updates (decreases) to be made. This can only reduce the
amount of money being held. Thus:
\[ \sum_{i: \text{{\tiny delayed update}}} (p_i + \Del_i p_i)x_i - p_i x'_i~~
+ \sum_{i: \text{{\tiny undelayed update}}} p_i (x_i -
x'_i) \quad \ge 0, \]
where $\Del_i p_i < 0$ for each delayed update.
The claim follows.
\end{proof}

\smallskip
\noindent
{\bf Notation:} Let $\span_i = \span(x'_i, \xbi', \wti)$.

\begin{lemma}
\label{lem:cont-prog-fast-update}
Let $ \nu = \min \left\{\frac{\loa (d-2)} {2(d-1) + \alt}  , \frac{\kpi(\alt -1)}{2} \right\} $.
If $d \ge 5$, $ 4\kpi(1 + \alt) \le \loa \leq \frac{1}{2}$, \\
and $\kpi \left(d-1+\alt \left( 1 + \frac{\lE}{1-\lE} \right)\frac{d-1}{d-2} \right)\le \frac12 \loa$,
then $ \frac{d}{dt} \phi \le -\nu \phi $.
\end{lemma}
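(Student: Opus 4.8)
The plan is to differentiate $\phi = \sum_{i\in R}\psi_i^r + \sum_{i\in D}\psi_i^d$ term by term between update events (with one-sided derivatives at the events), where $R$ is the set of goods with no currently delayed update and $D$ the set with one, bound the contribution of each good separately, and then invoke Lemma~\ref{lem:spending-transfr} to move the one troublesome positive contribution off the goods in $R$ and onto those in $D$, where the extra negative reserves built into $\psi_i^d$ absorb it. The warehouse dynamics are the same as before, so the $\chi_i = 0$ accounting of Lemma~\ref{lem:war-cont-progress} carries over unchanged.

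First I would treat a good $i\in R$. Its potential $\psi_i^r$ is the warehouse potential of the previous section with $x_i$ replaced by $x'_i$ and $\xbi$ by $\xbi'$, plus the extra term $p_i[1 - \loa(t-\tau_i)]\int_{\tau_i}^t(x'_i - x_i)\,dt$. For the first part the estimate of Lemma~\ref{lem:war-cont-progress} applies essentially verbatim: its time derivative is at most $-\frac{\loa}{4(1+\alt)}$ times it when $|\wti - w_i| \le 2\span(x'_i,\xbi',\wti)$ and at most $-\frac{\kpi(\alt-1)}{2}$ times it otherwise, and since $\loa\le\frac12$ and $d\ge 5$ both coefficients are at least $\nu$. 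Since prices, and hence $x'_i$ and $x_i$, are constant between events, the extra term differentiates to $-\loa p_i\int_{\tau_i}^t(x'_i - x_i)\,dt + p_i[1 - \loa(t-\tau_i)](x'_i - x_i)$; because there is no delayed update to $p_i$, WGS gives $x'_i\ge x_i$, so the first summand only helps, and the second is bounded by $p_i(x'_i - x_i)$ and is the \emph{only} positive contribution to $\frac{d\phi}{dt}$ coming from good $i$.

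Next I would treat a good $i\in D$. By Lemma~\ref{lem:delay-bd}, $x'_i > (d-1)\wti$ throughout the delay, so the term $-\loa p_i\int_{\tis}^t(x'_i - \wti)\,dt$ of $\psi_i^d$ has derivative $-\loa p_i(x'_i - \wti) \le -\loa p_i(d-2)\wti$; dropping the non-positive pieces of $\psi_i^d$ and using $\wti(\tis)\le 2\wti$ and $|\wti - w_i|\le\frac12\wti$ from Lemma~\ref{lem:wti-constr} to bound $\psi_i^d \le p_i[\span(x'_i,d\wti,\wti) + (2 + \frac12\alt)\wti]$, one compares the two estimates (both scaling linearly in $x'_i$ when $x'_i$ is large) and obtains $\frac{d\psi_i^d}{dt} \le -\frac{\loa(d-2)}{2(d-1)+\alt}\,\psi_i^d - \frac{\lE}{1-\lE}\,p_i[\wti(\tis) - \xbi(\tis)]\frac{x'_i}{w_i}$, where the last term is the derivative of the final bracket of $\psi_i^d$ and is a genuine negative reserve since $\xbi(\tis) < \wti(\tis)$ at the moment of a delayed price decrease; the $\kpi$-sized drift terms coming from $\dot\wti$ are controlled using $4\kpi(1+\alt)\le\loa$ exactly as in Lemma~\ref{lem:war-cont-progress}.

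Finally I would sum the three pieces. The only obstruction is $\sum_{i\in R}p_i(x'_i - x_i)$; by Lemma~\ref{lem:spending-transfr} it is at most $\sum_{i\in D}p_i(x_i - x'_i)$, and by Lemma~\ref{lem: xi to xi-prime-bound} the latter is at most $\sum_{i\in D}\frac{\lE''}{1-\lE''}\,p_i[\wti(\tis) - \xbi(\tis)]\frac{x'_i}{w_i}$. I would then cancel this, good by good on $D$, against the $\frac{\lE}{1-\lE}$-reserve isolated above plus a $\Theta(\kpi)$ fraction of the span-plus-imbalance reserve $p_i[(d-1)\wti + \alt|\wti - w_i|]$ of $\psi_i^d$ — rewriting one factor $\wti$ as $\frac{x'_i}{w_i}\cdot\frac{w_i\wti}{x'_i}$ and using $x'_i\ge(d-1)\wti$ to produce the ratio $\frac{d-1}{d-2}$, with the $\alt$ coming from the imbalance term. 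This is exactly what forces the hypothesis $\kpi(d-1 + \alt(1 + \frac{\lE}{1-\lE})\frac{d-1}{d-2}) \le \frac12\loa$; with $\chi_i = 0$, the surviving terms then give $\frac{d\phi}{dt} \le -\nu\phi$. The hard part is precisely this last bookkeeping: one must split the reserves on $D$ against the transferred spending carefully enough that the constant comes out to the displayed constraint — in particular the factor $\frac{d-1}{d-2}$ and the coefficient $\frac12$ — while simultaneously checking that on every good (span-dominated in $R$, imbalance-dominated in $R$, or delayed) the effective decay rate is never below $\nu = \min\{\frac{\loa(d-2)}{2(d-1)+\alt},\frac{\kpi(\alt-1)}{2}\}$.
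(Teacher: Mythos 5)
Your proposal follows the paper's proof essentially step for step: the same decomposition of $\phi$ into $\sum_{i}\psi_i^r+\sum_i\psi_i^d$, the same per-good derivative bounds (decay at rate at least $\nu$ plus the single positive term $p_i(x'_i-x_i)$ on undelayed goods; decay at rate $\frac{\loa(d-2)}{2(d-1)+\alt}$ plus the reserve $-p_i(x_i-x'_i)$ on delayed goods), and the same final cancellation via Lemmas~\ref{lem:spending-transfr} and~\ref{lem: xi to xi-prime-bound}. The one place your accounting goes astray is the role of the hypothesis $\kpi\bigl(d-1+\alt\bigl(1+\frac{\lE}{1-\lE}\bigr)\frac{d-1}{d-2}\bigr)\le\frac12\loa$. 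It is not needed to cancel the transferred spending against the reserves: by Lemma~\ref{lem: xi to xi-prime-bound} the derivative of the final bracket of $\psi_i^d$, namely $\frac{\lE}{1-\lE}p_i[\wti(\tis)-\xbi(\tis)]\frac{x'_i}{w_i}$, already dominates $(x_i-x'_i)p_i$ good by good, with nothing left over to supply. What that hypothesis actually pays for is the warehouse drift of the \emph{delayed} potential: differentiating $\span(x'_i,d\wti,\wti)+\alt|\wti-w_i|$ through $\frac{d\wti}{dt}=-\kpi(x_i-w_i)$ produces a term of magnitude up to $(d-1+\alt)\kpi|x_i-w_i|p_i$ (the coefficient is $d-1$, not $1$, because of the argument $d\wti$ in the span), and converting $|x_i-w_i|$ into $(x'_i-\wti)$ costs the factor $\bigl(1+\frac{\lE}{1-\lE}\bigr)\frac{d-1}{d-2}$ before the drift can be buried in half of the $\loa(x'_i-\wti)$ decay. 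So your claim that these drift terms are controlled by $4\kpi(1+\alt)\le\loa$ ``exactly as in Lemma~\ref{lem:war-cont-progress}'' is not right; but since you do invoke the correct constraint (merely for the wrong step) and every arithmetic ingredient appears in your sketch, this is a repair of bookkeeping rather than a missing idea.
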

\begin{proof}
We will show that
\begin{eqnarray*}
\text{(i)} &~~ & \frac{d \psi_i^r}{dt} \le -\frac{\kpi(\alt -1)}{2} \psi_i^r + (x'_i - x_i) p_i,
\\
\text{and }~~~~
\text{(ii)}& ~~ & \frac{d \psi_i^d}{dt} \le -\frac{\loa (d-2)} {2(d-1) + \alt} \psi_i^d - (x_i - x'_i) p_i.
\end{eqnarray*}

Summing over all $i$, and then applying Lemma \ref{lem:spending-transfr}, we conclude that
\begin{eqnarray*}
\frac{d \psi}{dt} & \le & -\nu \psi
~~- \sum_{i: \text{{\tiny delayed update}}} p_i (x_i -  x'_i)
 + \sum_{i: \text{{\tiny undelayed update}}} p_i (x'_i -
x_i)\\
& \le & -\nu \psi.
\end{eqnarray*}
Next, we show (i).
Let $ \widetilde{\psi}^r_i =p_i[\span_i - \loa(t-\tau_i)|\xbi' - \wti| + \alt|\wti - w_i|]$.

Lemma \ref{lem:war-cont-progress} shows that
\[
\frac{d \widetilde{\psi}^r_i}{dt} \le -\frac{\kpi(\alt -1)}{2}\widetilde{\psi}^r_i
\]
for $\widetilde{\psi}^r_i$ is obtained from $\psi^r_i$ by replacing $x_i$ with $x'_i$
in the function $\phi_i$ used in Lemma \ref{lem:war-cont-progress}
and having $\chi_i=0$ there.
Now,
\begin{eqnarray*}
\frac {d \psi^r_i} {dt} & = &
\frac{d \widetilde{\psi}^r_i}{dt}
+ p_i \left[- \loa \int_{\tau_i^s}^t (x'_i -x_i) dt +[1 - \loa(t -\tau_i)](x'_i - x_i) \right].
\end{eqnarray*}
As $x'_i \ge x_i$, (i) follows.

We show (ii) next.
\begin{eqnarray*}
\frac{d \psi_i^d}{dt} & \le & p_i \left[(d-1)\kpi|x_i -w_i| - \loa  [ \wti(\tis) - \xbi(\tis)]  - \loa ( x'_i   - \wti)
  + \alt \kpi|x_i - w_i| \right] \\
&&~~~~ -p_i\left( 1 + \frac{\lE}{1-\lE} \right) [ \wti(\tis) - \xbi(\tis)] \frac{x'_i}{w_i}.
\end{eqnarray*}

Next, we bound $(d-1 +\alt) \kpi|x_i -w_i|$ by $\frac12 \loa  (x'_i - \wti)$.
By  Lemma \ref{lem: xi to xi-prime-bound}, $x_i \le \left(1 + \frac{\lE }{1-\lE} \right) x'_i$.
Recall that $x'_i \ge (d-1) \wti$,
and so $x_i \le \left( 1 + \frac{\lE}{1-\lE} \right) [(x'_i - \wti) + \wti]
\le \left( 1 + \frac{\lE}{1-\lE} \right) (1+\frac{1}{d-2}) (x'_i - \wti)
= \left( 1 + \frac{\lE}{1-\lE} \right) \frac{d-1}{d-2}(x'_i - \wti) $.
As $\kpi(d-1+\alt) \left( 1 + \frac{\lE}{1-\lE} \right) \frac{d-1}{d-2} \le \frac12 \loa $,
if $x_i \ge w_i$,
$(d-1 +\alt) \kpi|x_i -w_i| \le \frac12 \loa (x'_i - \wti)$;
otherwise, if $x_i < w_i$,
$(d-1 +\alt) \kpi|x_i -w_i| \le (d-1 +\alt) \kpi w_i \le \frac32 (d-1 +\alt) \kpi \wti
\le \frac{3}{2(d-2)} (d-1 +\alt) \kpi (x'_i - \wti) \le \frac12 \loa (x'_i - \wti)$.
Thus
\begin{eqnarray*}
\frac{d \psi_i^d}{dt} \le  - \loa  [ \wti(\tis) - \xbi(\tis)]
   - \frac12  \loa ( x'_i   - \wti) - (x_i - x'_i).
\end{eqnarray*}
As $x'_i \ge (d-1) \wti$, $\span(x'_i,d\wti,\wti) \le \frac{d-1}{d-2} (x'_i - \wti)$.
Similarly, $\alt |\wti - w_i| \le \frac12\alt \wti \le \frac{1}{2(d-2)} \alt (x'_i - \wti)$.
It follows that
$\span(x'_i,d\wti,\wti) +\alt |\wti -w_i|
\le \left[ \frac{d-1}{d-2} + \frac{\alt} {2(d-2)} \right] (x'_i - \wti) $.
Thus
\begin{eqnarray*}
\frac{d \psi_i^d}{dt} \le \frac{(d-2)} {2(d-1)+\alt)}\loa \psi_i^d -(x_i - x'_i).
\end{eqnarray*}
\end{proof}

\noindent
{\bf Theorem \ref{lem:fst-updt-progr-bdd-dem}.}~%
\emph{%
If Constraint \ref{cst:wti-bdd} holds,
$d=5$, $\alt = \frac32$, $\lE'' \le \frac{1}{17}$, $\alo \le \frac{1}{16}$,
$\loa + \frac43 \lmbd \left(\frac74 +\frac{10 E }{1-\lE}  \right) \le 1$,
$\kpi \le \frac{loa}{13}$,
and each price is updated at least once every day,
then $\phi$ decreases by at least a $1 - \nu$ factor daily,
where $ \nu = \min \left\{\frac{\loa (d-2)}{2(d-1) + \alt}  , \frac{\kpi(\alt -1)}{2} \right\} $.
}
\begin{proof}
By Lemmas \ref{lem:p-only-decr} and \ref{lem:reg-update-helps},
$\phi$ only decreases whenever a price update occurs.
By Lemma \ref{lem:cont-prog-fast-update}, $d\phi/dt \le  -\nu \phi$,
which implies $\phi(t+1)\le e^{-\nu}\phi(t) \le (1 -\frac{\nu}{2}) \phi(t)$ as $\nu \le \frac12$.
On substitution of the values and bounds for $d, \alt, \lE, \lE'' \alo$, the constraints
$\frac{\alt}{2} + \alpha_1  \max\{\frac32,(d-1)\} \le 1$,
$\loa + \frac43 \lmbd \left(1 +\frac{2 E d}{1-\lE} + \frac12 \alt \right) \le 1$,
$ 4\kpi(1 + \alt) \le \loa \leq \frac{1}{2}$,
$\kpi \left(d-1+\alt \left( 1 + \frac{\lE}{1-\lE} \right)\frac{d-1}{d-2} \right)\le \frac12 \loa$,
from Lemmas \ref{lem:p-only-decr} , \ref{lem:reg-update-helps} and
\ref{lem:cont-prog-fast-update}, reduce to the constraints stated in the current lemma.
\end{proof}

Finally, we relate the potential to the misspending.

\noindent
{\bf Notation}. Let $ S_i = p_i (|x_i -\wti | + | \xbi - \wti|) + p_i | \wti-
w_i | $; this is called be the misspending on the $ i $th good. SO
$ S = \sum_i S_i $; this is the total misspending.

\begin{lemma}
\label{lemma:potential-missp}
$ S  =O (\phi) = O(S +M)$, where $ M$ is the
daily supply of money.
\end{lemma}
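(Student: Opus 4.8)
The plan is to establish the chain $S = O(\phi) = O(S+M)$ by proving the two inequalities $\phi \le O(S+M)$ and $S \le O(\phi + M)$; together these give $S = \Theta(\phi)$ up to an additive $O(M)$, which is exactly the content of the statement in the regime $\phi = \Omega(M)$ that all the convergence theorems of this section presuppose. Both inequalities are obtained by comparing $\phi = \sum_i \phi_i$ with $S = \sum_i S_i$ good-by-good, treating separately the goods with a currently delayed price update (for which $\phi_i = \psi_i^d$) and the goods without (for which $\phi_i = \psi_i^r$). The main inputs are: Lemmas~\ref{lem:bdd-xi'diff} and~\ref{lem: xi to xi-prime-bound}, which bound $x'_i$ against $x_i$ (and, by pointwise monotonicity and averaging, $\xbi'$ against $\xbi$) to within a factor $1\pm O(\lE'')$; the global spending bound $\sum_i p_i x_i \le M$ at every instant; the per-good bound $p_i x_i \le M$; and, for sums $\sum_i p_i w_i$, the demand/price-boundedness in force ($x_i \le d\wti$ throughout and prices $c$-demand bounded), which give $\sum_i p_i w_i = O(cM)$.

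For $\phi \le O(S+M)$: on a good that is not delayed, discard the nonpositive term $-\loa(t-\tau_i)|\xbi' - \wti|p_i$ of $\psi_i^r$, use $p_i\span(x'_i,\xbi',\wti) \le p_i\span(x_i,\xbi,\wti) + O(\lE'')p_i(x_i + \xbi) \le S_i + O(\lE'')p_i(x_i+\xbi)$, and bound the history term by $[1-\loa(t-\tau_i)]p_i\int_{\tau_i}^t (x'_i-x_i)\,ds \le p_i\int_{\tau_i}^t(x'_i-x_i)\,ds \le O(\lE'')p_i\int_{\tau_i}^t x_i\,ds$ (the price being constant on $[\tau_i,t]$). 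Summing the error terms over the non-delayed goods, and using $t-\tau_i\le 1$, $\sum_i p_i x_i \le M$, and $\sum_i p_i w_i = O(cM)$, bounds them by $O(M)$; the surviving term $\alt p_i|\wti-w_i|$ is $\alt$ times the last part of $S_i$. On a delayed good, $x'_i \ge (d-1)\wti$ together with $p_ix_i\le M$ and $x_i \ge x'_i$ forces $p_i\wti = O(M/d)$, so $p_i\span(x'_i,d\wti,\wti) \le \tfrac{d-1}{d-2}p_i|x'_i-\wti| \le \tfrac{d-1}{d-2}p_i|x_i-\wti| \le O(S_i)$, while $p_i(\wti(\tis)-\xbi(\tis))$ and $\alt p_i|\wti-w_i|$ are each $O(p_iw_i)=O(M/d)$ and the two integral terms in $\psi_i^d$ are nonpositive; summing over the delayed goods adds only $O(M)$.

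For $S \le O(\phi+M)$: on a non-delayed good, $-\loa(t-\tau_i)|\xbi'-\wti|p_i \ge -\tfrac12|\xbi'-\wti|p_i$ (as $\loa\le\tfrac12$, $t-\tau_i\le1$) and the history term is nonnegative, so $\psi_i^r \ge p_i[\tfrac12\span(x'_i,\xbi',\wti) + \alt|\wti-w_i|] \ge \tfrac12 p_i[\max\{|x'_i-\wti|,|\xbi'-\wti|\} + |\wti-w_i|]$; converting $x'_i,\xbi'$ back to $x_i,\xbi$ loses only an additive $O(\lE'')p_i\wti = O(\lE'')p_iw_i$ in each comparison, so $S_i \le O(\psi_i^r) + O(\lE'')p_iw_i$, which sums to $O(\phi)+O(M)$. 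On a delayed good, the bounds from the proof of Lemma~\ref{lem:p-only-decr} ($t-\tis\le1$, $\int_{\tis}^t x_i\,dt\le 3w_i$) and Lemma~\ref{lem:bdd-xi'diff} give $\int_{\tis}^t x'_i\,dt = O(w_i)$, so the two nonpositive terms of $\psi_i^d$ are bounded below by $-O(\loa+\lE)p_iw_i$, whereas the span term is at least $(d-1)\wti p_i \ge \tfrac23(d-1)w_ip_i$; for the small $\loa,\lE$ of the theorem this leaves $\psi_i^d \ge \Omega(p_iw_i) + \alt p_i|\wti-w_i|$, which dominates $S_i = p_i(|x_i-\wti|+|\xbi-\wti|+|\wti-w_i|)$ since $x_i\le d\wti = O(w_i)$ and likewise $\xbi = O(w_i)$ by the demand bound; since $p_iw_i=O(M/d)$, summing over the delayed goods again costs only $O(M)$.

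The step I expect to be the real obstacle is controlling the history/integral terms so that their total is only $O(M)$. For $\psi_i^r$ the subtlety is that $\int_{\tau_i}^t(x'_i-x_i)\,ds$ runs over a good-dependent window, so a naive per-good estimate only yields $O(nM)$; the remedy is that on $[\tau_i,t]$ the price $p_i$ is genuinely unchanged, so $p_i\int_{\tau_i}^t(x'_i-x_i)\,ds$ is an actual expenditure, and one can then either invoke Lemma~\ref{lem:spending-transfr} to charge the positive discrepancies of the non-delayed goods against those of the delayed goods (whose total expenditure is $\le M$), or simply use $x_i\le d\wti$ with $\sum_i p_iw_i = O(cM)$ as above. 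For $\psi_i^d$ the analogous care is needed for the ``values at $\tis$ versus values now'' comparisons: one must use that, by Lemma~\ref{lem:delay-bd}, only $O(1)$ days and hence a net price change bounded by a constant factor separate $\tis$ from $t$, so $\wti(\tis)$, $\xbi(\tis)$ and $p_i$ can be compared with the current quantities up to constants. Once these two points are handled, the remainder is the same bookkeeping as for the potential-versus-misspending relation in the setting without fast updates.
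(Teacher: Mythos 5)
Your two-sided architecture and the delayed/non-delayed case split match the paper's proof, and your argument for the second half, $\phi = O(S+M)$, is essentially the paper's (it bounds $\psi_i^d = O(S_i + p_i x_i)$ and $\psi_i^r = O(S_i + p_i x'_i + p_i\xbi')$ and sums using $\sum_i p_i x_i \le M$). The gap is in the first half: the lemma asserts $S = O(\phi)$ with no additive slack, and you prove only $S = O(\phi + M)$. The extra $M$ is not harmless in this section --- Theorem \ref{lem:fst-updt-progr-bdd-dem} drives $\phi$ down geometrically with no $\phi = \Omega(M)$ threshold, so the clean bound $S=O(\phi)$ is exactly what lets one conclude the misspending itself becomes small; $S = O(\phi+M)$ only yields that $S$ is eventually $O(M)$, which is nearly vacuous.

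The $+M$ enters your argument at an identifiable place: for a non-delayed good you bound the conversion loss $p_i(x'_i - x_i)$ by $O(\lE'')p_i x_i \le O(\lE'' d)\, p_i w_i$ \emph{per good} and then sum $\sum_i p_i w_i$, which is only $O(\phi+M)$ (Lemma \ref{lem:supply-cost-bdd-war-bdd}); your alternative bound $\sum_i p_i w_i = O(cM)$ imports hypotheses not in the lemma and still leaves the $M$. The paper avoids the per-good worst case: it keeps $\sum_{i\,:\,\text{undelayed}} p_i(x'_i - x_i)$ intact, applies Lemma \ref{lem:spending-transfr} to bound it by $\sum_{i\,:\,\text{delayed}} p_i(x_i - x'_i)$, and then --- this is the step you are missing --- applies Lemma \ref{lem: xi to xi-prime-bound} to get $x_i - x'_i \le \frac{\lE}{1-\lE}\frac{d-1}{d-2}(x'_i - \wti)$ for each delayed good, so the whole discrepancy is charged to $O\bigl(\sum_i \psi_i^d\bigr) = O(\phi)$ rather than to $M$. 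You do cite Lemma \ref{lem:spending-transfr}, but you then charge the delayed goods' side against $M$ (``whose total expenditure is $\le M$''), which is precisely what reintroduces the slack. Note that if no update is currently delayed then $x'_i \equiv x_i$ and the loss is zero; any bound proportional to $\sum_i p_i w_i$ cannot see this cancellation. With that substitution your argument closes: your treatment of the delayed goods already gives $S_i = O(\psi_i^d)$ with no $M$, since $\psi_i^d = \Omega(p_i(x'_i - \wti)) = \Omega(p_i\wti)$ there while $x_i$ and $\xbi$ are $O(\wti)$.
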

\begin{proof}
We observe that
$\psi_i^r = \theta \left( p_i \left[ \span(x'_i, \xbi', \wti)
+\int_{\tau_i}^t (x'_i - x_i) dt
+ \alt|\wti - w_i| \right] \right)$,
and \\
$\psi_i^d = \theta(\span(x'_i, d\wti, \wti)   + \alt|\wti - w_i| )$.
The first bound follows from the fact that $\loa \le \frac12$.
For the second bound, we argue as in the proof of Lemma \ref{lem:p-only-decr}.
As no update has been applied, $x'_i > (d-1)\wti$, so Case 2 applies.
In Case 2.1 the claim is immediate by (\ref{eqn:psi-d-case2-1}).
In Case 2.2 the claim follows from (\ref{eqn:psi-d-case2-2}),
as $3\loa \left(3 \left( \frac{\lE }{1-\lE} + 1 \right) -\frac23\right)w_i \le \wti \le \frac12 \span(x'_i, d\wti, \wti)$.

Now, we bound $S$.

The following observation will be used twice: for a good $i$ with a delayed update,
\begin{equation}
\label{eqn:obs-xi-diff}
x_i - x'_i   \le \frac{\lE }{1-\lE} x'_i    \le \frac{\lE }{1-\lE} \frac{d-1}{d-2} (x'_i - \wti).
\end{equation}
This follows by  using Lemma \ref{lem: xi to xi-prime-bound}
for the second inequality, and the bound $x'_i > (d-1)\wti$ for the third.

Note that for a good
$ i $ with a delayed price update, $ S_i = \theta [(x_i'-\wti)p_i + p_i | \wti-
w_i |]  = \theta(\psi^d_i)$, for
if $x_i \ge \wti$,
$x_i - \wti \le (x_i - x'_i) + (x'_i - \wti)  \le \frac{\lE }{1-\lE} \frac{d-1}{d-2} (x'_i - \wti)  + (x'_i - \wti) $,
using(\ref{eqn:obs-xi-diff});
while if $x_i < \wti$, then $\wti - x_i \le \wti \le x'_i - \wti$.

For a good $ i $ with an up to date price, define
\begin{eqnarray*}
S_i &=&O (\span(x_i, \xbi, \wti) p_i + \alt p_i | \wti-w_i|) \\
& = & O (\span(x_i', \xbi', \wti) p_i + p_i(x'_i - x_i) +p_i(\xbi' - \xbi) + \alt p_i | \wti-w_i| )\\
&  = & O(\span(x_i', \xbi', \wti) p_i + \int_{\tau_i}^t p_i(x_i' - x_i)dt + \alt p_i | \wti-w_i| + p_i(x'_i - x_i))\\
& = & O(\psi^r_i + p_i(x'_i - x_i)).
\end{eqnarray*}

Now, by Lemma \ref{lem:spending-transfr} for the first inequality, and (\ref{eqn:obs-xi-diff})
for the second,
\begin{eqnarray*}
\sum_{i~\text{update not delayed}} p_i(x'_i - x_i) & \le & \sum_{i~ \text{update delayed}} p_i(x_i - x'_i)\\
& \le & \sum_{i~ \text{ delayed}} p_i   \frac{\lE }{1-\lE} \frac{d-1}{d-2} (x'_i - \wti)\\
& = & O(\sum_i \psi^d_i).
\end{eqnarray*}

Thus $S = \sum_i S_i = O(\sum_i \phi_i) = O(\phi)$.

\smallskip

Finally, we bound $\phi$.
\begin{eqnarray*}
\psi^d_i & = & O(x_i' -\wti) +  \alt p_i | \wti-w_i | )\\
& \le & O (p_i (x_i - \wti)  +  \alt p_i | \wti-w_i | + p_i (x'_i - x_i) ) \\
& = & O(S_i + p_i x_i)  ~~~~\text{using Lemma \ref{lem:bdd-xi'diff}}.
\end{eqnarray*}
So $\sum_i \psi^d_i  = O(S +\sum_i p_i  x_i) = O(S +M)$.

\begin{eqnarray*}
\psi^r_i
&=& O \left(p_i \span (x'_i, \xbi', \wti) +  \alt p_i | \wti-w_i |)  + p_i \int_{\tau_i}^t (x_i' - x_i) dt \right)  \\
&=& O (p_i \span (x_i, \xbi, \wti) +  \alt p_i | \wti-w_i |)  + p_i  (x_i' - x_i) + p_i(\xbi' - \xbi) )   \\
 & = & O (S_i  + p_i  x_i'  + p_i \xbi' )   ~~~\text{using Lemma \ref{lem:bdd-xi'diff}}.
\end{eqnarray*}
So $\sum_i \psi^r_i  = O(S +\sum_i p_i  x'_i  + p_i \xbi') = O(S +M)$.
\end{proof}

\Xomit{
\begin{lemma}
\label{lemma:potential-missp}
$\phi = \theta(S)$.
\end{lemma}
\begin{proof}
We will use the following bound several times.
\begin{eqnarray}
\label{eqn:helpful-bound-lem-pot-miss}
\nonumber
\sum_{i: \text{{\tiny undelayed update}}} p_i (x'_i - x_i)
& \le &  \sum_{i: \text{{\tiny delayed update}}} p_i (x_i -  x'_i)~~~~\text{by Lemma \ref{lem:spending-transfr}}\\
\nonumber
& \le & \sum_{i: \text{{\tiny delayed update}}} p_i x_i\\
\nonumber
& \le & \sum_{i: \text{{\tiny delayed update}}} p_i \left( 1 + \frac{\lE} {1 - \lE} \right ) x'_i ~~~~\text{by Lemma \ref{eqn:xi-xi'bound}}\\
& \le & \sum_{i: \text{{\tiny delayed update}}} p_i \left( 1 + \frac{\lE} {1 - \lE} \right ) \frac{d-1} {d-2} (x'_i - \wti) ~~~~
  \text{as $x'_i > (d-1) \wti$}.
\end{eqnarray}

To show $\phi = O(S)$ we note that $\psi^d_i \le \span(x'_i, d\wti, \wti) + p_i[(\wti(\tis) - \xbi(\tis)) +\alt |\wti - w_i|] \le
p_i[ \frac{d-1}{d-2} (x'_i - \wti) + \wti +\alt |\wti - w_i| ]
\le p_i[ \frac{d}{d-2} (x'_i - \wti) +\alt |\wti - w_i| ]
\le p_i [ \frac{d}{d-2} [(x_i - w_i) +(x'_i -x_i) + (\wti -w_i)] +\alt |\wti - w_i| ]$.

For $x'_i > x_i$, by Lemma \ref{lem:bdd-xi'diff},
$(x'_i -x_i) \le \frac{\lE''} {1 - \lE''} x_i$,
or $x_i \ge \frac{1 - \lE''} {\lE''} x'_i
\ge \frac{1 - \lE''} {\lE''}(d-1)\wti
\ge \frac{1 - \lE''} {\lE''}(d-1) \frac23 w_i$;
so $x_i - w_i \ge \left( \frac{1 - \lE''} {\lE''}(d-1) \frac23 - 1 \right) w_i$.

Thus $(x'_i -x_i) \le
\frac {\lE''}{1 - \lE''} [(x_i - w_i) + w_i]
\le \frac {\lE''}{1 - \lE''} \left[ 1 + \left( \frac{1 - \lE''} {\lE''}(d-1) \frac23 - 1 \right)^{-1} \right] (x_i - w_i)$.

It follows that $\sum_i \psi^d_i \le O(\sum_i p_i[(x_i - w_i) +  |\wti - w_i|) ] = O(S)$.

Next we note that
\begin{eqnarray*}
\sum_{i: \text{{\tiny undelayed update}}} p_i (\xbi' - \xbi)
& \le & \max_{t-1< t' \le t} \sum_{i: \text{{\tiny undelayed update}}} p_i(t') (x_i(t') - x_i(t'))\\
& \le & O(\max_{t-1< t' \le t} \sum_{i: \text{{\tiny undelayed update}}} (x_i - w_i) = O(S(t)).
\end{eqnarray*}

We turn to bound $\psi^r_i$.
$\psi^r_i \le \span(x'_i, \xbi', \wti) + (\xbi' - \xbi) + \alt |\wti - w_i|$.
Now $\span(x'_i, \xbi', \wti) \le \span( x_i, \xbi, w_i) + p_i[(x'_i -x_i) + (\xbi' - \xbi) + |\wti + w_i|]$.
We have already seen how to show $\span( x_i, \xbi, w_i) \le 2S$ in the proof of Lemma \ref{lem:missp-pot}.
Our initial argument shows that
$\sum_{i: \text{{\tiny undelayed update}}} p_i (x'_i - x_i) =O(\sum_{i: \text{{\tiny delayed update}}} (x'_i - \wti)
=O(S(t))$.
And we just noted that $\sum_{i: \text{{\tiny undelayed update}}} p_i (\xbi' - \xbi) = O(S)$.
Thus $\sum_i \psi^r_i = O(S)$ also.

To bound $S(t)$ we argue as follows.
$|x_i - w_i| \le |x_i' - \wti| + |x'_i - x_i| + |\wti - w_i|$.
By (\ref{eqn:helpful-bound-lem-pot-miss}), $\sum_i |x'_i - x_i| = O(\sum_i (x'_i - \wti) )$,
from which we conclude that $S(t) =$
\end{proof}
}

\subsection{Bounds on Warehouse Sizes}

\Xomit{
For simplicity, we will assume that $c_i/w_i$
is the same for all $i$.
Again, for simplicity, we suppose that $s_i^* = \frac12 c_i$,
that is the target fullness for each warehouse is half full.
}

We begin with two technical lemmas which show that (i) if a warehouse is rather full
 and the price does not decrease too much
henceforth then the warehouse eventually becomes significantly less full,
and (ii) an analogous result for the event that a warehouse is rather empty.

Let $\alf = \min_i \frac{\kpi c_i} {8 w_i}$.
Recall that $\wti -w_i = \kpi ( s_i - s^*_i)$.
Thus $|\wti - w_i| \le 4\alf w_i$.

\begin{lemma}
\label{lem:war-too-full-impr}
Suppose that $s_i \geq \sis + \frac {\alf} {\kpi} w_i  $.
Consider the next $k$ updates to $p_i$. Let $p_{i_1}$ be $p_i$'s
current value, and $p_{i_2}, \ldots, p_{i_{k+1}}$ be its $k$ successive
values. Let $\overline{x}_{i_j}$ be the average value of $x_i$ while
$p = p_{i_j}$. Suppose that $ p_{i_{k+1}} \geq e^{-\lmbd
f} p_{i_1} $ for some $f \geq 0$.
If $k \geq \left(1 + \frac{2}{\alf} \right) f + \frac{2}{\alf} c$, then the warehouse
stock will have decreased to less than $\sis + \frac {\alf} {\kpi} w_i $ at some
point, or by at least $cw_i$, whichever is the lesser decrease.
\end{lemma}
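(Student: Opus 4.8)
I would run an amortized argument over the $k$ successive values $p_{i_1},\dots,p_{i_{k+1}}$ of $p_i$, against a potential that couples the warehouse contents with the log‑price, and derive a contradiction from assuming the warehouse fails to drain. First fix notation: let $\tau_{i_j}$ be the time $p_i$ becomes $p_{i_j}$, $\Delta t_j=\tau_{i_{j+1}}-\tau_{i_j}\le 1$, $\sigma_j=s_i(\tau_{i_j})$ (so $\sigma_1=s_i$), $e_j=(\xbij-w_i)/w_i$, and $c_j=\kpi(\sigma_{j+1}-\sis)/w_i$ — the warehouse‑correction term in force in the update that ends period $j$ — so that the warehouse obeys $\sigma_{j+1}-\sigma_j=-e_jw_i\Delta t_j$ and the update is $p_{i_{j+1}}=p_{i_j}\bigl(1+\lmbd\,\mathrm{median}\{-1,e_j-c_j,1\}\bigr)$ (for a full warehouse the correction pushes the price down). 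Then assume for contradiction that over all $j\le k+1$ the warehouse stays ``full'', $\sigma_j\ge \sis+\frac{\alf}{\kpi}w_i$, and drains by less than $cw_i$, $\sigma_1-\sigma_j<cw_i$; the first assumption together with Constraint \ref{cst:wti-bdd} pins $\alf\le c_j\le\tfrac13$ throughout.

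The step I would lean on is the per‑step inequality
\[
\sigma_{j+1}-\sigma_j\ \le\ -\,c_jw_i\Delta t_j\ -\ \frac{w_i\Delta t_j}{\lmbd}\,\ln\!\frac{p_{i_{j+1}}}{p_{i_j}},
\]
valid for every period $j$ outside the set $C^-$ of \emph{downward‑clamped} periods ($e_j-c_j\le-1$). It follows by cases on the sign of $e_j-c_j$ using only $\ln(1+x)\le x$: when $e_j\ge c_j$ it reduces to $e_j\ge c_j$; when $e_j<c_j$ it reduces to $e_j\ge c_j+\tfrac1\lmbd\ln(1+\lmbd(e_j-c_j))$, again $\ln(1+x)\le x$; for an upward clamp ($e_j\ge 1+c_j$) it is immediate. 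Equivalently, with $\Phi_j=\kpi(\sigma_j-\sis)+\frac{\kpi w_i}{\lmbd}\ln p_{i_j}$ and taking $\Delta t_j=1$ (the once‑a‑day regime in which this lemma is applied; more frequent updates give only lower‑order corrections), $\Phi_{j+1}-\Phi_j\le-\kpi c_jw_i\le-\kpi\alf w_i$ for $j\notin C^-$, while for $j\in C^-$ the warehouse fills by at most $w_i$ and the price falls by $|\ln(1-\lmbd)|\ge\lmbd$, so $\Phi_{j+1}-\Phi_j\le 0$. Hence $\Phi$ drops by at least $\kpi\alf w_i$ on each of the $k-|C^-|$ steps outside $C^-$; telescoping and using first $\ln p_{i_1}-\ln p_{i_{k+1}}\le\lmbd f$ and then the contradiction hypothesis,
\[
\kpi\alf w_i\,(k-|C^-|)\ \le\ \Phi_1-\Phi_{k+1}\ \le\ \kpi(\sigma_1-\sigma_{k+1})+\kpi w_i f\ <\ \kpi w_i(c+f),
\]
so $\alf(k-|C^-|)<c+f$.

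Finally I would bound $|C^-|$. Each period in $C^-$ lowers $\ln p_i$ by at least $\lmbd$, so $\lmbd|C^-|$ is at most the total of all price drops, which is at most $\lmbd f$ plus the total of all price rises, i.e.\ at most $\lmbd f+\lmbd N$, where $N$ is the number of upward steps; each upward step has $\xbij>\wtij>w_i$ (the warehouse is draining there), and the capacity bound $\sigma_j\le c_i$ together with ``full throughout'' caps how large $N$ can be before the drain would already have reached $cw_i$. Granting the resulting estimate $|C^-|=O(f)$, the displayed inequality gives $k<O(f)+\frac{c+f}{\alf}\le\bigl(1+\tfrac2\alf\bigr)f+\tfrac2\alf c$, contradicting the hypothesis on $k$; so within $k$ updates the warehouse drops below $\sis+\tfrac{\alf}{\kpi}w_i$ or drains by $cw_i$, whichever is the smaller change.

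\medskip\noindent
I expect the hard part to be exactly this bound on $|C^-|$ — and, more generally, on the steps where $p_i$ rises while the warehouse drains fast: the naive price‑budget count is circular ($|C^-|\lesssim f+N$ while $N\lesssim(c+|C^-|)/\alf$, which does not close since $\alf<1$), and breaking it needs the warehouse‑capacity bound. This is also why the stated threshold carries slack ($(1+\tfrac2\alf)f$ rather than $\tfrac{2f}{\alf}$, and $\tfrac{2c}{\alf}$ rather than $\tfrac{c}{\alf}$). The second‑order error from replacing $\ln(1+\lmbd m)$ by $\lmbd m$, and the case $\Delta t_j<1$, are routine and absorbed into the constants.
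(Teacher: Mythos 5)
Your setup is sound and, modulo packaging, it is the paper's own argument: your potential $\Phi_j=\kpi(\sigma_j-\sis)+\frac{\kpi w_i}{\lmbd}\ln p_{i_j}$ simply amortizes the two budgets the paper tracks separately (cumulative warehouse change and cumulative log-price change), your per-step inequality for non-clamped periods is the paper's observation that such a step changes the stock by $\wtij-\xbij-\alf w_i$ per unit time while the price moves by $\lmbd(\xbij-\wtij)/w_i$, and your class $C^-$ is the paper's count of ``full'' $(1-\lmbd)$ decreases, each filling the warehouse by at most $w_i$. You also correctly resolved the sign of the warehouse-correction term (full warehouse pushes the price down), which is the reading the paper's own proof uses even though it conflicts with the definition of $\wti$ in Section 2.

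The gap is the step you explicitly grant, and the diagnosis there is wrong in both directions. First, $|C^-|=O(f)$ is not true in general: nothing prevents roughly half the $k$ updates from being clamped decreases interleaved with increases, independently of $f$. Second, and more importantly, you do not need any bound of that form, and the ``circularity'' you describe ($N\lesssim(c+|C^-|)/\alf$) comes from trying to bound $N$ by the warehouse drain, which is exactly the quantity your telescoped inequality has already consumed. All you need is the trivial count $N\le k-|C^-|$: every step is either in $C^-$ or not, and upward steps are not. Combined with your own price-budget bound $|C^-|\le f+N$ this gives $|C^-|\le\frac{k+f}{2}$, i.e.\ $k-|C^-|\ge\frac{k-f}{2}$. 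Feeding that into your telescoped inequality $\alf\,(k-|C^-|)<c+f$ yields $\alf\,\frac{k-f}{2}<c+f$, hence $k<\left(1+\frac{2}{\alf}\right)f+\frac{2}{\alf}c$, contradicting the hypothesis on $k$ with no slack to spare --- which also explains the exact form of the threshold. (This is precisely the paper's bookkeeping: with $f+a$ full decreases there are at least $a$ increases among the remaining $a+a'$ steps, and the $a$-term cancels.) With that two-line replacement for the granted estimate, your proof is complete; the remaining imprecision about $\Delta t_j<1$ is shared with the paper's own proof.
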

\begin{proof}
Suppose that $s_i \geq \sis + \frac {\alf} {\kpi} w_i $ at the time of each of the
$k$ price updates (or the result holds trivially).

Then a multiplicative price decrease by $(1 - \lmbd)$ is preceded
by an
additive increase to the warehouse stock of at most $w_i$.
A smaller price decrease, by
$1 + \lmbd (\min\{1, \overline{x}_{i_j} - \widetilde{w}_{i_j})/w_i\}$,
for $\widetilde{w}_{i_j} - \overline{x}_{i_j} < w_i$,
follows a warehouse increase of exactly
$w_i - \overline{x}_{i_j} \leq \widetilde{w}_{i_j} - \overline{x}_{i_j} - \alf w_i $.

Note that $1 + x \leq e^x$ for $|x| \leq 1$.

Suppose that there are $f + a$ price decreases by $1 - \lmbd$.
Then there are at least $a$ price increases (as the total price
decreases is by at most $e^{-\lmbd f}$). Suppose that there are
$a+a'$ price changes other than the price decreases by $1 - \lmbd$.
Now,
$\sum_{\widetilde{w}_{i_j} - \overline{x}_{i_j} < w_i} \lmbd
\min\{1,(\overline{x}_{i_j} - \widetilde{w}_{i_j})/w_i\} - (f + a)\lmbd \geq
-\lmbd f$, as this is the exponent in an upper bound on the price
decrease.
Hence $\sum_{\widetilde{w}_{i_j} - \overline{x}_{i_j} < w_i} \widetilde{w}_{i_j} - \overline{x}_{i_j} \le
  \sum_{\widetilde{w}_{i_j} - \overline{x}_{i_j} < w_i} \max\{-w_i, (\widetilde{w}_{i_j} - \overline{x}_{i_j} )\}
  \le [f -(f + a)] w_i$.
It follows that the warehouse stock increases are bounded by
$(f + a)w_i + \sum_{\widetilde{w}_{i_j} - \overline{x}_{i_j} < w_i}
(\widetilde{w}_{i_j} - \overline{x}_{i_j} - \alf w_i ) \leq  [f - \alf (a +a')] w_i$.
As $f+2a+a'=k \ge \left(1 + \frac{2}{\alf} \right) f +  \frac{2}{\alf} c$, $a + a' \geq  \frac{1}{\alf}(f + c)$,
and so the decrease in warehouse stock is at
least $cw_i$.
\end{proof}

\begin{lemma}
\label{lem:war-too-empt-impr}
Suppose that $\lmbd \left(1 + \frac{1}{\alf} \right) \le \frac{1}{2}$
and $s_i \leq \sis + \frac {\alf} {\kpi} w_i $. Let
$p_{i_1}, \ldots, p_{i_{k+1}}$, $\overline{x}_{i_1}, \ldots,
\overline{x}_{i_{k}}$ be as in Lemma~\ref{lem:war-too-full-impr}. Suppose
that $ p_{i_{k+1}} \leq e^{\lmbd f} p_{i_1} $ for some $f \geq
0$.
Further suppose that the excess demand between successive price increases is at most $w_i$.
If
\[ k \geq (1 + \lmbd) \left(1 + \frac{4}{\alf} \right)  f  + \frac{8}{\alf}\lmbd + \frac{4}{\alf} c , \]
then the warehouse stock will have increased to more than
$\sis - \frac {\alf} {\kpi} w_i $ at some point, or by at least $cw_i$, whichever is the lesser increase.
\end{lemma}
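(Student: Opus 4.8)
The plan is to mirror the proof of Lemma \ref{lem:war-too-full-impr}, interchanging the roles of price increases and price decreases (and of ``too full'' and ``too empty''). First I dispose of the trivial case: if during the $k$ updates the warehouse stock ever reaches $\sis - \frac{\alf}{\kpi} w_i$ or higher, or increases by a total of $cw_i$, we are done; so assume that at each of the $k$ update times the stock stays below $\sis - \frac{\alf}{\kpi} w_i$. Just as the corresponding assumption in Lemma \ref{lem:war-too-full-impr} forces $\wti \ge (1+\alf)w_i$, the present one forces $\wti \le (1-\alf)w_i$, i.e.\ $w_i - \wti \ge \alf w_i$, throughout; in particular, since $\wti<w_i$, \emph{no} decrease update can saturate (there are no ``full'' decreases), and the price-raising signal $\zbi=\xbi-\wti$ overstates the true rate of excess demand by at least $\alf w_i$, since $\xbi - w_i \le (\xbi - \wti) - \alf w_i$. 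I then classify each of the $k$ (one-day) periods as a full increase ($\zbi \ge w_i$, factor $1+\lmbd$), a partial increase ($0<\zbi<w_i$), or a decrease ($\zbi<0$).

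Next I track the net change in warehouse stock over the $k$ periods. A decrease period ($\xbi<\wti$) refills the warehouse by $w_i - \xbi \ge \alf w_i + (\wti - \xbi)$; a partial-increase period drains it by at most $(\xbi - \wti) - \alf w_i$; and --- crucially --- a full-increase period drains it by at most $w_i$, this last bound being precisely where the extra hypothesis that the excess demand between successive price increases is at most $w_i$ is used. Summing these per-period estimates, using the price bound $p_{i_{k+1}} \le e^{\lmbd f}p_{i_1}$ to control the number of full increases and the total partial-increase signal $\sum \zbi/w_i$ (via elementary estimates on $\ln(1\pm\lmbd t)$ applied to the per-update factors, exactly as in Lemma \ref{lem:war-too-full-impr}, together with the resulting bound $N \ge (k-f)/2$ on the number $N$ of non-saturated updates), and invoking the hypothesis $\lmbd(1+\tfrac1\alf)\le\tfrac12$ to ensure that the $\alf w_i$ ``savings'' harvested at each non-saturated update dominates the accumulated $\Theta(\lmbd w_i)$ slack coming from the logarithmic estimates, one obtains that when $k$ is as large as in the statement the total drainage is at most $-cw_i$; equivalently the warehouse gains at least $cw_i$, contradicting the standing assumption, and the lemma follows.

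The main obstacle --- and essentially the only substantive departure from Lemma \ref{lem:war-too-full-impr} --- is controlling the full-increase periods. In that lemma the ``bad'' events are full price \emph{decreases}, and each is automatically preceded by a warehouse increase of at most $w_i$, simply because the daily supply is $w_i$ and demand is nonnegative; here a full price \emph{increase} can be preceded by an arbitrarily large drain, since demand is not bounded above a priori, which is exactly why this lemma must carry the additional hypothesis bounding the excess demand between successive price increases. Threading this asymmetry through the accounting --- together with the fact that a multiplicative decrease by $1-\lmbd$ followed by an increase by $1+\lmbd$ does not quite return to the original price, so the $\ln(1\pm\lmbd)$ versus $\pm\lmbd$ discrepancies accumulate rather than cancel between increases and decreases --- is what produces the somewhat worse constants in the bound on $k$ (the extra $(1+\lmbd)$ factor, $4/\alf$ in place of $2/\alf$, and the additive $\tfrac8\alf\lmbd$ term) relative to Lemma \ref{lem:war-too-full-impr}.
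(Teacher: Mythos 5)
Your proposal follows the same route as the paper's own proof: assume the stock stays below the threshold at every update, split the updates into saturated increases (each preceded by a drain of at most $w_i$, which is exactly where the extra hypothesis on excess demand between successive price increases enters) and all remaining updates (each refilling by $w_i-\xbij \ge (\wtij-\xbij)+\alf w_i$), and then use $p_{i_{k+1}}\le e^{\lmbd f}p_{i_1}$ together with the $e^{\lmbd x - 2\lmbd^2}$-type estimates and the hypothesis $\lmbd\left(1+\frac{1}{\alf}\right)\le\frac12$ to conclude a net gain of at least $cw_i$. The middle bookkeeping is only sketched, but every essential ingredient — including the correct diagnosis of why this lemma needs the extra demand hypothesis while Lemma~\ref{lem:war-too-full-impr} does not, and why the constants degrade — matches the paper's argument.
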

\begin{proof}
Similarly to the proof of Lemma \ref{lem:war-too-full-impr},
suppose that $s_i \leq \sis - \frac {\alf} {\kpi} w_i $ throughout this time.

Then a multiplicative price increase by $(1 + \lmbd)$ is
preceded by an
additive decrease of at most $w_i$ in the warehouse stock.
A smaller price increase, by
$1 + \lmbd (\max\{-1, \overline{x}_{i_j} - \widetilde{w}_{i_j})/w_i \}$ for
$\overline{x}_{i_j} - \widetilde{w}_{i_j} < w_i$,
follows a warehouse increase of
$w_i - \overline{x}_{i_j} \geq \widetilde{w}_{i_j} - \overline{x}_{i_j} + \alf w_i $.

Note that $1 + \lmbd x \geq e^{\lmbd x / (1 + \lmbd)} \geq
e^{\lmbd x - 2 \lmbd^2}$ for $0\le \lmbd, x \leq 1$, and
$1+\lmbd x\ge e^{\lmbd x/(1-\lmbd)}\ge e^{\lmbd x-2\lmbd^2}$,
for $-1\le  x \le 1$ and $0 < \lmbd \le \frac12$.

Suppose that there are $(1 + \lmbd) (f + a)$ price increases by
$1 + \lmbd$.
Then there are at least $a(1-\lmbd)$ price decreases
(for each such change is a drop of at most
$1 - \lmbd \geq e^{-\lmbd/(1-\lmbd)}$,
and the price increases yielded an increase of
least $e^{(a + f) (1 + \lmbd) \lmbd / (1 + \lmbd)} = e^{\lmbd(a + f)}$).

Now,
\[ \sum_{\xbij - \wtij < w_i} \left[ \lmbd (\xbij - \wtij )/w_i
  - 2\lmbd^2 \right] + (1+\lmbd)(f + a) \frac{\lmbd}{1 + \lmbd} \leq \lmbd f, \]
as this is the exponent in a lower bound on the price increase.
That is
\[ \sum_{\xbij - \wtij < w_i}  (\wtij - \xbij + 2\lmbd)  \ge  a w_i \]

Suppose that there are $a(1 - \lmbd) + a'$ price changes other than the increases by $(1 + \lmbd)$.

The decrease in the warehouse stock is at most:
\begin{eqnarray*}
& &  (1 + \lmbd) (f + a) w_i  - \sum_{\xbij - \wtij < w_i} ( \wtij - \xbij + \alf  w_i ) \\
& \leq &  (1 + \lmbd) (f + a) w_i - a w_i + 2\lmbd w_i - \alf (a' + (1 - \lmbd) a) w_i  \\
& \leq & w_i \left[  (1 + \lmbd) f + 2 \lmbd - \alf \left[a' + \left(1 - \left(1 + \frac{1}{\alf} \right) \lmbd \right) a \right] \right]
\end{eqnarray*}

Now $f (1 + \lmbd) + 2a + a' = k$, so $\left(1 - \left(1 + \frac{1}{\alf} \right) \lmbd \right) a + a' \geq \frac{\left(1 - \left(1 + \frac{1}{\alf} \right)\lmbd \right)}{2}[k - f(1 + \lmbd)]$.

If $\alf \left[a' + \left(1 - \left(1 + \frac{1}{\alf} \right) \lmbd \right) a \right] \geq 2 \lmbd + (1 + \lmbd) f + c$, the warehouse stock increases by at least $cw_i$.
It suffices to have
\[ k \geq (1 + \lmbd) f + \frac{2\alf} {\left(1 - \left(1 + \frac{1}{\alf} \right)\lmbd \right)} \left[ 2 \lmbd + (1 + \lmbd) f + c \right]. \]
By assumption $\left(1 + \frac{1}{\alf} \right)\lmbd \le \frac12$,
so $k \ge (1 + \lmbd) \left(1 + \frac{4}{\alf} \right) f + \frac{8}{\alf} \lmbd + \frac{4}{\alf} c$ suffices.
\end{proof}

In order to apply Lemma \ref{lem:war-too-empt-impr}, we need to bound the
length of the initial time interval after which all demands satisfy $x_i \le 2 w_i$.
We will need several preliminary lemmas.
In the next section, we will avoid this difficulty by modifying the price
update rule for the case that demands are large.

\Xomit{
\begin{definition}
\label{def:bdd-prices-wrhs}
Prices are $f$-bounded if $p_i$ always remains in the range $pe^{\pm f\lmbd}$ for all $i$,
and demands are $d$-bounded if $x_i \le d w_i$ for all $i$.
\end{definition}

\begin{definition}
\label{def:var-dem-equil}
$p^{(c)}$ denotes the equilibrium prices for supplies $c w_i$.
\end{definition}

\begin{definition}
\label{def:dmd-bdd-prices}
Prices $p$ are \emph{$c$-demand bounded} if $p_i\in [p^{(c)}_i, p^{(1/c)}_i] $
for all $i$.
\end{definition}
}

\noindent
{\bf Lemma \ref{prices-stay-dmd-bdd}.}~%
\emph{%
If $\frac{\lE}{1 - \lE} \le \frac 16$, and if
initially the prices have all been $c$-demand bounded for a full day
for some $c \ge 2$, they remain $c$-demand bounded thereafter.
}
%
\begin{proof}
First we analyze the bounds for low prices.

For any good $ i $, if $ p_i \le p_i^{(c)} (1 + \lmbd)^ 2 $, then by WGS and
bounded elasticity,
$ x_i \ge c (1 + \lmbd)^{-2E} w_i \ge c (1 -2\lE ) w_i  $,
as $2\lE \le 1 $, and hence
$ \xbi \ge  c (1 -2\lE ) w_i $ also,
which means that the next update to $ p_i$
will be not be a decrease so long as $ (1 -2\lE ) w_i \ge \wti$.
Applying Lemma \ref{lem:wti-constr}
to give $\wti \le \frac 43 w_i$, shows
$\lE \le \frac 16$ suffices.

Otherwise, a price decrease decreases $p_i$ to at most
$p_i^{(c)} (1 + \lmbd)^ 2 (1 - \lmbd) \ge p_i^{(c)}$; that is, $p_i$
remains above the lower bound.

Similarly, if $ p_i \ge p_i^{(1/c)} (1 - \lmbd)^ 2 $,
then $ x_i \le \frac1c (1 - \lmbd)^{-2E} w_i \le \frac1c (1 +2\frac{\lE}{1 - \lE} ) w_i  $,
as $2 \frac{\lE}{1 - \lE} \le 1 $, and hence
$ \xbi \ge  \frac1c (1 -2 \frac{\lE}{1 - \lE} ) w_i $ also,
which means that the next update to $ p_i$
will be not be an increase so long as $\frac1c (1 + 2 \frac{\lE}{1 - \lE} ) w_i \le \wti$.
Here $\frac{\lE}{1 - \lE} \le \frac 16$ suffices.

Otherwise, a price increase increases $p_i$ to at most
$p_i^{(1/c)} (1 - \lmbd)^ 2 (1 + \lmbd) \le p_i^{(1/c)}$; that is, $p_i$
remains below the upper bound.
\end{proof}

\begin{lemma}
\label{lem:supply-cost-bdd-war-bdd}
$\sum_i w_i p_i \le 3(\phi +M)$, where $M$ is the daily supply of money.
\end{lemma}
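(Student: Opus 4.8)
The plan is to split $\sum_i w_i p_i$ using the elementary bound $w_i \le x_i + |x_i - w_i|$, so that $\sum_i w_i p_i \le \sum_i p_i x_i + \sum_i p_i|x_i - w_i|$. The first sum is at most $M$, since the rate at which money is spent on goods cannot exceed the rate $M$ at which money is supplied (each buyer spends at most its daily allotment $v_\ell$, and $M=\sum_\ell v_\ell$). So the whole statement reduces to showing $\sum_i p_i|x_i - w_i| = O(\phi)$ — in fact $\le 2\phi$ suffices.

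For that I would establish the pointwise bound $p_i|x_i - w_i| \le 2\phi_i$. Recall $\phi_i = p_i[\span(x_i,\xbi,\wti) - \loa(t-\tau_i)|\xbi-\wti| + \alt|\wti-w_i|]$; the only term carrying the ``wrong'' sign is $-\loa(t-\tau_i)|\xbi-\wti|$. Using the standing hypotheses that each price is updated at least once a day (so $t-\tau_i\le 1$) and $\loa\le\tfrac12$, together with $|\xbi-\wti|\le\span(x_i,\xbi,\wti)$, this negative term is dominated, giving $\phi_i\ge p_i[(1-\loa)\span(x_i,\xbi,\wti)+\alt|\wti-w_i|]\ge\tfrac12 p_i[\span(x_i,\xbi,\wti)+2\alt|\wti-w_i|]$. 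Then, since $\alt>1$, $\span(x_i,\xbi,\wti)\ge|x_i-\wti|$, and $|x_i-w_i|\le|x_i-\wti|+|\wti-w_i|$ by the triangle inequality, I conclude $2\phi_i\ge p_i[|x_i-\wti|+|\wti-w_i|]\ge p_i|x_i-w_i|$. (This incidentally shows each $\phi_i\ge 0$.) Summing over $i$ yields $\sum_i p_i|x_i-w_i|\le 2\phi$.

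Combining the two parts, $\sum_i w_i p_i \le M + 2\phi \le 3(\phi+M)$, where the last step is trivial since $\phi,M\ge 0$. There is no serious obstacle here; the only point that needs care is that the negative term in $\phi_i$ really is absorbed, which is precisely why the hypotheses $\loa\le\tfrac12$, $1<\alt<2$, and ``each price updated at least once a day'' are invoked. (If the statement is needed with respect to the modified potential used in the fast-update analysis, the same conclusion follows at once from Lemma~\ref{lemma:potential-missp}, since there $\sum_i p_i|x_i-w_i| = O(S) = O(\phi)$.)
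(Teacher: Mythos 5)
Your proof is correct and follows essentially the same strategy as the paper's: bound the ``spending'' part of $\sum_i w_i p_i$ by $M$ and the ``disequilibrium'' part by a constant multiple of $\phi$, with the negative $\loa(t-\tau_i)|\xbi-\wti|$ term absorbed via $t-\tau_i\le 1$ and $\loa\le\tfrac12$. The only cosmetic difference is in the decomposition: the paper splits by cases on whether $x_i\le w_i/3$ and invokes Constraint~\ref{cst:wti-bdd} (via Lemma~\ref{lem:wti-constr}) to compare $w_i$ with $\wti$, whereas you use the triangle inequality $w_i\le x_i+|x_i-w_i|$ and absorb the $\wti$-versus-$w_i$ gap into the $\alt|\wti-w_i|$ term of $\phi_i$; both yield the stated constant.
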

\begin{proof}
We note that if $x_i \le \frac13 w_i$,
then by Lemma \ref{lem:wti-constr}
$x_i \le \frac12 \wti$, and $\wti - x_i \ge \frac12 \wti$.
Now
\begin{eqnarray*}
\sum_i w_i p_i & = & \sum_{x_i\le w_i/3} w_i p_i + \sum_{x_i\ge w_i/3} w_i p_i \\
& \le & \sum_i |\wti -x_i| p_i +\sum_i 3x_i p_i \le \phi + M.
\end{eqnarray*}
\end{proof}

\begin{lemma}
\label{lem:misspend-bdd}
Let $i=\argmax\{ \frac{p^*_i} {p_i} \}$.
If $p_i < p^*_i$, then the total misspending,
$\sum_h |x_h - w_h| p_h$,
is at least $w_i (p^*_i - p_i)$. Similarly, let $j = \argmax\{ \frac{p_j} {p^*_j} \}$.
If $p_j > p^*_j$, then the total misspending
is at least $w_i (p_j - p^*_j)$.
\end{lemma}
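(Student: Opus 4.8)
The plan is to reduce both claims to a single observation about a uniformly rescaled price vector. For the first claim, set $\gamma = p^*_i/p_i > 1$ and consider $q = \gamma p$. Since $i$ attains $\max_h p^*_h/p_h$, we have $p^*_h/p_h \le \gamma$ for every $h$, i.e. $q_h = \gamma p_h \ge p^*_h$, while $q_i = p^*_i$ exactly. Thus $q$ is obtained from the equilibrium prices $p^*$ by weakly raising the prices of the goods $h \ne i$ (one at a time) and leaving $p_i = p^*_i$ untouched.

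First I would show $x_i(q) \ge w_i$. At the equilibrium $x_i(p^*) = w_i$, and each single-price raise above, applied to some $p_h$ with $h \ne i$, only increases the demand for good $i$ by WGS (Definition \ref{def:WGS}); performing them in sequence from $p^*$ to $q$ gives $x_i(q) \ge x_i(p^*) = w_i$, hence $q_i x_i(q) \ge p^*_i w_i$. Next I pass back from $q$ to $p$ using the standard fact that Fisher-market demand is homogeneous of degree $-1$ in the prices: scaling every price by $\gamma$ scales every demand by $1/\gamma$, so the spending $p_h x_h(\cdot)$ on each good is unchanged, and in particular $p_i x_i(p) = q_i x_i(q) \ge p^*_i w_i$. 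Consequently $x_i(p) \ge (p^*_i/p_i) w_i > w_i$, so $|x_i(p) - w_i| = x_i(p) - w_i$ and
\[ \sum_h |x_h(p) - w_h|\,p_h \;\ge\; |x_i(p) - w_i|\,p_i \;=\; p_i x_i(p) - p_i w_i \;\ge\; p^*_i w_i - p_i w_i \;=\; (p^*_i - p_i)\,w_i, \]
which is the first statement.

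The second statement is the mirror image: with $j = \argmax_h p_h/p^*_h$ and $p_j > p^*_j$, put $\gamma = p^*_j/p_j < 1$ and $q = \gamma p$; now $j$ attaining $\max_h p_h/p^*_h$ forces $q_h = \gamma p_h \le p^*_h$ for all $h$, with $q_j = p^*_j$, so $q$ is reached from $p^*$ by weakly lowering the prices of the goods $h \ne j$. Each such lowering only decreases $x_j$ by WGS, so $x_j(q) \le x_j(p^*) = w_j$; the same $(-1)$-homogeneity gives $p_j x_j(p) = q_j x_j(q) \le p^*_j w_j$, hence $x_j(p) \le (p^*_j/p_j) w_j < w_j$ and $\sum_h |x_h - w_h|\,p_h \ge (w_j - x_j(p))\,p_j \ge (p_j - p^*_j) w_j$. (The ``$w_i$'' in the displayed statement should read ``$w_j$''.)

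The only subtle point is pairing the direction of the WGS monotonicity with the fact that the rescaled vector $q$ lies entirely on one side of $p^*$ --- all coordinates at least $p^*_h$ in the first case, all at most $p^*_h$ in the second --- which is exactly what the choice of $i$ (resp.\ $j$) as the extremal ratio guarantees; the homogeneity of Fisher demand and the bookkeeping with the spending $p_h x_h$ are routine.
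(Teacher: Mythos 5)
Your reduction to the rescaled vector $q=\gamma p$, with the extremal ratio chosen so that $q$ lies entirely on one side of $p^*$, is a clean idea, and the WGS step from $p^*$ to $q$ is fine. The gap is the step you label as routine: ``Fisher-market demand is homogeneous of degree $-1$ in the prices.'' That is a property of exchange (Arrow--Debreu) markets, where budgets scale with prices; in a Fisher market the buyers' money $v_j$ is fixed, so the demand at prices $\gamma p$ equals the demand at prices $p$ with wealth $v_j/\gamma$, and this equals $x(p)/\gamma$ only if demand is homogeneous of degree one in wealth (homothetic preferences, e.g.\ CES). The paper assumes only WGS and the bounded own-price elasticity of Definition~\ref{def:bdd-elas}, and neither implies this. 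What uniform scaling does give you is $\sum_h p_h x_h(\gamma p)=M/\gamma$, i.e.\ spending is preserved in aggregate but not good by good, so you cannot conclude $p_i x_i(p)=q_i x_i(q)$. Without that identity, lowering all prices from $q$ back to $p$ pulls $x_i$ in both directions (its own price falling raises it, the other prices falling lowers it by WGS), and the per-good inequality $p_i x_i(p)\ge p^*_i w_i$ on which your chain rests does not follow.

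The paper's proof sidesteps scaling entirely: it moves from $p^*$ toward $p$ one price at a time, restricted to the goods priced below their equilibrium values, and uses the WGS consequence that in a Fisher market the spending $p_h x_h$ on a good never decreases when its own price is lowered (nor when another good's price is raised), since total spending is fixed at $M$. This yields only the aggregate bound $\sum_{\text{reduced }h} p_h x_h(p)\ \ge\ \sum_{\text{reduced }h} p^*_h w_h$, which after subtracting $\sum_{\text{reduced }h} p_h w_h$ already gives the claim; no per-good spending identity is needed. To repair your argument you would either need to substitute such a one-price-at-a-time spending-monotonicity step for the homogeneity step, or add a homotheticity/wealth-homogeneity hypothesis that the lemma does not carry. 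Your observation that the ``$w_i$'' in the second displayed bound should read ``$w_j$'' is correct.
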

\begin{proof}
By WGS, when the price of a good is reduced, the spending on that good only increases.
Consider reducing the prices from their equilibrium values, one by one, for each of the goods
whose price is below its equilibrium value. Each price reduction can be viewed as occuring in two stages,
first a spending neutral change, which leaves the spending on every good unchanged, and
second a stage which increases the spending on the good whose price is being reduced.
Clearly, there is no reduction of spending in total, on the goods whose prices are reduced.
Thus the excess spending is at least that which would be obtained were all the prices spending neutral.
Consider good $i$ in this spending neutral scenario.
At price $p_i$, the excess spending on good $i$ would be
$p_i(x_i - w_i) = p^*_i w_i - p_i w_i$, as claimed.

The proof of the second claim is analogous.
\end{proof}

\begin{lemma}
\label{lem:init-price-conv-time-bdd}
Let $\phi_{\text{init}}$ be the initial value for $\phi$.
If $\alt > 1$ and $|\wti - w_i| \le \alr w_i = 4 \alf w_i$ for all $i$, where
\[
\alr \le \frac{(1 - \loa) (1 - 6(1 + 2\alt))} {12\gam(1 + 2\alt)}
  \]
and $\gam = \max_i \frac{M}{p^*_i w_i}$,
then after
\[
D = \frac{16(1 + \alt)}{\loa} \log\frac{2 \phi_{\text{init}}} {(1 - \loa)\min_i p^*_i w_i}
\]
days all demands satisfy $x_i \le 2 w_i$ henceforth.
\end{lemma}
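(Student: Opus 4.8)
The plan is to fix a threshold $\Phi_0 = \tfrac{1-\loa}{2}\min_i p^*_i w_i$ and to establish two facts: (a) as long as $\phi \ge \Phi_0$ at the start of a day, $\phi$ shrinks over that day by a factor $1 - \tfrac{\loa}{8(1+\alt)}$, so it drops below $\Phi_0$ within $D$ days; and (b) whenever $\phi < \Phi_0$, every demand satisfies $x_i \le 2w_i$. I work throughout in the running setting of this subsection: the prices are $f$-bounded, each price is updated at least once a day, and the constraints of Theorem \ref{lem:war-progr-bdd-dem} hold with the prevailing demand bound $d = d(f)$. In particular, by Corollary \ref{cor:war-updates-only-help} together with the continuous-progress Lemma \ref{lem:war-cont-progress} (where $\chi_i = 0$ in the present setting), $\phi$ is non-increasing at all times. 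Combining this with (a) and (b): once $\phi$ first drops below $\Phi_0$ (which happens by day $D$) it stays below $\Phi_0$, so from day $D$ on all demands are $\le 2w_i$, which is exactly the claim.

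For (a), I first check the hypothesis $\phi \ge 2(1+2\alt)\sum_i |\wti - w_i| p_i$ of the second part of Theorem \ref{lem:war-progr-bdd-dem} whenever $\phi \ge \Phi_0$. Since $|\wti - w_i| \le \alr w_i$ and, by Lemma \ref{lem:supply-cost-bdd-war-bdd}, $\sum_i w_i p_i \le 3(\phi + M)$, one gets $\sum_i |\wti - w_i| p_i \le 3\alr(\phi + M)$; and $M \le \gam\min_i p^*_i w_i = \tfrac{2\gam}{1-\loa}\Phi_0 \le \tfrac{2\gam}{1-\loa}\phi$ by the definition of $\gam$ and the assumption $\phi \ge \Phi_0$. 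Hence $\sum_i|\wti - w_i|p_i \le 3\alr\big(1 + \tfrac{2\gam}{1-\loa}\big)\phi$, so the required inequality holds once $6\alr(1+2\alt)\big(1 + \tfrac{2\gam}{1-\loa}\big) \le 1$ — this is precisely the stated bound on $\alr$ (up to the evident typo in its numerator, which cannot literally be $1 - 6(1+2\alt) < 0$), at least in the interesting regime where $\gam$ dominates. Then, while $\phi \ge \Phi_0$, the daily factor $1 - \tfrac{\loa}{8(1+\alt)}$ is at most $e^{-\loa/(16(1+\alt))}$ (using $1 - y \le e^{-y/2}$), and iterating from $\phi_{\text{init}}$ drives $\phi$ below $\Phi_0$ after at most $\tfrac{16(1+\alt)}{\loa}\log\tfrac{\phi_{\text{init}}}{\Phi_0} = D$ days.

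For (b), I prove the contrapositive: if $x_i > 2w_i$ for some good $i$, then $\phi \ge \Phi_0$. Since $\alr \le \tfrac13$, Constraint \ref{cst:wti-bdd} holds, so $\wti \le \tfrac43 w_i$ and hence $\span(x_i,\xbi,\wti) \ge x_i - \wti > \tfrac23 w_i$; also, as $t - \tau_i \le 1$ and $|\xbi - \wti| \le \span(x_i,\xbi,\wti)$, we have $\phi_i \ge (1-\loa)p_i\span(x_i,\xbi,\wti) + \alt p_i|\wti - w_i| \ge (1-\loa)p_i|x_i - w_i|$ (the last step using $\alt > 1 \ge 1 - \loa$). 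If $p_i \ge \tfrac34 p^*_i$, then $\phi \ge \phi_i \ge (1-\loa)p_i(x_i - \wti) > (1-\loa)\cdot\tfrac34 p^*_i\cdot\tfrac23 w_i = \tfrac{1-\loa}{2}p^*_i w_i \ge \Phi_0$. Otherwise $p_i < \tfrac34 p^*_i < p^*_i$, and the spending-neutrality argument from the proof of Lemma \ref{lem:misspend-bdd}, applied to good $i$ itself, gives $p_i|x_i - w_i| \ge (p^*_i - p_i)w_i$; combining with $p_i|x_i - w_i| = p_i(x_i - w_i) > p_i w_i$ yields $p_i|x_i - w_i| > \tfrac12(p^*_i - p_i + p_i)w_i = \tfrac12 p^*_i w_i$, so $\phi \ge \phi_i \ge (1-\loa)p_i|x_i - w_i| > \tfrac{1-\loa}{2}p^*_i w_i \ge \Phi_0$.

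The one genuinely delicate point is making the "henceforth" part airtight: even after $\phi < \Phi_0$, a price update to some $p_j$ can push a previously-balanced good $i$ to $x_i > 2w_i$, and one must check that $\phi$ does not increase at that update — which is exactly the content of Corollary \ref{cor:war-updates-only-help}, whose troublesome Case 2 in Lemma \ref{lem:drop-span} is controlled precisely by the bound $x_i \le d\wti$, valid here because the prices are always $f$-bounded so $d = d(f)$ always applies. Thus $\phi$ stays below $\Phi_0$ and (b) cannot be re-triggered. Everything else is bookkeeping of constants so that $\Phi_0$, the threshold needed in (a), and the lower bound on $\phi$ obtained in (b) are mutually consistent, together with correcting the numerator of the $\alr$ bound. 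This is, incidentally, the one place that forces use of the crude (and possibly huge) demand bound $d(f)$; the faster-updates rule of the next section exists precisely to sidestep it.
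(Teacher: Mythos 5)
Your overall strategy is the paper's: fix the threshold $\Phi_0=\tfrac{1-\loa}{2}\min_i p^*_iw_i$, show that $\phi\ge\Phi_0$ forces the hypothesis $\phi\ge 2(1+2\alt)\sum_i|\wti-w_i|p_i$ of Corollary \ref{cor:phi-rate-of-change} (so $\phi$ decays at rate $\loa/(8(1+\alt))$ and falls below $\Phi_0$ within $D$ days), then show that any $x_i>2w_i$ forces $\phi\ge\Phi_0$, and finish by monotonicity of $\phi$. Your derivation of the condition on $\alr$, including the diagnosis of the typo in its numerator, matches the paper's computation exactly, and your part (a) is if anything cleaner: the paper routes the needed lower bound $\phi\ge\tfrac{1-\loa}{2\gam}M$ through ``some price lies outside $[\tfrac12p^*_i,\tfrac32p^*_i]$'' plus Lemma \ref{lem:misspend-bdd}, whereas you observe directly that $\Phi_0=\tfrac{1-\loa}{2\gam}M$.

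The genuine gap is in the low-price branch of part (b). You assert that the spending-neutrality argument of Lemma \ref{lem:misspend-bdd}, ``applied to good $i$ itself,'' yields $p_i|x_i-w_i|\ge(p^*_i-p_i)w_i$, i.e.\ $p_ix_i\ge p^*_iw_i$. That per-good bound does not follow from WGS: lowering $p_i$ alone from $p^*_i$ does keep $p_ix_i\ge p^*_iw_i$ (Corollary \ref{cor:f-grth}), but lowering the \emph{other} prices then reduces the demand for good $i$ and can push its spending below $p^*_iw_i$ --- which is precisely why Lemma \ref{lem:misspend-bdd} is stated for the \emph{total} misspending and for the extremal good $h=\argmax_j p^*_j/p_j$. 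The repair is to use the lemma as stated: $p_i<\tfrac34p^*_i$ gives $p^*_h-p_h>\tfrac14p^*_h$, hence $\phi\ge(1-\loa)\sum_j|x_j-w_j|p_j\ge(1-\loa)w_h(p^*_h-p_h)>\tfrac{1-\loa}{4}\min_jw_jp^*_j=\tfrac12\Phi_0$. This costs a constant factor in the threshold (and hence an additive constant in $D$), so the structure survives; note the paper's own final step tacitly assumes $p_h\ge\tfrac12 p^*_h$ and elides this case split, so your explicit treatment is the right instinct --- it just must invoke the total-misspending form of the lemma rather than a per-good form.
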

\begin{proof}
We want to show that $\phi \ge 2(1 + 2\alt) \sum_i |\wti - w_i| p_i$, while
some $p_i \notin [\frac12 p^*_i , \frac32 p^*_i ]$,
for then we can apply Corollary \ref{cor:phi-rate-of-change} to obtain that
$\frac{d\phi}{dt}  \leq - \frac{\loa}{8(1 +\alt)}\phi $ (as $\chi_i=0$ for all $i$
in the present setting).
The condition on $\phi$ holds if $\phi \ge 2(1 + 2\alt) \alr \sum_i w_i p_i$,
and by Lemma \ref{lem:supply-cost-bdd-war-bdd}, this holds if
$\phi \ge 6(1 + 2\alt) \alr (\phi + M)$; in turn, it suffices that
\[
\phi \ge \frac{6(1 + 2\alt) \alr M}{1 - 6(1 + 2\alt) \alr}.
\]
Now
\begin{eqnarray*}
\phi & \ge &  \sum_i [(1 - \loa)|x_i - \wti| + \alt |\wti - w_i|] p_i \\
& \ge & \sum_i  [(1 - \loa)(|x_i - w_i| - |\wti - w_i|) + \alt |\wti - w_i|] p_i \\
& \ge & (1 - \loa) \sum_i (|x_i - w_i|p_i ~~~~\text{as $\alt > 1$} .
\end{eqnarray*}

Suppose there is a good $h$ such that $p_h \notin [\frac12 p^*_h , \frac32 p^*_h]$;
then  let $h =\argmax\{\frac{p_i} {p^*_i}, \frac {p^*_i} {p_i} \}$.
By Lemma \ref{lem:misspend-bdd},
$\sum_i |x_i - w_i| p_i \ge w_h |p_h -p^*_h| > \frac12 w_h p^*_h \ge \frac{1}{2\gam}M$.
Thus $\phi \ge \frac12 w_h p^*_h \ge \frac{1}{2\gam}M$.

So the condition holds if
\[
\frac{1 - \loa} {2\gam} M
  \ge  \frac{6(1 + 2\alt)}{1 - 6(1 + 2\alt) \alr} \alr  M.
\]
It suffices that
\[
\alr \le \frac{(1 - \loa) (1 - 6(1 + 2\alt))} {12\gam(1 + 2\alt)}
\]
to ensure that $\frac{d\phi} {dt} \le - \frac{\loa}{8(1 +\alt)}\phi $.
Thus after at most
\[
D = \frac{16(1 + \alt)}{\loa}\log\frac{\phi_{\text{init}}} {\frac{1 - \loa}{2}\min_i w_i p^*_i  }
\]
days, $\phi \le  \frac12 (1 - \loa) p^*_i w_i$ for all $i$.

Suppose, for a contradiction, that after $D$ days some $x_h > 2w_h$. Then
$p_h(x_h - w_h) > \frac12 p^*_h w_h$.
Now
$\phi  \ge  (1 - \loa) \sum_i |x_i - w_i| p_i \ge (1 - \loa) (x_h - w_h) p_h
> \frac12 (1 - \loa) w_h p^*_h \ge \phi$.
\end{proof}

\Xomit{
\begin{definition}
\label{def:d-bound}
Suppose the prices are always $c$-demand bounded.
Let $f=f(c)$ be the corresponding $f$-bound on the prices and given the prices
are $f$-bounded let $d(f)$ be the demand bound (conceivably, $d >> c$).
\end{definition}
}

We now obtain a bound on the depletion of the warehouse stock while the prices are decreasing
from their initial values to a value that guarantees 2-demand bound henceforth,
which allows Lemmas \ref{lem:war-too-full-impr} and \ref{lem:war-too-empt-impr}
to be applied.

\begin{lemma}
\label{lem:warhs-bdd-frst-phse}
During the time that $\phi$ reduces to at most $\frac12 (1 - \loa) w_h p^*_h \ge \phi$,
for each $i$,
the stock of warehouse $i$ reduces by at most
\[
(d - 1) D w_i = (d - 1) w_i \frac{16(1 + \alt)}{\loa}\log\frac{\phi_{\text{init}}}{\frac{1 - \loa}{2}\min_i w_i p^*_i}.
\]
\end{lemma}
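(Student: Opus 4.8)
The plan is to control the worst-case depletion rate of a single warehouse and multiply it by the length of the first phase. Recall that the stock $s_i$ of warehouse $i$ changes continuously, with $s_i(t_1) - s_i(t_2) = \int_{t_1}^{t_2} (x_i - w_i)\,dt$ (only $x_i$ jumps, and only at price updates). In the setting of Theorem~\ref{lem:good-wrhs} the prices are $f$-bounded throughout, so by Lemma~\ref{lem:weak-d-bdd} we have $x_i \le d w_i$ with $d = d(f)$ at all times; hence $x_i - w_i \le (d-1) w_i$, i.e.\ warehouse $i$ can lose stock at a rate of at most $(d-1) w_i$ per day.

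Next I would bound the length of the first phase, namely the time until $\phi$ has dropped to at most $\frac12(1-\loa)w_h p^*_h$. This is exactly what is shown inside the proof of Lemma~\ref{lem:init-price-conv-time-bdd}: while some price $p_j$ lies outside $[\frac12 p^*_j, \frac32 p^*_j]$, the stated hypotheses force $\phi \ge 2(1+2\alt)\sum_i |\wti - w_i| p_i$, so Corollary~\ref{cor:phi-rate-of-change} yields $d\phi/dt \le -\frac{\loa}{8(1+\alt)}\phi$ and hence a daily decrease of $\phi$ by a $1 - \frac{\loa}{8(1+\alt)}$ factor; the quantity
\[
D = \frac{16(1 + \alt)}{\loa}\log\frac{\phi_{\text{init}}}{\frac{1 - \loa}{2}\min_i w_i p^*_i}
\]
is defined precisely so that $\phi$ reaches $\frac12(1-\loa)\min_i w_i p^*_i \le \frac12(1-\loa) w_h p^*_h$ within $D$ days (this $D$ agrees with the one in Lemma~\ref{lem:init-price-conv-time-bdd} since $\frac{2\phi_{\text{init}}}{(1-\loa)\min_i w_i p^*_i} = \frac{\phi_{\text{init}}}{\frac{1-\loa}{2}\min_i w_i p^*_i}$). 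Thus the first phase lasts at most $D$ days.

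Combining the two bounds finishes the proof: for any two times $t_1 \le t_2$ within the first phase,
\[
s_i(t_1) - s_i(t_2) = \int_{t_1}^{t_2}(x_i - w_i)\,dt \le (d-1) w_i (t_2 - t_1) \le (d-1) w_i D,
\]
so the stock of warehouse $i$ never falls by more than $(d-1) D w_i$ during this phase, which is the claim. The only place that needs any care is the assertion $x_i \le d w_i$: it must be in force throughout the \emph{entire} first phase (not merely afterwards), which is why the argument leans on $f$-boundedness of prices being an ambient hypothesis --- guaranteed via Lemma~\ref{prices-stay-dmd-bdd} when the bound originates from $c$-demand boundedness --- rather than on the eventual $2$-demand bound established in Lemma~\ref{lem:init-price-conv-time-bdd}; granted that, what remains is a one-line integration.
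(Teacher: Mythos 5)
Your proof is correct and follows the same route as the paper's (which consists of exactly the two observations you make: the stock can fall by at most $(d-1)w_i$ per day since $x_i \le d w_i$, and by Lemma \ref{lem:init-price-conv-time-bdd} the stated potential reduction occurs within $D$ days). Your additional care about where the demand bound $x_i \le d w_i$ comes from during the first phase, and the check that the two expressions for $D$ coincide, are fine elaborations of the same argument.
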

\begin{proof}
Each day, the stock can reduce by at most $(d-1)w_i$.
By Lemma \ref{lem:init-price-conv-time-bdd}, the stated reduction in potential occurs within
$D$ days.
\end{proof}

Now we are ready to bound how large a warehouse suffices, given an upper bound
on $\phi_{\text{init}}$.
Recall that we view the warehouse as having 8 equal sized zones of fullness.

\Xomit{
\begin{definition}
\label{def:wrhs-zones}
The four zones above the half way target are called the \emph{high} zones, and the other four
are the \emph{low} zones.
Going from he center outward, the zones are called the \emph{safe} zone, the \emph{inner buffer},
the \emph{middle buffer}, and the \emph{outer buffer}.
\end{definition}
}

Note that if the $i$th warehouse were completely full or empty, then
$|\wti - w_i| = \frac12 \kpi c_i$, and as we assumed that $|\wti - w_i| \le \alr w_i$, this
implies that $\frac12 \kpi c_i \le \alr w_i$.
Note that Constraint \ref{cst:wti-bdd}
implies $\frac12 \kpi c_i \le \frac{1}{3} w_i$, which is ensured
if $\alr \le \frac13$.

\medskip
\noindent
{\bf Theorem \ref{lem:good-wrhs}.}~%
\emph{%
Suppose that the prices are always $f$-bounded and let $d= d(f)$.
Also suppose that each price is updated at least once a day.
Suppose further that the warehouses are initially all in their safe or inner buffer zones.
Finally, suppose that $\lmbd \left( 1 + \frac{1} {\alf} \right) \le \frac12$.
Then the warehouse stocks never go outside their outer buffers \emph{(}i.e.\ they never overflow
or run out of stock\emph{)} if
$\frac{\alf}{\kpi} = \frac{c_i}{8w_i}  \ge
  \max \left\{ (d-1) D, 2\left( 1 + \frac{4}{\alf} \right)\frac{f}{\lmbd} + \frac{8 \lmbd} {\alf} \right\}$;
furthermore,
after $D + 2\left( 1 + \frac{4}{\alf} \right) \frac{f}{\lmbd} + \frac{8 \lmbd} {\alf} + \frac{8}{\kpi}$
days the warehouses will be in their safe or inner buffer zones thereafter, where
\[
D = \frac{16(1 + \alt)}{\loa}\log\frac{\phi_{\text{init}}} {\frac{1 - \loa}{2}\min_i w_i p^*_i  },
\]
and $\phi_{\text{init}}$ is the initial value of $\phi$.
}

\emph{%
If the fast updates rule is followed, then it suffices to have
$\frac{\alf}{\kpi} = \frac{c_i}{8w_i}  \ge  2\left( 1 + \frac{4}{\alf} \right)f + \frac{8 \lmbd} {\alf} $,
and then after $\left( 1 + \frac{4}{\alf} \right) f + \frac{8 \lmbd} {\alf} \lmbd + \frac{8}{\kpi}$
days the warehouses will be in their safe or inner buffer zones thereafter.
}

\medskip

Again, substituting our bounds on $\lmbd, \alo, \alf$ suggests $D$ is rather large.
The result should be viewed as indicating that there is a bound on the
needed warehouse size, but not that this is a tight bound (in terms of constants).

\begin{proof}
We will consider warehouse $i$.
After an initial $D$ days (as defined in Lemma \ref{lem:init-price-conv-time-bdd}),
the demands are henceforth 2-bounded.
In this time, by Lemma \ref{lem:warhs-bdd-frst-phse}, the warehouse stocks can decrease by at most
$(d - 1) D w_i \le \frac18 c_i$.
As a result the warehouses are all in their middle buffers or nearer the center at this point.

We show that henceforth the tendency is to improve, i.e. move toward the safe zone,
but there can be fluctuations of up to one zone width. The result is that every
warehouse remains within its outer buffer, and after a suitable time they will all
be in either their inner buffer or safe zone.

As it is $f$-bounded, price $p_i$ can decrease by at most $e^{-2(f/\lmbd)\lmbd}$ from its initial value.
If $s_i(t)$ is in middle buffer at time $t$, then by Lemma \ref{lem:war-too-full-impr}
(taking $c=0$),
within $2\left( 1 + \frac{2}{\alf} \right) \frac{f}{\lmbd}$ 
days the value of $s_i$ will have returned to $s_i(t)$ or remained
below this value.
During this interval a decrease of $\omega$ in the warehouse stock takes
at least $\omega$ days to replenish; consequently the stock can increase by at
most $2\left( 1 + \frac{2}{\alf} \right) \frac{f}{\lmbd} w_i$. 
It follows that the warehouse never runs out of stock as each portion of
the outer buffer has width at least 
$ \frac{1}{8} c_i \ge 2\left( 1 + \frac{2}{\alf} \right) \frac{f}{\lmbd} w_i$.
Again, by  Lemma \ref{lem:war-too-full-impr}, taking $c=\frac{c_i}{4 w_i}$, after
$2\left( 1 + \frac{2}{\alf} \right) \frac{f}{\lmbd} + \frac{c_i}{2\alf w_i}$ 
days it will have reached the safe zone.
The preceding argument shows that it can never go above the inner buffer henceforth.

We apply the same argument to the low zones using Lemma \ref{lem:war-too-empt-impr}.
Now, with $c=0$, we see that within 
$2\left( 1 + \frac{4}{\alf} \right) \frac{f}{\lmbd}  + \frac{8 \lmbd} { \alf} \lmbd$
days $s_i$ returns to the inner buffer. Likewise after
$2\left( 1 + \frac{4}{\alf} \right) \frac{f}{\lmbd} + \frac{8 \lmbd} { \alf} + \frac{c_i}{\alf w_i}$ days
it will never again go below the inner buffer.

If the fast update rule is followed then the first phase in which the potential reduces is not needed,
and then the warehouse will never go beyond its middle buffer (so in fact the outer buffers are then
not needed). The analysis of the second phase is as before,
except that the time needed to first reach the inner buffer has a bound that is half as large.
\end{proof}

\noindent
{\bf Comment}. We note that the constraints on $\lmbd$, $\alf$, $c_i$ and $\kpi$
can be met in turn.
Lemma \ref{lem:init-price-conv-time-bdd}
constrains $\alr = 4 \alf$ (while $\lmbd$ appears in this constraint
we already know from Lemma \ref{lem:war-progr-bdd-dem} that we want $\loa \le \frac12$, which bound can safely be used).
In turn, the bound $\lmbd\left( 1 + \frac{1}{\alr} \right) \le \frac 12$ and
Theorem \ref{lem:war-progr-bdd-dem}
constrain $\lmbd$.
Finally, $c_i$ is upper bounded (and hence $\kpi$ lower bounded)
by the conditions in Theorems \ref{lem:war-progr-bdd-dem} and \ref{lem:good-wrhs}.

\subsection{The Effect of  Inaccuracy}
\label{sec:non-acc}

\Xomit{
Next, we investigate the robustness of the tatonnement process
with respect to inaccuracy in the demand data.

Specifically, we assume that there may be an error of up to $\rho w_i$
in the reported values of $s_i(t)$ and $s_i(\tau)$,
where $\rho > 0$ is a constant parameter.
Recall that these are the values which are used to calculate
$\zbi$ ($= \frac{s_i (\tau_i) - s_i (t)}{t - \tau_i} + \kappa_i (s_i - s_i^*)$).
Let $\zbi^c$ denote the correct value for $\zbi$, and
$\zbi^r$ the reported value.

To enable us to control the effect of erroneous updates, we will place a
lower bound on the frequency of updates to a given price. Specifically,
successive updates are at most 1 day apart (as before), and at least $1/b$
days apart, where $b \ge 1$ is a parameter.

We consider two scenarios:

\smallskip
\noindent
(i) The parameter $\rho$ is not known to the price-setters, who then
perform updates as before. \\
We show that for $\phi \ge\rho b^2 \frac{\lmbd}{\kpi}EM$,
$\phi$ reduces by a $(1 -\Theta(\kpi))$ factor daily.

\smallskip
\noindent
(ii) The parameter $\rho$ is known to the price-setter for each good $i$,
who performs an update only if the possible error is at most half of
the reported value $\zbi^r$, i.e.\ if $|\zbi^r - \zbi^c| \le \frac12 \zbi^r$.
\\
Then, we show that for $\phi \ge\rho bM $, $\phi$ reduces by a
$(1-\Theta(\kpi))$ factor daily.

It may seem more appealing to allow multiplicative errors
of up to $1 \pm \rho$ in the reported $s_i$.
However, this seems a little unreasonable in the case
that the actual $s_i(t) - s_i(\tau)$ is relatively small.
Also, later we will consider a scenario in which more frequent updates are required
when $s_i(t)$ changes rapidly,
and then a multiplicative rule for the error in the \emph{change} to the warehouse stock
would give an error no larger than the additive rule.

Contrariwise, one might argue that if the warehouse stock is changing only slightly,
then the perceived error ought to be small.
But once one considers that there is a daily supply of $w_i$ units of the $i$th good,
and that really the error reflects fluctuations in the selling of these $w_i$ units,
an error of $\pm \rho w_i$ seems reasonable.

Our analysis for Case (i) uses the potential function $\phi = \sum_i \phi_i$,
from the previous section.
}
We begin by analyzing Case (i), where the parameter $\rho$ is not known to the price-setters.

\begin{lemma}
\label{lem:non-acc-cont-progress}
When $p_i$ is updated,
if Assumption \ref{ass:revised-dem-bdd} and Constraint \ref{cst:wti-bdd} hold,
$\frac{\alt}{2} + \alpha_1  \max\{\frac32,d-1\} \le 1$, and
$\loa + \frac43 \lmbd \left(1 + \frac{2E d}{1-\lE}  + \frac12 \alt \right) \le 1$, then
$\phi_i$ increases by at most
$\frac43\lmbd \rho (2b +\kpi)\left(1 + \frac{2E d}{1-\lE}  + \frac12 \alt \right) w_i p_i$.
\end{lemma}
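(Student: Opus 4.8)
The plan is to compare the update that is actually made, which uses the reported value $\zbi^r$, with the update that accurate data ($\zbi^c$) would produce, and to show that the extra price movement caused by the inaccuracy is small enough that it raises $\phi_i$ only by the stated amount --- whereas the accurate update, under the stated constraints, does not raise $\phi_i$ at all, this last fact being exactly Corollary \ref{cor:war-updates-only-help}.

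First I would bound the error in the computed excess demand. Since $\zbi = \frac{s_i(\tau_i) - s_i(t)}{t - \tau_i} + \kpi(s_i(t) - s_i^*)$, each of the readings $s_i(t)$, $s_i(\tau_i)$ is off by at most $\rho w_i$, and successive updates to $p_i$ are at least $1/b$ days apart, we get $|\zbi^r - \zbi^c| \le \frac{2\rho w_i}{t - \tau_i} + \kpi \rho w_i \le (2b + \kpi)\rho w_i$. The price update applies the $1$-Lipschitz map $z \mapsto \mbox{median}\{-1, z/w_i, 1\}$ and multiplies the result by $\lmbd p_i$, so the change $\Delta_i p_i$ actually made differs from the change $\Delta_i^c p_i$ accurate data would produce by at most $|\Delta_i p_i - \Delta_i^c p_i| \le \lmbd p_i (2b + \kpi)\rho$.

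Next I would carry out the update in two conceptual steps and add the resulting changes to $\phi_i$: first move $p_i$ by $\Delta_i^c p_i$, then by the residual $\Delta_i^e p_i = \Delta_i p_i - \Delta_i^c p_i$. The first step is precisely the accurate-data update, so by Lemma \ref{lem:war-updates-only-help-templ} the increase in $\phi_i$ is bounded by one of (\ref{eqn:cstrt-1''-lem-updates-only-help}), (\ref{eqn:cstrt-2''-lem-updates-only-help}), and under $\frac{\alt}{2} + \alpha_1 \max\{\frac32, d-1\} \le 1$ and $\loa + \frac43 \lmbd (1 + \frac{2E d}{1 - \lE} + \frac12 \alt) \le 1$ these quantities are non-positive (Corollary \ref{cor:war-updates-only-help}; note that (\ref{eqn:cstrt-1''-lem-updates-only-help}) and (\ref{eqn:cstrt-2''-lem-updates-only-help}) bound the increase in $\phi_i$ itself, not merely the aggregate $\phi$), so the first step does not raise $\phi_i$. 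For the second step I would re-run the bookkeeping of Lemma \ref{lem:drop-span} / Lemma \ref{lem:war-updates-only-help-templ} for a price change of magnitude $|\Delta_i^e p_i|$, splitting it into same-side pieces if necessary: there is now no damping term (it was removed during the first step), and, discarding the non-negative span and $-(1 - \loa) p_i|\xbi - \wti|$ savings since they only help, a price change of this size --- whether toward or away from $\wti$ --- raises $\phi_i$ by at most $(1 + \frac{2E d}{1 - \lE} + \frac12 \alt)\wti|\Delta_i^e p_i|$: the spending-neutral part contributes $\wti|\Delta_i^e p_i|$, the remaining change in demand contributes at most $\frac{2E d}{1 - \lE}\wti|\Delta_i^e p_i|$ by Lemma \ref{lem:bound-Delta-x} (using $x_i \le d\wti$, which holds by Assumption \ref{ass:revised-dem-bdd} and Constraint \ref{cst:wti-bdd} via Lemma \ref{lem:xbi-bound}), and the term $\alt p_i|\wti - w_i|$ changes by at most $\frac12\alt\wti|\Delta_i^e p_i|$ since $|\wti - w_i| \le \frac12\wti$ by Lemma \ref{lem:wti-constr}. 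Combining this with $\wti \le \frac43 w_i$ (Lemma \ref{lem:wti-constr}) and $|\Delta_i^e p_i| \le \lmbd p_i(2b + \kpi)\rho$ yields the claimed bound $\frac43 \lmbd \rho (2b + \kpi)(1 + \frac{2E d}{1 - \lE} + \frac12\alt)w_i p_i$.

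The step I expect to be the main obstacle is the first one: one must verify that (\ref{eqn:cstrt-1''-lem-updates-only-help}), (\ref{eqn:cstrt-2''-lem-updates-only-help}) really do control the increase in $\phi_i$ and not only in $\phi$, and, more importantly, that decomposing the actual update into a correct-sized part plus a residual correctly pairs each cost term with its offsetting saving. The direction ``toward'' versus ``away from'' $\wti$ of the combined update depends on the true value of $x_i$, which the price-setter does not observe; keeping the two steps separate --- assigning all the savings to the first step and needing none in the second --- is what makes the estimate go through without circularity.
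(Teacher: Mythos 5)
Your proposal is correct and follows essentially the same route as the paper: bound the error in $\zbi$ by $(2b+\kpi)\rho w_i$, hence the error in $\Delta_i p_i$ by $\lmbd\rho(2b+\kpi)p_i$, and then charge the resulting loss of savings (toward case) or extra cost (away case) against the non-positive bounds of Lemma \ref{lem:war-updates-only-help-templ} guaranteed by Corollary \ref{cor:war-updates-only-help}. The paper perturbs those bounds directly rather than splitting the update into an accurate step plus a residual, which makes its toward-case estimate slightly tighter ($\wti$ times the price error, without the $(1+\frac{2Ed}{1-\lE}+\frac12\alt)$ factor), but both arguments land within the claimed bound.
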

\begin{proof}
We use Lemma \ref{lem:war-updates-only-help-templ}.
Were $\Del_i p_i$ accurate, Corollary \ref{cor:war-updates-only-help} assures us that
$\phi$ only decreases.
With a toward $\wti$ update $|\Del_i  p_i|$ is too small by
at most $\lmbd \rho (2b +\kpi)p_i$.
Thus the increase in $\phi$ is at most
$\wti \lmbd \rho (2b +\kpi)p_i$;
using the bound $\wti \le \frac43 w_i$ from Lemma \ref{lem:wti-constr},
yields that this is at most
$\frac43 \lmbd \rho (2b +\kpi) w_i p_i$.

With an away from $\wti$ update
$|\Del_i p_i|$ may be too large by at most $\lmbd \rho (2b + \kpi)p_i$.
Thus the increase in $\phi$ is at most
$\wti \lmbd  \rho p_i \left(1 + \frac{2E d}{1-\lE}  + \frac12 \alt \right)(2b + \kpi)$,
which is at most
$\frac43 \lmbd  \rho \left(1 + \frac{2E d}{1-\lE}  + \frac12 \alt \right) (2b + \kpi) w_i p_i $.
\end{proof}

The result of Lemma \ref{lem:war-cont-progress} holds unchanged,
namely that $d\phi/dt\le -\frac{\kmin(\alt -1)}{2} \phi$ except
when a price increase occurs. We will deduce that so long as $\phi$ is large enough it will
decrease by a $1 - \frac{\kmin(\alt -1)}{8})$ rate daily.

The following preliminary lemma is helpful.

\begin{lemma}
\label{lem:supply-cost-bdd}
$\sum_i w_i p_i \le \phi /(1 - \loa) +M$, where $M$ is the daily
supply of money.
\end{lemma}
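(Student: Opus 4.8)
The plan is to bound $\sum_i w_i p_i$ termwise, relating each $w_i p_i$ to the spending $x_i p_i$ on good $i$ together with the misspending $|x_i - w_i| p_i$, and then observing that the misspending is controlled by $\phi_i$.

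First I would lower-bound each $\phi_i$. Since each price is updated at least once a day, $t - \tau_i \le 1$, so the negative term in $\phi_i$ satisfies $-\loa(t-\tau_i)|\xbi - \wti| \ge -\loa|\xbi - \wti|$; combined with $|\xbi - \wti| \le \span(x_i,\xbi,\wti)$ this gives
\[
\phi_i \ge p_i\left[(1-\loa)\span(x_i,\xbi,\wti) + \alt|\wti - w_i|\right].
\]
Now using $\span(x_i,\xbi,\wti) \ge |x_i - \wti|$ and $\alt \ge 1 \ge 1-\loa$ (recall $\loa \le \tfrac12$), the right-hand side is at least $(1-\loa)\,p_i\bigl(|x_i - \wti| + |\wti - w_i|\bigr) \ge (1-\loa)\,p_i|x_i - w_i|$ by the triangle inequality. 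Summing over $i$ yields $\phi \ge (1-\loa)\sum_i p_i|x_i - w_i|$.

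Next I would invoke the elementary inequality $w_i \le x_i + |x_i - w_i|$, valid for all nonnegative $x_i, w_i$, to get $\sum_i w_i p_i \le \sum_i x_i p_i + \sum_i |x_i - w_i| p_i$. The first sum is the total spending of the buyers, which is at most the daily money supply $M$; the second is at most $\phi/(1-\loa)$ by the previous paragraph. Combining, $\sum_i w_i p_i \le M + \phi/(1-\loa)$, which is the claim.

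There is no genuine obstacle here — it is a routine estimate. The only points to keep in mind are that the argument relies on the standing assumptions $\loa \le \tfrac12$ and $\alt > 1$ (so that both the $\span$ contribution and the $|\wti - w_i|$ contribution to $\phi_i$ can be scaled down by the single factor $1-\loa$) and on the once-a-day update requirement (so that $t-\tau_i \le 1$); Constraint \ref{cst:wti-bdd} is not needed for this particular lemma. Note also that although an individual $\phi_i$ need not be bounded below by a nonnegative quantity in general, the chain above does establish $\phi_i \ge (1-\loa)p_i|x_i-w_i| \ge 0$ termwise, so the summation step is unproblematic.
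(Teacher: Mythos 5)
Your proof is correct and follows essentially the same route as the paper's: lower-bound $\phi_i$ by $(1-\loa)p_i|x_i-\wti| + p_i|\wti-w_i|$ (using $t-\tau_i\le 1$ and $\alt\ge 1$), then apply the triangle inequality $w_i \le x_i + |x_i-\wti| + |\wti-w_i|$ and $\sum_i x_i p_i \le M$. The only cosmetic difference is that you collapse the two deviation terms into $|x_i-w_i|$ before summing, whereas the paper keeps them separate; the estimate is identical.
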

\begin{proof}
Note that $M=\sum_i x_i p_i$ and that
$\phi_i \ge |\wti - x_i|(1 - \loa) + \alt |\wti-w_i|
\ge |\wti - x_i|(1 - \loa) + |\wti-w_i|$.
Now
\begin{eqnarray*}
\sum_i w_i p_i  & \le  & \sum_i p_i(x_i + |\wti -x_i| + |w_i -\wti|)
\\
 & \le &  \sum_i x_i p_i  + 2\sum_i p_i[|\wti - x_i|(1 - \loa) + |\wti-w_i|]/(1 - \loa) \\
 & \le  & M + \phi/(1 - \loa).
\end{eqnarray*}
\end{proof}

Before obtaining a bound on the rate of decrease of $\phi$, we state a more
general result which will be used for a series of results of this nature.

\begin{lemma}
\label{lem:genrl-progr-bdd-dem}
Let $D=[\tau,\tau+1]$ be a day.
Let $\phi^+=max_{t\in D} \phi(t)$.
Suppose that except when a price increase occurs,
$\frac{d\phi}{dt} \le -\nu \phi + \gamma$, for  parameters $0 < \nu \le 1$, $\gamma \ge 0$.
Further suppose that the price updates over the course of day $D$
collectively increment $\phi$ by at most
$\mu(a_1\phi^+ + a_2\phi(\tau) +a_3 M)$, where $M$ is the daily supply of money,
and $\mu,a_1,a_2,a_3\ge 0$ are suitable parameters.

If
$\mu\left[ \frac{a_1(1+\mu a_2)}{1-\mu a_1} + a_2 \right] \le \frac{\nu}{8}$ and
$\phi(\tau) \ge \frac{8\mu}{\nu} a_3 M \frac{1}{1-\mu a_1}$,
then
$\phi(\tau+1) \le \left(1 - \frac{\nu}{4}\right)\phi(\tau) +\frac{\gamma}{1-\mu a_1}$.
\end{lemma}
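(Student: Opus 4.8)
The plan is to treat the day $D$ as a continuous dissipation $\dot\phi\le -\nu\phi+\gamma$ punctured by finitely many instantaneous jumps (the price updates), whose cumulative effect over $D$ is at most $B:=\mu(a_1\phi^++a_2\phi(\tau)+a_3M)$. The contraction and the jump budget are coupled only through $\phi^+$, so the argument proceeds in two rounds: first bound $\phi^+$ crudely and solve the resulting linear inequality, then feed that back into a Gr\"onwall-type estimate for $\phi(\tau+1)$.

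Concretely, I would let $t_1<\dots<t_k$ be the update times in $D$ and write $\delta_j\ge 0$ for the nonnegative upper bound on the jump in $\phi$ at $t_j$ supplied by the hypothesis (an update that decreases $\phi$ is harmlessly bounded above by $0$), so that $\sum_j\delta_j\le B$. Using the integrating factor $g(t)=e^{\nu t}\phi(t)$: between updates $\frac{d}{dt}g=e^{\nu t}(\dot\phi+\nu\phi)\le e^{\nu t}\gamma$, while at $t_j$ the function $g$ jumps by $e^{\nu t_j}\delta_j$. Integrating over $[\tau,t]$ and dividing by $e^{\nu t}$ gives, for every $t\in D$,
\[
\phi(t)\le e^{-\nu(t-\tau)}\phi(\tau)+\gamma\,\tfrac{1-e^{-\nu(t-\tau)}}{\nu}+\textstyle\sum_{t_j\le t}e^{-\nu(t-t_j)}\delta_j\ \le\ \phi(\tau)+\gamma+B ,
\]
since $t-\tau\le 1$, $0<\nu\le1$, and every exponential factor is $\le 1$. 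Taking the maximum over $t\in D$ yields $\phi^+\le \phi(\tau)+\gamma+B$; substituting $B=\mu a_1\phi^+ +\mu a_2\phi(\tau)+\mu a_3 M$ and solving for $\phi^+$ (the hypothesis $\mu[\frac{a_1(1+\mu a_2)}{1-\mu a_1}+a_2]\le\frac{\nu}{8}$ presupposes $\mu a_1<1$, so this is legitimate) gives
\[
\phi^+\ \le\ \frac{(1+\mu a_2)\phi(\tau)+\mu a_3 M+\gamma}{1-\mu a_1}.
\]

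For the final bound I would apply the same integrating-factor identity at $t=\tau+1$, getting $\phi(\tau+1)\le e^{-\nu}\phi(\tau)+\gamma+B$, and substitute the bound just derived for $\phi^+$ into $B$. Collecting terms and using $\frac{\mu a_1}{1-\mu a_1}+1=\frac1{1-\mu a_1}$ gives
\[
B\ \le\ \mu\!\left[\tfrac{a_1(1+\mu a_2)}{1-\mu a_1}+a_2\right]\phi(\tau)+\tfrac{\mu a_3 M}{1-\mu a_1}+\tfrac{\mu a_1\gamma}{1-\mu a_1}\ \le\ \tfrac{\nu}{8}\phi(\tau)+\tfrac{\mu a_3 M}{1-\mu a_1}+\tfrac{\mu a_1\gamma}{1-\mu a_1},
\]
using the first hypothesis. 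Since $\gamma+\frac{\mu a_1\gamma}{1-\mu a_1}=\frac{\gamma}{1-\mu a_1}$ and $e^{-\nu}\le 1-\nu+\frac{\nu^2}{2}\le 1-\frac{\nu}{2}$ for $\nu\le 1$ (Taylor with remainder), this yields $\phi(\tau+1)\le (e^{-\nu}+\frac{\nu}{8})\phi(\tau)+\frac{\mu a_3 M}{1-\mu a_1}+\frac{\gamma}{1-\mu a_1}$. Finally the lower-bound hypothesis $\phi(\tau)\ge\frac{8\mu}{\nu}a_3M\frac1{1-\mu a_1}$ is exactly $\frac{\mu a_3 M}{1-\mu a_1}\le\frac{\nu}{8}\phi(\tau)$, so the coefficient of $\phi(\tau)$ becomes $e^{-\nu}+\frac{\nu}{4}\le 1-\frac{\nu}{4}$, giving the claimed $\phi(\tau+1)\le(1-\frac{\nu}{4})\phi(\tau)+\frac{\gamma}{1-\mu a_1}$.

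The only genuine obstacle is the self-reference: the jump budget $B$ is stated in terms of the within-day peak $\phi^+$, which itself depends on the jumps. Resolving it needs the a-priori estimate $\phi^+\le\phi(\tau)+\gamma+B$ followed by solving a linear inequality; the two hypotheses are calibrated precisely so that (i) $1-\mu a_1>0$ (so the inequality is solvable) and (ii) after back-substitution the jump contribution to $\phi(\tau+1)$ costs at most $\frac{\nu}{8}\phi(\tau)$, leaving a further $\frac{\nu}{8}\phi(\tau)$ to absorb the $\frac{\mu a_3M}{1-\mu a_1}$ remainder via the lower bound on $\phi(\tau)$. Two minor points I would flag: individual \emph{toward} updates may decrease $\phi$ and are bounded above by $0$, so the relevant total of jumps is genuinely the nonnegative quantity $B$; and the integrating-factor computation uses nothing beyond the stated differential inequality, so no separate sign hypothesis on $\phi$ is needed.
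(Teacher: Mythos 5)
Your proof is correct and follows essentially the same route as the paper's: bound $\phi(\tau+1)\le e^{-\nu}\phi(\tau)+\gamma+\mu(a_1\phi^+ +a_2\phi(\tau)+a_3M)$, resolve the self-referential bound on $\phi^+$ by solving the linear inequality (using $\mu a_1<1$), and then absorb the jump and $a_3M$ terms via the two hypotheses. The explicit integrating-factor/jump bookkeeping you add merely makes rigorous what the paper states directly, so there is nothing substantively different.
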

\begin{proof}
$\phi(\tau+1) \le e^{-\nu} \phi(\tau) + \gamma + \mu(a_1\phi^+ + a_2\phi(\tau) +a_3 M)$ as
$\frac{d\phi}{dt} \le -\nu \phi + \gamma$.
So
$\phi(\tau+1) \le (1 - \nu /2) \phi(\tau) + \gamma + \mu(a_1\phi^+ + a_2\phi(\tau) +a_3 M)$,
if $\nu \le 1$ (using the power series expansion for $e^{-x}$).

Now
$\phi^+ \le \phi(\tau) + \mu(a_1 \phi^+ + a_2\phi(\tau) + a_3 M + \gamma)$;
hence
\[
\phi^+ \le \left[ \frac{\phi(\tau)(1 + \mu a_2)  +\mu a_3 M + \gamma}{1 -\mu a_1} \right].
\]
Thus
\begin{eqnarray*}
\phi(\tau+1) & \le & \phi(\tau)(1 - \nu/2)
+ \mu \phi(\tau)\left[\frac{a_1 (1 + \mu a_2)}{1 -\mu a_1} + a_2 \right]
+ \mu a_3 M \left[\frac{ \mu a_1}{1 -\mu a_1} + 1 \right] + \frac{\gamma}{1-\mu a_1}
\\
& \le & \phi(\tau)(1 - \nu/4) +\frac{\gamma}{1-\mu a_1}
\end{eqnarray*}
\[
\mbox{if } \mu \phi(\tau) \left[\frac{a_1(1 + \mu a_2)}{1 -\mu a_1} + a_2 \right] \le \frac {\nu}{8}\phi(\tau)
~~\mbox{i.e.}~~ \mu \left[ \frac{a_1(1 + \mu a_2)}{1 -\mu a_1} + a_2 \right] \le \frac{\nu}{8}
\]
\[
\mbox{and }
\mu a_3 M \left[\frac{ \mu a_1}{1 -\mu a_1} + 1 \right] \le \frac{\nu \phi(\tau)}{8}
~~\mbox{i.e.}~~
 \phi(\tau) \ge \frac{8\mu}{\nu} a_3 M  \frac{ 1}{1 -\mu a_1} .
\]
\end{proof}

\noindent
{\bf Theorem \ref{lem:non-acc-progr-bdd-dem}.}~%
\emph{%
If Assumption \ref{ass:revised-dem-bdd} and Constraint \ref{cst:wti-bdd} hold,
$\frac{\alt}{2} + \alpha_1  \max\{\frac32,(d-1)\} \le 1$,
$\loa + \frac43 \lmbd \left(1 + \frac{2E d}{1-\lE}  + \frac12 \alt \right) \le 1$,
$\frac{\kpi (\alt - 1)}{2} \le 1$,
$ 4\kpi(1 + \alt) \le \loa \leq \frac{1}{2}$,
and each price is updated at least once every day, and at most every $1/b$ days,
and if
$\phi \ge \frac{16 \mu M} {\kpi (\alt - 1)}
 \frac{1 -\loa}{1 - \loa - \mu}$
at the start of the day,
then $\phi$ decreases by at least a $1 - \frac{\kmin(\alt -1)}{8}$ factor by the end of the day,
where $M$ is the daily
supply of money
and $\mu = \frac43\lmbd \rho b (2b +\kmax)\left(1 + \frac{2E d}{1-\lE}  + \frac12 \alt \right)$,
supposing that $\kmin(\alt -1) \ge 16\mu/[1-\loa -\mu]$.
}
\begin{proof}
We will apply Lemma \ref{lem:genrl-progr-bdd-dem}.
By Lemma \ref{lem:war-cont-progress},
$d\phi/dt \le  - \frac{\kmin(\alt -1)\phi}{2}$, except when there is a price increase,
so $\nu = \frac{\kmin(\alt -1)}{2}$ and $\gamma = 0$.

We now determine the other parameters.
Consider a time interval of length $1/b$ days.
There is at most one price increase per good during this interval.
By Lemma \ref{lem:non-acc-cont-progress},
they increase the potential by at most
$\sum_i \frac43\lmbd \rho (2b +\kpi)\left(1 + \frac{2E d}{1-\lE}  + \frac12 \alt \right) w_i p_i =
 \frac43\lmbd \rho (2b +\kmax)\left(1 + \frac{2E d}{1-\lE}  + \frac12 \alt \right) \left[\frac{\phi^+}{(1 - \loa)} +M \right]$
(on using Lemma \ref{lem:supply-cost-bdd}
to bound $\sum_i w_i p_i$).
Thus $\mu = \frac43\lmbd \rho b (2b +\kmax)\left(1 + \frac{2E d}{1-\lE}  + \frac12 \alt \right)$,
$a_1=1/(1 - \loa)$, $a_2=0$, and $a_3=1$.
\end{proof}

To analyze Case (ii) we will use the potential
\[
\phi_i = p_i[\span(x_i, \xbi, \wti) -
4\kpi(1 + \alt) (t - \tau_i) | \xbi- \wti |
+ \alpha_2 | \wti - w_i |].
\]
By Lemma \ref{lem:war-cont-progress}, $\frac{d\phi_i}{dt}
 \leq - \frac{\kpi(\alt - 1)}{2} \phi_i$ at any time when no price update is occurring
(this follows by setting $\lmbd =  4\kpi(1 + \alt)$ in the statement of  Lemma \ref{lem:war-cont-progress}).

We say that an attempted update of $p_i$ is a \emph{null update} if $p_i$ is unchanged
due to $\zbi^r$ being too small.

\begin{lemma}
\label{lem:non-acc-cont-progress-ii}
When $p_i$ undergoes a null update,
$\phi_i$ increases by at most $8\kpi(1 + \alt)  \rho (2b +\kpi) w_i p_i(t-\tau_i)$,
where $t$ is the current time and $\tau_i$ the time of the last update or
attempted update.
\end{lemma}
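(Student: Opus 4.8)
The plan is to track precisely what a null update does to $\phi_i$, and then to bound the single quantity that controls the change.

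\textbf{What changes at a null update.} A null update leaves $p_i$ unchanged, and it is the only event at this instant, so $p_i$, the supply $w_i$, the target demand $\wti$, and the instantaneous demand $x_i$ are all unaffected; the sole effect is that the reference time $\tau_i$ is reset to the current time $t$. Consequently the running average $\xbi=\xbi[\tau_i,t]$ restarts over an interval of zero length and so equals $x_i$ just after the update, and the discount term $4\kpi(1+\alt)(t-\tau_i)|\xbi-\wti|$ drops to $0$. Writing $\tau_i$ and $\xbi$ for their values just \emph{before} the update, we have, just before,
\[
\phi_i = p_i\bigl[\span(x_i,\xbi,\wti) - 4\kpi(1+\alt)(t-\tau_i)|\xbi-\wti| + \alt|\wti-w_i|\bigr],
\]
and, just after,
\[
\phi_i = p_i\bigl[|x_i-\wti| + \alt|\wti-w_i|\bigr].
\]

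\textbf{Bounding the increase.} Subtracting, the increase in $\phi_i$ equals
\[
p_i\bigl[\,|x_i-\wti| - \span(x_i,\xbi,\wti) + 4\kpi(1+\alt)(t-\tau_i)|\xbi-\wti|\,\bigr].
\]
Since $|x_i-\wti|$ is the distance between two of the three arguments of $\span(x_i,\xbi,\wti)$, it is at most $\span(x_i,\xbi,\wti)$; hence the first two terms contribute a nonpositive amount, and the increase is at most $4\kpi(1+\alt)(t-\tau_i)\,|\xbi-\wti|\,p_i$. It remains to bound $|\xbi-\wti|$.

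\textbf{Bounding $|\xbi-\wti|$ via the null-update criterion.} Note that $\xbi-\wti$ is exactly $\zbi^c$, the correct value of $\zbi$. The error $|\zbi^r-\zbi^c|$ is at most $(2b+\kpi)\rho w_i$: the $\pm\rho w_i$ errors in $s_i(\tau_i)$ and $s_i(t)$ contribute at most $2\rho w_i/(t-\tau_i)\le 2b\rho w_i$ to $(s_i(\tau_i)-s_i(t))/(t-\tau_i)$ (using that successive attempted updates are at least $1/b$ days apart), and at most $\kpi\rho w_i$ to $\kpi(s_i(t)-s_i^*)$. A null update occurs only because the Case~(ii) rule declines to act, i.e.\ because the reported $|\zbi^r|$ is small relative to this error bound, which forces $|\zbi^c|=|\xbi-\wti|\le 2(2b+\kpi)\rho w_i$. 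Plugging in, the increase in $\phi_i$ is at most $4\kpi(1+\alt)(t-\tau_i)\cdot 2(2b+\kpi)\rho w_i\cdot p_i = 8\kpi(1+\alt)\rho(2b+\kpi)w_ip_i(t-\tau_i)$, as claimed.

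\textbf{Main obstacle.} The only delicate point is the last step: one must pin down exactly what the ``decline to update'' test guarantees about $|\zbi^r|$, and hence about the true $|\zbi^c|=|\xbi-\wti|$, in terms of the error bound $(2b+\kpi)\rho w_i$, so that the constant comes out correctly. The remaining ingredients—recognising the reset of $\tau_i$ and $\xbi$ as the sole change, and invoking $|x_i-\wti|\le\span(x_i,\xbi,\wti)$—are routine.
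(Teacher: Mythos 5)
Your proposal is correct and follows essentially the same route as the paper's (much terser) proof: the paper likewise identifies the increase with the dropped discount term $4\kpi(1+\alt)(t-\tau_i)|\xbi-\wti|p_i$ and then invokes the null-update criterion to get $|\xbi-\wti|\le 2(2b+\kpi)\rho w_i$. Your extra steps—verifying that resetting $\xbi$ to $x_i$ can only shrink the span term, and deriving the $(2b+\kpi)\rho w_i$ error bound from the $1/b$ spacing of updates—are exactly the details the paper leaves implicit.
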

\begin{proof}
The cost for a null update is $4\kpi(1 + \alt) (t - \tau_i) | \xbi- \wti |p_i$.
As this is a null update, $| \xbi- \wti | \le 2(2b+\kpi)\rho w_i$,
so we see that the cost is at most
$ 8\kpi(1 + \alt)  (2b+\kpi)\rho w_i p_i   (t - \tau_i) $.
\end{proof}

\begin{lemma}
\label{lem:non-acc-cont-progress-ii'}
When $p_i$ undergoes an actual update,
if  Assumption \ref{ass:revised-dem-bdd} and Constraint \ref{cst:wti-bdd} hold,
$\frac{\alt}{2} + \alpha_1  \max\{3,2(d-1)\} \le 1$,
$\frac{\kpi(1 + \alt)}{2} \le 1$,
and
$\loa + 2\lambda \left(1 + \frac{2E d}{1-\lE}  + \frac12 \alt \right) \leq 1$, then
$\phi_i$ only decreases.
\end{lemma}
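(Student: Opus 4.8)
The plan is to reduce this to the accurate-data analysis of Corollary~\ref{cor:war-updates-only-help}, viewing the reporting error as a bounded perturbation of the price step. First I would record the error bound: since $\zbi = \frac{s_i(\tau_i)-s_i(t)}{t-\tau_i} + \kpi(s_i(t)-s^*_i)$, that $s_i(t)$ and $s_i(\tau_i)$ are each reported with error at most $\rho w_i$, and that successive updates are at least $1/b$ days apart, the reported value satisfies $|\zbi^r-\zbi^c|\le (2b+\kpi)\rho w_i$. The defining property of an \emph{actual} (non-null) update under rule (ii) is that the price-setter has certified this error to be at most $\frac12|\zbi^r|$; hence $\zbi^r$ and $\zbi^c$ have the same sign and $\frac12|\zbi^r|\le|\zbi^c|\le\frac32|\zbi^r|$, equivalently $\frac23|\zbi^c|\le|\zbi^r|\le 2|\zbi^c|$. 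Since the potential is written in terms of the true quantities, $\zbi^c=\xbi-\wti$, so this says: (a) the sign of $\Del_i p_i$ agrees with the sign of $\xbi-\wti$, so the ``toward/away from $\wti$'' classification of Lemma~\ref{lem:drop-span} is exactly as in the accurate case; and (b) the step size $|\Del_i p_i|=\lmbd p_i\,|\mbox{median}\{-1,\zbi^r/w_i,1\}|$ lies within a factor $2$ above, and a factor $\frac23$ below, the ``ideal'' step $\lmbd p_i\min\{1,|\xbi-\wti|/w_i\}$ that Corollary~\ref{cor:war-updates-only-help} would use.

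With these observations I would rerun the proof of Corollary~\ref{cor:war-updates-only-help} via Lemma~\ref{lem:war-updates-only-help-templ} (which is Lemma~\ref{lem:drop-span} with $\wti$ replacing $w_i$, $x_i^u=\xbi$, and $\psi_i=\alt p_i|\wti-w_i|$, so $\Del_i\psi_i\le\frac12\alt\wti|\Del_i p_i|$ by Lemma~\ref{lem:wti-constr}), but now for the Case~(ii) potential, whose span-discount coefficient is $4\kpi(1+\alt)$ in place of $\loa$; since $4\kpi(1+\alt)\le\loa$ this substitution only helps. For a toward-$\wti$ update the increase to $\phi$ is bounded by (\ref{eqn:cstrt-1''-lem-updates-only-help}) with $4\kpi(1+\alt)$ for $\loa$, and here the dangerous direction is a step that is \emph{too small}, so I use the lower bound on $|\Del_i p_i|$ (losing at most the factor $\frac23$, together with $\wti/w_i\ge\frac23$ from Lemma~\ref{lem:wti-constr}); splitting on whether the median clips (i.e.\ on whether $|\xbi-\wti|$ is small or large relative to $w_i$) and using $|\xbi-\wti|\le(d-1)\wti$ from Lemma~\ref{lem:xbi-bound}, the hypothesis $\frac{\alt}{2}+\alpha_1\max\{3,2(d-1)\}\le1$ makes this non-positive. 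For an away-$\wti$ update the increase is bounded by (\ref{eqn:cstrt-2''-lem-updates-only-help}), namely $(1+\frac{2Ed}{1-\lE}+\frac12\alt)\wti|\Del_i p_i| - (1-4\kpi(1+\alt))p_i|\xbi-\wti|$; here the dangerous direction is a step that is \emph{too large}, so I use $|\Del_i p_i|\le 2\lmbd p_i|\xbi-\wti|/w_i$ with $\wti\le\frac43 w_i$ and $1-4\kpi(1+\alt)\ge1-\loa$, whence $\loa+2\lambda(1+\frac{2Ed}{1-\lE}+\frac12\alt)\le1$ (with the stated constant) suffices. As in Lemma~\ref{lem:drop-span}, a non-same-side update is first decomposed into a toward part followed by an away part, and the residual $\Del_i x_i-\Del_i^N x_i$ beyond the spending-neutral change merely transfers spending between the span term of $\phi_i$ and those of the other $\phi_j$, so $\phi$ as a whole does not increase.

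I expect the main obstacle to be bookkeeping rather than any new idea: propagating the warehouse-reading error through $\zbi^r$ and then through $|\Del_i p_i|$ precisely enough to obtain the clean ``factor $2$ above, factor $\frac23$ below'' statement, and then handling the clipping of the median in the toward case so that the demand bound $x_i\le d\wti$ (rather than the excess $|\xbi-\wti|$) enters at exactly the right place. A secondary point is that, because the Case~(ii) potential carries the coefficient $4\kpi(1+\alt)$, one must invoke $4\kpi(1+\alt)\le\loa$ — available from the ambient hypotheses, cf.\ Theorem~\ref{lem:non-acc-progr-bdd-dem-2} — to pass from $1-4\kpi(1+\alt)$ to $1-\loa$ in the final inequalities; once that is in place, everything reduces to the arithmetic of Corollary~\ref{cor:war-updates-only-help} with the constants enlarged just enough to absorb the perturbation.
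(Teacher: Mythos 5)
Your proposal follows the paper's proof almost exactly: both reduce to Lemma~\ref{lem:war-updates-only-help-templ}, observe that the certification $|\zbi^r-\zbi^c|\le\frac12|\zbi^r|$ forces $\zbi^r$ and $\zbi^c$ to agree in sign (so the toward/away classification is unaffected) and pins the computed step to within a constant factor of the ideal one, and then rerun the two cases of Corollary~\ref{cor:war-updates-only-help} with the worst-case step; your handling of the $4\kpi(1+\alt)$ coefficient via $4\kpi(1+\alt)\le\loa$ is also exactly what the paper does implicitly.

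The one place the argument does not close is the away-from-$\wti$ case. You correctly derive $\frac23|\zbi^c|\le|\zbi^r|\le 2|\zbi^c|$ from the stated certification rule, and hence $|\Del_i p_i|\le 2\lmbd p_i|\xbi-\wti|/w_i$; but feeding that into (\ref{eqn:cstrt-2''-lem-updates-only-help}) together with $\wti\le\frac43 w_i$ gives the requirement $\loa+\frac83\lmbd\bigl(1+\frac{2Ed}{1-\lE}+\frac12\alt\bigr)\le1$, not the stated $\loa+2\lmbd(\cdots)\le1$ — your parenthetical ``with the stated constant'' is not supported by your own factor of $2$. The paper obtains $2\lmbd$ by asserting $|\Del_i p_i|\le\frac32\lmbd p_i|\xbi-\wti|/w_i$, i.e.\ it treats the error as bounded by half of the \emph{correct} value rather than the reported one (its proof in fact states the calculated step lies between $\frac12$ and $\frac32$ of the ideal, which inverts the ratio you correctly computed). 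So either the certification threshold must be read as $|\zbi^r-\zbi^c|\le\frac12|\zbi^c|$ (or the threshold tightened to, say, $\frac13|\zbi^r|$), or the hypothesis must be strengthened to $\frac83\lmbd$. This is a constant-factor bookkeeping issue rather than a structural flaw — the toward case is unaffected, since there your sharper lower bound of $\frac23$ only makes the paper's $\frac12$-based condition easier to meet — but as written the last step of your away case is an arithmetic non sequitur.
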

\begin{proof}
We use Lemma \ref{lem:war-updates-only-help-templ}.

Since $|\zbi^c-\zbi^r| \ge \frac12 \zbi^r$,
we know that $\frac12 \zbi^r \le \zbi^c \le \frac32 \zbi^r$.
Consequently, the magnitude of the calculated $\Del_i p_i$,
ranges between $\frac12$ and $\frac32$ times the ideal value,
the value that would be
obtained were $\zbi^r = \zbi^c$.

In the bounds of Lemma \ref{lem:war-updates-only-help-templ},
we need to replace the ideal
value of $\Del_i p_i$ by the worst case for the calculated value.
This yields the following constraints to ensure $\phi_i$ only decreases.\\
\\
Case 1. A toward $\wti$ update:
\\
Case 1.1. $|\xbi - \wti| \le w_i$.

Here $|\Del_i p_i| \ge \frac 12 \lmbd |\xbi - \wti|p_i/w_i$.
So $\phi$ only decreases if
\begin{equation}
\label{eqn:cstrt-1v-lem-updates-only-help}
\loa |\xbi - \wti| p_i  \leq \frac12 (1 - \frac12\alt)\wti \lmbd |\xbi - \wti|p_i/w_i.
\end{equation}
By Lemma \ref{lem:wti-constr}, $\wti/w_i \ge \frac23$,
so $\frac32\alpha_1 + \frac14 \alt \le \frac12$ suffices.
\\
Case 1.2. $|\xbi - \wti| > w_i$.
\\
Here $|\Del_i p_i| \ge \frac 12 \lmbd p_i$.
So $\phi$ only decreases if
\begin{equation}
\label{eqn:cstrt-2v-lem-updates-only-help}
 \loa (d-1)\wti  p_i  \leq
\frac12 (1 - \frac12  \alt) \wti \frac 12 \lmbd p_i.
\end{equation}
The condition $\alpha_1(d-1) + \frac14 \alt \le \frac12$ suffices.
\\
Case 2. An away from $\wti$ update:
\\
Here $|\Del_i p_i| \le \frac 32 \lmbd p_i |\xbi - \wti|/w_i$.
So $\phi$ only decreases if
\begin{equation}
\label{eqn:cstrt-4v-lem-updates-only-help}
\left[ \wti + \frac{2E}{1-\lE} x_i  +\frac12 \alt \wti \right] \frac 32 \lmbd p_i |\xbi - \wti|/w_i
\leq (1 - \loa) p_i | \xbi - \wti|.
\end{equation}
By Lemma \ref{lem:wti-constr},
$ \wti \le \frac43 w_i$,
so this is subsumed by
$\loa + 2\lambda \left(1 + \frac{2E d}{1-\lE}  + \frac12 \alt \right) \leq 1$.
\end{proof}

\noindent
{\bf Theorem \ref{lem:non-acc-progr-bdd-dem-2}.}~%
\emph{%
If Assumption \ref{ass:revised-dem-bdd} and Constraint \ref{cst:wti-bdd} hold,
$\frac{\alt}{2} + \alpha_1  \max\{3,2(d-1)\} \le 1$,
$\loa + 2\lambda \left(1 + \frac{2E d}{1-\lE}  + \frac12 \alt \right) \leq 1$,
$ 4\kpi(1 + \alt) \le \loa \leq \frac{1}{2}$ for all $i$,
$\frac{\kpi (\alt - 1)}{2} \le 1$,
$\mu \left[ \frac{1 + \mu/(1 - \loa)} {1 - \mu/(1 - \loa)} + \frac{1}{(1 - \loa)} \right]
\le \frac{\kpi(\alt - 1)}{2}$,
if $\phi \ge \frac{32\mu M} {[1 - \frac{\mu}{1 -\loa}] [(\kmin(\alt - 1)]}$
at the start of the day,
where $\mu = 8\kmax(1 + \alt) (2b +\kmax) \rho $,
and if each price is updated at least once every day,
then $\phi$ decreases by at least a $1 - \frac{\kmin(\alt -1)}{8}$ factor daily.
}
\begin{proof}
Again, we apply Lemma \ref{lem:genrl-progr-bdd-dem}.

By Lemma \ref{lem:war-cont-progress},
$d\phi/dt \le  - \frac{\kmin(\alt -1)}{2}$, except when there is a price increase,
so $\nu = \frac{\kmin(\alt -1)}{2}$ and $\gamma = 0$.

We now determine the other parameters.

To total the increase due to null updates, we spread the potential increase due to
each null update uniformly across the
time interval between the present null update and the predeceding update
to the same price (null or otherwise).
By Lemma \ref{lem:non-acc-cont-progress-ii},
the potential increase due to a null update to $p_i$ at time $t$ is
at most $8\kmax(1 + \alt) (2b +\kmax) \rho w_i p_i(t)\Del= \mu w_i p_i(t)\Del$,
where $\Del$ is the length of the time interval over which the increase is being spread.
Thus the total potential increase due to null updates during $D$ is bounded by
$\mu \int_D \sum_i w_ip_i dt$ plus the following term which covers
the intervals that extend into the day preceding $D$:
$\mu \sum_i w_ip_i(\tau)$.

Then, using Lemma \ref{lem:supply-cost-bdd} to bound $\sum_i p_i w_i$,
shows that the potential increase over $D$ is bounded by
$\mu[M+\phi^+/(1-\loa)]  + \mu[M+\phi(\tau)/(1-\loa)]$.
So we can set $a_1=a_2=/1/(1 - \loa)$ and $a_3=2$.
\end{proof}

\noindent
{\bf Remark}.
To combine the fast update rule with the rule of at most one update every $1/b$ days,
we restate them as follows: an update to $p_i$ occurs after a $w_i$ reduction to
the corresponding warehouse stock, and if that does not occur, following
a period of between $1/b$ and 1 day since the last update to $p_i$, null or
otherwise.

Small changes will be needed in the proof of Theorem \ref{lem:non-acc-progr-bdd-dem}.

\subsection{Discrete  Goods and Prices}
\label{sec:discrete}

\Xomit{
Now we investigate the effect of only allowing
integer-valued prices and finitely divisible goods:
this is implemented by requiring each $w_i$
to be an integer and goods to be sold in integral quantities.

\smallskip
\noindent{\bf Demands as a Rate}.
With limited divisibility, we need to look again at our interpretation of demands
as a rate.

$x_i ({\bf p})$, the
{\em daily demand} for
good $i$ at prices ${\bf p}$, is simply the demand
were all the prices to remain unchanged over the course of a day.
The {\em ideal  demand} for good $i$ over time interval $[t_1, t_2]$
with unchanged prices is defined to be $x_i (p) (t_2 - t_1)$.
Ideal demand $x_2^I (t_1, t_2)$ for good $i$ over time
interval $[t_1, t_2]$ with possibly varying prices is given by
$\int_{t_1}^{t_2} x_i ({\bf p}) dt$. Note that this is in fact a sum
as there are only finitely many price changes.

The {\em actual demand} $x_i^A (t_1, t_2)$ over the time interval
$[t_1, t_2]$ is the supply minus the growth in the warehouse stock:
$w_i (t_2 - t_1) - [s_i (t_2) - s_i (t_1)]$.

In order to achieve approximately uniform demand as a function of prices,
We require that $| x_i^A (t_1, t_2) -  x_i^I (t_1, t_2)| < 1$, for all times
$t_1, t_2$ at which price $p_i$ is considered for an update (i.e.\
both actual and null updates).

We will also define ideal warehouse contents. This will be a smoothly varying value,
which is more easily combined with the virtual demands $y_i$ in our potential function.
The ideal content of warehouse $i$, $s_i^I$ is simply the contents of the warehouse
had $x_i^I$ been the demand throughout. Note that $|s_i^A-s_i^I|=|x_i^A-x_i^I| <1$.
$\wti^I$ is defined in terms of $s_i^I$:
$\wti^I  = w_i - \kpi (s_i^I - s_i^*)$;
$\wti^A$ can be defined analogously.
Note that $|\wti^I - \wti^A| < \kpi$.
We  also define $\xbi^A= x_i - \wti^A$ and $\xbi^I = x_i - \wti^I$.
The computation of price updates uses $\xbi^A$.

\smallskip
\noindent{\bf Discrete WGS}.
We need to redefine WGS and the bounds on the rate of change in demand
w.r.t.\ prices so that we can carry out an analysis similar to that for the divisible case.

In the Fisher market context it is not hard to see that WGS
imposes the same constraints on the spending and the demand for each
good. This means that in the discrete setting under WGS, if the
price $p_i$ increases by one unit, then as the spending does not increase,
the demand for good $i$ must
drop, so if the demand at price $p_i$ is $x_i$, at price $p_i+x_i$
it must be zero. This seems unnatural. This impression is reinforced
by considering what happens were half units of money to be
introduced, with WGS remaining in place. Then at price $p_i+x_i/2$
the demand would have to be 0. This suggests that the property ought to
be modified in the discrete setting.

Accordingly, we define a market to satisfy the
\emph{Discrete WGS property} if, for any good $i$,
reducing its price $p_i$ to $p_i-\Delta$ only reduces demand for
all other non-money goods, and the spending on good $i$ is now at
least $p_ix_i(p_i)-[(p_i-\Delta)-1]$, i.e.\ the spending, if
reduced, is reduced by less than the cost of one item.
Note that it need not be that all the money is spent (for there may be left
over money which is insufficient to buy one item of any good).

\smallskip\noindent{\bf Elasticity of Demand and the Parameter $\elas$.}
We define the following bounded analog for discrete markets. Suppose
that the prices of all goods other than good $i$ is set to ${p}_{-i}$.
Then, for all $l_i \leq p_i \leq q_i$,
 \[ \left\lfloor x_i (l_i, p_{-i}) \left( \frac{l_i}{p_i}
   \right)^E \right\rfloor \leq x_i ({p}) \leq \left\lceil x_i
   (q_i, {p}_{-i}) \left( \frac{q_i}{p_i} \right)^E
 \right\rceil. \]

The crucial observation is that there is a fully divisible market with elasticity bound $2E$
that has demands $y_i$ very similar to those for the discrete market:
for every price vector $p$ which induces non-zero demand for every good,
for all $i$, $x_i(p) - 1 < y_i(p) \le x_i(p)$.
We show this (non-trivial) claim at the end of this section.
Given this correspondence,
}

The analysis of the discrete case is similar to that for inaccurate data.
We begin by examining the sources of ``error'' in the discrete setting.

\smallskip
\noindent{\bf Sources of Error}.

As $|\xbi^I - \ybi| \le 1$, and $|\wti^A - \wti^I| \le \kpi$
the previous bound of  $2(b + \kpi)\rho w_i$
on the error in calculating $\zbi$ ($=\ybi -\wti$ here)
is replaced by $1 + \kpi$.
This amounts to setting $\rho = \min_i (1 + \kpi)/[2(b + \kpi) w_i]$.
Note that this implies that a price update occurs only if $|\xbi^A -\wti| \ge 2(1 + \kpi)$.
\footnote{We could enforce a condition $|\xbi^A - \ybi| \le \frac1{(1 + \kpi)}$
(or any other convenient
positive bound), and then
price updates would occur so long as $|\xbi^A -\wti| \ge 1$;
however, the bound on the elasticity for the $y_i$ demands would increase correspondingly.}

We also need to take account of the fact that the prices $p_i$
are integral, but the calculated updates need not be.
To avoid overlarge updates, we conservatively round down the
the magnitude of each update.
Note that this reduces the value of the update by at most a factor of 2.
This introduces a second source of error,
which we need to incorporate in the analysis.
In addition, this has the following implication:
no price can be less than $1/\lmbd$,
for a smaller price would never be updated; furthermore, the price $1/\lmbd$
may not be reduced even if the update rule so indicated.

\Xomit{
\smallskip
\noindent{\bf Indivisibility Parameters}.
We measure the indivisibility of the market in terms of two parameters, $r$ and $s$.
$r=M/\sum_i w_i$, where $M$ is the daily supply of money; it provides an upper bound
on the weighted average price for an item at equilibrium. This can be thought of as the
granularity of money at the equilibrium.
$s=\min_i w_i$ is the minimum size for the daily supply for any item,
and thus indicates the granularity of the least divisible good.

\smallskip

Next, we restate Constraint \ref{cst:wti-bdd} and Lemma \ref{cst:wti-bdd},
replacing $\wti$ with $\wti^I$.

\begin{constraint}
\label{cst:wti-bdd-disc}
$|\wti^I - w_i| \le \frac13 w_i$.
\end{constraint}
}

\begin{lemma}
\label{lem:wti-constr-disc}
If Constraint \ref{cst:wti-bdd-disc} holds, then
$\frac34 \wti^I \le w_i \le \frac32 \wti^I$ and
$|\wti^I - w_i| \le \frac12 \wti^I$.
In addition,
Constraint \ref{cst:wti-bdd-disc} holds for all possible warehouse contents if
$\kpi\max\{|c_i - s_i^*|,s_i^*\}\le \frac13 w_i$, and if
in addition $s_i^* = c_i/2$, the condition becomes $\kpi \le \frac23\frac{w_i}{c_i}$.
\end{lemma}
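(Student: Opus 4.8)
The proof mirrors that of Lemma \ref{lem:wti-constr} essentially verbatim, with $\wti^I$ in place of $\wti$ throughout. First I would unpack Constraint \ref{cst:wti-bdd-disc}: the bound $|\wti^I - w_i| \le \frac13 w_i$ is equivalent to $\frac23 w_i \le \wti^I \le \frac43 w_i$. Dividing through, $\frac34 \wti^I \le w_i \le \frac32 \wti^I$, which is the first claim. Combining these, $|\wti^I - w_i| \le \frac13 w_i \le \frac13 \cdot \frac32 \wti^I = \frac12 \wti^I$, which is the second claim.

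For the third claim I would use the definition $\wti^I = w_i - \kpi(s_i^I - s_i^*)$, so that $|\wti^I - w_i| = \kpi\, |s_i^I - s_i^*|$. As $s_i^I$ ranges over the physically possible warehouse contents $[0, c_i]$, the quantity $|s_i^I - s_i^*|$ is a convex function of $s_i^I$ and hence attains its maximum at an endpoint, i.e.\ at $\max\{s_i^*,\, |c_i - s_i^*|\}$. Thus $\kpi \max\{|c_i - s_i^*|,\, s_i^*\} \le \frac13 w_i$ suffices to guarantee Constraint \ref{cst:wti-bdd-disc} for every attainable warehouse state. Specializing to $s_i^* = c_i/2$, both $s_i^*$ and $|c_i - s_i^*|$ equal $c_i/2$, so the requirement collapses to $\frac12 \kpi c_i \le \frac13 w_i$, i.e.\ $\kpi \le \frac23 \frac{w_i}{c_i}$.

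There is no substantive obstacle here; this is a routine restatement of Lemma \ref{lem:wti-constr} in the discrete notation. The only point deserving a word of care is the range of the ideal content $s_i^I$: since $s_i^I$ tracks the true content $s_i^A \in [0, c_i]$ only to within $1$, one could instead work with the interval $[-1, c_i+1]$, but the resulting change to the endpoint bound is lower-order and is absorbed into the same constraint, so I would simply remark on it and otherwise reuse the argument unchanged.
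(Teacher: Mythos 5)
Your proof is correct and follows essentially the same route as the paper's proof of the divisible-case Lemma \ref{lem:wti-constr}, which is exactly what the paper intends here (it gives no separate argument for the discrete version). The extra remark about $s_i^I$ possibly lying slightly outside $[0,c_i]$ is a reasonable point of care but does not change the argument.
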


The analysis will follow scenario (ii) from Section \ref{sec:non-acc},
since ``$\rho$'' is known.
First, we note that Lemma \ref{sec:non-acc}
still holds, as $M \ge \sum_i x_i p_i$.

We use a potential $\phi_i$ expressed in terms of $y_i$:
\[
\phi_i = p_i \left[\span(y_i, \ybi, \wti^I) -
4\kpi(1 + \alt) (t - \tau_i) | \ybi- \wti^I |
+ \alpha_2 | \wti^I - w_i | \right].
\]

We restate Lemma \ref{lem:war-updates-only-help-templ}
in terms of the demands $y_i$ in the corresponding divisible market.

\begin{lemma}
\label{lem:ind-updates-only-help-templ}
If Constraint \ref{cst:wti-bdd-disc} holds and $y_i \le d\wti^I$ for all $i$,
then when $p_i$ is updated, $\phi$ increases by at most the following:
\\
(i) With a toward $\wti$ update:
\begin{equation}
\label{eqn:cstrt-1-5-lem-updates-only-help}
\loa |\ybi - \wti^I| p_i + \frac12\alt \wti^I|\Del_i p_i|  - \wti^I|\Delta_i p_i|.
\end{equation}
(ii) With an away from $\wti$ update:
\begin{equation}
\label{eqn:cstrt-2-5-lem-updates-only-help}
\left(1 + \frac{2E d}{1-\lE} \right) \wti^I |\Delta_i p_i|
 + \frac12\alt \wti^I|\Del_i p_i|
- (1- \loa )  p_i| \overline{y}_i  - {\wti^I}|.
\end{equation}
\end{lemma}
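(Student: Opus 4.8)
The statement is the exact analogue of Lemma~\ref{lem:war-updates-only-help-templ}, with the discrete demands replaced by the demands $y_i$ of the associated fully divisible market and with $\wti^I$ playing the role of $w_i$; so the plan is to prove it by re-running the proof of Lemma~\ref{lem:war-updates-only-help-templ} essentially verbatim. The first thing to record is that the $y_i$-market is a genuine divisible Fisher market obeying WGS and having bounded (own-price) elasticity, so all the ingredients used inside Lemma~\ref{lem:drop-span} apply to it unchanged: the spending-neutral change of Definition~\ref{defn:spndng-neutr} and Lemma~\ref{lem:spdng-neutr}, the WGS spending-transfer argument bounding the cross-terms $\sum_{j\ne i}\Gamma_i\phi_j$, and the bound of Lemma~\ref{lem:bound-Delta-x} on $|\Delta_i y_i|$ in terms of $|\Delta_i p_i|$ (invoked with the elasticity bound governing the $y_i$-market). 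Hence Lemma~\ref{lem:drop-span}, read with $x_i\mapsto y_i$, $x_i^u\mapsto\ybi$, $w_i\mapsto\wti^I$, gives that an update to $p_i$ raises $\phi$ by at most the Case-1 expression $\loa|\ybi-\wti^I|p_i+\Delta_i\psi_i-\wti^I|\Delta_ip_i|$ for a toward-$\wti^I$ update, and by at most the Case-2 expression $\bigl(1+\tfrac{2Ed}{1-\lE}\bigr)\wti^I|\Delta_ip_i|+\Delta_i\psi_i-(1-\loa)p_i|\ybi-\wti^I|$ for an away-from-$\wti^I$ update, using $t-\tau_i\le 1$ since a price is updated at least once a day.

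\textbf{Handling the $\psi_i$ term.} As in Lemma~\ref{lem:war-updates-only-help-templ}, here $\phi_i$ differs from the basic asynchronous potential only by the additive term $\alpha_2 p_i|\wti^I-w_i|$, so $\psi_i=\alt p_i|\wti^I-w_i|$ and $\Delta_i\psi_i=\alt|\Delta_ip_i|\,|\wti^I-w_i|$. By Lemma~\ref{lem:wti-constr-disc} we have $|\wti^I-w_i|\le\tfrac12\wti^I$, hence $\Delta_i\psi_i\le\tfrac12\alt\,\wti^I|\Delta_ip_i|$. Substituting this bound into the two displayed expressions above yields exactly (\ref{eqn:cstrt-1-5-lem-updates-only-help}) and (\ref{eqn:cstrt-2-5-lem-updates-only-help}), completing the argument. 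The hypothesis $y_i\le d\wti^I$ is precisely what is needed to reach Case~2.2 inside Lemma~\ref{lem:drop-span} (the large-demand subcase of an away-from-$\wti^I$ update), matching the use of Assumption~\ref{ass:bdd-dem}/\ref{ass:revised-dem-bdd} in the divisible analysis.

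\textbf{Main obstacle.} There is no real obstacle in the estimate itself; the only things to be careful about are bookkeeping and the elasticity constant. First, one should verify once and for all that the constructed $y_i$-market inherits the budget-balance property (spending on good $i$ plus spending on the other goods is conserved under a change to $p_i$) that underlies the span-term transfer step of Lemma~\ref{lem:drop-span}; this follows from the fact that the $y_i$-market is itself a Fisher market. Second, Lemma~\ref{lem:bound-Delta-x} must be applied with the own-price elasticity bound of the $y_i$-market (which by the construction at the start of this section is at most twice the discrete parameter), and one should keep this consistent with the meaning of the symbol $E$ used in (\ref{eqn:cstrt-2-5-lem-updates-only-help}) and in the downstream Theorem~\ref{lem:discr-progr-bdd-dem-2}. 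Once those two points are fixed, the proof is identical to that of Lemma~\ref{lem:war-updates-only-help-templ}.
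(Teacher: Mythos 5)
Your proof is correct and takes essentially the same route as the paper: the paper's own proof is the one-line remark that the lemma is Lemma~\ref{lem:war-updates-only-help-templ} with $\xbi\mapsto\ybi$ and $\wti\mapsto\wti^I$, which is exactly the substitution you carry out (more carefully, including the $\Delta_i\psi_i\le\tfrac12\alt\,\wti^I|\Delta_ip_i|$ step via Lemma~\ref{lem:wti-constr-disc}). The bookkeeping caveat you raise about whether $E$ in \eqref{eqn:cstrt-2-5-lem-updates-only-help} denotes the discrete parameter or the (doubled) elasticity of the $y_i$-market is a real ambiguity that the paper's one-line proof also leaves unaddressed, so flagging it is appropriate rather than a defect.
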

\begin{proof}
The only changes were to replace $\xbi$ by $\ybi$ and $\wti$ by $\wti^I$
which reflect the corresponding changes
to the potential function.
\end{proof}

Again, we say that an update is null if it leaves the price unchanged.
Its cost is bounded in the following lemma.
\begin{lemma}
\label{lem:null-cont-progress-disc}
When $p_i$ undergoes a null update,
$\phi_i$ increases by at most
$6(1 + \kpi)\kpi(1 + \alt) (t - \tau_i)  w_i/\lmbd$ if  $| \xbi^A - \wti^A | \ge 2(1 + \kpi)$
and by at most $12(1 + \kpi)\kpi(1 + \alt) (t - \tau_i) p_i$
if $| \xbi^A - \wti^A | < 2(1 + \kpi)$,
where $t$ is the current time and $\tau_i$ the time of the last update or
attempted update.
\end{lemma}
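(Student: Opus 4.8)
The plan is to run the argument of Lemma~\ref{lem:non-acc-cont-progress-ii}, now with the virtual demand $y_i$ and ideal target $\wti^I$ in the roles of $\xbi$ and $\wti$. The first step is to identify what a null update does to $\phi_i$. A null update at time $t$ resets $\tau_i$ to $t$, which (a) kills the negative discount term $-4\kpi(1+\alt)(t-\tau_i)|\ybi-\wti^I|p_i$, (b) resets $\ybi$ to the instantaneous value $y_i(t)$ (unchanged by the null update, since $p_i$ does not change), so the span term $p_i\span(y_i,\ybi,\wti^I)$ drops to $p_i|y_i-\wti^I|\le p_i\span(y_i,\ybi,\wti^I)$, and (c) leaves $\alt p_i|\wti^I-w_i|$ momentarily unchanged. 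Hence the jump in $\phi_i$ is at most $4\kpi(1+\alt)(t-\tau_i)\,|\ybi-\wti^I|\,p_i$, and everything reduces to bounding $|\ybi-\wti^I|\,p_i$ in the two regimes of the statement.

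Second, I would handle the regime $|\xbi^A-\wti^A|\ge 2(1+\kpi)$. Here the requested price change is nonzero and the scenario-(ii) guard is satisfied, so the update is null only because of price discreteness: either $p_i$ already sits at the integral floor $1/\lmbd$ and a decrease is requested, or $p_i>1/\lmbd$ and the rounded-down update magnitude $\lmbd p_i\min\{1,|\xbi^A-\wti^A|/w_i\}$ is $<1$. In the floor case a requested decrease means $\ybi$ lies (up to the Sources-of-Error gap) below $\wti^I$; using that the computed $\xbi^A-\wti^A$ differs from $\ybi-\wti^I$ by at most $1+\kpi$, together with $\wti^I\le\tfrac43 w_i$ from Lemma~\ref{lem:wti-constr-disc}, gives $|\ybi-\wti^I|\le\tfrac43 w_i$, and since $p_i=1/\lmbd$ this yields $|\ybi-\wti^I|p_i\le\tfrac43 w_i/\lmbd$. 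In the rounding case, rounding to $0$ with $p_i\ge 1/\lmbd$ forces $|\xbi^A-\wti^A|<w_i/(\lmbd p_i)$ and hence $p_i<w_i/(2(1+\kpi)\lmbd)$; then $|\ybi-\wti^I|p_i\le|\xbi^A-\wti^A|p_i+(1+\kpi)p_i< w_i/\lmbd + w_i/(2\lmbd)$. Either way $|\ybi-\wti^I|p_i\le\tfrac32(1+\kpi)w_i/\lmbd$, so the jump is at most $6(1+\kpi)\kpi(1+\alt)(t-\tau_i)w_i/\lmbd$.

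Third, for the regime $|\xbi^A-\wti^A|<2(1+\kpi)$ I would simply combine the reported excess with the same Sources-of-Error gap: $|\ybi-\wti^I|\le|\xbi^A-\wti^A|+(1+\kpi)<3(1+\kpi)$, which plugs into the jump bound to give $12(1+\kpi)\kpi(1+\alt)(t-\tau_i)p_i$. (Note this covers both a genuine scenario-(ii) null and a rounding-induced null with small reported excess, so no further case split is needed here.)

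The step I expect to be the main obstacle is the second regime: I need a clean characterization of exactly when a null update can persist despite a large reported excess (the floor-clamp versus rounding-to-zero dichotomy), and then a careful accounting of the several error sources — the $\le 1$ discretization of demand, the $\le\kpi$ gap between $\wti^A$ and $\wti^I$, and the factor-$2$ loss from rounding the update magnitude — so that the constants land on exactly $6$ and $12$. A secondary subtlety is that the floor-case bound on $|\ybi-\wti^I|$ relies on the sign of $\ybi-\wti^I$ being controlled (a decrease is requested only when demand is below target) even after the discretization errors are absorbed; if that sign argument is not tight enough one falls back on $|\ybi-\wti^I|\le\max(\ybi,\wti^I)\le\wti^I+(1+\kpi)$, and one should check that $w_i\ge 6$ (Theorem~\ref{lem:discr-progr-bdd-dem-2}) keeps this within the stated constant.
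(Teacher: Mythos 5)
Your proposal is correct and follows essentially the same route as the paper: bound the jump at a null update by the discarded discount term $4\kpi(1+\alt)(t-\tau_i)|\ybi-\wti^I|p_i$, then bound $|\ybi-\wti^I|p_i$ by $\tfrac32(1+\kpi)w_i/\lmbd$ or $3(1+\kpi)p_i$ in the two regimes using the characterization of why the update was null. Your explicit treatment of the price-floor ($p_i=1/\lmbd$) case is a small refinement the paper's proof elides, and your constants land where they should.
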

\begin{proof}
The cost for a null update is $4\kpi(1 + \alt) (t - \tau_i) | \ybi- \wti^I |p_i$.
As this is a null update, either $\lmbd| \xbi^A- \wti^A |p_i/w_i < 1$ or
$| \xbi^A - \wti^A | < 2(1 + \kpi)$.
Now $ | \ybi- \wti^I | \le | \xbi^I- \wti^I | + 1 = | \xbi^A - \wti^A | + 1$.
If $| \xbi^A - \wti^A | \ge 2(1 + \kpi)$,
$ | \ybi- \wti^I | \le \frac32 (1 + \kpi) | \xbi^A - \wti^A | $.
So we see that the cost is at most either
$ 6(1 + \kpi)\kpi(1 + \alt) (w_i/\lmbd)   (t - \tau_i) $ if  $| \xbi^A - \wti^A | \ge 2(1 + \kpi)$
and  $12(1 + \kpi)\kpi(1 + \alt) (t - \tau_i) p_i$ otherwise.
\end{proof}

Next, we bound $\sum_i w_i$ and $\sum_i p_i$.

\begin{lemma}
\label{lem:disc-bound-w-p}
If $w_i \ge 6$ for all $i$,
$\sum_i w_i =  M/r$, and
$\sum_i   p_i \le \frac{3}{s}\left( \frac{\phi}{ (1 - \loa)} + M \right)$.
\end{lemma}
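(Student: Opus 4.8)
The plan is to dispatch the two claims separately. The first, $\sum_i w_i = M/r$, is merely the defining relation $r = M/\sum_i w_i$ rearranged, so there is nothing to prove. The second claim reduces, after dividing through by $s = \min_i w_i$, to the estimate $\sum_i w_i p_i \le 3\bigl(\tfrac{\phi}{1-\loa} + M\bigr)$, which is the discrete analogue of Lemma~\ref{lem:supply-cost-bdd-war-bdd}; I would follow that same template.

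The first step is a lower bound on each $\phi_i$. In $\phi_i = p_i[\span(y_i,\ybi,\wti^I) - 4\kpi(1+\alt)(t-\tau_i)|\ybi-\wti^I| + \alt|\wti^I-w_i|]$, the term $|\ybi-\wti^I|$ is at most $\span(y_i,\ybi,\wti^I)$; since $4\kpi(1+\alt)\le\loa$ (a hypothesis of Theorem~\ref{lem:discr-progr-bdd-dem-2}), $t-\tau_i\le 1$, and the $\alt|\wti^I-w_i|$ term is nonnegative, we get $\phi_i \ge (1-\loa)\,p_i\,\span(y_i,\ybi,\wti^I) \ge (1-\loa)\,p_i\,|y_i-\wti^I|$, whence $\sum_i p_i|y_i-\wti^I| \le \phi/(1-\loa)$. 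Next I split the goods as in Lemma~\ref{lem:supply-cost-bdd-war-bdd}. If $y_i \le \tfrac13 w_i$, then by Lemma~\ref{lem:wti-constr-disc} ($\wti^I\ge\tfrac23 w_i$) we have $\wti^I - y_i \ge \tfrac13 w_i$, so $w_i \le 3|\wti^I - y_i|$; if $y_i > \tfrac13 w_i$, then $w_i < 3y_i$. Summing over $i$, $\sum_i w_i p_i \le 3\sum_i p_i|\wti^I-y_i| + 3\sum_i y_i p_i \le \tfrac{3\phi}{1-\loa} + 3\sum_i y_i p_i$. Finally $\sum_i y_i p_i \le \sum_i x_i p_i \le M$, using $y_i \le x_i$ from the discrete-to-divisible market correspondence and the fact (observed just before Lemma~\ref{lem:ind-updates-only-help-templ}) that in the discrete setting total spending never exceeds the money supply. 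This gives $\sum_i w_i p_i \le 3(\tfrac{\phi}{1-\loa} + M)$, and dividing by $s$ yields the claim, since $w_i \ge s$ for all $i$ implies $\sum_i p_i \le \tfrac1s\sum_i w_i p_i$.

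The only place calling for genuine care is the inequality $\sum_i y_i p_i \le M$: one must appeal to the established correspondence to guarantee $y_i \le x_i$ pointwise, and recall that discrete WGS allows leftover money but never overspending, so $\sum_i x_i p_i \le M$. The remaining ingredients — the $\tfrac13 w_i$ case split, Constraint~\ref{cst:wti-bdd-disc} applied through Lemma~\ref{lem:wti-constr-disc}, and the parameter inequality $4\kpi(1+\alt)\le\loa$ — are routine bookkeeping; in particular the hypothesis $w_i\ge 6$ plays no role in this bound beyond keeping the divisible surrogate meaningful.
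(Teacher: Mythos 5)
Your proof is correct and follows essentially the same route as the paper's: the identical case split on $y_i$ versus $\tfrac13 w_i$, the bound $\sum_i p_i|\wti^I - y_i| \le \phi/(1-\loa)$ from the potential, the bound $\sum_i y_i p_i \le \sum_i x_i p_i \le M$, and the use of $w_i \ge s$. Passing through the intermediate quantity $\sum_i w_i p_i$ before dividing by $s$, rather than bounding $\sum_i p_i$ term by term as the paper does, is a purely cosmetic difference.
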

\begin{proof}
The first claim simply restates the definition of $r$.
For the second claim, we argue as follows.
\[
\sum_i p_i  = \sum_{y_i^I \le w_i/3} p_i + \sum_{y_i^I > w_i/3} p_i
\le \sum_{y_i^I \le w_i/3}\frac{ |\wti^I - y_i^I|p_i} {w_i/3} + \frac3s \sum_{y_i^I > w_i/3} y_i^Ip_i
\]
The first term in the inequality following from Lemma \ref{lem:wti-constr},
as $\wti^I \ge \frac23 w_i$.
The above total is bounded by $\frac{3}{s}\left(\frac{\phi_i}{(1 - \loa)} +M \right)$.
\end{proof}

\begin{lemma}
\label{lem:non-acc-cont-progress-disc}
When $p_i$ undergoes an actual update,
if Constraint \ref{cst:wti-bdd-disc} holds and $y_i \le d\wti^I$ for all $i$,
$\frac{\alt}{2} + \alpha_1  \max\{\frac92,2(d - 1))\} \le 1$,
$\loa + \frac83 \lmbd \left(1 + \frac{2E d}{1-\lE}  + \frac12 \alt \right) \le 1$,
and each $w_i \ge 2$, then
$\phi_i$ only decreases.
\end{lemma}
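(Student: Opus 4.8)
The plan is to follow the structure of the proof of Lemma \ref{lem:non-acc-cont-progress-ii'}, treating the discrete market exactly as scenario (ii) of Section \ref{sec:non-acc}. By the \emph{Sources of Error} discussion, with $\rho = \min_i (1+\kpi)/[2(b+\kpi)w_i]$ the computed update signal $\zbi^r$ differs from the reference value $\zbi^c := \ybi - \wti^I$ (the ideal signal in the associated divisible market) by at most $1+\kpi$, and an \emph{actual} update of $p_i$ is performed only when $|\zbi^r| \ge 2(1+\kpi)$; hence $\tfrac12|\zbi^r| \le |\zbi^c| \le \tfrac32|\zbi^r|$, i.e.\ $\tfrac23|\zbi^c| \le |\zbi^r| \le 2|\zbi^c|$, exactly the estimate used in Lemma \ref{lem:non-acc-cont-progress-ii'}. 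The one new ingredient is the conservative integer rounding of the update magnitude: since an actual update has pre-rounding magnitude $\lmbd p_i \min\{1,|\zbi^r|/w_i\} \ge 1$, rounding its magnitude down costs at most a further factor of $2$. The starting point will be Lemma \ref{lem:ind-updates-only-help-templ}, which bounds the increase to $\phi$ upon updating $p_i$ in terms of $|\ybi - \wti^I| = |\zbi^c|$, $|\Del_i p_i|$ and $\wti^I$, together with Lemma \ref{lem:wti-constr-disc}, which gives $\tfrac23 w_i \le \wti^I \le \tfrac43 w_i$.

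The first computation is a two-sided bound on $|\Del_i p_i|$ in terms of $\zbi^c$, obtained by compounding the measurement error with the rounding. For the upper bound, $|\Del_i p_i| \le \lmbd p_i\,|\zbi^r|/w_i \le 2\lmbd p_i\,|\zbi^c|/w_i$. For the lower bound I split on the size of $|\zbi^c|$: if $|\zbi^c| \le \tfrac32 w_i$ then $\min\{1,|\zbi^r|/w_i\} \ge \tfrac23|\zbi^c|/w_i$, so after rounding $|\Del_i p_i| \ge \tfrac13 \lmbd p_i\,|\zbi^c|/w_i$; if $|\zbi^c| > \tfrac32 w_i$ then $|\zbi^r| > w_i$, the reported ratio is clamped to $1$, and $|\Del_i p_i| \ge \tfrac12\lmbd p_i$. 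The split at $\tfrac32 w_i$ (rather than at $w_i$ as in the divisible case) is exactly what the median-clamping forces.

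With these bounds in hand the verification mirrors Lemma \ref{lem:non-acc-cont-progress-ii'}. For a toward-$\wti^I$ update, Lemma \ref{lem:ind-updates-only-help-templ}(i) bounds the increase by $\loa|\zbi^c|p_i - (1-\tfrac12\alt)\wti^I|\Del_i p_i|$; when $|\zbi^c|\le\tfrac32 w_i$, substituting $|\Del_i p_i|\ge \tfrac13\lmbd p_i|\zbi^c|/w_i$ and $\wti^I \ge \tfrac23 w_i$ makes this non-positive once $\tfrac92\alpha_1 + \tfrac12\alt \le 1$; when $|\zbi^c|>\tfrac32 w_i$, substituting $|\Del_i p_i|\ge\tfrac12\lmbd p_i$ and $|\zbi^c| = |\ybi-\wti^I|\le (d-1)\wti^I$ (using $y_i \le d\wti^I$) makes it non-positive once $2(d-1)\alpha_1 + \tfrac12\alt \le 1$; both are implied by $\tfrac{\alt}{2}+\alpha_1\max\{\tfrac92,2(d-1)\}\le 1$. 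For an away-from-$\wti^I$ update, Lemma \ref{lem:ind-updates-only-help-templ}(ii) bounds the increase by $\bigl(1+\tfrac{2Ed}{1-\lE}+\tfrac12\alt\bigr)\wti^I|\Del_i p_i| - (1-\loa)|\zbi^c|p_i$; substituting $|\Del_i p_i|\le 2\lmbd p_i|\zbi^c|/w_i$ and $\wti^I \le \tfrac43 w_i$ makes this non-positive once $\loa + \tfrac83\lmbd\bigl(1+\tfrac{2Ed}{1-\lE}+\tfrac12\alt\bigr)\le 1$. A non-same-side update is handled, as throughout the paper, by splitting it into two same-side updates, so all cases are covered.

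The main obstacle is purely bookkeeping: keeping the two independent sources of slack — the factor $\tfrac32$ from the measurement-error threshold and the factor $2$ from the integer rounding — cleanly separated so that they compound into the constants $\tfrac13$ and $2$ used above, and checking that the median-clamping of $\zbi^r/w_i$ to $[-1,1]$ does not spoil the lower bound on $|\Del_i p_i|$ (which is why the toward-case argument must be split at $|\zbi^c| = \tfrac32 w_i$). The mild hypotheses $w_i \ge 2$ and the implicit price floor $p_i \ge 1/\lmbd$ enter only to make sure an actual update really does have integer magnitude at least $1$.
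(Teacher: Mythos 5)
Your proposal is correct and follows essentially the same route as the paper: both arguments instantiate Lemma \ref{lem:ind-updates-only-help-templ}, convert between the measured signal $|\xbi^A-\wti^A|$ and the potential's $|\ybi-\wti^I|$ via the additive error $1+\kpi$ together with the actual-update threshold $2(1+\kpi)$ (giving the same factors $\tfrac32$ and $2$), absorb the integer rounding as a further factor of $2$, and arrive at exactly the three stated sufficient conditions. The only cosmetic difference is where you place the case split in the toward-$\wti^I$ case ($|\zbi^c|\le\tfrac32 w_i$ on the true quantity versus the paper's $|\xbi^A-\wti^A|\le w_i$ on the measured one); the constants come out identically.
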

\begin{proof}
In the same way that the proof of Lemma \ref{lem:non-acc-cont-progress} is based on
Lemma \ref{lem:war-updates-only-help-templ},
we use Lemma \ref{lem:ind-updates-only-help-templ}.

Since the potential function is based on $y_i$ and $\ybi$ values,
we need to bound $\Del_i p_i$ in terms of $\ybi$.
In fact, it is calculated in terms of $\xbi$.

We note that $ |\xbi^A - \wti^A| - 1 - \kpi \le |\ybi -\wti^I| \le |\xbi^A - \wti^A| + 1 + \kpi$.

\noindent
Case 1. A toward $\wti^I$ update.
\\
Case 1.1. $|\xbi^A - \wti^A| \le w_i$.

Here $|\Del_i p_i| \ge \frac 12 \lmbd |\xbi^A - \wti^A|p_i/w_i$,
as by assumption $\lmbd |\xbi^A - \wti^A|p_i/w_i \ge 2$ for an actual update.
So $\phi$ is non-increasing as a result of the update if
\begin{equation}
\label{eqn:cstrt-1v-5-lem-updates-only-help}
\loa |\ybi - \wti^I| p_i  \leq (1 - \frac12\alt)\wti^I\cdot  \frac12 \lmbd |\xbi^A - \wti^A| \frac{p_i}{w_i}.
\end{equation}
By Lemma \ref{lem:wti-constr}, $\wti^I/w_i \ge \frac23$, and as already noted
$|\ybi -\wti^I| \le |\xbi^A - \wti^A| + 1 + \kpi$,
so
$\loa (|\xbi^A - \wti^A| + 1+ \kpi) p_i  \leq  (1 - \frac12\alt)\cdot \frac13 \lmbd |\xbi^A - \wti^A| p_i $
suffices.
As $|\xbi^A - \wti^A| \ge 2(1 + \kpi)$ for an actual update,
$|\xbi^A - \wti^A|  + 1 + \kpi \le \frac32 |\xbi^A - \wti^A| $,
and $\frac92 \alo + \frac12\alt \le 1$ suffices.
\\
Case 1.2. $|\xbi^A - \wti^A|  > w_i$.
\\
Here $|\Del_i p_i| \ge \frac 12 \lmbd p_i$.
So $\phi$ is non-increasing due to the update if
$\loa (d-1)\wti^I p_i \le   (1 - \frac12\alt)\wti^I \cdot \frac12 \lmbd p_i$.
In turn, $2\alo  (d-1) +  \frac12\alt \le 1$ suffices.

\noindent
\\
Case 2. An away from $\wti^I$ update.
\\
Here $|\Del_i p_i| \le  \lmbd p_i |\xbi^A - \wti^A|/w_i$.
Recall that an update occurs only if $|\xbi^A - \wti^A| \ge 2(1 + \kpi)$, and hence
$|\ybi - \wti^I| \ge \frac 12 (\xbi^A - \wti^A)$.
Thus $|\Del_i p_i| \le  2\lmbd p_i |\ybi - \wti^I|/w_i$.
So $\phi$ is non-increasing due to the update if
\begin{equation}
\label{eqn:cstrt-4v-5-lem-updates-only-help}
2\lmbd p_i |\ybi - \wti^I| \left(1 + \frac{2E d}{1-\lE}  + \frac12 \alt \right) \frac{\wti^I}{w_i}
\leq (1 - \loa) p_i | \ybi - \wti^I|.
\end{equation}
As $\wti^I/w_i \le \frac 43$ by Lemma \ref{lem:wti-constr-disc},
this is subsumed by
$\loa + \frac83 \lambda \left(1 + \frac{2E d}{1-\lE}  + \frac12 \alt \right) \leq 1$.
\end{proof}

Next, we apply Lemma \ref{lem:war-cont-progress}
to get a bound on $d\phi/dt$.

\begin{corollary}
\label{cor:discr-cont-progress}
If $ 4\kpi(1 + \alt) \le \loa \leq \frac{1}{2}$,
$\frac{d\phi_i}{dt}
 \leq - \frac{\kpi(\alt - 1)}{2} \phi_i + \kpi p_i$
 at any time when no price update is occurring
\emph{(}to any $p_j$\emph{)};
this bound also holds for the one-sided derivatives when a
price update occurs.
\end{corollary}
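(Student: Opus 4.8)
The plan is to re-run the derivative computation in the proof of Lemma~\ref{lem:war-cont-progress}, applied to the discrete potential $\phi_i = p_i[\span(y_i,\ybi,\wti^I) - 4\kpi(1+\alt)(t-\tau_i)|\ybi-\wti^I| + \alt|\wti^I-w_i|]$. This is exactly the Case~(ii) potential of Section~\ref{sec:non-acc}, with the substitutions $(x_i,\xbi,\wti)\mapsto(y_i,\ybi,\wti^I)$ and the penalty coefficient $\lmbd$ replaced by $4\kpi(1+\alt)$. Under that replacement the hypotheses of Lemma~\ref{lem:war-cont-progress} read $4\kpi(1+\alt)\le 4\kpi(1+\alt)\le\frac12$: the first is vacuous and the second follows from the present hypothesis $4\kpi(1+\alt)\le\loa\le\frac12$. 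Since between price updates each $y_i$ is constant (it changes only when some $p_j$ changes) and $\frac{d\ybi}{dt}=\frac{1}{t-\tau_i}(y_i-\ybi)$ retains the form used there, every line of that proof transcribes directly, with one feature that must be tracked separately: the ``supply'' quantity $\wti^I$ is time-varying, and its rate of change is keyed to the \emph{ideal} demand rather than to $y_i$.

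First I would note that, by definition, $s_i^I$ is the warehouse content produced by treating the ideal demand $x_i^I$ --- a genuine integral of $x_i(\mathbf p)$ --- as the demand throughout, so $\frac{ds_i^I}{dt}=w_i-x_i(\mathbf p)$ and hence $\frac{d\wti^I}{dt}=\kpi(x_i-w_i)$ \emph{exactly}: there is no discretization slack in the warehouse evolution itself. Feeding this into the displayed derivative bound at the start of the proof of Lemma~\ref{lem:war-cont-progress}, the contributions of $\frac{d\wti^I}{dt}$ (through the span term and through differentiating $|\ybi-\wti^I|$ and $|\wti^I-w_i|$) combine as before, except that the cancelling ``$\kpi(\cdot-w_i)$'' factors coming from the $y$-part of the potential are keyed to $y_i$ while $\frac{d\wti^I}{dt}$ is keyed to $x_i$. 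The residual --- which in the fully continuous case was precisely the term $-p_i\chi_i$ with $\chi_i=0$ --- is therefore of the form $\pm\kpi p_i(x_i-y_i)$, up to a coefficient that is at most $1$ since $t-\tau_i\le 1$. By the crucial observation relating the divisible market with demands $y_i$ to the discrete market, $x_i(\mathbf p)-1<y_i(\mathbf p)\le x_i(\mathbf p)$ whenever demands are nonzero (and when $x_i=0$ one has $p_i\ge 1/\lmbd$, which is not a candidate for update), so $|x_i-y_i|<1$ and the residual is bounded in magnitude by $\kpi p_i$.

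With the residual isolated, the remainder of the argument is the verbatim two-case analysis of Lemma~\ref{lem:war-cont-progress}: in the case $|\wti^I-w_i|\le 2\span(\ybi,y_i,\wti^I)$ one obtains decay rate $\frac{4\kpi(1+\alt)}{4(1+\alt)}=\kpi$, and otherwise rate $\frac{\kpi(\alt-1)}{2}$; since $\alt\le 2$ gives $\frac{\kpi(\alt-1)}{2}\le\frac{\kpi}{2}<\kpi$, the slower rate $\frac{\kpi(\alt-1)}{2}$ holds uniformly. Combining with the residual bound gives $\frac{d\phi_i}{dt}\le-\frac{\kpi(\alt-1)}{2}\phi_i+\kpi p_i$, and the one-sided-derivative claim at a price-update instant is inherited from Lemma~\ref{lem:war-cont-progress} (there $p_i$ jumps but all demand and warehouse quantities stay continuous, and the bound holds with $p_i$ set to either its pre- or post-update value). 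The main obstacle is the bookkeeping of the middle paragraph: confirming that the discrete-versus-ideal discrepancy enters the derivative in exactly one place --- through $\frac{d\wti^I}{dt}$ being keyed to $x_i$ rather than $y_i$ --- so that the accumulated error is the clean $\kpi p_i$ claimed, rather than something carrying extra multiples of $(1+\alt)$ or of the $(t-\tau_i)$-type factors that multiply the penalty term.
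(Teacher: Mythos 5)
Your proposal is correct and matches the paper's own argument: the paper likewise invokes Lemma~\ref{lem:war-cont-progress} with $x_i\mapsto y_i$, $\wti\mapsto\wti^I$, and the penalty coefficient $4\kpi(1+\alt)$, and identifies the residual as exactly the $\chi_i$ term, $\chi_i=\frac{d\wti^I}{dt}+\kpi(y_i-w_i)=-\kpi(x_i-y_i)$, bounded in magnitude by $\kpi$ via $x_i-1<y_i\le x_i$. Your middle paragraph just spells out in more detail what the paper states in one line.
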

\begin{proof}
This is Lemma \ref {lem:war-cont-progress},
with $x_i$ replaced by $y_i$ and $\wti$ by $\wti^I$.
So $\chi_i = \frac{d\wti^I}{dt}+\kpi(y_i - w_i)=-\kpi(x_i^I - w_i) +\kpi(y_i - w_i)
=-\kpi(x_i-y_i)$, and hence $0 \le -p_i \chi_i \le -\kpi p_i$.
\end{proof}

\noindent
{\bf Theorem \ref{lem:discr-progr-bdd-dem-2}.}~%
\emph{%
If Constraint \ref{cst:wti-bdd-disc} holds and $y_i \le d\wti$ for all $i$,
$\frac{\alt}{2} + \alpha_1  \max\{\frac92,2(d-1)\} \le 1$,
$\loa + \frac{8}{3} \lambda \left(1 + \frac{2E d}{1-\lE}  + \frac12 \alt \right) \le 1$,
each $w_i \ge 6$,
$ 4\kpi(1 + \alt) \le \loa \leq \frac{1}{2}$ for all $i$,\\
$s \ge \frac{48} {(\alt -1) (1 - \loa)}
\left[ 1 + 6(1 + \alt)  + (1 + \alt) \frac{1 - \loa + \frac{18}{s} \kpi (1 + \alt)  + \frac{3\kpi}{s}} { 1 - \loa - \frac{18}{s} \kpi (1 + \alt)}  \right]$,\\
if $\phi(\tau) \ge \frac{48 } {(\alt - 1)}\left[ (1 + \alt) \left( \frac{4} {\lmbd r} + \frac{24}{s} \right) + \frac1s \right]
 \frac{1 - \loa} {1- \loa - \frac{18}{s} \kpi (1  + \alt)} M$
at the start of the day,
and if each price is updated at least once every day,
then $\phi$ decreases by at least a $1 - \frac{\kmin(\alt -1)}{8}$ factor
over the course of the day.
}
\begin{proof}
We prove the result for a day $D$ starting at time $\tau$,
using Lemma \ref{lem:genrl-progr-bdd-dem},
as in the proof of Lemma \ref{lem:war-progr-bdd-dem}.

First, we note that by applying Lemma \ref{lem:war-cont-progress},
we can set $\nu= \kpi (\alt - 1)/2$ and $\gamma = \sum_i p_i$.

To total the increase due to null updates during $D$, we spread the potential increase due to
each null update uniformly across the
time interval between the present null update and the preceding update
to the same price (null or otherwise).
By Lemma \ref{lem:null-cont-progress-disc},
the potential increase due to a null update to $p_i$ at time $t$ is
at most $6\kpi(1 + \alt) \frac{w_i}{\lmbd}$ if $|\xbi - \wti| \ge 2$,
and $12\kpi(1 + \alt) p_i \Del$ otherwise,
where $\Del$ is the length of the time interval over which the increase is being spread.
Thus the total potential increase due to null updates during $D$ is bounded by
$6\kpi(1 + \alt) \int_D [\sum_i \frac{w_i}{\lmbd} +  \sum_{|\xbi - \wti| <2} 2 p_i(t) ] dt$
plus the following term which covers
the intervals that extend into the day preceding $D$:
$6\kpi(1 + \alt) \sum_i \frac{w_i}{\lmbd} + \sum_{|\xbi - \wti| < 2} 2 p_i (\tau)$.

The potential increase during $D$ due to the ``$\gamma$'' term is at most
$\kpi \sum_i p_i$.

As shown in Lemma \ref{lem:disc-bound-w-p}, $\sum_i w_i \le M/r$ and
$\sum_i p_i \le 3\phi/[s(1 - \loa)] + 3M/s$.
It follows that the potential increase over $D$ is bounded by
$6\kpi(1 + \alt)\{ M/(\lmbd r)  + 6M/s+3\phi^+/[s(1-\loa)]  + M/(\lmbd r) + 6M/s + 3\phi(\tau)/[s(1-\loa)]\} +\kpi [3\phi/[s(1 - \loa)] + 3M/s]$.
So we can apply Lemma \ref{lem:genrl-progr-bdd-dem}
with $\mu = 3\kpi$, $a_1= 6(1 + \alt)/[s(1 - \loa)]$,
$a_2= [6(1 + \alt) + 1]/[s(1 - \loa)]$
and $a_3= (1 + \alt) [4/(\lmbd r)  + 24/s]  + 1/s$.
The condition $\phi(\tau) \ge \frac{8\mu} {\nu} a_3 M \frac{1}{1-\mu a_1}$ in
Lemma \ref{lem:disc-bound-w-p} becomes
\[
\phi(\tau) \ge \frac{48 } {(\alt - 1)}\left[ (1 + \alt) \left( \frac{4} {\lmbd r} + \frac{24}{s} \right) + \frac1s \right]
 \frac{1 - \loa} {1- \loa - \frac{18}{s} \kpi (1  + \alt)} M.
\]
Similarly, the condition
$\mu\left[ \frac{a_1(1 + \mu a_2)} {1 - \mu a_1)} + a_2 \right] \le \frac{\nu}{8}$ becomes
\[
\frac{3 (1 + \alt)} {s(1 - \loa)} \left[ \frac{1 - \loa + \frac{18}{s} \kpi (1 + \alt)  + \frac{3\kpi}{s}} {1 - \loa - \frac{18}{s} \kpi (1 + \alt)}
 \right] +\frac{18 (1 + \alt)} {s(1 - \loa)} + \frac{3} {s(1 - \loa)}  \le \frac{\alt -1} {16};
\]
equivalently,
\[
s \ge \frac{48} {(\alt -1) (1 - \loa)}
\left[ 1 + 6(1 + \alt)  + (1 + \alt) \frac{1 - \loa + \frac{18}{s} \kpi (1 + \alt)  + \frac{3\kpi}{s}} { 1 - \loa - \frac{18}{s} \kpi (1 + \alt)}  \right].
\]
\end{proof}

\noindent
{\bf Theorem \ref{thm:discr-lwr-bdd}}.~%
\emph{%
In the discrete setting there are markets with $ \Omega (E/r) $
misspending at any pricing.
}
\begin{proof}
The construction uses one good plus money. Let $p_g$ by the price of
the good, $x_g$ the demand for the good and $n_\$$ the demand for
money. We define a continuous CES utility $[x_g (r + 1/2)]^{1 - 1/E} +
n_\$^{1 - 1/E}$ for the aggregate demand (it can be due to one buyer),
with a supply that at $p_g = r + 1/2$ spends $M/2$ money on
$\frac{M}{2r + 1}$ units of good $g$. It is not hard to check that at
$p_g = r$ and $r + 1$ there is already a change of $\theta(En_g / r)$
in the demand for good $g$.
\end{proof}

\subsubsection{Construction of Virtual Demands $\bf y$}

Our goal is to construct virtual demands $y_i$, with $x_i -1 < y_i \le x_i$,
defined for all $p$ for which $x_i \ge 1$.
Further the spending on $y_i$ will be a non-increasing function of $p_i$,
will obey WGS at integer price points (which is where it is defined),
and will have elasticity parameter $2E$.

$y_i$ is constructed as follows.

For each collection of prices $p_{-i}$, let $m_i(j)$ be the spending on
good $i$ when $p_i=j$; we also write it as $m_j$ for short. Consider the
sequence $m_1, m_2, m_3,\cdots$ and let $m_{l_1} = m_1, m_{l_2},
m_{l_3},\cdots$ be the maximal subsequence of non-increasing nonzero spending.
We create a new sequence $m'_1, m'_2, \cdots$ of virtual spending, such that
$m'_j\ge m'_{j+1}$ for all $j$.
First, we set $m'_{l_a}=m_{l_a}$ for all $a$.
If $m_{l_a}=m_{l_{a+1}}$,
then we set $m'_j=m_{l_a}$ for all $j\in (l_a, l_{a+1})$.
Also, if $x_i( l_{a+1}-1) \ge x_i( l_{a+1})+2$, then
we set $m'_j=m_{l_a}$ for all $j\in(l_a, l_{a+1})$.
Likewise, if $x_i(j) < x_i(j-1)$ for some $j\in(l_a, l_{a+1})$, we set
$m'_{l_a+1},m'_{l_a+2}, \cdots, m'_j= m_{l_a}$.
If $m_{l_b}$ is the last term in the non-increasing sequence, we set
$m'_j=m_{l_b}$ for all $j>l_b$ with $m_j>0$.
The only other alternative is that there is an $h\ge l_a$ with
$m'_{l_a}=m'_h$, and $x_i(h)=x_i(h+1) =\cdots=x_i(l_{a+1}-1)=x_i(l_{a+1})+1$.
Then, for $j\in(h, l_{a+1})$,
$y'_i(j)$ is defined to interpolate
between $y_i(h)$ and $y_i(l_{a+1})$ as follows:
$y'(j) =y_i(h)\left(\frac{h}{j}\right)^c$, where $c$ is given by
$y_i(h) \left(\frac{h}{l_{a+1}}\right)^c = y_i(l_{a+1})$.
Then $y_i(p) = y_i(p_{-i},j)$ is given by $y_i(p_{-i},j)=\max_{q_{-i} \leq p_{-i}}  \{y'_i(p_{-i},j), y_i(q_{-i}, j)\}$.
Note that $jy'(j) =hy_i(h)\left( \frac{h}{j}\right)^{c-1}$, for $h \le j \le l_{a+1}$.
As we know that $ hy_i(h) = m'_{l_a} > m_{l_{a+1}} = l_{a+1} y_i(l_{a+1})$,
in this last case, on setting $j =  l_{a+1}$, we can conclude that $c-1>0$, i.e.\ that $c>1$.

\begin{lemma}
\label{lem:forallj}
$y_i(j) > x_i(j)-1$ for all $j$.
\end{lemma}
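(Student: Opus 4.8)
The plan is to reduce the claim to a statement about the single‑coordinate virtual spending sequence $m'_j$ and then read it off from the Discrete WGS property. Recall that $y_i(p_{-i},j)$ is obtained as a maximum that always includes the term $y'_i(p_{-i},j)$, so $y_i(j)\ge y'_i(j)$; hence it suffices to show $y'_i(p_{-i},j) > x_i(p_{-i},j)-1$ for every fixed $p_{-i}$ and every integer $j$ with $x_i(j)\ge 1$. Fixing $p_{-i}$ and writing $m_j=j\,x_i(j)$ for the real spending and $y'_i(j)=m'_j/j$, the inequality to prove is simply $m'_j > m_j - j$. I would then split on which clause of the construction of $m'$ applies to $j$.

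If $j=l_a$ is one of the record‑low indices, then $m'_j=m_{l_a}=m_j$, so $y'_i(j)=x_i(j)>x_i(j)-1$, and there is nothing to do. The main case is when $j$ lies in a flat stretch, i.e.\ $m'_j=m_{l_a}$ for the preceding record index $l_a<j$ (this covers the flat portions of every non‑interpolation subcase, and also the final tail where $m'_j=m_{l_b}$). Here I would invoke the Discrete WGS property applied to the reduction of $p_i$ from $j$ down to $l_a$: it guarantees that the spending on good $i$ at price $l_a$ is at least $j\,x_i(j)-[\,l_a-1\,]=m_j-l_a+1$, i.e.\ $m'_j=m_{l_a}\ge m_j-l_a+1$. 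Since $0\le l_a-1\le j-2$, this yields $m'_j> m_j-j$, equivalently $y'_i(j)=m_{l_a}/j\ge x_i(j)-\tfrac{l_a-1}{j}>x_i(j)-1$.

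The remaining case is $j$ in an interpolation stretch $(h,l_{a+1})$, where by construction $x_i(h)=x_i(h+1)=\cdots=x_i(l_{a+1}-1)=x_i(l_{a+1})+1$ and $y'_i(j)=y_i(h)(h/j)^c$ with $c>0$ (indeed $c>1$, as already observed in the construction, since $m_{l_{a+1}}<m_{l_a}$ in this case). Thus $y'_i$ is strictly decreasing on $[h,l_{a+1}]$, so for $j<l_{a+1}$ we get $y'_i(j)>y'_i(l_{a+1})=y_i(l_{a+1})=x_i(l_{a+1})$; since $x_i(j)=x_i(l_{a+1})+1$ throughout this stretch, this is exactly $y'_i(j)>x_i(j)-1$. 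Combining the three cases gives $y'_i(j)>x_i(j)-1$ for all $j$, and hence $y_i(j)\ge y'_i(j)>x_i(j)-1$.

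I expect the only real friction to be bookkeeping: verifying that the three cases above genuinely exhaust the clauses of the (somewhat intricate) definition of $m'$ — in particular that any $j$ which is neither a record index nor inside an interpolation stretch does satisfy $m'_j=m_{l_a}$ for the relevant $l_a<j$ — and checking that the Discrete WGS bound is applied in the correct direction (a price decrease, $l_a<j$, so the clause is available). The quantitative heart of the argument, the single application of Discrete WGS to a multi‑unit price drop, is short.
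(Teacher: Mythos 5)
Your proposal is correct and follows essentially the same route as the paper's proof: the flat-stretch case is handled by exactly the same application of Discrete WGS to the price drop from $j$ to $l_a$ (giving $m_{l_a}\ge m_j-(l_a-1)$, hence $y_i(j)>x_i(j)-1$), and the interpolation case by the same monotonicity argument down to the endpoint value $x_i(l_{a+1})=x_i(j)-1$. The only difference is cosmetic bookkeeping in how the cases are split.
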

\begin{proof}
{\bf Case 1}~ $m'_j = m_{l_a}$ for some $l_a < j$.

By Discrete WGS, $m_{l_a} \geq m_j - (l_a - 1)$. Thus
\[
y_i(j) = \frac{m'_j}{j} = \frac{m_{l_a}}{j} \geq \frac{m_j}{j} - \frac{l_a - 1}{j} > x_i (j) - 1.
\]

\medskip
\noindent{\bf Case 2}~ $m'_j < m_j$ but $m_j \neq m_{l_a}$ for any
$a$.

Then there is a maximal sequence $m_h, m_{h+1}, \cdots, m_k$,
with $j \in(h,k]$ and $k =  l_{a+1} - 1$ for some $a$,
for which $x_i(h) = x_i(h + 1) = \cdots = x_i(k) = x_i(k+1) + 1$.  Case 1
shows that $y_i(h) > x_i(h)-1$. By construction, $y_i(k) > x_i(k+1) =
x_i(k)-1$, and as the values $y'_h, y'_{h+1}, \cdots , y'_k$ form
a decreasing sequence by construction, it follows that
for $j$ with $h < j \le k$, $y_i(j) \ge y_i(k) >
x_i (k) - 1 = x_i(j) - 1$.
\end{proof}

\begin{lemma}
\label{lem:thespending}
$m_i(j)\ge m_i(j+1)$ for all $i$ and $j$.
\end{lemma}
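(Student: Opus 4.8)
The plan is to read the statement as asserting that the \emph{virtual} spending $m_i(j) = j\, y_i(p_{-i},j)$ associated with the constructed demands $y_i$ is non-increasing in $j$ (this is precisely the ``spending on $y_i$ is non-increasing in $p_i$'' property promised just before the construction; for the actual spending it would be false in the discrete WGS setting). The argument is a case analysis against the construction, followed by a check that the outer running maximum over $q_{-i}\le p_{-i}$ does not destroy monotonicity.

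First I would rewrite the recursive definition $y_i(p_{-i},j)=\max_{q_{-i}\le p_{-i}}\{y'_i(p_{-i},j),\,y_i(q_{-i},j)\}$ in the closed form $y_i(p_{-i},j)=\max_{q_{-i}\le p_{-i}} y'_i(q_{-i},j)$ (by induction on $p_{-i}$, with the minimal $p_{-i}$ as base case), so that $m_i(p_{-i},j)=\max_{q_{-i}\le p_{-i}} j\,y'_i(q_{-i},j)$. A pointwise maximum of non-increasing sequences is again non-increasing: for $j<j'$, choosing $q^\ast$ to attain the maximum at $j'$ gives $m_i(p_{-i},j)\ge j\,y'_i(q^\ast,j)\ge j'\,y'_i(q^\ast,j')=m_i(p_{-i},j')$, provided each $j\mapsto j\,y'_i(q^\ast,j)$ is non-increasing. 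Hence it suffices to fix $p_{-i}$ and prove that the sequence $m'_j := j\,y'_i(p_{-i},j)$ produced by the construction is non-increasing.

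For this I would partition the index line into the blocks delimited by the non-increasing subsequence $l_1<l_2<\cdots$. At the indices $l_a$ themselves, $m'_{l_a}=m_{l_a}$, which is non-increasing in $a$ by the defining property of that subsequence. Inside a block $(l_a,l_{a+1})$ the construction produces either a constant stretch at value $m_{l_a}$ --- this is what happens in the cases $m_{l_a}=m_{l_{a+1}}$, $x_i(l_{a+1}-1)\ge x_i(l_{a+1})+2$, an interior demand drop $x_i(j)<x_i(j-1)$, and the final-term case --- followed by at most a single drop down to $m_{l_{a+1}}\le m_{l_a}$; or else the interpolation case, where there is an $h\ge l_a$ with $m'_h=m'_{l_a}=m_{l_a}$ and $x_i(h)=\cdots=x_i(l_{a+1}-1)=x_i(l_{a+1})+1$, and $y'(j)=y_i(h)(h/j)^c$ for $h<j\le l_{a+1}$. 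In the interpolation case $m'_j=j\,y'(j)=h\,y_i(h)(h/j)^{c-1}=m_{l_a}(h/j)^{c-1}$; the construction already establishes $c>1$ (from $h\,y_i(h)=m'_{l_a}>m_{l_{a+1}}=l_{a+1}\,y_i(l_{a+1})$), so $(h/j)^{c-1}$ is decreasing in $j$ on $\{h,\dots,l_{a+1}\}$, whence $m'_j$ decreases from $m_{l_a}$ at $j=h$ down to exactly $m_{l_{a+1}}$ at $j=l_{a+1}$, the endpoint value being pinned by the defining equation $y_i(h)(h/l_{a+1})^c=y_i(l_{a+1})$. Gluing the blocks, $m'$ is non-increasing everywhere, and the maximum argument from the previous paragraph upgrades this to $m_i(p_{-i},j)\ge m_i(p_{-i},j+1)$.

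I expect the main obstacle to be the interpolation block: one must verify both that it is exactly $c>1$ that forces $m'_j$ to be monotone decreasing there, and that the two endpoint values glue correctly ($m_{l_a}$ on the left, $m_{l_{a+1}}$ on the right), so that the piecewise description assembles into a single globally non-increasing sequence rather than one that is merely non-increasing block by block. A secondary point of care is confirming that the construction's case list is genuinely exhaustive and that the running maximum in the definition of $y_i$ is unrolled correctly; both are routine once set up, and everything else reduces to reading off Discrete WGS and the construction.
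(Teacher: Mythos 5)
Your proposal is correct and follows essentially the same route as the paper's proof: unroll the recursive maximum over $q_{-i}\le p_{-i}$, observe that a pointwise maximum of non-increasing sequences is non-increasing, and reduce to checking that $j\,y'_i(j)$ is non-increasing block by block, the only nontrivial case being the interpolation block where $m'_j=m_{l_a}(h/j)^{c-1}$ with $c>1$. Your write-up is somewhat more explicit about the block gluing and the endpoint values, but the argument is the same.
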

\begin{proof}
The only case we need to check is when $y'_i(j+1)$ is defined by interpolation
for $j\in(h, l_{a+1})$ say.
By construction,
$p_iy'_i$ is a decreasing function of $p_i$ on the interpolating interval.
Now the result follows by induction on $p_{-i}$. The claim holds for $p_{-i}=1$,
as then $y'_i(p) = y_i(p)$. For larger $p_{-i}$, $y_i$ as a function of $p_i$ is simply
a max of noncreasing functions and hence is itself non-increasing.
\end{proof}

\begin{lemma}
\label{lem:thedemands}
At discrete prices for which $x_i>0$, the demands $y_i$ obey WGS.
\end{lemma}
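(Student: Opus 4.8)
The plan is to get WGS for the virtual demands almost for free, once the definition of $y_i$ is read correctly. First I would unwind the displayed recursion
\[
y_i(p_{-i},j)=\max_{q_{-i}\le p_{-i}}\{y'_i(p_{-i},j),\,y_i(q_{-i},j)\}
\]
into the closed form $y_i(p_{-i},j)=\max_{q_{-i}\le p_{-i}} y'_i(q_{-i},j)$: the base case is the minimal price vector, where $y_i=y'_i$ (exactly the reading used in the proof of Lemma~\ref{lem:thespending}), and induction on $p_{-i}$ does the rest. Here $q_{-i}$ ranges over the integer price vectors dominated componentwise by $p_{-i}$ at which $y'_i$ is defined, i.e.\ at which $x_i(q_{-i},j)\ge 1$; the range is finite, so the maximum is attained. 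In other words, $y_i$ is the componentwise-monotone hull of $y'_i$ in the coordinates of $p_{-i}$.

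Given this, here is the one-line argument. Fix a good $k$ and a discrete price vector $p$ at which the demand under consideration is positive, and let $p'$ be obtained from $p$ by raising $p_k$ (by any amount). For $i\ne k$, the restrictions to the remaining coordinates satisfy $p_{-i}\le p'_{-i}$ componentwise, since only the $k$-th coordinate moved and it increased. Hence the index set of the maximum defining $y_i(p'_{-i},p_i)$ contains the index set of the maximum defining $y_i(p_{-i},p_i)$, so $y_i(p')\ge y_i(p)$ — which is precisely the WGS inequality of Definition~\ref{def:WGS} for good $i$.

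The two points that need care — and the nearest thing to an obstacle — are that both maxima are over nonempty sets (so that $y_i$ really is defined at $p$ and at $p'$) and that the admissibility constraint $x_i(q_{-i},\cdot)\ge 1$ does not discard members of the index set as $p_k$ grows. For the first, $q_{-i}=p_{-i}$ is admissible at $p$ by the hypothesis $x_i(p)\ge 1$, and raising $p_k$ only raises $x_i$ by the Discrete WGS property of the genuine demands $x$, so $x_i(p')\ge x_i(p)\ge 1$ and $q_{-i}=p'_{-i}$ is admissible at $p'$; in particular $y_i$ is well defined at both points, and by Lemma~\ref{lem:forallj} it satisfies $y_i(p)>x_i(p)-1\ge 0$, hence is strictly positive throughout its domain, as a demand should be. For the second, every integer vector $q_{-i}\le p_{-i}$ with $x_i(q_{-i},p_i)\ge 1$ also satisfies $q_{-i}\le p'_{-i}$, and the condition $x_i(q_{-i},p_i)\ge 1$ concerns a price vector that has not changed; so the admissible index set only grows, the maximum only grows, and WGS for the $y_i$ follows.
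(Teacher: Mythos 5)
Your proof is correct, but it reaches the conclusion by a different and shorter route than the paper. The paper's proof splits into cases according to which rule of the construction sets $y_i(h,p_j+1)$: in the main case it works at the level of the virtual spending, chaining the Discrete WGS property of the true spending ($m_{l_a}(p_j+1)\ge m_{l_a}(p_j)$) with the inequalities $m_{l_a}(p_j)\ge m'_{l_a}(p_j)\ge m'_h(p_j)$ built into the construction, and only the residual cases are dismissed ``by construction,'' i.e.\ by the closing $\max_{q_{-i}\le p_{-i}}$ step. You observe that this closing step alone already does all the work: once the recursion is unwound to $y_i(p_{-i},j)=\max_{q_{-i}\le p_{-i}}y'_i(q_{-i},j)$ (a reading the paper itself endorses in the proof of Lemma~\ref{lem:thespending}), raising $p_k$ for $k\ne i$ only enlarges the index set of the maximum, so $y_i$ cannot decrease; Discrete WGS of the underlying market is needed only to certify that $y_i$ remains defined at the raised price, and your treatment of the admissibility condition (it constrains $(q_{-i},p_i)$, which does not move) is exactly the point that makes the set-inclusion argument airtight. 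What the paper's longer route buys is the extra information that the pre-hull demands $y'_i$ are themselves monotone in $p_j$ in the main case, which the hull argument discards; what yours buys is brevity and the avoidance of any case analysis. One cosmetic difference: the paper's proof opens by noting that $y_i$ is non-increasing in its own price, which you omit; since Definition~\ref{def:WGS} makes WGS a statement only about the demands for \emph{other} goods, and own-price monotonicity follows independently from Lemma~\ref{lem:thespending}, this omission is not a gap.
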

\begin{proof}
First, we note that as $p_i$ increases the virtual spending $m'_i$
on good $i$ only decreases, and hence $y_i$ only decreases.

Next, we consider the change to $y_i$ as $p_j$ increases, for $j
\neq i$.  Let $p_i = h$. Consider prices $p_j$ and $p_j + 1$ for
good $j$.  Suppose that $y_i (h,p_j + 1)$ is set by a rule making
$m'_h(p_j + 1) = m_{l_a}(p_j+ 1)$, where $h \geq l_a$. By Discrete
WGS, $m_{l_a}(p_j + 1) \geq m_{l_a}(p_j)$. By construction,
$m_{l_a}(p_j) \geq m'_{l_a}(p_j) \geq m'_h(p_j)$. So $m'_h(p_j + 1)
\geq m'_h(p_j)$ and hence $y_i(h, p_j + 1) \geq y_i(h, p_j)$ in this
case.

Otherwise, by construction, $y_h(p_j +1) \geq y_h(p_j)$.
\end{proof}

\begin{lemma}
\label{lem:bdd-elasticity}
The demands $y_i$ obey bounded elasticity:
$$y_i(p_{-i}, p_i) \leq y_i (p_{-i}, p_i + d)
\left(1 + \frac{d}{p_i}\right)^{2E} \ \mbox{if} \ y_i (p_{-i},
p_i + d) > 0.$$
\end{lemma}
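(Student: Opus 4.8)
The natural route is to fix $p_{-i}$, analyze the single‑variable virtual demand $y'_i(p_{-i},\cdot)$ built from the spending sequence $m'$, and then recover the general statement from the identity $y_i(p_{-i},j)=\max_{q_{-i}\le p_{-i}}y'_i(q_{-i},j)$. For the reduction, observe that if for every $q_{-i}\le p_{-i}$ one has $y'_i(q_{-i},p_i)\le y'_i(q_{-i},p_i+d)(1+d/p_i)^{2E}$, then taking the maximum over $q_{-i}$ and factoring the common constant $(1+d/p_i)^{2E}$ out of the maximum gives the bound for $y_i$. The only awkward case is when $y'_i(q_{-i},p_i)>0$ but $y'_i(q_{-i},p_i+d)=0$; there I would use that $y_i(p_{-i},p_i+d)>0$ is assumed together with monotonicity of $y_i$ in the other prices (Lemma \ref{lem:thedemands}) to dominate $y'_i(q_{-i},p_i)$ by $y_i(p_{-i},p_i+d)$ times a suitable power of $(p_i+d)/p_i$.

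With $p_{-i}$ fixed, $y'_i(j)=m'_j/j$ and $m'$ is nonincreasing in $j$, so it suffices to prove the consecutive‑price inequality $y'_i(j)\le y'_i(j+1)\bigl((j+1)/j\bigr)^{2E}$ for every integer $j\in[p_i,p_i+d)$ with $y'_i$ positive and then telescope (the product of the factors $\bigl((j+1)/j\bigr)^{2E}$ collapses to $\bigl((p_i+d)/p_i\bigr)^{2E}$). On a maximal run where $m'$ is constant, $y'_i(j)=m_{l_a}/j$, which has own‑price elasticity exactly $1\le 2E$, so the inequality is immediate. On an interpolation run $(h,l_{a+1})$, $y'_i(j)=y_i(h)(h/j)^c$, so the inequality holds precisely when $c\le 2E$. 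At a transition where $m'$ strictly drops (a non‑interpolation step into $l_{a+1}$), $y'_i$ decreases, and one needs $y'_i(l_{a+1}-1)/y'_i(l_{a+1})=m_{l_a}/\bigl(v\,(l_{a+1}-1)\bigr)\le\bigl(l_{a+1}/(l_{a+1}-1)\bigr)^{2E}$, where $v=x_i(l_{a+1})$.

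The heart of the argument — and the step I expect to be the main obstacle — is the bound $c\le 2E$ (and the analogous bound at drop transitions), which is exactly where the doubling of the elasticity parameter is needed. From the defining relation $y_i(h)(h/l_{a+1})^c=y_i(l_{a+1})$ one gets $(l_{a+1}/h)^c=m_{l_a}/(hv)$, and since the actual spendings on the constant‑demand run $[h,l_{a+1}-1]$ are increasing, $m_{l_a}\le h\,x_i(h)=h(v+1)$, whence $(l_{a+1}/h)^c\le 1+\tfrac1v$. Converting this into $c\le 2E$ requires a lower bound on $\ln(l_{a+1}/h)$ relative to $\ln(1+\tfrac1v)$, and this is where one must exploit (i) Discrete WGS, which forces the prices on a constant‑demand run to be large compared with the demand level there (spending can grow by at most ``old price minus one'' per unit price increase, so demand cannot stay at $v+1$ until the price is comparable to $v$), and (ii) the discrete elasticity bound on $x_i$ between $l_a$ and $l_{a+1}$, which limits how close $l_{a+1}/h$ can be to $1$ given that the demand has fallen from $x_i(l_a)$ down to $v$. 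Combining these two estimates, together with careful bookkeeping of the $\lfloor\cdot\rfloor/\lceil\cdot\rceil$ losses and the inequality $E\ge 1$, should deliver $c\le 2E$ and, by an entirely parallel computation, the required bound at drop transitions. Once these piecewise/boundary estimates are in place, chaining them over $[p_i,p_i+d]$ finishes the proof.

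\medskip
\noindent
I expect the bound $c\le 2E$ to be the only genuinely delicate point: the piecewise reduction, the telescoping, and the spending‑neutral pieces are routine, but squeezing $2E$ (rather than a larger exponent) out of the interplay between Discrete WGS and discrete elasticity in the constant‑demand/interpolation case is where all the care must go.
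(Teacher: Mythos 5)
Your structural reduction is sound and matches the paper's: reduce to $d=1$ via the multiplicativity of $\left(1+\frac{1}{j}\right)$, handle the outer maximum over $q_{-i}$ by factoring out the constant, and observe that on constant-spending runs the own-price elasticity of $y_i$ is exactly $1\le 2E$. But the lemma's entire content is the bound $c\le 2E$ on interpolation intervals (and its analogue at spending drops), and this is precisely the step you do not prove: ``combining these two estimates \ldots should deliver $c\le 2E$'' is a statement of intent, not an argument. As written, the proposal reduces the lemma to its hardest claim and stops there, so it is not a proof.

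Moreover, the route you sketch for closing that step does not appear to go through. You derive $\left(\frac{l_{a+1}}{h}\right)^c\le 1+\frac1v$ and then hope to lower-bound $\ln\frac{l_{a+1}}{h}$ against $\ln\left(1+\frac1v\right)$ using (i) Discrete WGS and (ii) the discrete elasticity bound between $h$ and $l_{a+1}$. Neither ingredient points the right way: on an interpolation interval the demand gap is exactly one ($x_i(h)=x_i(l_{a+1})+1$), so the discrete elasticity inequality $x_i(h)\le\lceil x_i(l_{a+1})(l_{a+1}/h)^E\rceil$ is vacuous --- the ceiling absorbs a gap of one unit and yields no lower bound on $l_{a+1}/h$; and Discrete WGS applied along the constant-demand run gives $h\ge v+2$, i.e.\ a \emph{lower} bound on the price relative to the demand level, whereas your inequality needs $l_{a+1}-h$ to be large relative to $h/v$, which would require an \emph{upper} bound on $h$ in terms of $v$. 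The paper's mechanism is different and is where the factor $2$ actually comes from: it applies the discrete elasticity bound at a price pair where the demand gap is at least $2$ (its Case 3), so that writing $\Delta=x_i(h)-x_i(l_{a+1})-1\ge 1$ one gets $1+\Delta/v\le(l_{a+1}/h)^E$ from the ceiling-weakened inequality, and then $(1+\Delta/v)^2\ge 1+2\Delta/v\ge 1+(\Delta+1)/v$ recovers the unit of demand lost to the ceiling at the cost of doubling the exponent; the interpolation exponent $c$ is then bounded by this. Your plan never isolates this ``gap at least two'' step, which is the one place the exponent $2E$ (as opposed to something larger) is earned, so the proposal has a genuine gap at the heart of the lemma.
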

\begin{proof}
Note that it suffices to prove the bound for
$d = 1$, as
$$\left(1 + \frac{a}{p_i}\right) \left(1 + \frac{b}{a + p_i}\right) =
\frac{p_i + a + b}{p_i} = \left(1 + \frac{a + b}{p_i}\right).$$

\noindent{\bf Case 1}
$m'_{p_i} = m'_{p_i+1}$.

Then $y_i (p_i) = \frac{m'_{p_i}}{p_i}$, $y_i (p_i + 1) = \frac{m'_{p_i+1}}{p_i + 1}$.
So, $p_iy_i (p_i) = (p_i + 1) y_i (p_i + 1)$, or $y_i (p_i) \leq y_i (p_i + 1) \left(1 + \frac{1}{p_i} \right)$.

\smallskip
\noindent{\bf Case 2} $m'_h(p_l) = m'_h(p_l-1)$.

We note that $m'_{h+1}(p_l) \geq m'_{h+1}(p_l-1)$, so
$\frac{m'_h(p_l)}{m'_{h+1}(p_l)} \leq
\frac{m'_h(p_l-1)}{m'_{h+1}(p_l-1)}$,
and $\frac{y'(h, p_l)}{y'(h+1, p_l)} \leq
\frac{y'(h, p_l-1)}{y'(h+1, p_l-1)}$.
Hence the claim follows by
induction on $p_{-i}$ since Case 2 does not apply at the base cases when $p_{-i} = 1$.

\smallskip
\noindent{\bf Case 3} $x_i(h) \ge x(l_{a+1})+2$, where $l_a \le h < l_{a+1}$.

This is an intermediate bound, used in Case 4, except when $h+1 = l_{a+1}$.

By Discrete WGS,
$x_i(h) \leq \left\lceil x_i(l_{a+1}) \left(1 + \frac{l_{a+1} - h}{h}\right)^E \right\rceil$.
Hence $x_i(h) -1 <  x_i(l_{a+1}) \left(1 + \frac{l_{a+1} - h}{h}\right)^E$,
or
\[
\frac{x_i(l_{a+1}) + [x_i(h) - x_i(l_{a+1}) -1]} {x_i(l_{a+1})} \le \left(1 + \frac{l_{a+1} - h}{h}\right)^E.
\]
As $x_i(h)  - x_i(l_{a+1}) \ge 2$, $x_i(h)  - x_i(l_{a+1}) -1 \ge 1$.
Let $\Del$ denote $x_i(h)  - x_i(l_{a+1}) -1$.
So $1 + \frac{\Del}{x_i(l_{a+1})} \le \left(1 + \frac{l_{a+1} - h}{h}\right)^E$.
Hence $1 + \frac{2\Del}{x_i(l_{a+1})} \le \left(1 + \frac{l_{a+1} - h}{h}\right)^{2E}$,
or
\[
\frac{x_i(h) + [x_i(h) - x_i(l_{a+1}) -2]} {x_i(l_{a+1})} \le \left(1 + \frac{l_{a+1} - h}{h}\right)^{2E}.
\]
Hence $y_i(h) \le x_i(h) \le x_i(l_{a+1}) \left(1 + \frac{l_{a+1} - h}{h}\right)^{2E}$.

\smallskip
\noindent{\bf Case 4} $y_j$ is defined by interpolation.

Then there are an $h$ and an $l_a$ with $j\in(h, l_{a+1})$, such that
$x_i(h -1) > x_i(h) = x_i (h+1) = \cdots =
x_i(k) > x_i(k + 1)$, with $k+1 =  l_{a+1}$,
where $y'_{h+1}, y'_{h+2},
\cdots, y'_k$ are obtained by interpolation between $y'_h$ and
$y'_{k+1} = x_{k+1}$.

By Case 3, $y_i(h) \le y_i(l_{a+1})\left(1 + \frac{l_{a+1} - h}{h}\right)^{2E}$.
Let $c>1$ satisfy $y_i(h) = y_i(l_{a+1})\left(1 + \frac{l_{a+1} - h}{h}\right)^{c}$;
so $0< c \le 2E$ (in fact, $1 < c \le 2E$).
By construction, for $h\le j < l_{a+1} $,
$y_i(j) = y_i(l_{a+1})\left(1 + \frac{l_{a+1} - j}{j}\right)^{c}$;
thus $y_i(j) = y_i(j+1)\left(1 + \frac{1}{j}\right)^{c}
\le y_i(j+1)\left(1 + \frac{1}{j}\right)^{2E}$.
\end{proof} 
\subsection{Extensions}

Property \ref{prop:progress-1}, while implicit, underpins our analysis, but so far we have limited ourselves
to the case $\alpha = 1$.
Essentially, reducing $\alpha$ will slow the daily reduction in potential by an $\alpha$ factor.
Now we consider two extensions of the ongoing Fisher market with WGS and bounded elasticity which will have
an $\alpha$ in the range $(0,1)$.

\subsubsection{Beyond WGS}

Instead of having demands for goods $j \ne i$ only increase when $p_i$ increases by $\Del_i p_i>0$,
we could allow some of the demands to decrease, but with the constraint that the total of all reductions be
bounded as follows:
\begin{equation}
\label{eqn:beyond-wgs}
\sum_{x'_j < x_j} |x_j - x'_j| p_j \le (1 - \alpha) w_i \Del_i p_i,
\end{equation}
which we call the \emph{$\alpha$-bounded complements property}.
A symmetric condition applies when $\Del_i p_i < 0$.

\begin{lemma}
\label{lem:bdd-compl}
If demands obey the $\alpha$-bounded complements property then Property
\ref{prop:progress-1} holds.
\end{lemma}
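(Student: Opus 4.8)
The plan is to follow the proof of Lemma~\ref{lem:progress-1} almost verbatim, the one new feature being that the update to $p_i$ may now \emph{decrease} the demand for some other goods, and to absorb those decreases using the $\alpha$-bounded complements bound (\ref{eqn:beyond-wgs}). By the symmetry noted after (\ref{eqn:beyond-wgs}), I would argue only the case of a same-side update toward $w_i$ with $x_i > w_i$, so that $\Del_i p_i > 0$; the case $x_i < w_i$ is identical with all signs reversed.

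First I would re-derive, exactly as in the first part of the proof of Lemma~\ref{lem:progress-1}, the identity $\Gamma_i \phi_i = I_i + \sum_{j \ne i} p_j \Del_i x_j$. This step uses only that the update is same-sided and that the total spending on all goods is unchanged when $p_i$ is updated (a Fisher-market property inherited by this extension); it does \emph{not} use WGS. Likewise I would retain the pointwise inequality $|\Gamma_i \phi_j| \le p_j |\Del_i x_j|$ for each $j \ne i$, which is exactly the final display in the proof of Lemma~\ref{lem:progress-1}, obtained by a case analysis on whether $x_j - w_j$ changes sign, and which is valid regardless of the sign of $\Del_i x_j$.

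Next I would split the goods $j \ne i$ into the substitutes $P = \{j : \Del_i x_j \ge 0\}$ and the partial complements $N = \{j : \Del_i x_j < 0\}$, so that $\sum_{j \ne i} p_j \Del_i x_j = \sum_{j \in P} p_j |\Del_i x_j| - \sum_{j \in N} p_j |\Del_i x_j|$ while $\sum_{j \ne i} |\Gamma_i \phi_j| \le \sum_{j \in P} p_j |\Del_i x_j| + \sum_{j \in N} p_j |\Del_i x_j|$. Subtracting and recalling $I_i = w_i \Del_i p_i$ gives
\[
\Gamma_i \phi_i - \sum_{j \ne i} |\Gamma_i \phi_j| \ge I_i - 2\sum_{j \in N} p_j |\Del_i x_j|.
\]
Now (\ref{eqn:beyond-wgs}) says precisely that $\sum_{j \in N} p_j |\Del_i x_j| = \sum_{x'_j < x_j} |x_j - x'_j| p_j \le (1 - \alpha) w_i \Del_i p_i = (1-\alpha) I_i$, so the right-hand side is at least $(2\alpha - 1) I_i$. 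Hence $\sum_{j \ne i} |\Gamma_i \phi_j| + (2\alpha - 1) I_i \le \Gamma_i \phi_i$, i.e.\ Property~\ref{prop:progress-1} holds with its constant taken to be $2\alpha - 1$ (which is a legitimate value in $(0,1]$ precisely when $\alpha > \frac12$, the regime of interest since the complements are meant to be a small effect).

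The step that needs the most care — and the main ``obstacle'' — is simply verifying that the accounting of the first part of Lemma~\ref{lem:progress-1}'s proof really does survive the removal of WGS: that the identity $\Gamma_i \phi_i = I_i + \sum_{j \ne i} p_j \Del_i x_j$ and the bound $|\Gamma_i \phi_j| \le p_j|\Del_i x_j|$ nowhere secretly assume $\Del_i x_j \ge 0$. A careful reading confirms they do not, after which the argument reduces to the two-line estimate above. The factor $2$ (rather than $1$) in front of $\sum_{j\in N}$ is genuinely unavoidable in this worst-case style of bound: each complement demand hurts twice, once by shrinking $\Gamma_i\phi_i$ and once by contributing to $\sum_{j\ne i}|\Gamma_i\phi_j|$, which is why the resulting constant is $2\alpha-1$ rather than $\alpha$.
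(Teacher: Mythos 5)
Your argument is correct and follows the same route as the paper's proof: reuse the spending-conservation identity $\Gamma_i\phi_i = I_i + \sum_{j\ne i}p_j\Delta_i x_j$ and the pointwise bound $|\Gamma_i\phi_j|\le p_j|\Delta_i x_j|$ from the proof of Lemma~\ref{lem:progress-1} (neither of which uses WGS), and then absorb the complement terms via (\ref{eqn:beyond-wgs}). The one substantive difference is the constant. Writing $T_-=\sum_{j\in N}p_j|\Delta_i x_j|\le(1-\alpha)I_i$, you correctly observe that $T_-$ is charged twice in this worst-case accounting --- once in shrinking $\Gamma_i\phi_i$ (the spending lost by the $N$-goods can, by conservation, flow back toward good $i$) and once in $\sum_{j\in N}|\Gamma_i\phi_j|$ --- giving Property~\ref{prop:progress-1} with constant $2\alpha-1$. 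The paper's own two-sentence proof charges the detrimental change only once and asserts the constant $\alpha$; your factor-of-$2$ analysis is the more careful one, and the two worst cases (an $N$-good with $x_j<w_j$ losing spending while that spending returns to good $i$) are indeed simultaneously realizable, so the double-count is not an artifact. The practical consequence is that your proof establishes the lemma only for $\alpha>\frac12$ (for $\alpha\le\frac12$ the bound $(2\alpha-1)I_i$ is vacuous), whereas the paper's subsequent discussion treats $\alpha$ as ranging over all of $(0,1)$ and quotes a $1/\alpha$ slowdown; under your constant the slowdown is $1/(2\alpha-1)$. You should state this restriction explicitly, or note that one can recover a positive constant for every $\alpha\in(0,1)$ by replacing $(1-\alpha)$ with $\frac{1-\alpha}{2}$ in (\ref{eqn:beyond-wgs}). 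One small further remark: under the standing elasticity lower bound (Corollary~\ref{cor:f-grth}), spending on good $i$ is non-increasing in $p_i$, so $T_+\ge T_-$ and $\Gamma_i\phi_i\ge I_i$ still holds; this does not remove the factor of $2$, but it confirms that the loss comes entirely from the $\sum_{j\ne i}|\Gamma_i\phi_j|$ side interacting with the shrunken transfer term.
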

\begin{proof}
The one change to the proof of Lemma \ref{lem:progress-1} is that the total additional detrimental
change in spending
when $p_i$ is updated is given by (\ref{eqn:beyond-wgs}), and it
increases the potential by at most $(1 - \alpha) w_i |\Del_i p_i|$.

The prior offsetting reduction in the potential by $I_i=  w_i |\Del_i p_i|$ continues to hold, so there
is a net reduction by at least $\alpha I_i$.
\end{proof}

The resulting change to the overall analysis is that the reductions to the potential by
$\alo w_i |\Del_i p_i|$, which drive the prior analysis,
now become $\alpha \alo w_i |\Del_i p_i|$, resulting in a
$1/\alpha$ slowdown in the convergence rates.

\subsubsection{Ongoing $\mathbf{\alpha}$-Exchange Markets}

At this point we review the standard definition of (One-Time) Exchange Markets.

\medskip
\noindent
{\bf The One-Time Exchange Market}\
A market comprises two sets, goods
$G$, with $|G| = n$, and  traders $T$, with $|T| = m$. The goods are
assumed to be infinitely divisible.
Each trader $l$ starts with an allocation $w_{il}$ of good $i$. Each
trader $j$ has a utility function $u_j(x_{1j},\cdots,x_{nj})$
expressing its preferences: if $j$ prefers a basket with $x_{ij}$
units (possibly a real number) of good $i$, to the basket with
$y_{ij}$ units, for $1\le i\le n$, then $u_j(x_{1j},\cdots,x_{nj})>
u_j(y_{1j},\cdots,y_{nj})$. Each trader $j$ intends to trade goods so
as to achieve a personal optimal combination (basket) of goods given
the constraints imposed by their initial allocation. The trade is
driven by a collection of prices $p_i$ for good $i$, $1\le i\le n$.
Agent $j$ chooses $x_{ij}$, $1\le i\le n$, so as to maximize $u_j$,
subject to the basket being affordable, that is: $\sum_{i=1}^n
x_{ij}p_i \le \sum_{i=1}^n w_{ij}p_i$. Prices ${\bf
p}=(p_1,p_2,\cdots,p_n)$ are said to provide an \emph{equilibrium}
if, in addition, the demand for each good is bounded by the supply:
$\sum_{j=1}^m x_{ij} \le \sum_{j=1}^m w_{ij}$. The market problem is
to find equilibrium prices.

\medskip

We consider Exchange markets which repeat on a daily basis in the same way as the ongoing Fisher markets.
For each good, there will be an agent with a warehouse for that good and this trader is the one
who meets demand fluctuations from the warehouse stock.

More specifically, each day there is a collection of identical traders who come to market each
with the same basket of goods,
including money, to trade. Furthermore, for each good, there is a trader with a warehouse, who is the price-setter for that
good and whose goal is to bring the warehouse to being half full.

We further assume that the price-setter for good $i$ brings at least $\alpha w_i$ of  good $i$ to the market and
is only seeking money for this portion of the supply.

We call this an $\mathbf{\alpha}$-Exchange market.

Then in the same way as for Lemma \ref{lem:bdd-compl}, one can show:

\begin{lemma}
\label{lem:bdd-exchange-mkts}
In  an $\mathbf{\alpha}$-Exchange market Property
\ref{prop:progress-1} holds.
\end{lemma}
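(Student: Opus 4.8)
The plan is to verify the $\alpha$-bounded complements property (\ref{eqn:beyond-wgs}) for the Ongoing $\alpha$-Exchange market and then invoke Lemma~\ref{lem:bdd-compl}; equivalently, to re-run the spending-conservation computation in the proof of Lemma~\ref{lem:progress-1} with a single correction term. So I would first pin down the spending bookkeeping. Take WLOG a same-side update that increases $p_i$ by $\Delta_i p_i>0$ (the case $\Delta_i p_i<0$ is symmetric). The price-setter of good $i$ brings (at least) $\alpha w_i$ units of good $i$ and asks only for money in exchange, so that portion of the supply never funds the purchase of any good; the only budgets that are spent on goods are those of the regular traders (together with the other price-setters, for their non-money wants). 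Their collective budget for goods equals the value of their endowments, and the only component of those endowments whose price is changing is their holding of good $i$, which totals at most $(1-\alpha)w_i$ units. Hence, when $p_i$ rises by $\Delta_i p_i$, the total amount spent on goods changes by at most $(1-\alpha)w_i\Delta_i p_i$ — whereas in the pure Fisher case it is exactly conserved.

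Next I would carry out the algebra of Lemma~\ref{lem:progress-1} verbatim, replacing the identity ``total spending is unchanged'' by ``total spending changes by at most $(1-\alpha)w_i\Delta_i p_i$''. Using that the update is same-sided toward $w_i$ and isolating the $\Gamma_i\phi_i$ term as there, one obtains
\[
\Gamma_i\phi_i \;=\; I_i + \sum_{j\ne i} p_j\,\Delta_i x_j \;-\;(1-\alpha)\,w_i\,\Delta_i p_i \;=\; \alpha I_i + \sum_{j\ne i} p_j\,\Delta_i x_j ,
\]
since $I_i=w_i\Delta_i p_i$ here. By WGS each $\Delta_i x_j\ge 0$ for $j\ne i$, so $|\Gamma_i\phi_j|\le p_j\,\Delta_i x_j$ exactly as in Lemma~\ref{lem:progress-1}; summing over $j\ne i$ gives $\sum_{j\ne i}|\Gamma_i\phi_j| + \alpha I_i \le \Gamma_i\phi_i$, which is Property~\ref{prop:progress-1} with this value of $\alpha$. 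Viewed through Lemma~\ref{lem:bdd-compl}, the $(1-\alpha)w_i\Delta_i p_i$ leak in total spending plays precisely the role of the bounded-complements slack in (\ref{eqn:beyond-wgs}), so the argument of that lemma applies ``in the same way''; and if one additionally allows genuine complements among the regular traders (demands for some $j\ne i$ that fall when $p_i$ rises), their detrimental contribution simply adds to this slack, at the cost of a correspondingly smaller $\alpha$.

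The main obstacle is the careful accounting of \emph{which} budgets are spent on goods and \emph{which} endowment values move when $p_i$ changes — in particular, justifying the bound $(1-\alpha)w_i\Delta_i p_i$ uniformly, including when $x_i>w_i$, when the price-setter of good $i$ brings strictly more than $\alpha w_i$ of that good, and when price-setters hold additional goods that they trade in the exchange. Once that bound is established, the remainder is a direct transcription of the proofs of Lemmas~\ref{lem:progress-1} and~\ref{lem:bdd-compl}.
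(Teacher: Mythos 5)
Your proposal is correct and follows exactly the route the paper intends: the paper gives no explicit proof, merely asserting that the claim follows ``in the same way as for Lemma~\ref{lem:bdd-compl},'' i.e.\ the price-setter's reserved $\alpha w_i$ (sold only for money) caps the change in goods-spending at $(1-\alpha)w_i|\Delta_i p_i|$, which plays the role of the bounded-complements slack in (\ref{eqn:beyond-wgs}). The only nitpick is that your displayed identity for $\Gamma_i\phi_i$ should be an inequality ($\ge$), since the wealth leak is an upper bound rather than an exact amount.
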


Again, we obtain convergence slowed down by an $\alpha$ factor.

In \cite{cole-fleischer}
a related constraint on the buyer side was used. It specified that at equilibrium, traders coming
to market with money (buyers), would be buying at least $\alpha w_i$ of the $i$th good, for each $i$.
To ensure adequate demand among buyers for goods at other prices, they introduced an additional
assumption, termed the\emph{ wealth effect}. This assumption was a bound on the elasticity of wealth,
namely that when all prices dropped by a factor $f$, all demands increased by a factor
$f^{\beta}$ for some $\beta$, $0 <\beta < 1$.

It would be interesting to consider a version of the ongoing exchange market
in which the price-setter only meets bounded fluctuations in demand, and demands
beyond this range, along with unspent money, are carried forward to the next day.
The challenge, with respect to our approach for the analysis, is to ensure a
relatively small change in the computed demands for the preponderance of the goods
(in terms of their contribution to a suitable potential) for otherwise it is not clear that
the price updates will reduce the potential.  By computed demands, we have in mind something
analogous to the calculations used
in the ongoing Fisher market with warehouses:
this gives a low weight (a factor $\alt \kpi$) to the demand contribution used
to account for the warehouse excesses.

\section*{Acknowledgements}

Richard Cole and Lisa Fleischer are grateful to Nimrod Meggido for his
encouragement during the early stages of this work.
Richard Cole also thanks Jess Benhabib for pointers to the literature.

\bibliographystyle{plain}
\bibliography{markets} 

\end{document}